\newtheorem{lemma}{Lemma}
\newtheorem{proof}{Proof}
\definecolor{circlecolor}{HTML}{004d80}
\definecolor{redcirclecolor}{HTML}{a62b17}
\definecolor{circlecolor2}{HTML}{ff6701}
\newcommand{\para}[1]{\vspace{1mm}\noindent\textbf{#1.}}
\newcommand{\parait}[1]
{\vspace{0mm}\noindent\textit{\underline{#1.}}}
\newcommand{\entity}{\texttt{entity}}
\newcommand{\entities}{\texttt{entities}}
\newcommand{\functions}{\texttt{functions}}
\newcommand{\sstate}{\texttt{state}}
\newcommand{\key}{\texttt{key}}
\newcommand{\context}{\texttt{context}}
\newcommand{\rqs}[1]{\hangindent=15pt\textbf{#1:}}
\newcommand*\circleb[1]{\tikz[baseline=(char.base)]{
            \node[shape=circle,line width=0mm,inner sep=1pt,fill = circlecolor,text=white] (char) {\textsf{{#1}}};}}
\newcommand*\circler[1]{\tikz[baseline=(char.base)]{
            \node[shape=circle,line width=0mm,inner sep=1pt,fill = redcirclecolor,text=white] (char) {\textsf{{#1}}};}}
\newcommand*\circlem[1]{\tikz[baseline=(char.base)]{
            \node[shape=circle,line width=0mm,inner sep=1pt,fill = circlecolor2,text=white] (char) {\textsf{{#1}}};}}
\definecolor{green}{HTML}{008800}
\definecolor{nicerblue}{HTML}{0066bb}
\definecolor{deepblue}{HTML}{0000dd}
\definecolor{velvetred}{HTML}{bb0066}
\definecolor{CC}{HTML}{666666}
\definecolor{darkgray}{HTML}{2F4F4F}
\definecolor{deepgreen}{rgb}{0,0.5,0}
\lstdefinestyle{pythonlang}{
  frame=single,
  language=Python,
  alsoletter={.},
  showstringspaces=false,
  emph={Hotel, Flight, \@reservation_operator.register, reserve, StatefulFunction, Operator, NotEnoughSpace, buy_item, buy_item_0, buy_item_1, serializable_transfer, sagas_transfer, stock, payment, cart, NotEnoughCredit, NotEnoughStock, \@stock.register, \@payment.register,\@cart.register},
  emphstyle={\color{velvetred}},
  columns=flexible,
  basicstyle={\fontfamily{lmtt}\scriptsize},
  identifierstyle={\color{black}},
  numbers=left,
  numberstyle=\tiny\color{deepblue},
  numbersep=5pt,
  keywordstyle=\color{blue},
  morekeywords={async, await, n_partitions, operator, function_name, key, Item, self, Transfer}, 
  commentstyle=\color{CC},
  stringstyle=\color{deepgreen},
  breaklines=false,
  breakatwhitespace=false,
  tabsize=1,
  backgroundcolor=\color{white},
  xleftmargin=.0in,
  captionpos=b,
  keepspaces=true,
  escapechar=|
}
\begin{document}

\title[Transactional Stateful Functions on Streaming Dataflows]{Democratizing Scalable Cloud Applications}

\author{Kyriakos}{Psarakis}

\frontmatter

\include{title/title}

\dedication{
    \epigraph{
    Life is Short, Art long, Opportunity fleeting, Experience deceitful, and Judgment difficult. \\
    \vspace{2.5mm}
    \textgreek{
    Ὁ βίος βραχύς,
    ἡ δὲ τέχνη μακρή,
    ὁ δὲ καιρὸς ὀξύς,
    ἡ δὲ πεῖρα σφαλερή,
    ἡ δὲ κρίσις χαλεπή.}
    }
{The Aphorisms of Hippocrates, 5th Century B.C.}}

\tableofcontents

\chapter*{Summary}
\addcontentsline{toc}{chapter}{Summary}
\setheader{Summary}

\noindent Web applications power almost every aspect of our digitalized society, from entertainment to web shopping, vacation planning and booking, online games, communication, work, and social interaction. However, building scalable and consistent Web applications in modern cloud environments requires extensive and diverse expertise in multiple domains, such as cloud computing, software development, distributed and database systems, and domain knowledge. These requirements make the development of such applications possible only by a few highly talented individuals that only large corporations can hire. In this thesis, we aim at democratizing the development and maintenance of such cloud applications by identifying and addressing three key challenges: \emph{i)} programmability of cloud applications; \emph{ii}) high-performance serializable transactions with fault tolerance guarantees; and \emph{iii}) serverless semantics. To address those, we created Stateflow, a high-level, object-oriented, easy-to-use programming model that operates alongside Styx, a novel deterministic dataflow engine that provides high-performance serializable transactions and serverless semantics.

While investigating the challenge of democratizing scalable cloud applications, we discovered that they closely resemble the principles behind the streaming dataflow execution model. In \Cref{chapter1}, we highlight the similarities of streaming dataflow processing and the current state-of-the-art event-driven microservice architectures and lay a path towards the ideal cloud application runtime. To validate our hypothesis, we have created T-Statefun, presented in \Cref{chapter2}, by adapting an existing dataflow system to support transactional cloud applications. At the time, the best candidate appeared to be Apache Flink Statefun, a stateful function as a service system (SFaaS), to which we added transactional support with coordinator functions. With T-Statefun, we showed that a dataflow system can support transactional cloud applications through a SFaaS API. Furthermore, its development helped us identify two significant issues: \emph{i)} it was challenging to program, especially after the addition of the coordinator functions; and \emph{ii)} due to the disaggregation of state and processing and an inefficient transactional protocol, T-Statefun was lacking in performance.

In this thesis, to address the programmability issue, in \Cref{chapter3} we introduce Stateflow, a user-friendly programming model where software developers code in the well-established object-oriented programming style with \emph{zero} boilerplate code, and Stateflow transforms it into an intermediate representation based on stateful dataflow graphs. While experimenting with Stateflow, we verified the inefficiencies detected in \Cref{chapter2} regarding messaging and state, or the lack of transactional support in the rest of Stateflow's supported backends. Thus, in \Cref{chapter4}, we present all the details behind the design of Styx, a distributed streaming dataflow system that supports multi-partition deterministic transactions with serializable isolation guarantees through a high-level, standard Python programming model that obviates transaction failure management. Our design choices and novel algorithms allow Styx to outperform the state-of-the-art systems by at least one order of magnitude in all tested workloads regarding throughput.

Styx demonstrates that it is possible to build a high-performance SFaaS system that provides transactional and fault-tolerance guarantees while offering an intuitive programming model with minimal boilerplate. Building on this foundation, we extend Styx with the ability to dynamically and efficiently adapt to varying workloads. To enable this, Chapter~\ref{chapter5} explores how Styx can migrate state transactionally, a necessary capability for elasticity, given that Styx maintains application state in-memory.

We conclude this thesis by summarizing the key findings and reflecting on the contributions, critically examining the limitations of the proposed methods, and considering their broader ethical and societal implications. Moreover, based on the insights we gained from creating the Stateflow programming model and the Styx runtime, we lay out the new challenges and future directions in the field.

\chapter*{Samenvatting}
\addcontentsline{toc}{chapter}{Samenvatting}
\setheader{Samenvatting}

{\selectlanguage{dutch}

  Webapplicaties ondersteunen nagenoeg elk aspect van onze sterk gedigitaliseerde samenleving: van entertainment en online winkelen tot vakantieplanning, videospellen, communicatie, werk en sociale interactie. Ze spelen een cruciale rol in ons dagelijks leven. Het bouwen van dergelijke applicaties in de moderne cloudomgeving vereist echter diepgaande en diverse expertise in verschillende domeinen, zoals cloud computing, softwareontwikkeling, gedistribueerde systemen, databasesystemen en domeinspecifieke kennis. Deze vereisten maken de ontwikkeling van dergelijke applicaties enkel haalbaar voor een beperkt aantal uiterst getalenteerde individuen of grote bedrijven die over de middelen beschikken om dergelijk talent aan te trekken. In deze thesis beogen we de ontwikkeling en het onderhoud van cloudapplicaties te democratiseren door drie belangrijke uitdagingen te identificeren en aan te pakken: \emph{i)} programmeerbaarheid van cloudapplicaties; \emph{ii)} hoog-performante, seriële transacties met fouttolerantie; en \emph{iii)} serverless semantiek. Om dit te bereiken hebben we \textlatin{Stateflow} ontwikkeld, een hoog-niveau, objectgeoriënteerd en gebruiksvriendelijk programmeermodel dat werkt naast \textlatin{Styx}, een vernieuwende deterministische dataflow-engine die seriële transacties met hoge prestaties en serverless semantiek ondersteunt.

Tijdens het onderzoeken van de uitdaging om schaalbare cloudapplicaties te democratiseren, ontdekten we dat deze nauw aansluiten bij de principes van het streaming dataflow-uitvoeringsmodel. In \Cref{chapter1}, benadrukken we de overeenkomsten tussen streaming dataflow-verwerking en de huidige state-of-the-art event-driven microservice-architecturen. We schetsen een pad richting een ideale runtime voor cloudapplicaties. Om onze hypothese te toetsen, hebben we \textlatin{T-Statefun} ontwikkeld, zoals besproken in \Cref{chapter2}, door een bestaand dataflowsysteem aan te passen om transactionele cloudapplicaties te ondersteunen. Destijds leek \textlatin{Apache Flink Statefun} — een \textlatin{Stateful Function as a Service (SFaaS)}-systeem — de beste kandidaat, waaraan we transactionele ondersteuning toevoegden via coördinerende functies. Met \textlatin{T-Statefun} toonden we aan dat een dataflowsysteem transactionele cloudapplicaties kan ondersteunen via een \textlatin{SFaaS API}. Daarnaast hielp de ontwikkeling ervan ons twee belangrijke problemen te identificeren: \emph{i)} het systeem was moeilijk te programmeren, zeker na het toevoegen van de coördinerende functies; en \emph{ii)} vanwege de scheiding van toestand en verwerking en een inefficiënt transactioneel protocol, presteerde \textlatin{T-Statefun} ondermaats.

Om het programmeerprobleem aan te pakken, introduceren we in \Cref{chapter3} \textlatin{Stateflow}, een gebruiksvriendelijk programmeermodel waarbij ontwikkelaars software schrijven in een vertrouwde objectgeoriënteerde stijl, zonder enige \textlatin{boilerplate}-code. \textlatin{Stateflow} vertaalt dit vervolgens naar een intermediaire representatie gebaseerd op toestandsgebaseerde dataflow-grafen. Tijdens het experimenteren met \textlatin{Stateflow} bevestigden we de inefficiënties uit \Cref{chapter2} met betrekking tot messaging, toestand en het gebrek aan transactionele ondersteuning in de andere ondersteunde backends. Daarom presenteren we in \Cref{chapter4} het ontwerp van \textlatin{Styx}, een gedistribueerd dataflow-systeem dat multi-partitie deterministische transacties ondersteunt met seriële isolatiegaranties, via een hoog-niveau Python programmeermodel dat het afhandelen van mislukte transacties overbodig maakt. Onze ontwerpkeuzes en nieuwe algoritmes stelden \textlatin{Styx} in staat om de meest geavanceerde systemen van dat moment met minstens een orde van grootte te overtreffen qua doorvoer in alle geteste workloads.

\textlatin{Styx} toonde aan dat het mogelijk is om een hoog-performant \textlatin{SFaaS}-systeem te bouwen dat transactionele en fouttolerante garanties biedt, terwijl het een intuïtief programmeermodel met minimale \textlatin{boilerplate} behoudt. Voortbouwend op deze basis was de volgende stap om \textlatin{Styx} uit te breiden met de mogelijkheid om zich dynamisch en efficiënt aan te passen aan wisselende workloads. In \Cref{chapter5} verkennen we hoe \textlatin{Styx} toestand transactioneel kan migreren — een noodzakelijke eigenschap voor elasticiteit, aangezien \textlatin{Styx} applicatietoestand in het geheugen houdt.

Tot slot vatten we deze thesis samen door de belangrijkste bevindingen te bespreken en kritisch te reflecteren op de bijdragen. We analyseren de beperkingen van de voorgestelde methodes en overwegen hun bredere ethische en maatschappelijke implicaties. Tot slot formuleren we, op basis van de inzichten die we opdeden tijdens de ontwikkeling van het \textlatin{Stateflow}-programmeermodel en de \textlatin{Styx}-runtime, nieuwe uitdagingen en richtingen voor toekomstig onderzoek in dit vakgebied.

}

{\selectlanguage{greek}
\chapter*{Περίληψη}
\addcontentsline{toc}{chapter}{Περίληψη}
\setheader{Περίληψη}

Οι διαδικτυακές εφαρμογές υποστηρίζουν σχεδόν κάθε πτυχή της έντονα ψηφιοποιημένης κοινωνίας μας, από την ψυχαγωγία και τις αγορές στο διαδίκτυο, μέχρι τον προγραμματισμό και την κράτηση διακοπών, τα διαδικτυακά παιχνίδια, την επικοινωνία, την εργασία και την κοινωνική αλληλεπίδραση, παίζοντας κρίσιμο ρόλο στην καθημερινότητά μας. Ωστόσο, η ανάπτυξη τέτοιων εφαρμογών μεγάλης κλίμακας απαιτεί εκτενή και πολυδιάστατη τεχνογνωσία σε πολλούς τομείς, όπως το υπολογιστικό νέφος, η ανάπτυξη λογισμικού, τα κατανεμημένα συστήματα και τα συστήματα βάσεων δεδομένων, καθώς και εξειδίκευση στον εκάστοτε επιχειρησιακό τομέα. Αυτές οι απαιτήσεις καθιστούν την ανάπτυξη τέτοιων εφαρμογών εφικτή μόνο από λίγα εξαιρετικά ταλαντούχα άτομα ή από μεγάλες εταιρείες που έχουν τη δυνατότητα να προσλάβουν τέτοιο προσωπικό. Σε αυτή τη διδακτορική διατριβή, στοχεύουμε στη δημοκρατικοποίηση της ανάπτυξης και της συντήρησης τέτοιων εφαρμογών στο νέφος, εντοπίζοντας και επιλύοντας τρεις βασικές προκλήσεις: \textlatin{\emph{i)}} την προγραμματισιμότητα των εφαρμογών νέφους, \textlatin{\emph{ii)}} τις υψηλής απόδοσης σειριοποιήσιμες συναλλαγές βάσεων δεδομένων με εγγυήσεις ανοχής σε σφάλματα, και \textlatin{\emph{iii)}} την εκτέλεση των εφαρμογών σε αρχιτεκτονική χωρίς διακομιστή \textlatin{(serverless)}. Για να τις αντιμετωπίσουμε, δημιουργήσαμε το \textlatin{Stateflow}, ένα υψηλού επιπέδου αφαίρεσης, αντικειμενοστρεφές, εύχρηστο προγραμματιστικό μοντέλο. Το προγραμματιστικό μοντέλο επιτρέπει την ανάπτυξη εφαρμογών που εκτελούνται από το \textlatin{Styx}, μια καινοτόμα ντετερμινιστική μηχανή επεξεργασίας ροών δεδομένων που παρέχει συναλλαγές με σειριοποιήσιμη εγγύηση απόδοσης.

Κατά τη μελέτη της πρόκλησης της δημοκρατικοποίησης των εφαρμογών νέφους, διαπιστώσαμε ότι αυτές ταιριάζουν στενά με τις αρχές του μοντέλου εκτέλεσης ροών δεδομένων (\textlatin{streaming dataflow}). Στο Κεφάλαιο \ref{chapter1}, επισημαίνουμε τις ομοιότητες μεταξύ της επεξεργασίας ροών δεδομένων και των σύγχρονων αρχιτεκτονικών μικροϋπηρεσιών που βασίζονται σε γεγονότα (\textlatin{event-driven}) και σκιαγραφούμε το ιδανικό περιβάλλον εκτέλεσης για εφαρμογές νέφους. Για να επαληθεύσουμε την υπόθεσή μας, δημιουργήσαμε το \textlatin{T-Statefun}, όπως παρουσιάζεται στο Κεφάλαιο \ref{chapter2}, προσαρμόζοντας ένα υπάρχον σύστημα επεξεργασίας ροών δεδομένων ώστε να υποστηρίζει συναλλακτικές εφαρμογές νέφους. Εκείνη τη χρονική περίοδο, το καταλληλότερο σύστημα φάνηκε να είναι το \textlatin{Apache Flink Statefun}, ένα σύστημα που προσομοιάζει μια υπηρεσία εκτέλεσης συναρτήσεων με δεδομένα κατάστασης στο υπολογιστικό νέφος \textlatin{Stateful Function as a Service (SFaaS)}, στο οποίο προσθέσαμε υποστήριξη για συναλλαγές μέσω συντονιστικών συναρτήσεων (\textlatin{coordinator functions}). Με το \textlatin{T-Statefun}, δείξαμε ότι ένα σύστημα επεξεργασίας ροών δεδομένων μπορεί να υποστηρίξει συναλλακτικές εφαρμογές νέφους μέσω ενός \textlatin{API} τύπου \textlatin{SFaaS}. Επιπλέον, η ανάπτυξή του μας βοήθησε να εντοπίσουμε δύο σημαντικά προβλήματα: \textlatin{\emph{i)}} ήταν δύσκολο στον προγραμματισμό, ιδιαίτερα μετά την προσθήκη των συντονιστικών συναρτήσεων, και \textlatin{\emph{ii)}} λόγω του διαχωρισμού κατάστασης και επεξεργασίας και ενός μη αποδοτικού πρωτοκόλλου συναλλαγών, το \textlatin{T-Statefun} υστερούσε σε απόδοση.

Για την αντιμετώπιση του προβλήματος προγραμματισιμότητας, στο Κεφάλαιο \ref{chapter3} παρουσιάζουμε το \textlatin{Stateflow}, ένα φιλικό προς τον χρήστη προγραμματιστικό μοντέλο, στο οποίο οι προγραμματιστές λογισμικού γράφουν σε καθιερωμένο αντικειμενοστρεφές στυλ προγραμματισμού, χωρίς \emph{καθόλου} \textlatin{boilerplate} κώδικα, ενώ το \textlatin{Stateflow} μετατρέπει αυτόματα τον κώδικα σε ενδιάμεση αναπαράσταση βασισμένη σε ροές δεδομένων με κατάσταση. Κατά την πειραματική χρήση του \textlatin{Stateflow}, επιβεβαιώσαμε τις αναποτελεσματικότητες που εντοπίστηκαν στο Κεφάλαιο \ref{chapter2}, αναφορικά με τη διαχείριση μηνυμάτων, την κατάσταση, ή την απουσία υποστήριξης συναλλαγών στα υπόλοιπα υποσυστήματα του \textlatin{Stateflow}. Έτσι, στο Κεφάλαιο \ref{chapter4} παρουσιάζουμε όλες τις λεπτομέρειες του σχεδιασμού του \textlatin{Styx}, ενός κατανεμημένου συστήματος ροής δεδομένων που υποστηρίζει πολυ-κατατμημένες, ντετερμινιστικές συναλλαγές με εγγυήσεις σειριοποιήσιμης απομόνωσης μέσω ενός υψηλού επιπέδου προγραμματιστικού μοντέλου βασισμένου στην \textlatin{Python}, το οποίο απαλλάσσει τον προγραμματιστή από τη διαχείριση αποτυχημένων συναλλαγών. Οι σχεδιαστικές μας επιλογές και οι καινοτόμοι αλγόριθμοι που αναπτύξαμε, επέτρεψαν στο \textlatin{Styx} να ξεπεράσει τα πιο προηγμένα συστήματα της εποχής του, τουλάχιστον κατά μία τάξη μεγέθους υψηλότερη απόδοση σε όλες τις δοκιμασμένες περιπτώσεις φόρτου.

Το \textlatin{Styx} απέδειξε ότι είναι εφικτή η ανάπτυξη ενός υψηλής απόδοσης \textlatin{SFaaS} συστήματος που προσφέρει εγγυήσεις συναλλακτικότητας και ανοχής σε σφάλματα, διατηρώντας ταυτόχρονα ένα διαισθητικό προγραμματιστικό μοντέλο με ελάχιστο απαιτούμενο σκελετό κώδικα λογισμικού στην υλοποίηση των εφαρμογών. Βασιζόμενοι σε αυτό το θεμέλιο, το επόμενο βήμα ήταν η επέκταση του \textlatin{Styx} με τη δυνατότητα να προσαρμόζεται δυναμικά και αποδοτικά σε μεταβαλλόμενα φορτία εργασίας. Για να το επιτύχουμε αυτό, στο Κεφάλαιο \ref{chapter5} διερευνούμε πώς το \textlatin{Styx} μπορεί να μεταφέρει την κατάσταση με συναλλακτικό τρόπο, μια αναγκαία δυνατότητα για ελαστικότητα, δεδομένου ότι το \textlatin{Styx} διατηρεί την κατάσταση των εφαρμογών στη μνήμη.

Ολοκληρώνουμε αυτή τη διατριβή συνοψίζοντας τα βασικά ευρήματα και αναλογιζόμενοι τις συνεισφορές της, εξετάζοντας κριτικά τους περιορισμούς των προτεινόμενων μεθόδων και λαμβάνοντας υπόψη τις ευρύτερες ηθικές και κοινωνικές τους επιπτώσεις. Επιπλέον, βασισμένοι στις εμπειρίες μας από τη δημιουργία του προγραμματιστικού μοντέλου \textlatin{Stateflow} και της πλατφόρμας εκτέλεσης \textlatin{Styx}, σκιαγραφούμε τις νέες προκλήσεις και τις μελλοντικές κατευθύνσεις στον τομέα.

}

\mainmatter

\thumbtrue

\chapter{Introduction}
\label{introduction}

\vfill

\begin{abstract}
The primary objective of this thesis is to democratize the development lifecycle of large-scale cloud applications. At present, only very few people have expertise in cloud application development, infrastructure, distributed systems, and data management combined. This thesis argues that to enable anyone to code such applications, an easy-to-use distributed programming model and a computing system to serve it are required.

This chapter introduces the fundamental concepts of contemporary large-scale cloud applications and summarizes the contributions of this thesis. \Cref{intro:sec:fmtc} presents the transition from on-premise servers to modern cloud offerings. 
Following that, \Cref{intro:sec:scaa} lays out the fundamentals of scalable cloud applications, and \Cref{intro:sec:wis} completes the fundamentals with a dive into the different aspects of serverless technology and this thesis interpretation. \Cref{intro:sec:rq,intro:sec:contrib,intro:sec:origins,intro:sec:vis} highlight the main research questions, this thesis's contributions, the publications included, and a visual outline of the thesis document.
\end{abstract}

\vfill

\newpage

\dropcap{R}{}apid technological advancements, driven by large-scale cloud applications such as social networks, e-commerce platforms, multimedia streaming services, and AI-driven tools, have fundamentally reshaped how we interact with digital services and data. These applications handle massive volumes of data and requests, ensuring seamless user experiences globally with high availability and minimal downtime. However, building and maintaining such systems is an inherently complex task, often reserved for individuals with extensive technical expertise or large corporations with significant resources.

One of the key challenges in developing these applications lies in their scale, where reliability, performance, and scalability are critical factors. These systems must manage thousands of concurrent users, integrate with diverse data sources, and perform under unpredictable loads, all the while maintaining low latency and fault tolerance. As a result, the design and development of these cloud-based applications demand specialized skills in distributed systems, data management, cloud architectures, and scalable infrastructure.

Architecturally, the journey to this level of sophistication has been gradual but profound. In the early days of software development, monolithic applications were the norm, self-contained systems with tightly integrated components \cite{Gray1978}. While monolithic architectures offered simplicity in development, they became difficult to scale and maintain as applications grew. To address these limitations, the industry transitioned toward microservice architectures, where different functionalities are decoupled into independently deployable services \cite{rodrigosurvey}. This architectural shift enabled teams to scale, update, and manage services more efficiently.

At first sight, microservices appear to be the obvious step for replacing monolithic applications and migrating to the cloud. However, microservices dismiss an important advantage that monolithic applications enjoyed for almost five decades: state management, failure management, and state consistency have been the responsibility of database systems. Today's microservice architectures depart from the amenities that were once provided by database management systems (DBMSs) by integrating state management, service messaging, and coordination with application logic. From the database community's point of view, the microservice architectural pattern resembles a situation long ago \cite{papadimitriou1979serializability}, when developers implemented ad-hoc application-level transactions to ensure database consistency. 

For instance, in a shopping cart application, to complete a checkout, we first need to ensure there is enough stock of the selected products, reserve them, then receive payment before shipping the products. In the microservice paradigm, each service (Cart, Stock, Payment) has its own API, database, and application logic, and communicates with other services through API calls. The main issue with the microservice is that both atomicity (i.e., update stock \textit{and} get paid for an order, or cancel both actions) and state consistency across workflow steps (i.e., the stock counts should reflect the successfully paid orders) must be implemented in the application code.

Unsurprisingly, the easy-to-code Function-as-a-Service (FaaS)~\cite{Lambda,GoogleFunctions,Azure} paradigm embraces the same general architecture pattern as microservices: stateless application, stateful database, and communication via messages or external storage. An orchestration layer on top of FaaS allows composing more complex workflows to build service-oriented applications. However, workflow orchestrators solve only part of the problem. For atomicity and consistency, developers adopt the SAGA pattern or the Two-Phase Commit (2PC)~\cite{gray1992transaction} protocol. For applications requiring \textit{transactional} state consistency across services~\cite{rodrigosurvey}, important challenges remain open, including data consistency across service calls and exactly-once guarantees.

By studying these challenges, we identify that the event-driven microservice and orchestrated FaaS architectures inherently form stateful streaming dataflow graphs with partitioned state co-located with the application logic. In short, this architectural pattern is the same as that followed by streaming dataflow systems such as Apache Flink~\cite{flink} and Spark Streaming~\cite{ArmbrustDT18}. However, modern dataflow systems are primarily built for streaming analytics and do not have an API or important features like transactional support necessary to create general-purpose cloud applications.

This thesis addresses these critical gaps by extending the Stateful Function-as-a-Service (SFaaS)~\cite{durable_functions,cloudburst,boki,beldi,tstatefunjournal,styx} paradigm with serializable transactional guarantees, local state management with state migration for elasticity, and fault-tolerant execution. Our contributions significantly simplify the development of scalable, consistent, and reliable cloud applications, democratizing access to sophisticated distributed system features without requiring developers to manage underlying complexity explicitly.

\begin{figure}[t]
    \centering
    \captionsetup{justification=centering}
    \includegraphics[width=\textwidth]{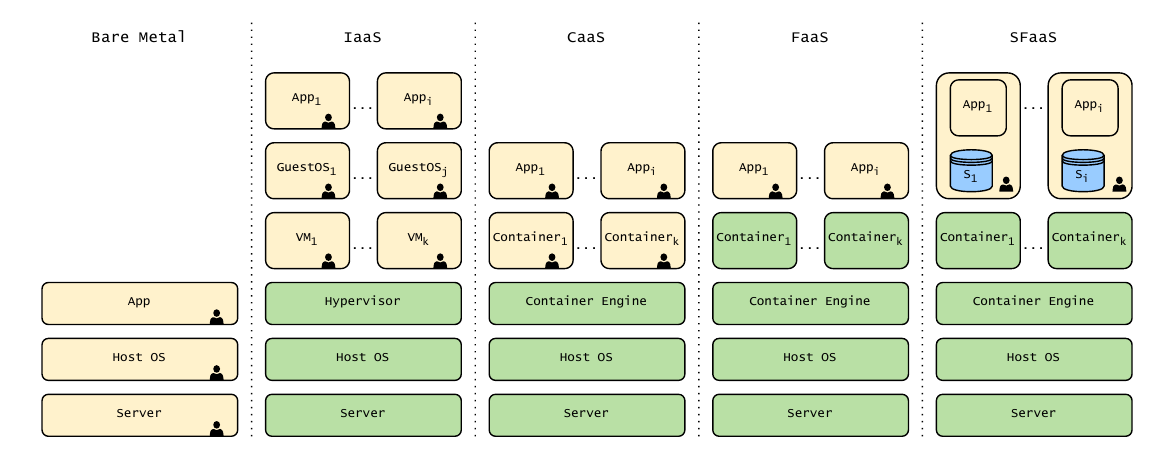}
    \caption{Evolution of cloud computing abstractions and the human involvement required}
    \label{fig:metal_to_clouds}
\end{figure}

\section{From the Metal to the Clouds}\label{intro:sec:fmtc}

Database and distributed systems have experienced significant shifts in deployment models over the last decades, evolving from tightly coupled bare metal hardware/software to highly abstracted cloud-based environments, as illustrated in \Cref{fig:metal_to_clouds}. Initially, web applications and database systems operated directly on physical servers, providing predictable performance and resource allocation, but with limited scalability and significant operational overhead~\cite{Srinivasan2014}. Administrators were tasked with manually managing infrastructure components, dealing with hardware failures, and optimizing performance at the physical level. This manual approach often led to downtime during maintenance and limited agility in responding to evolving business requirements.

\para{Virtual Machines (2000s)} The introduction of Virtual Machines~\cite{vm_arch_1,vm_arch_2, microkernels_vm} (VMs), as an Infrastructure-as-a-Service (IaaS) solution, marked a key transition. VMs abstract physical hardware, enabling multiple isolated operating systems and database instances to coexist on a single physical host. Virtualization improved resource utilization, simplified infrastructure management, and allowed easier scaling through VM replication and migration. Both on-premise virtualization and cloud providers became commonplace, democratizing access to flexible infrastructure and reducing administrative burdens. However, VMs introduced new challenges, such as performance overhead due to hypervisor abstraction, complexities in VM management, and slower startup times.

\para{Containerization (2010s)} The emergence of containerization technologies~\cite{container,container2}, notably Docker and Kubernetes, represented another substantial advancement. Containers encapsulate applications and their dependencies in lightweight, portable units,  reducing overhead compared to VMs. The benefits of containerized environments are faster startup times, simplified versioning, and streamlined deployment pipelines. Container orchestrators such as Kubernetes further introduced automated scaling and self-healing capabilities. Moreover, containers enhanced consistency across development, testing, and production environments. These solutions are referred to as Containers-as-a-Service (CaaS).

\para{Serverless (2020s)} Most recently, serverless computing has transformed the landscape by abstracting away infrastructure management entirely. Serverless architectures enable database systems to scale dynamically and transparently, responding to demand fluctuations without explicit provisioning of resources. Databases such as DynamoDB~\cite{dynamoDB}, Aurora~\cite{aurora}, or Firestore~\cite{firestore} exemplify this trend by automatically scaling compute and storage independently. Furthermore, serverless computing reduces the barrier to entry for developers and businesses by eliminating the need to manage the underlying infrastructure, enabling a stronger focus on the application logic. 

Regarding application logic, serverless computing promotes a shift towards a stateless execution model, exemplified by Function-as-a-Service (FaaS) offerings such as AWS Lambda~\cite{Lambda}, Azure Functions~\cite{Azure}, or Google Cloud Functions~\cite{GoogleFunctions}. In FaaS, developers break down applications into fine-grained, event-driven functions that can be triggered by external events such as HTTP requests or asynchronous messages. These functions are stateless, with each invocation operating independently and relying on external storage to persist state. This paradigm encourages modularity and fine-grained scalability, as each function can be deployed and scaled independently. However, FaaS introduces new challenges in managing state, coordinating execution across functions, and reasoning about correctness.

To overcome the limitations of FaaS, the Stateful Function-as-a-Service (SFaaS) paradigm has emerged. SFaaS allows functions to maintain state across invocations, enabling richer application semantics without compromising the benefits of serverless infrastructure. In this model, functions can encapsulate state locally or access stateful abstractions through tightly integrated state management systems. Examples include systems like Apache Flink Statefun~\cite{statefun} and Azure Durable Functions~\cite{AzureDurableFunctions}, which provide mechanisms for long-lived workflows, stateful coordination, and reliable function orchestration. By bridging the gap between stateless scalability and stateful logic, SFaaS makes a step toward making serverless computing viable for general-purpose cloud applications. Furthermore, alongside SFaaS, (Virtual) Actors~\cite{orleans} are a closely related paradigm that addresses the same core challenge: enabling stateful, long-lived interactions in cloud applications while preserving the benefits of serverless infrastructure. Both paradigms aim to abstract away the complexities of distributed state management, fault tolerance, and scalability from the developer, allowing them to focus on business logic.

\para{This Thesis} In this thesis, we extend SFaaS, the latest serverless paradigm, with serializable transactional guarantees, local state support with state migration support for elasticity, and fault tolerance (\Cref{chapter3,chapter4,chapter5}). Making SFaaS closer to being feature-complete for scalable cloud applications.

\section{Scalable Cloud Application Aspects}\label{intro:sec:scaa}

Building scalable cloud applications involves navigating the design space of system and programmability tradeoffs. This section highlights key aspects that influence how modern cloud applications are developed and operated at scale. We begin by examining cloud runtimes and programming models, which define how developers express application logic and how systems execute it across distributed infrastructure. We then discuss the importance of transactional guarantees for preserving application correctness. Next, we cover high availability and fault tolerance mechanisms that ensure services remain responsive and resilient despite failures. Finally, adaptivity is addressed, enabling systems to adjust to workload changes and infrastructure constraints dynamically.

\subsection{Cloud Runtimes \&\ Programming Models}

Programming models for distributed systems have been a long-standing subject of research~\cite{pl_for_distributed, alvaro2010boom, dryadlinq, hilda, bal1992orca}. In the context of developing cloud applications, the programming model and runtime abstraction greatly influence the design of the underlying system and vice versa. At the moment of writing this thesis, the most common approach taken by software engineers is the use of microservice frameworks (e.g., Java Spring~\cite{java_spring}, Python Flask~\cite{python_flask}). Alongside microservices lie some emerging programming models for cloud applications. These are: \emph{i)} Actors (e.g., Akka~\cite{akka}, Orleans~\cite{bykov2011orleans}) and \emph{ii)} Stateful Functions (e.g., Flink Statefun~\cite{statefun}, Azure Durable Functions~\cite{durable_functions}), all differing substantially in system design, abstractions, and guarantees offered to developers. 

\para{Microservices} To reap the benefits of parallel processing and loose coupling, the prevalent approach is functionally partitioning the application logic and state into independent components that communicate with each other via synchronous or asynchronous messages~\cite{Base2008}, called microservices. Microservice frameworks provide libraries and tooling to help developers build microservices. Provided libraries might offer Object-Relational Mapping for database interactions, service communication using REST or message passing, and retrying/revoking features to ensure correctness. Each microservice built with such frameworks often employs a multi-threaded application server. If a given microservice is stateful, the paired database handles data consistency based on the configured isolation level. However, on the level of a global microservice deployment, no consistency guarantees can be given because the databases are separate, and the consistency guarantees of a single distributed and consistent database cannot be used anymore; thus, the developers need to implement them (e.g., ad-hoc transactions~\cite{adhoc_transactions}), adding to their complexity. 

\para{Actors} The actor model is a programming model for concurrent and parallel computation in distributed systems~\cite{agha_actors_1986}. An actor models a sequential process that performs transformations on the local state based on incoming asynchronous messages. Actor systems achieve concurrency by pipelining and dynamic actor creation~\cite{agha_actors_1986}. Traditional actor frameworks allow developers to program systems using low-level primitives, such as actor IDs and prescriptions of their physical locations.

Virtual actors~\cite{bykov2011orleans} are an extension of the traditional actor model that provides location transparency without forcing developers to deal with actor allocation/scheduling in a cluster, life-cycle management, explicitly creating and tearing down actor instances, as well as failure transparency. Virtual actors are implemented in popular distributed application frameworks like Orleans~\cite{bykov2011orleans} and Dapr~\cite{dapr}.

\para{Cloud Functions} With the emergence of serverless computing~\cite{serverless}, a new cloud paradigm called Function-as-a-Service (FaaS)~\cite{bench_faas,durable_functions} rose in popularity. In FaaS, developers build applications as a collection of functions. Function executions are triggered by external events or invocations from other functions, allowing for function workflow compositions.

Initially, FaaS offerings targeted workloads with small to moderate I/O and communication, demotivating offering data models and consistency guarantees on operations within a single function or cutting across functions~\cite{beldi,cloudburst}.

Due to these limitations, there has been increasing interest in extending the FaaS paradigm to applications that require frequent state access with some consistency guarantees, called Stateful-FaaS (SFaaS)~\cite{beldi, boki, cloudburst, tstatefun, styx}. In SFaaS, developers write programs based on composing functions and enjoy a key-value interface to access the application state. Apart from the shared state interface, the programming, execution, and deployment model resembles Virtual Actors.

\para{Stateful Dataflows} The dataflow model prescribes that an application is represented as a data flow graph. That involves decomposing programs into independent processing units. Organized as Directed Acyclic Graphs (DAGs), processing units (nodes) exchange data via message streams (edges). Dataflows have been mainly applied as the programming model for analytical batch and stream processing systems like Spark~\cite{DBLP:conf/hotcloud/ZahariaCFSS10} and  Flink~\cite{flink}. In these systems, processing units are framed as operators that perform either stateful (e.g., joins, aggregates) or stateless (e.g., map, filter) operations. Message streams can be partitioned and assigned to different concurrent operator instances. Stateful operators typically do not share state, preventing concurrency issues and enhancing parallelism.

However, the dataflow model has two main issues regarding its use for transactional cloud applications. First, dataflow systems are typically programmed using functional programming-style dataflow APIs, requiring developers to rewrite cloud applications to align with the event-driven dataflow model. While many cloud applications can be adapted to this paradigm, doing so demands significant developer training and effort. Second, implementing \textit{transactions} on top of dataflows, namely transactions that span multiple services with serializable guarantees, is not trivial~\cite{tstatefun, zhang2024survey, tstatefunjournal, styxcidr}.

\para{This Thesis} In this thesis, we utilize the dataflow execution model, address the transactionality issues (\Cref{chapter2,chapter4}), and provide a stateful entity domain-specific language for ease of programming that is close to the Virtual Actor programming model (\Cref{chapter3}).

\subsection{Transactional Guarantees} \label{intro:sec:transactional_guarantees}

\begin{figure*}[t]
    \centering
    \captionsetup{justification=raggedright,singlelinecheck=false}
    \begin{subfigure}{0.9\textwidth}
        \centering
        \captionsetup{justification=centering}
        \includegraphics[width=\linewidth]{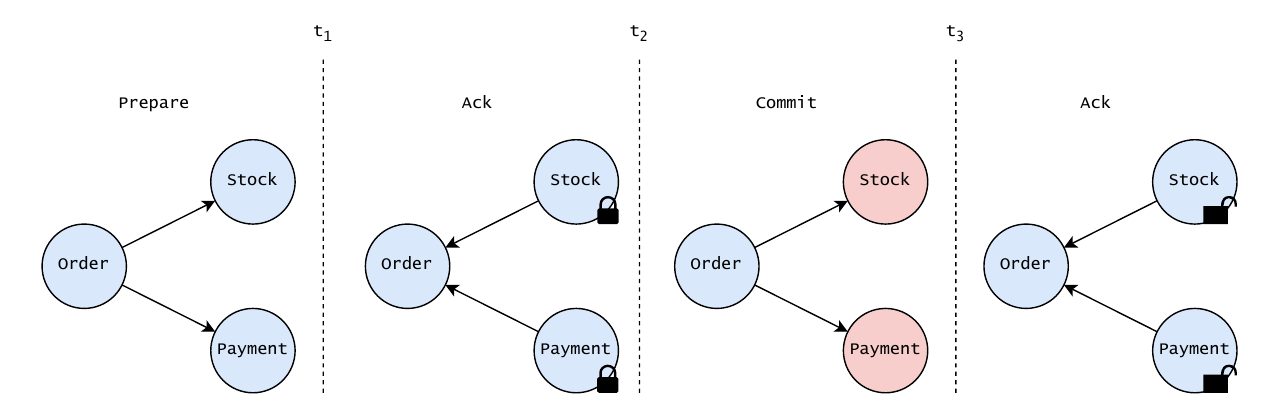}
        \caption{Two-Phase Commit}
        \label{intro:fig:2pc}
    \end{subfigure}
    \hfill
    \begin{subfigure}{0.75\textwidth}
        \centering
        \captionsetup{justification=centering}
        \includegraphics[width=\linewidth]{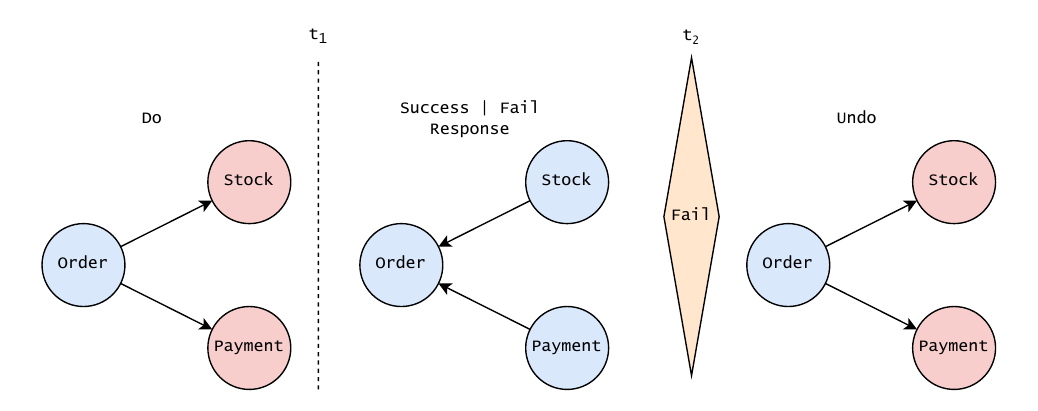}
        \caption{Saga}
        \label{intro:fig:sagas}
    \end{subfigure}
    \caption{Comparison between the two-phase commit and Saga transactional protocols in the checkout scenario that spans three microservices (order, stock, and payment). Phases with state mutations are marked in red. In this example, the two-phase commit requires four timeslots to commit, while the Saga only requires two.}
    \label{intro:fig:transactions}
    \vspace{-2mm}
\end{figure*}

While developing cloud applications, maintaining correctness is essential. In traditional database settings, this usually refers to ACID transactional guarantees~\cite{Gray1978}. \emph{Atomicity} ensures that either all operations within a transaction succeed or have no effect on the state. \emph{Consistency} guarantees that a transaction moves the database from one valid state to another, preserving integrity constraints. \emph{Isolation} ensures that concurrently executing transactions do not interfere with each other. \emph{Durability} ensures that once a transaction commits, its effects are permanent, even in the case of system crashes.

While atomicity and durability are relatively straightforward, consistency and isolation come in different flavors. Recently, researchers have argued for \emph{serializability}~\cite{blanastransactions,transactionsmakedebuggingeasy}. The isolation levels offered by post-NoSQL systems also reflect this notion, such as Google's Spanner~\cite{spanner} and, more recently, CockroachDB~\cite{cockroachdb}. The primary argument for serializability is that, even in very complex distributed deployments, engineers can reason about correctness in the presence of state inconsistencies~\cite{blanastransactions,transactionsmakedebuggingeasy}. The standard way of guaranteeing serializability in a distributed setting is the two-phase commit protocol~\cite{Gray1978} (illustrated in \Cref{intro:fig:2pc}). In the first phase, the transaction manager prepares based on a deadlock detection mechanism (i.e., wait-die or wound-wait) and locks the requested keys or returns a failure. Once all parties acknowledge they hold the locks, the transaction can commit and unlock the keys. 

The counterargument against serializability is the difficulty of providing such a strong guarantee and maintaining high performance, especially when long-running transactions are present in the workload. The most prominent technique for eventual consistency without isolation is the Saga pattern, as shown in \Cref{intro:fig:sagas}. A Saga decomposes a transaction into a sequence of local transactions (steps), each performed by a different service or process. If one of the local steps fails, the system executes compensating actions to undo the preceding successful steps. This approach allows for coordination without distributed locking~\cite{2pl, specification1991distributed} or global consensus~\cite{lamport2001paxos,omnipaxos}.

As cloud applications grow in scale and complexity, the need for strong guarantees around data correctness becomes increasingly pronounced. However, it is currently observed that a significant number of developers overlook the need for transactions and attempt to create custom solutions~\cite{blanastransactions, adhoc_transactions} called Ad Hoc transactions, which are developer-coordinated sequences of database operations embedded in application code rather than expressed using traditional database transactions or ORM invariant validations. Ad hoc transactions attempt to simulate isolation through manual concurrency control, often with partial correctness and performance trade-offs.

\para{This Thesis} The facts mentioned above strengthen our argument for democratizing scalable cloud applications and the need for a simple programming model~\cite{stateflow} (\Cref{chapter3}) and Styx~\cite{styx} that guarantees serializable ACID transactions without any developer involvement (\Cref{chapter4}).

\subsection{High Availability \&\ Fault Tolerance}

Two critical requirements for reliable systems are high-availability and fault tolerance~\cite{highavailability, Gray1978, MalviyaWM14}, particularly for mission-critical applications such as telecommunications, financial services, transportation, and healthcare. These systems are expected to operate continuously with near-zero downtime. A typical service-level agreement (SLA) that cloud providers offer software engineers specifies 5 minutes or less of unavailability per year, corresponding to the so-called "five nines" (99.999\%) of availability.

To narrow the scope, we will focus on higher-level techniques that database or distributed systems employ to give such guarantees. For high availability, the primary mechanism is replication (i.e., maintaining copies of the database across multiple nodes or data centers). In practice, most systems implement failover mechanisms; for example, automatic switching to a standby replica when the active/primary node fails. Moreover, load balancing incoming requests and partitioning/sharding the database state help distribute the load more evenly so that more nodes share the load at peak times.

Database systems use a plethora of mechanisms to maintain fault tolerance. Some of them are: \emph{i)} periodically capturing state to enable rollback and recovery after failures (snapshotting/checkpointing~\cite{chandy1985distributed}), \emph{ii)} logging state mutations before applying them to ensure durability (write-ahead logging\cite{wal}),  \emph{iii)} transaction mechanisms, as mentioned in \Cref{intro:sec:transactional_guarantees},  \emph{iv)} requiring agreement from a majority before committing operations in the presence of replicas (quorum and voting~\cite{raft, paxos, omnipaxos, lamport2001paxos}).

\para{This Thesis} The primary contribution of this thesis, Styx (\Cref{chapter4}), utilizes periodic coordinated snapshots~\cite{chandy1985distributed, checkmate} and a heartbeat failure detection mechanism for fault tolerance. Furthermore, we use minimal write-ahead logging to maintain determinism throughout the system. Although not directly addressed, high availability is straightforward to implement using async replication, as discussed in \Cref{concl:sec:lim_fw}.

\subsection{Adaptivity}

Adaptivity is a critical requirement for scalable cloud systems. As workloads evolve due to user demand, temporal access patterns, or changes in system topology, systems must continuously adapt to maintain performance and correctness. Adaptivity manifests in several forms, most notably in the system's ability to autoscale~\cite{surveyAutoscaling,googleAutoscaling,awareAutoscaling,siachamisAutoscalers} and, in the case of stateful systems (e.g., stream processing engines, databases, or stateful microservices/functions), migrate state across machines~\cite{meces,rhino,megaphone,squall}.

\para{Autoscaling} Adaptive systems offer autoscaling to automatically adjust the number of compute instances based on workload metrics such as CPU usage, request latency, or custom application-level signals. Autoscaling is relatively straightforward for stateless systems by spawning or removing instances, whereas for stateful components, it requires careful orchestration to ensure seamless continuity and consistency. For example, spawning a new instance might require retrieving its state from a shared store or other replicas, potentially introducing latency or consistency issues if not handled properly. Furthermore, fine-grained function-level autoscaling in serverless systems must balance responsiveness and the overhead of startup, warmup, or reconfiguration~\cite{serverlessAutoscaling}.

\para{State Migration} To support adaptivity in stateful systems, state migration is essential. It enables redistributing the application state across nodes, whether to balance load, scale up/down, or recover from failures. State migration often involves moving partitions/shards or individual keys in data-intensive systems. Effective migration mechanisms must minimize downtime, ensure no data loss, and avoid violating consistency or transactional guarantees.

\para{This Thesis} Styx (\Cref{chapter4}) provides state migration functionality (\Cref{chapter5}) while maintaining zero downtime, consistency, and transactional guarantees. Since autoscaling is orthogonal, it is left as future work.

\section{What is Serverless?}\label{intro:sec:wis}

Serverless computing~\cite{serverless-open-problems, serverless_architecture18, serverless_uta18, hellerstein_serverless19, patterson_serverless21, serverless_iosup23} represents a cloud computing paradigm characterized by abstracting infrastructure management, automated scaling, and allowing granular, pay-per-use billing. By shifting operational concerns away from developers, serverless computing enables a simpler development model, more efficient resource utilization, and more flexible economic practices. In \Cref{intro:sec:fmtc}, we discussed the implications of serverless for data management and scalable cloud applications; this section explores the three dimensions of serverless computing: the developer experience, underlying system considerations, and associated economic trade-offs.

\subsection{Developers' Perspective}

From the developers' viewpoint, serverless computing significantly simplifies cloud application development by abstracting away infrastructure concerns and allowing developers to focus exclusively on business logic~\cite{serverless-open-problems,patterson_serverless21}. Serverless applications are composed of event-driven, stateless functions triggered by external events such as HTTP requests, database updates, or messages. The complexity associated with resource provisioning, container management, and scaling is entirely managed by cloud providers.

This paradigm dramatically reduces developers' effort, facilitating rapid prototyping, faster iteration, and shorter deployment cycles. Nevertheless, developers must adapt to constraints intrinsic to serverless architectures, including function statelessness, ephemeral execution environments, and runtime limitations~\cite{hellerstein_serverless19,serverless-open-problems}. Despite these limitations, the overall productivity gains and reduced operational complexity are considerable.

\subsection{Systems' Considerations}

Serverless computing demands sophisticated automation and efficient resource management from the underlying execution systems. Cloud platforms dynamically allocate resources and scale functions in response to changing demand with minimal intervention. Elasticity is achieved through mechanisms such as rapid cold-start initialization, function reuse, and fine-grained resource isolation across tenants~\cite{patterson_serverless21,serverless-open-problems}. However, these properties impose significant system challenges.

One critical challenge is mitigating cold-start latency, the delay experienced when initializing function execution engines, which significantly impacts performance for latency-sensitive applications. Moreover, efficient state and data locality handling presents additional complexity, as functions must interact with external storage services due to inherent statelessness~\cite{hellerstein_serverless19,patterson_serverless21}. Another challenge is achieving predictable performance despite variable, unpredictable workloads, which require sophisticated scheduling and orchestration strategies.

\begin{figure}[t]
    \centering
    \captionsetup{justification=raggedright,singlelinecheck=false}
    \includegraphics[width=0.5\textwidth]{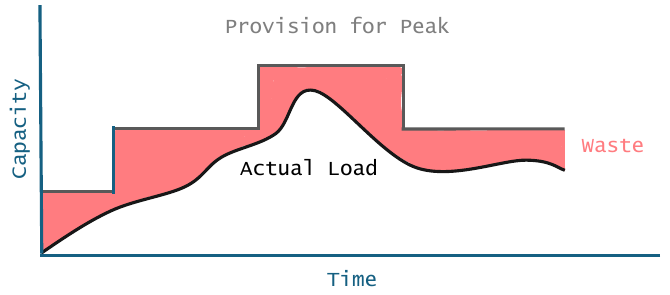}
    \caption{In a typical scenario, operators provision machines based on the expected peak, which results in much waste. With Serverless solutions, operators do not need to provision any machines, which matches actual load and reduces waste.}
    \label{fig:serverless_pay_as_you_go}
\end{figure}

\subsection{Economics}

Serverless computing introduces a fundamental shift in cloud economics by aligning resource costs directly with consumption. Unlike traditional Infrastructure-as-a-Service (IaaS) or Platform-as-a-Service (PaaS) models, where developers pre-allocate resources and bear the costs of idle infrastructure, serverless charges only for the execution time and resources actively utilized by applications~\cite{patterson_serverless21,serverless-open-problems}.

This fine-grained billing model significantly reduces intermittent or unpredictable workload costs, enabling efficient resource utilization without upfront capital commitments. Consequently, organizations can experiment, iterate, and innovate more freely without the economic risk associated with resource over-provisioning~\cite{hellerstein_serverless19}. Thus, serverless democratizes access to cloud-scale computing, benefiting smaller teams, startups, or applications with highly variable usage patterns.

However, this consumption-based model may become economically unfavorable for sustained or predictable workloads. Traditional reserved resource models may offer superior cost efficiency in scenarios with constant or highly predictable resource usage. Therefore, enterprises must evaluate workload characteristics carefully to determine the optimal economic strategy when adopting serverless computing~\cite{hellerstein_serverless19,patterson_serverless21}.

\subsection{This Thesis Interpretation}

This thesis primarily focuses on the developer's perspective, adding transactional guarantees, fault tolerance, stateful functions, and determinism. Our approach provides lower latency, higher throughput, and stronger consistency guarantees while maintaining ease of use. We explore elasticity with state migration mechanisms, and although critical to serverless adoption, we leave the economics out of scope.

\section{Main Research Questions}\label{intro:sec:rq}

In this thesis, the primary research question is to investigate the possibility of creating a system and framework with an easy-to-use programming model that allows non-expert developers to write complex cloud applications free of concurrency or machine failure considerations while maintaining high performance.
\vspace{2mm}

\noindent To that end, we have split this into five research questions:

\vspace{2mm}

\rqs{RQ-1} What would be the optimal substrate for a system serving complex cloud applications?

\vspace{2mm}

\noindent In \Cref{chapter1}, we answer \textbf{RQ-1} by first laying out the path towards the ideal cloud runtime from a software developer's perspective. Furthermore, pointing out the similarities between modern cloud architectures (i.e., event-driven microservices) and stateful dataflow graphs. To test our deduction, we proceed with the following research question.

\vspace{2mm}

\rqs{RQ-2} Is it possible to use an existing SFaaS dataflow system for this purpose? If so, what are the limitations?

\vspace{2mm}

\noindent To answer \textbf{RQ-2}, as shown in \Cref{chapter2}, we first attempted to use Flink-Statefun~\cite{statefun}, a well-established SFaaS system, and add all the required transaction orchestration. Although it outperformed the state-of-the-art, it had a few limitations. The serializable protocol suffered from low throughput in high-contention scenarios because of its implemented deadlock-prevention mechanism. The architecture of Flink-Statefun transfers the state to remote processing workers instead of processing it within the stream processor, increasing latency. Also, its API had a lot of boilerplate code, making it difficult for non-expert developers. Leading to the third research question:

\vspace{2mm}

\rqs{RQ-3} Can we design a domain-specific language that runs on top of all stream processing systems, providing a simple, easy-to-use object-oriented API?

\vspace{2mm}

\noindent Answering \textbf{RQ-3} is essential for the democratization of the development of large-scale cloud applications. In \Cref{chapter3}, we present \textit{Stateflow}, a programming model and intermediate representation (IR) that compiles imperative, transactional object-oriented applications into distributed dataflow graphs and executes them on existing dataflow systems. 
Instead of designing an external Domain-Specific Language (DSL) for our needs, we opted for an internal DSL embedded in Python, which is already popular for cloud programming and is easy to use.
Specifically, a given Python program is first compiled into an IR, an enriched stateful dataflow graph independent of the target execution engine. The choice of execution engine is entirely independent of the application layer, which allows switching to different ones with no changes to the application code. However, all current systems had limitations, leading to poor performance, and our fourth research question:

\vspace{2mm}

\rqs{RQ-4} Can we build a system that enables developers to write transactional, data-intensive cloud applications without requiring expertise in distributed systems?

\vspace{2mm}

\noindent To that end, in \Cref{chapter4}, we showcase how we built Styx, a novel dataflow-based runtime for SFaaS that ensures exactly-once execution while enabling arbitrary function orchestrations with end-to-end serializability guarantees, leveraging concepts from deterministic databases to avoid costly 2PCs. Our work stems from two critical observations. First, modern streaming dataflow systems such as Apache Flink~\cite{flink} guarantee exactly once processing~\cite{flink,carbone2017state,silvestre2021clonos} by hiding failures from their developers. However, they cannot be used to execute cloud applications such as microservices, let alone guarantee transactional SFaaS orchestrations. Second, deterministic database protocols~\cite{calvin,aria} that can avoid expensive 2PC invocations have not been designed for complex function orchestrations and call-graphs. Thus, they are not directly applicable to the needs of SFaaS. While Styx solved the high-performance requirements, it is not flexible resource-wise, leading to the fifth and final research question:

\vspace{2mm}

\rqs{RQ-5} Can we give Styx elasticity properties, such as state migration, allowing it to become serverless? 

\vspace{2mm}

\noindent Much work has been carried out in dynamic reconfiguration~\cite{noria, ds2, dhalion} and state migration~\cite{megaphone, meces, rhino} of streaming dataflow systems over the last few years. These advancements are necessary for providing serverless elasticity in the case of state and compute collocation and enable dataflow graphs as an execution model for \textit{serverless} stateful cloud applications, which is presented in \Cref{chapter5}.

\section{Contributions}\label{intro:sec:contrib}

The main contributions of this thesis, alongside their open-source code artifacts,  are summarized as follows:

\begin{enumerate}
    \item We characterize cloud application runtimes and lay a path toward the "ideal" runtime in the modern setting. We deduce that the modern event-driven microservice paradigm closely matches the fundamentals of dataflow engines and argue that a dataflow engine can serve as one. (\Cref{chapter1})
    \item To validate our deduction from \Cref{chapter1}, we explore the possibility of adapting an existing SFaaS system, Apache Flink Statefun, to the cloud application runtime requirements. To that end, we implemented coordinator functions that provide transactional support within Apache Flink Statefun with varying consistency guarantees. Our new system is called T-Statefun\footnote{\url{https://github.com/delftdata/flink-statefun-transactions}} and outperforms the current state-of-the-art transactional SFaaS systems by an order of magnitude. It has also distributed OLTP databases by at least 1.5x. (\Cref{chapter2})
    \item In \Cref{chapter2}, we addressed all the functional requirements of a dataflow runtime that serves scalable cloud applications. Next, we created an easy-to-use domain-specific language called Stateflow\footnote{\url{https://github.com/delftdata/stateflow}}. Stateflow takes object-oriented code, where an object is a stateful entity, and transforms it into the dataflow execution model. We have proven the ease of integration of Stateflow with existing dataflow systems and the minimal overhead it adds to those.  (\Cref{chapter3})
    \item We created a new distributed dataflow engine, Styx\footnote{\url{https://github.com/delftdata/styx}}, that serves as a runtime for scalable cloud applications. Based on lessons learned from \Cref{chapter2,chapter3}, we ensure that the state is local to the dataflow operator and allows for direct addressing of operators since other systems had to go through the ingress if they wanted to respond to another operator. These design changes required a few algorithmic changes and optimizations that enabled Styx to outperform the T-Statefun (\Cref{chapter2}) and the state of the art by an order of magnitude while providing serializable transactional guarantees and coarse-grained fault tolerance. (\Cref{chapter4})
    \item We extended Styx with state-of-the-art state migration capabilities, a step towards Styx becoming elastic, leading to it becoming serverless. Our experiments show minimal impact of the migrating actions on Styx's throughput and latency. (\Cref{chapter5})
\end{enumerate}

\section{Thesis Origins}\label{intro:sec:origins}

\noindent The main body of the thesis consists of five main chapters based on the research papers listed below:
\\

\noindent\textbf{\Cref{chapter1}} is based on the following publication:
\vspace{2mm}

\faFileTextO~\hangindent=15pt\emph{K. Psarakis, G. Christodoulou, M. Fragkoulis, and A. Katsifodimos. Transactional Cloud Applications Go with the (Data)Flow, CIDR'25}~\cite{styxcidr}.\\

\noindent\textbf{\Cref{chapter2}} is based on the following publications:
\vspace{2mm}

\faFileTextO~\faTrophy\footnote{The trophy icon indicates that the paper won the best paper award}~\hangindent=15pt\emph{M. de Heus, K. Psarakis, M. Fragkoulis, and A. Katsifodimos. Distributed transactions on serverless stateful functions, ACM DEBS'21}~\cite{tstatefun}. 
\vspace{1mm}

\faFileTextO~\hangindent=15pt\emph{M. de Heus, K. Psarakis, M. Fragkoulis, and A. Katsifodimos. Transactions Across Serverless Functions Leveraging Stateful Dataflows. In Elsevier's Information Systems, Volume 108, September 2022}~\cite{tstatefunjournal}.\\

\noindent\textbf{\Cref{chapter3}} is based on the following publications:
\vspace{2mm}

\faFileTextO~\hangindent=15pt\emph{K. Psarakis, W. Zorgdrager,  M. Fragkoulis, G. Salvaneschi, and A. Katsifodimos. Stateful Entities: Object-oriented Cloud Applications as Distributed Dataflows (Abstr.), CIDR'23}~\cite{stateflowcidr}.\\


\faFileTextO~\hangindent=15pt\emph{K. Psarakis, W. Zorgdrager,  M. Fragkoulis, G. Salvaneschi, and A. Katsifodimos. Stateful Entities: Object-oriented Cloud Applications as Distributed Dataflows (Vision), EDBT'24}~\cite{stateflow}.\\

\noindent\textbf{\Cref{chapter4}} is based on the following publication:
\vspace{2mm}

\faFileTextO~\hangindent=15pt\emph{K. Psarakis, G. Christodoulou, G. Siachamis, M. Fragkoulis, and A. Katsifodimos. Styx: Transactional Stateful Functions on Streaming Dataflows, ACM SIGMOD'25}~\cite{styx}.\\

\faFileTextO~\hangindent=15pt\emph{K. Psarakis, O. Mraz, G. Christodoulou, G. Siachamis, M. Fragkoulis, and A. Katsifodimos. Styx in Action: Transactional Cloud Applications Made Easy (Demo), VLDB'25}~\cite{styxdemo}.\\

\noindent\textbf{\Cref{chapter5}} is based on the following publication:
\vspace{2mm}

\faFileTextO~\hangindent=15pt\emph{K. Psarakis, G. Christodoulou, G. Siachamis, M. Fragkoulis, and A. Katsifodimos. State Migration in Styx: Towards Serverless Transactional Functions (Under Review)}.\\

\noindent Additionally, this dissertation benefits from the following research papers: \\

\faFileTextO~\hangindent=15pt\emph{R. Laigner, G. Christodoulou, K. Psarakis, A. Katsifodimos, Y. Zhou. Transactional Cloud Applications: Status Quo, Challenges, and Opportunities (Tutorial), ACM SIGMOD'25}~\cite{rodrigotutorial}.\\

\faFileTextO~\hangindent=15pt\emph{G. Siachamis, K. Psarakis, M. Fragkoulis, A. van Deursen, P. Carbone, A. Katsifodimos. CheckMate: Evaluating Checkpointing Protocols for Streaming Dataflows, IEEE ICDE'24}~\cite{checkmate}.\\

\faFileTextO~\hangindent=15pt\emph{G. Siachamis, G. Christodoulou, K. Psarakis, M. Fragkoulis, A. van Deursen and A. Katsifodimos. Evaluating Stream Processing Autoscalers, ACM DEBS'24}~\cite{siachamisAutoscalers}.\\

\newpage

\section{Visual Outline} \label{intro:sec:vis}

\begin{figure}[h]
    \centering
    \captionsetup{justification=centering}
    \includegraphics[width=\textwidth]{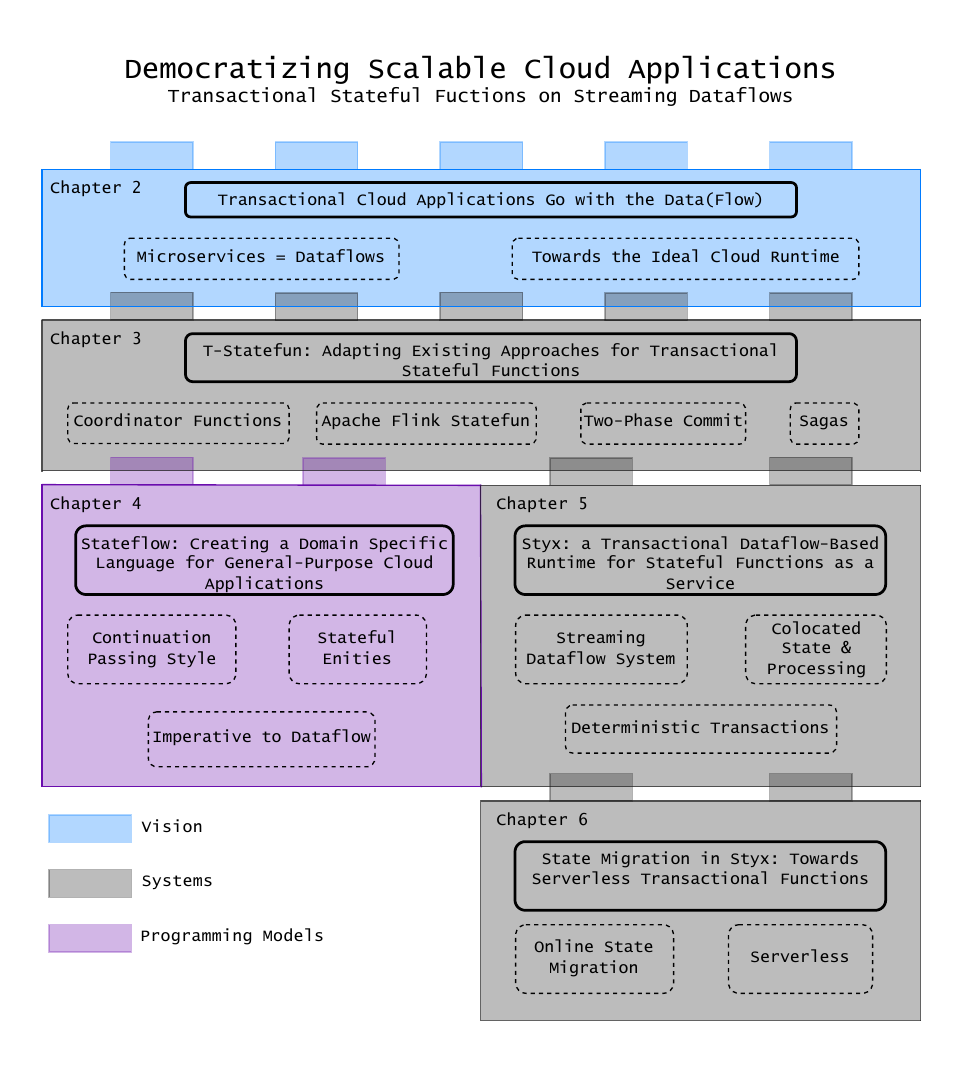}
    \caption{Visual outline of this thesis with the chapters and main ideas put into them.}
    \label{fig:vco}
\end{figure}
\chapter{Transactional Cloud Applications Go with the Data(Flow)}
\label{chapter1}

\vfill

\begin{abstract}

Traditional monolithic applications are migrated to the cloud, typically using a microservice-like architecture. Although this migration offers significant benefits, such as scalability and development agility, it also leaves behind the transactional guarantees that database systems have provided to monolithic applications for decades. In the cloud era, developers build transactional, fault-tolerant distributed applications by explicitly implementing transaction protocols at the application level.

This chapter presents the main argument of this thesis and outlines our approach: the principles underlying the streaming dataflow execution model and deterministic transactional protocols provide a powerful and suitable substrate for executing transactional cloud applications.


\end{abstract}

\vfill

\blfootnote{Parts of this chapter have been published in:\\ \faFileTextO~\hangindent=15pt\emph{K. Psarakis, G. Christodoulou, M. Fragkoulis, and A. Katsifodimos. Transactional Cloud Applications Go with the (Data)Flow, CIDR'25}~\cite{styxcidr}. }

\newpage


\dropcap{O}{}ver the last decades, enterprises have migrated applications such as order management systems, banking systems, game-backend services, and supply chain management to the cloud. The transition from monolithic applications follows an architectural pattern that favors a stateless application layer supported by a stateful database layer. All the stateless and stateful components communicate with each other via REST calls or message queues. Microservice architectures are well-known instances of this pattern.

At first sight, microservices are an obvious candidate for replacing monolithic applications and migrating to the cloud. Microservices offer code modularity, scalability, and development agility. However, microservices lose an important advantage that monolithic applications enjoyed for almost five decades: state management, failure management, and state consistency were the responsibility of database systems. Today's microservice architectures depart from these DBMS amenities by intermingling state management, service messaging, and coordination with application logic~\cite{DBLP:conf/cidr/Helland24}. From the database community's point of view, the microservice architectural pattern resembles the situation described long ago~\cite{papadimitriou1979serializability}, when developers implemented ad hoc application-level transactions to ensure database consistency. 
Worse, managing communication and state in a distributed cloud environment increases complexity.

For instance, in a shopping cart application, to complete a checkout, we first need to ensure there is enough stock of the selected products and then receive payment before shipping the products. In the microservice paradigm, each service (Cart, Stock, Payment) has its own API, database, and application logic, and communicates with other services through API calls. The main issue with microservices is that both atomicity (i.e., update stock \textit{and} get paid for an order, or cancel both actions) and state consistency across workflows (i.e., the stock counts should reflect the successfully paid orders) must be implemented in application code. Similarly, Function-as-a-Service (FaaS) follows the same general architecture pattern as microservices: a stateless application, an external database, and message-based communication. An orchestration layer on top of FaaS enables the composition of complex workflows to build service-oriented applications. 

However, orchestrators~\cite{GoogleCloudRunfunctions, stepfunctions, AzureDurableFunctions} solve only part of the problem, namely the atomicity of a workflow's execution. Moreover, achieving atomicity typically requires developers to handcraft compensating actions to roll back changes correctly using the SAGA pattern~\cite{sagas}. To address these concerns, a line of research~\cite{beldi, boki} proposes FaaS systems for workflow orchestration with transactional guarantees at the expense of performance and high-level programming primitives.
For applications requiring low-latency transaction execution and state consistency across services \cite{rodrigosurvey}, important challenges remain open.

\begin{figure*}[t]
    \centering
    \includegraphics[width=0.8\textwidth]{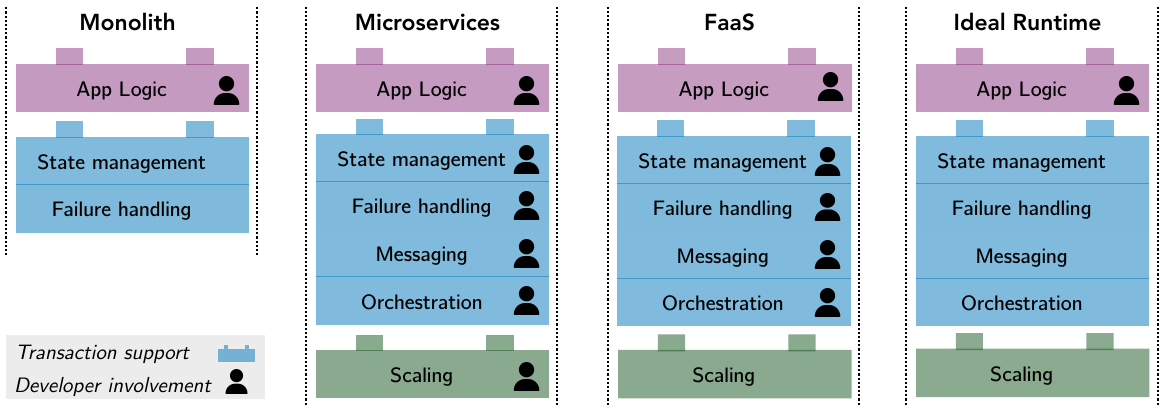}
    \caption{In monolithic applications, developers focused on application logic while a transactional database handled state management and failure recovery. In distributed cloud applications, development involves more challenges (e.g., failures, exactly-once messaging, and orchestration for atomicity and scalability). The ideal runtime should offer the same state consistency and ease of programming as monoliths, with improved scalability, without developer involvement.}
    \label{fig:mono_micro_sfaas}
\end{figure*}

In this chapter, we first identify the limitations and shortcomings of microservice-like architectures for implementing transactional applications and then motivate the need for dedicated runtimes to support transactional cloud applications. We argue that to remove transaction- and failure-handling code from the application level, we need to address complex orchestration, service calls, and state management in a \textit{holistic} manner at the system level, i.e., via a dedicated runtime. During the last few years, we have been developing a runtime for transactional applications called Styx (\Cref{chapter4})~\cite{styx}. Styx automatically partitions state, parallelizes function execution, and enables arbitrary transactional workflows with low latency. Most importantly, Styx's programming model (\Cref{chapter3})~\cite{stateflow} allows for application development that resembles a single-node application/monolith while transparently handling the serializable execution of massively parallel workflows in the cloud.

Our work is in line with recent research, such as Orleans \cite{orleans}, DBOS~\cite{dbos}, Hydroflow~\cite{hydro}, and SSMSs~\cite{li2024serverless}. Contrary to these systems, our work adopts the streaming dataflow execution model while exposing an object-oriented/actor-like programming model on top \cite{stateflow} and guarantees serializability \textit{across} services. 

\vspace{2mm}

\noindent To summarize, in this chapter, we make the following contributions: 

\begin{itemize}
        \item We analyze the shortcomings of modern cloud applications by exemplifying issues with current architectures and requirements for future systems (\Cref{ch1:sec:frommono_to_micro}).
        \item We provide arguments on the suitability of the stateful streaming dataflow paradigm for transactional cloud applications (\Cref{sec:dataflow-everything}).
        \item We introduce a novel approach that combines ideas from deterministic databases, dataflow systems, and serverless architectures (\Cref{ch2:sec:design}).
\end{itemize}

\section{From Monoliths to Microservices}\label{ch1:sec:frommono_to_micro}

As illustrated in \Cref{fig:mono_micro_sfaas}, developers in monolithic architectures were primarily responsible for the application logic. At the same time, with the adoption of microservices, they need to deal with messaging and failures (Section~\ref{sec:messaging-idempotency}), state management and orchestration (Section~\ref{sec:transactions-orchestration}), and scaling techniques (Section~\ref{sec:rescaling}). Interestingly, in \Cref{fig:monolith-to-micro}, we observe that these aspects are not orthogonal. The conversion to a partitioned, event-driven architecture  (\Cref{fig:monolith-to-micro}b to \Cref{fig:monolith-to-micro}c) requires state migration, coordination, and fault-tolerance.

\begin{figure*}[t]
    \centering    
    \includegraphics[width=\textwidth]{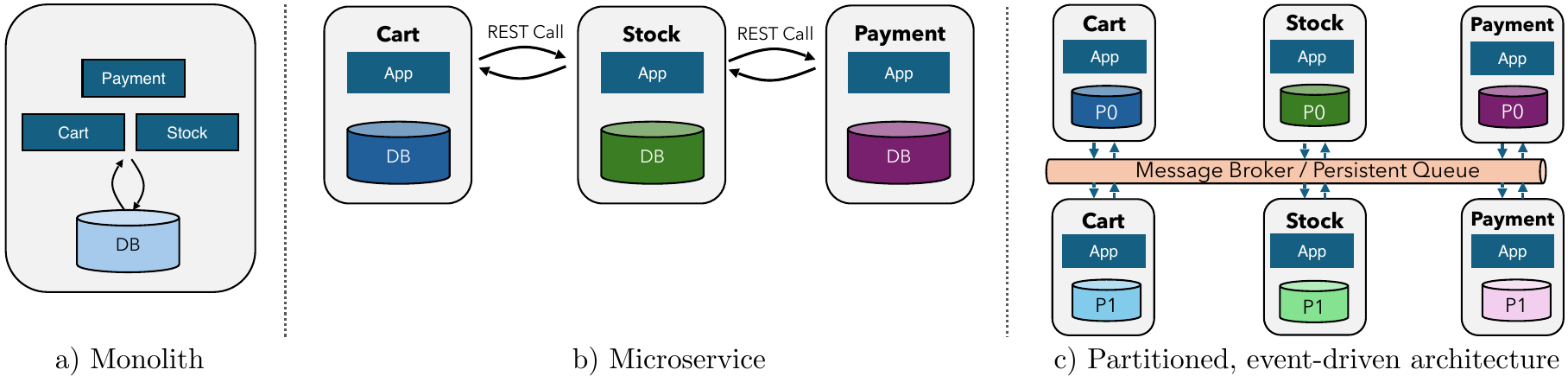}
    \captionsetup{justification=centering,margin=0cm}
    \caption{Three-step process of converting a monolith to a scalable, low-latency service architecture.}
    \label{fig:monolith-to-micro}
\end{figure*}

\Cref{fig:monolith-to-micro} depicts the process of breaking down a monolithic application (\Cref{fig:monolith-to-micro}a) into three microservices, each with its own database (\Cref{fig:monolith-to-micro}b). In the microservice architecture, direct access to a single database and DBMS-based transactions are no longer possible. Instead, the microservices split functionality and maintain their own database. Each service's database is partitioned to scale out, as shown in (\Cref{fig:monolith-to-micro}c). REST API calls are also transformed into messages that asynchronously trigger those calls.

\para{Microservices Implement Dataflows} A critical observation is that the architecture depicted in \Cref{fig:monolith-to-micro}c closely resembles a streaming dataflow graph with the partitioned state co-located with the application logic. While we elaborate on this in \Cref{sec:dataflows-adapted}, in short, this architectural pattern is the same pattern that is followed by streaming dataflow systems such as Apache Flink~\cite{flink} and Spark Streaming~\cite{ZahariaDLHSS13}.

\subsection{Messaging, Idempotency \&\ Consistency}
\label{sec:messaging-idempotency}
Traditional monoliths achieved workflow execution atomicity (e.g., a shopping cart checkout) by combining state mutations across subsystems (cart, payment, stock) in a single transaction. If the transaction fails, the database rolls back to the previous state, and the application retries the checkout.

\para{Idempotency in Services} To achieve the same effect, a stateful service or function must be idempotent, meaning that calling the service multiple times should have the same effect on the global state of an application as calling it exactly once. Considering that various issues can arise \cite{DBLP:conf/cidr/Helland07} when two services communicate (such as network failures, rescaling, or service restarts), currently ensuring idempotency works as follows: the sender service generates an \textit{idempotency-key}\footnote{\url{https://datatracker.ietf.org/doc/html/draft-idempotency-header-00}} that is persisted in the state of the sender, right before the call is performed. Suppose the sender sends a message twice (e.g., because of an intermittent network issue or a failure). In that case, the \textit{idempotency-key} must be recognized and safely ignored by the receiving service. It is important to note here that idempotency cannot be achieved without \textit{persisting} the idempotency-key to durable storage (e.g., a database) in the \textit{same local transaction} as the one that mutates the state of the receiver. At the moment, \textit{idempotency-keys} are managed by the developers, adding to the complexity of developing cloud applications. 

\subsection{Transactions \&\ Orchestration} 
\label{sec:transactions-orchestration}

\para{Serializability in Services} Multiple works advocate that serializable guarantees are preferred \cite{blanastransactions,transactionsmakedebuggingeasy}. This is also reflected in the offered isolation levels of post-NoSQL systems such as Google's Spanner~\cite{spanner} and, more recently, CockroachDB~\cite{cockroachdb}, which all provide serializability. 
Serializability has been highly important in monolithic applications, but in distributed service deployments, it is virtually impossible to reason about correctness in the presence of state inconsistencies \cite{transactionsmakedebuggingeasy}. Transactional service architectures must address message delivery guarantees. 

\para{SAGAs and Two-Phase Commit} Popular solutions to this challenge, known for years, involve the Saga pattern and two-phase commit protocols orchestrated by a transaction coordinator implementing XA transactions \cite{specification1991distributed}. However, both of them present significant drawbacks. Implementing the Saga pattern involves managing the execution of compensating actions to reverse the partial state effects of a failed workflow while the offered consistency level is eventual. Alternatively, 2PC protocols coupled with two-phase locking provide atomicity and isolation at the expense of blocking the progress of service orchestrations involved in a transaction. We need a new way to architect cloud applications with support for transactional workflows that span multiple components of an application.

\para{Orchestrators} Currently, several commercial orchestrators are available for executing SAGAs. Those orchestrators ensure atomicity only: they make sure that a given sequence of service calls eventually comes to completion. While we do see the value of orchestrators for analytics applications (e.g., as Apache Airflow~\cite{airflow}, AWS Step Functions~\cite{stepfunctions}), orchestrators are not suitable for transactional applications, as they are all \textit{oblivious} of the state of the functions/services that they are orchestrating. 

\subsection{Application (Re-)Scaling}
\label{sec:rescaling}
Scaling microservices requires scaling the stateless business logic and the state management system that serves the stateless part of an application. Scaling stateless services is relatively straightforward: one needs to rescale the application logic instance, assuming that the database behind the stateless instance can handle the new load. However, when optimizing for latency, the database is partitioned and preferably co-located with the application logic. In that case, rescaling an application becomes a hurdle: the database has to migrate state and possibly keep replicas. Soon enough, application developers re-implement some version of database state migration and rescaling \cite{clay} protocol.

While current FaaS cloud offerings do allow for stateless functions to scale on demand, they still provide no transaction management primitives that take into account service orchestrations and state consistency issues during the rescaling process. An ideal runtime should be able to perform the rescaling of applications without forcing operations teams and developers to perform rescaling by hand while keeping the state across services transactionally consistent.

\section{Streaming Dataflows to the Rescue}
\label{sec:dataflow-everything}
In this section, we highlight the key aspects and advantages of streaming dataflow systems design and argue that they can be extended to encapsulate the primitives required for executing transactional cloud applications consistently and efficiently. Moreover, we argue that combining deterministic databases and dataflow systems can create a runtime that ensures atomicity, consistency, and scalability. Finally, we show how deterministic databases can be extended for SFaaS, where transaction boundaries are unknown, unlike online transaction processing (OLTP). 

\vspace{-1mm}
\subsection{Dataflows as an Architectural Abstraction}

Stateful dataflows are the execution model implemented by virtually all modern stream processors \cite{fragkoulis2024survey}. Streaming systems owe their wide adoption in the last decade to a set of key system design aspects: exactly-once processing, consistent fault tolerance, co-location of state and compute, and data-parallel scale-out architecture. We elaborate on these characteristics below.

\para{Exactly-once Processing} Message-delivery guarantees are fundamentally hard to deal with in the general case, with the root of the problem being the well-known Byzantine Generals problem. However, in the closed world of dataflow systems, exactly-once processing is possible \cite{flink,carbone2017state}. In principle, to achieve exactly-once processing, the processing layer records the outcome of each message's state effects, the networking layer ensures message delivery in FIFO order, and the fault tolerance layer guarantees that no message that is already reflected in the state will be processed again. Note that the guarantee of exactly-once processing significantly simplifies programming. The APIs of popular streaming dataflow systems, such as Apache Flink, require no error management code (e.g., message retries or duplicate elimination with \textit{idempotency-keys}).

\para{Fault Tolerance} Exactly-once processing extends to the system's fault tolerance approach. The two can be gracefully combined using Chandy-Lamport's distributed snapshot protocol~\cite{chandy1985distributed}  adapted for streaming systems~\cite{carbone2017state, checkmate}. The approach involves periodically circulating special messages called checkpoint markers into the streaming dataflow system, instructing its operators to snapshot their state. Because checkpoint markers coexist with common data-related messages on the same channel, they enforce a global order that creates a consistent cut of the system's state. In case of a failure, the system can automatically roll back to the latest checkpoint of its distributed state and resume processing from that point, assuming the input is delivered from a replayable source, such as Apache Kafka~\cite{kreps2011kafka}. This fault tolerance approach ensures that the system's state remains consistent under failures.

\para{Co-location of State with Compute} Streaming dataflow systems have demonstrated their capacity to process millions of events per second~\cite{flink}. One main design decision that enables this level of sustainable performance is that the system's operators maintain the state of their computations in their local memory space. The state is periodically snapshotted to persistent storage, securing the progress of continuous computations against failures. Notably, this coarse-grained approach bears a low overhead to the system's regular operation.

\para{Data-parallel Scale-out Architecture} Continuing from the previous point, the system's architecture enables high-throughput at scale.
Each operator in the logical dataflow graph is instantiated as several operator instances deployed in distributed nodes. Each instance holds a partition of the operator's state, enabling input data to be distributed and processed in parallel across instances.

\subsection{Dataflows for Transactional Applications}
\label{sec:dataflows-adapted}

The aforementioned advantages of streaming dataflow systems do not apply to transactional cloud applications. To begin with, typical transactional workloads in the cloud manifest as workflows of functions that arbitrarily call one another. This computation pattern is markedly different from analytics functions that populate the operators of streaming systems. Second, streaming dataflow systems lack support for transactions as prescribed in the database literature~\cite{papadimitriou1979serializability}. Finally, the development of workflows of functions entails a programming model that can convey transactional semantics, form workflows, and support custom business logic. This programming model departs from the typical way of programming stream-processing jobs as chained functional transformations. 

\para{Dataflows for Arbitrary-Workflow Execution} The prime use case for dataflow systems nowadays is streaming analytics, which typically involves executing a chain of standalone functions. By comparison, transactional cloud applications involve arbitrary workflows of functions calling each other. To enable the execution of arbitrary workflows in a dataflow system, we connect operators at the system level such that an operator can directly invoke a computation in another operator. In addition, we allow such nested computations to be executed in parallel. Finally, we devised an approach for identifying the transaction boundaries of a workflow, which we briefly describe next.

\para{Deterministic transactions} Deterministic transactional protocols have two properties that make them coexist harmoniously with dataflow systems. First, given a set of sequenced transactions, a deterministic database \cite{abadi2018overview, calvin} will reach the same final state with serializable guarantees despite node failures and possible concurrency issues. This property is essential because it allows a deterministic transactional protocol to align with a dataflow system without changing the stream processor's checkpointing mechanism.

Second, unlike 2PC, which requires rollbacks in case of failures, deterministic database protocols \cite{aria,calvin} are "forward-only": once the locking order \cite{calvin} or read/write set \cite{aria} of a batch of transactions has been determined, the transactions will be executed and reflected on the database state, without the need to rollback changes. This alignment between deterministic databases and the dataflow execution model is the primary motivation to support a deterministic transaction protocol on top of a dataflow system.


Still, supporting deterministic transactions in a streaming dataflow system is not trivial and poses two main challenges that we address in our prototype system presented in \Cref{ch2:sec:design}. The first challenge is determining transaction boundaries. This is not required in deterministic databases, where each transaction is encapsulated in a single-threaded function that can execute remote reads and writes across partitions \cite{calvin,aria}. In SFaaS, however, arbitrary function calls to remote partitions are common because they enable developers to leverage both the separation-of-concerns principle widely applied in microservice architectures \cite{rodrigosurvey} and code modularity. Therefore, to determine the boundaries of a transactional workflow, we introduce an accounting scheme for function calls nested within a workflow. The scheme, which also supports calls to remote operators and cycles, signals the termination of a workflow's execution once all function calls complete.

The second challenge is deciding when to commit to durable storage and reply to users. Traditionally, a transactional system can respond to a client only when $i)$ the requested transaction has been committed to a persistent, durable state or $ii)$~the write-ahead log is flushed and replicated. Within the scope of a dataflow system, this would require completing a snapshot, leading to prohibitive latency. However, a deterministic transactional protocol executes an agreed-upon sequence of transactions among the workers; after a failure, the system would run the same transactions with exactly the same effects. This determinism allows for early commit replies: the client can receive a reply before a persistent snapshot is stored.

\para{Programming Models} Currently, dataflow systems are only programmable through functional-programming style dataflow APIs: a given cloud application needs to be rewritten by developers to match the event-driven dataflow paradigm. Although it is possible to rewrite many applications in this paradigm, it takes a considerable amount of programmer training and effort to do so. Therefore, we have introduced an object-oriented programming abstraction that encapsulates functions into actor-like entities. We present the programming model as a whole in Section~\ref{sec:programming-model}. We argue that this programming model is suitable for developing transactional cloud applications like microservices.

\section{The Stateflow/Styx Approach} \label{ch2:sec:design}

Styx (\Cref{chapter4})~\cite{styx} is a transactional distributed dataflow system that executes workflows of stateful functions with serializable guarantees. Styx adopts Stateflow (\Cref{chapter3})~\cite{stateflow} as a higher-level programming abstraction, enabling users to code in a pure object-oriented style without state management or fault tolerance considerations. In this section, we briefly describe the programming model (Section~\ref{sec:programming-model}) and underlying system (Section~\ref{sec:design-system}).


\subsection{Programming Model} 
\label{sec:programming-model}

The Stateflow/Styx framework provides developers with two levels of abstraction: a high-level actor-like programming interface based on Stateflow~\cite{stateflow} and a lower-level dataflow API~\cite{styx}. 

\para{High-level}
Users can code transactional cloud applications in Python object-oriented code where an entity is an object with a unique key and class functions that mutate the entity's state (similar to actor programming). Additionally, when an entity calls a function of another entity, Stateflow automatically creates an edge in the dataflow graph. We describe Stateflow's workings and how it uses continuation-passing style programming to transform calls between different entities into a distributed dataflow graph in \cite{stateflow}.

\para{Low-level} Styx follows the operator API of dataflow systems (e.g., Apache Flink~\cite{flink}). In Styx, a streaming operator can hold multiple entities based on a partitioning scheme, on functions that act upon the operator as a whole (allowing range queries), or on the entities themselves (allowing point queries). To communicate across operators, developers can call remote operator functions using Styx's API.

\subsection{The Styx Runtime}
\label{sec:design-system}

Styx employs a typical worker/coordinator architecture. It is complemented by a messaging system, such as Apache Kafka, that propagates input to Styx, including the replay of unprocessed messages following a failure.
The coordinator's responsibilities are to deploy a user-defined dataflow graph to the workers, monitor the cluster's health while collecting useful metrics, and trigger the fault tolerance pipeline in case of failure. 

The workers are responsible for a subset of the dataflow graph's operator state partitions, which are 1-to-1 aligned with the partitions of the replayable input source, say Apache Kafka. First, each worker ingests client requests through Kafka and sequences them (Styx uses a non-replicated sequencer partitioned per worker). Then it receives a batch of transactions from the sequencer and executes them as coroutines on a single CPU to improve efficiency. To execute transactions deterministically, Styx extends a deterministic transactional protocol similar to Calvin~\cite{calvin} and Aria \cite{aria}. Determinism is required by the dataflow snapshotting mechanism to guarantee the same state mutations after a replay in case of failure. Transactions are executed in parallel across workers, and nested function calls are transparently scheduled for execution by local or remote operators. Finally, Styx's acknowledgment-sharing scheme signals the end of a transaction's execution.

\para{Fault Tolerance} To recover from failures, Styx relies on a replayable input source to perform deterministic message replay based on recorded offsets. This design ensures that the sequencer will re-create the same transaction sequence post-recovery and enables early replies (before the state commits to durable storage). Finally, Styx utilizes a blob store to persist incremental snapshots of worker states.


\section{Related Work}
\label{sec:related}

Our system shares motivation with projects such as Hydroflow \cite{hydro} and DBOS \cite{dbos}. DBOS takes a DB-centric approach, where functions can be translated into stored procedures within a database (co-location of state and processing) or on the server, where state needs to be transferred, and workflows form a database transaction with ACID guarantees. Hydroflow, at its present state, does not support transactional end-to-end workflows and focuses primarily on cloud-native stream processing for analytics.
Cloudburst \cite{cloudburst} provides causal consistency guarantees within a single Directed Acyclic Graph (DAG) workflow. Netherite \cite{netherite} offers exactly-once execution guarantees and a high-level programming model, though it does not ensure transactional serializability across functions. Orleans \cite{orleans} introduces virtual actors decoupling applications from the underlying architecture, but does not guarantee exactly-once message delivery. 
Finally, transactional SFaaS paradigms with serializability guarantees (Beldi~\cite{beldi}, Boki~\cite{boki}, and T-Statefun~\cite{tstatefun}) do support transactional end-to-end workflows but suffer from poor performance and fail to decouple user code from their transactional primitives.

\chapter{T-Statefun: Adapting Existing Approaches for Transactional Stateful Functions}
\label{chapter2}

\vfill

\begin{abstract}

\Cref{introduction,chapter1} introduced the motivation for supporting general-purpose cloud applications with strong consistency guarantees. This chapter investigates whether an open-source platform can be adapted to support transactions in stateful cloud functions. 

Before building a custom system from scratch, we sought to understand whether existing open-source platforms could be adapted to meet our goals. In this chapter, we present \textit{T-Statefun}, our extension to Apache Flink StateFun, a Stateful Function-as-a-Service (SFaaS) platform built atop a stream processing engine that already offers exactly-once processing guarantees.

T-Statefun introduces two complementary models for transactional coordination across stateful functions: the Saga pattern for eventual consistency and two-phase commit (2PC) for serializability. By implementing both on StateFun's dataflow runtime, we explored how far a general-purpose streaming engine can be stretched to support transactional workflows typically required in microservices and cloud-native applications.

Finally, the limitations we observed with T-Statefun informed the design of our system, Styx (\Cref{chapter4}). Thus, this chapter serves as both a feasibility study and a key design probe.

\end{abstract}

\vfill

\blfootnote{This chapter is based on the following research paper and its extended journal version:\\
\faFileTextO~\faTrophy~\hangindent=15pt\emph{M. de Heus, K. Psarakis, M. Fragkoulis, and A. Katsifodimos. Distributed transactions on serverless stateful functions, DEBS '21}~\cite{tstatefun}. \\
\faFileTextO~\hangindent=15pt\emph{M. de Heus, K. Psarakis, M. Fragkoulis, and A. Katsifodimos. Transactions Across Serverless Functions Leveraging Stateful Dataflows. In Elsevier's Information Systems, Volume 108, September 2022}~\cite{tstatefunjournal}.
}

\newpage


\dropcap{T}{}he idea of democratizing distributed systems programming is not new. Approaches such as Distributed ML~\cite{krumvieda1993distributed} and Erlang~\cite{armstrong2013programming} aim to simplify the programming and deployment of distributed applications. Erlang \cite{armstrong2013programming} first introduced the actor model, which Akka~\cite{wyatt2013akka} implemented later in Scala, offering a low-level programming model. Following that, Virtual Actors \cite{bykov2011orleans,orleans} try to abstract away the low-level primitives.

Serverless computing~\cite{jonas2019cloud} is a cloud computing execution model promising to simplify the programming, deployment, and operation of scalable cloud applications.
In the serverless model, developer teams upload their code written in a high-level API, and the cloud platform handles application deployment and operations.
Serverless computing aims to substantially increase cloud adoption by addressing the status quo in the cloud landscape, where developer teams need to possess skills in distributed systems, data management, and the internals of cloud execution models to use the cloud effectively.

\para{Function-as-a-Service \& Messaging} The most prominent serverless offering is Function-as-a-Service (FaaS), in which users write functions, and cloud providers automate deployment and operation.
However, FaaS offerings lack support for state management and the ability to execute transactional workflows across multiple functions~\cite{hellerstein_serverless19, SFaaS-in-action}, which are needed by general-purpose cloud applications. In addition, none of the current FaaS approaches offers message-delivery guarantees, failing to support \emph{exactly-once} processing: the ability of a function to mutate its state exactly once per incoming message.

When a system does not guarantee exactly-once processing, the burden of debugging and handling system errors (e.g., machine failures, network partitions, or stragglers) falls on developers~\cite{tech-debt-microservices}. These developers then have to ``pollute'' their business logic with extra consistency checks, state rollbacks, timeouts, or recovery mechanisms, for example. \cite{microservices-drawbacks}. The result is that the majority of the application code is not comprised of business logic but error checking, management, and mitigation~\cite{rodrigosurvey}. Sooner or later, programming distributed systems at the application level leads to problems with state consistency, bugs, and eventually significant service outages.

Message-delivery guarantees are fundamentally hard to handle in the general case, with the root of the problem being the well-known Byzantine Generals Problem~\cite{lamport1982byzantine}. However, in the closed world of dataflow systems, exactly-once processing is possible\cite{flink,carbone2017state,silvestre2021clonos} as in stateful dataflows, the system has \emph{full control over both messaging and state management}. Apache Flink's StateFun~\cite{statefun} is, to the best of our knowledge, the first approach to build a FaaS execution engine on top of a streaming dataflow system offering exactly-once processing guarantees even under complex failure scenarios. However, StateFun's approach can also be implemented on top of other dataflow systems~\cite{Carbone2020beyond, flink, ArmbrustDT18, akidau2013millwheel, jet}.

Such dataflow systems can execute stateful functions as follows: incoming events represent function execution requests routed to continuous stateful operators that execute the corresponding functions. With proper, consistent fault tolerance mechanisms~\cite{carbone2017state,silvestre2021clonos}, state-of-the-art stream processing systems operate at high throughput and low latency. At the same time, they guarantee the correctness of execution even in the presence of failures. As we show in this paper, this set of properties can support \textit{transactions} with minimal involvement from application developers.

\para{Transactional SFaaS} Although there is ongoing work on supporting stateful FaaS (SFaaS) applications that mutate state transactionally, \textit{across} functions, remains an open problem. The only approach addressing distributed transactions in an SFaaS setting is Beldi~\cite{beldi}, which provides fault-tolerant ACID transactions on stateful workflows across functions by logging the functions' operations to a serverless cloud database. Cloudburst~\cite{cloudburst} with HydroCache~\cite{causal-consistency-cloudburst} provides causal consistency on function workflows forming a DAG by leveraging Anna~\cite{anna-kvs}, a key-value store with conflict resolution policies in place. Cloudburst does not provide isolation between DAG workflows.

In contrast with the aforementioned approaches, developer teams in the microservices and cloud applications landscape go to extreme lengths when they need to implement transactional workflows across the boundaries of a single service or function. The most common approach adopted is the Saga pattern~\cite{sagas}. The Saga pattern separates a transaction into sub-transactions that proceed independently with the benefit of improved performance, but at the risk of having to undo or compensate the changes of successful sub-transactions when at least one of the involved sub-transactions fails.
In addition, compensating actions can be challenging when concurrent changes are applied to the state because Sagas do not require any means of isolation. For this reason, state consistency needs to be dealt with at the application level. On the other hand, applications that prioritize consistency over performance implement distributed transactions using the two-phase commit protocol. Two-phase commit (TPC) \cite{Gray1978} offers ACID, serializable transactions, but imposes blocking operations across functions participating in a transaction, which penalizes performance in return for strict atomicity.

In this chapter, we draw inspiration from best practices in developing microservices and cloud applications and offer developers a programming model that supports both Sagas and distributed transactions with two-phase commit.
Our implementation for authoring workflows across stateful functions in FaaS with transactional guarantees is publicly available on GitHub\footnote{\url{https://github.com/delftdata/flink-statefun-transactions}}.
We implement the two approaches on an open-source stateful FaaS system, Apache Flink~\cite{flink} StateFun~\cite{statefun}, and call our extension T-Statefun. \\

\noindent In summary, this chapter makes the following contributions:

\begin{itemize}
    \item We argue for implementing transactional workflows on a stateful dataflow engine and outline its advantages.
    \item We propose a programming model for transactional workflows across stateful serverless functions.
    \item We implement the two main approaches used by cloud application practitioners to achieve transactional guarantees: two-phase commit and Saga workflows.
    \item We evaluate two transactional schemes using an extended version of the YCSB benchmark on a cloud infrastructure.
    \item We compare against the state-of-the-art academic SFaaS proposal that supports serializable transactions and one of the most popular transactional distributed database systems.
\end{itemize}


\section{Transactions on Streaming Dataflows} \label{ch2:sec:transactionsOnDataflow}

Serverless platforms come in different flavors. One breed of SFaaS systems (e.g., Apache Flink StateFun and \cite{SFaaS-in-action}) is built on top of a stateful streaming dataflow engine. This architecture bears important implications for supporting transactions because of how distribution, state management, and fault tolerance work. 

Network communication between distributed components in a typical streaming dataflow engine is implemented via FIFO network channels that guarantee exactly-once processing and preserve delivery order. 
In a serverless FaaS system, this characteristic obviates the need to handle lost messages and implement retry logic concerning function invocations in transactional workflows. Messaging errors and retries are a significant source of friction and development effort at the application level, and those are offered by the underlying dataflow system.

State management in state-of-the-art streaming systems achieves exactly-once processing guarantees by taking consistent snapshots of the system's distributed state periodically~\cite{carbone2017state}. The snapshots capture a globally consistent state of the system at a specific point in time and are used to recover the system's state upon failure. Exactly-once means that the changes brought by each function execution instance are recorded in the system's state exactly once, even in the face of failures.
For transactions, this capability is essential because fault recovery of transactions can piggyback on the underlying fault tolerance mechanism with zero effort and knowledge by the application.
Given that a big part of code and effort is spent on failure handling, fault tolerance, and virtual resiliency \cite{goldstein2020ambrosia} provided at the system level can play a significant role.

Furthermore, unlike traditional streaming queries, where the computations are fully encapsulated within the system's operators, it is common to have nondeterministic side effects (typically calls to external services or remote key-value stores) in microservices and cloud applications. However, the traditional fault tolerance mechanisms of streaming dataflow systems were not designed to support non-determinism prevalent in general-purpose applications. Thus, the consistency of applications and the integrity of transactions are at risk when transactions involve nondeterministic operations. Extending the fault tolerance approach of streaming dataflow systems to support nondeterministic computations~\cite{silvestre2021clonos} is an important step towards opening their adoption for executing general-purpose applications. Recent work~\cite{hydro, cloudburst} also recognizes the dataflow model as a key enabler for the SFaaS systems of the future.

In short, we believe that stateful streaming dataflows and the associated research that has been proposed so far\cite{fernandez2014making, Katsifodimos2019operational, affetti2017flowdb} can alleviate the burden of building rich stateful and transactional applications on top of streaming dataflows. This paper presents a step in this direction.

\section{Preliminaries} \label{ch2:sec:background}

In this section, we first present our transactional model (\Cref{ch2:sec:transactions-definition}). Then, in \Cref{ch2:sec:flinkstatefun}, we describe the functionality and internals of Apache Flink StateFun, which forms the backbone of our proposed solution. Lastly, in \Cref{ch2:sec:requirements}, we list the requirements that an SFaaS system should satisfy in order to be considered as a backend for our work.

\subsection*{Transaction Model}
\label{ch2:sec:transactions-definition}

In the context of this work, a transaction is an atomic execution of a set of stateful function invocations. 
More specifically, the transactional model introduced in this paper considers transactions defined up-front. This is referred to as \textit{single-shot} \cite{single-shot-aws} or \textit{one-shot} \cite{kallman2008h} \textit{transactions} in prior works. We follow the definition of H-Store's~\cite{kallman2008h} \textit{one-shot transactions}, which states that the output of a function (query) cannot be used as input to subsequent functions (queries) in the same transaction. Since the output of functions is not used by subsequent ones, the execution of functions involved in a transactional workflow is independent of one another. This simplifies coordination of the transaction across the system while still providing a practical model for transactions. Widely used database services, such as Amazon's DynamoDB  \cite{sivasubramanian2012amazon}, support one-shot transactions \cite{single-shot-aws}. In an SFaaS system, one-shot transactions provide a significant advantage: functions can implement arbitrary business logic in a general-purpose programming language such as Java or Python instead of being limited to the API supported by a specific database, such as DynamoDB. Thus, this advantage translates to considerable flexibility in the programming model.

\subsection*{Apache Flink StateFun}
\label{ch2:sec:flinkstatefun}

Apache Flink StateFun\footnote{\url{https://flink.apache.org/stateful-functions.html}} offers an abstraction and runtime for users to implement stateful cloud functions. A stateful function implemented by user code is referred to as a function type and describes the state it holds. Multiple instances based on the same function type can exist in parallel and are identified by an ID. Each of these function instances encapsulates its own state and can be uniquely addressed by its type and ID. Function instances can be invoked from other function instances or through ingress points such as Kafka. Function instances can have four different controlled side effects: (1) state updates, (2) function invocations, (3) delayed function invocations, (4) egress messages (for example, Kafka). StateFun supports end-to-end exactly-once guarantees from ingress to egress, including any state updates. 

\begin{figure}
    \centering
    \captionsetup{justification=centering,margin=0cm}
    \includegraphics[width=0.6\columnwidth]{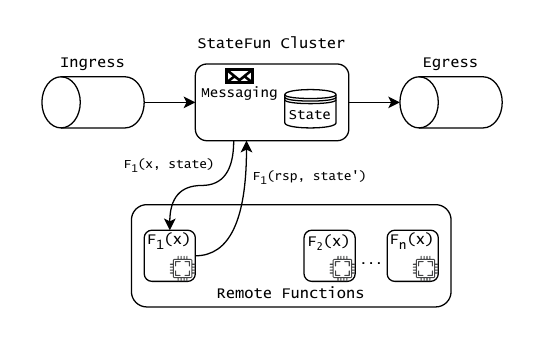}
    \caption{Flink StateFun Cluster Architecture}
    \label{ch2:fig:statefun-architecture}
\end{figure}

\para{Architecture} In \Cref{ch2:fig:statefun-architecture} we present the general system architecture of Apache Flink StateFun. The interface with the system is based on the ingress/egress pattern (e.g., ingest/produce Kafka messages). The Apache Flink StateFun cluster lies at the core of the system, consisting of multiple workers that manage both messaging and partitioned state, enabling stateless remote functions. However, this means the state must be transferred along with the request to each specific function for processing. After processing, both the response and the new state are returned to the StateFun cluster. This architecture's primary benefit is that since StateFun manages both messaging and state exactly-once semantics is easier to achieve than other architectures.

\para{Embedded vs. Remote Functions} Functions can be deployed both inside the StateFun workers (referred to as embedded functions) and outside the StateFun cluster (co-located and remote functions). Embedded functions are simply an abstraction on top of stateful streaming operators in Flink, therefore providing exactly-once and fault-tolerance guarantees. StateFun allows dynamic communication between these streaming operators by introducing a cycle in the streaming graph. The co-located and remote functions are entirely stateless because the state is persisted within StateFun. This paper focuses on remote functions as these can leverage existing FaaS services such as AWS Lambda to auto-scale the compute layer. \Cref{ch2:fig:original-communication-flow} shows how remote functions work. Each function instance is represented by an embedded stateful function in the StateFun cluster. This standardized embedded function is responsible for managing the state of the function instance and communicating with the remote function, which may be deployed anywhere. The persisted data in the embedded stateful function with the communication pattern for remote functions are shown in \Cref{ch2:fig:original-communication-flow}.

\begin{figure}[t]
    \centering
    \captionsetup{justification=centering,margin=0cm}
    \includegraphics[width=0.7\columnwidth]{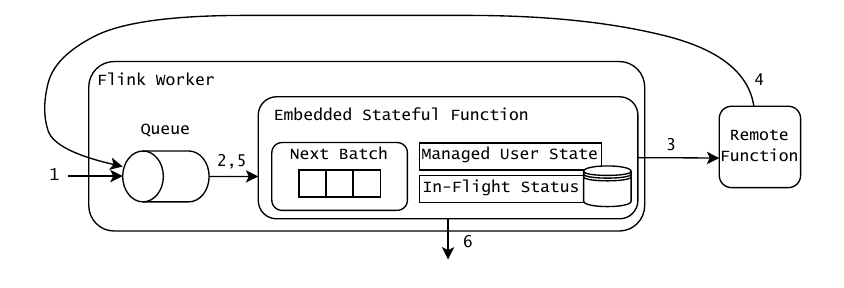}
    \caption{Original communication flow of Flink Statefun}
    \label{ch2:fig:original-communication-flow}
\end{figure}

\para{Function Invocations as Dataflow Messages} Invocations that are sent to a function instance arrive in a queue, as shown in step 1 of  \Cref{ch2:fig:original-communication-flow}. If the embedded stateful function is ready to process the next invocation, it pulls a message (invocation parameters) from the queue (step 2). When no invocation is being executed at the remote function, the remote function is called. However, if the remote function is busy with a previous function call, the current invocation message is appended to the next batch. Batching is used as an optimization in order to avoid multiple remote calls to external functions at the expense of latency (see \Cref{ch2:sec:exp1}). Batches are also used to preserve the invocation order and the order of state access (the batch must wait until the state updates caused by the previous batch have been applied), thus ensuring linearizability at the function instance level.

In step 3, the stateless remote function is called through a Protobuf interface that contains both the (keyed) state required for the remote function to operate and the invocation parameters of the function. The stateless remote function can execute the (batch of) invocations and will be ready to return the updated state back to the Flink worker that made the call. In step 4, the response of the stateless remote function is appended to the queue of incoming messages to the function. The response includes any side effects caused by the invocation(s), including updates to the user-defined state.

When the response from the stateless function is processed (step 5), the side effects caused by the invocation(s) are applied to the state of the embedded function, updating the managed state in the embedded stateful function. If any invocations are batched, the next batch of invocations is sent to the remote stateless function, and the batch is truncated. When there are no batched invocations, the in-flight status is cleared. Finally, any outgoing function invocations are sent to the queues of their respective function instances, and egress messages are sent to their respective egresses (step 6).

\subsection*{Assumptions \& Requirements}
\label{ch2:sec:requirements}
As we describe in the next section, our coordinator functions rely on an underlying SFaaS system for bookkeeping the state of ongoing transactions and reliable messaging. To allow this, the underlying system should satisfy two requirements.

\para{Exactly-once Processing Guarantees}
Firstly, all communication must be reliable and executed with exactly-once processing guarantees. Thus, we require that the underlying system be fault-tolerant~\cite{flink} to ensure transaction atomicity in the event of a failure. This also means the state is durable across snapshots/checkpoints, even in the event of failures. If we can rely on exactly-once processing guarantees, message replay, and error handling, a significant part of transaction coordination can be simplified. Flink StateFun does guarantee exactly-once processing.

\para{Linearizable Operations}
The second requirement is that the operations for any specific function instance should be linearizable, meaning that there is a well-defined order in which operations are performed on the instance and the state it encapsulates. Accordingly, a function invocation will always have the correct state of the function instance to implement transactions. Since Flink StateFun's function instances use a single replica of the state per function instance and a single process executes function invocations for that function instance in a sequential FIFO manner, this ensures linearizable operations per function instance.

\section{Coordinator Functions \&\ the T-Statefun API} \label{ch2:sec:apiPlusOverview}

\begin{table*}[t]
\captionsetup{justification=centering,margin=0cm}
\resizebox{\textwidth}{!}{
\centering
\normalsize{
\begin{tabular}{l r}
\textbf{Function} & \textbf{Description} \\
\toprule
\multicolumn{2}{c}{Shared coordinator function methods} \\
\midrule
\textit{send\_on\_success}(\textit{type}, \textit{id}, \textit{message}) & Sends a message to another function instance if the transaction is successful \\
\textit{send\_after\_on\_success}(\textit{delay}, \textit{type}, \textit{id}, \textit{message}) & Sends a delayed message if the transaction is successful \\
\textit{send\_egress\_on\_success}(\textit{type}, \textit{egress\_message}) & Sends a message to an egress if the transaction is successful \\
\textit{send\_on\_failure}(\textit{type}, \textit{id}, \textit{message}) & Sends a message to another function instance if the transaction failed \\
\textit{send\_after\_on\_failure}(\textit{delay}, \textit{type}, \textit{id}, \textit{message}) & Sends a delayed message if the transaction failed \\
\textit{send\_egress\_on\_failure}(\textit{type}, \textit{egress\_message}) & Sends a message to an egress if the transaction failed \\
\midrule
\multicolumn{2}{c}{Two-phase commit function methods} \\
\midrule
\textit{tpc\_invocation}(\textit{type}, \textit{id}, \textit{message}) & Add a function invocation to the transaction \\
\textit{send\_on\_retryable}(\textit{type}, \textit{id}, \textit{message}) & Sends a message if the transaction aborted because of a deadlock \\
\textit{send\_after\_on\_retryable}(\textit{delay}, \textit{type}, \textit{id}, \textit{message}) & Sends a delayed message if the transaction aborted because of a deadlock \\
\textit{send\_egress\_on\_retryable}(\textit{type}, \textit{egress\_message}) & Sends a message to an egress if the transaction aborted because of a deadlock \\
\midrule
\multicolumn{2}{c}{Sagas function methods} \\
\midrule
\textit{saga\_invocation\_pair}(\textit{type}, \textit{id}, \textit{message}, \textit{compensating\_message}) & Add a pair of a message and a compensating message to the transaction \\
\midrule
\multicolumn{2}{c}{Ordinary functions} \\
\midrule
\textit{FunctionInvocationException} & Raised to fail the function invocation \\
\bottomrule
\end{tabular}
}
}
\caption{Coordinator functions' Python API.}
\label{ch2:tbl:API}
\end{table*}

In this section, we introduce the concept of stateful coordinator functions and provide an overview of our approach. Our approach is based on the simple observation that since an underlying SFaaS system provides exactly-once processing and message delivery guarantees, conceptually, it would be much simpler to implement a transaction coordinator as a regular, stateful function. With this in mind, we opted for implementing a transaction API on top of stateful functions, which we present in \Cref{ch2:tbl:API}. Notably, further work is required to raise the transaction abstractions at an even higher level \cite{Katsifodimos2019operational, SFaaS-in-action} as syntactic sugar.

A stateful coordinator function is a stateful function that preserves state about the execution of a given transaction. Coordinator functions have the ability to force other function instances to abort or compensate for the changes they applied. 

\para{API Overview} Our coordinator function implements two transaction coordination patterns: two-phase commit and Sagas \cite{sagas}. A complete example of a coordinator function for two-phase commit and Saga is shown in Listings \ref{ch2:lst:2pc-coordinator} and \ref{ch2:lst:saga-coordinator}, respectively. In short, to coordinate a two-phase commit transaction, the user needs to invoke function instances via \textit{tpc\_invocation}, while for a Saga, an invocation \textit{pair} is expected, which consists of the normal transaction invocation and the corresponding compensation invocation to be sent to the same function instance. A Saga invocation pair can be called with \textit{saga\_invocation\_pair}. An important difference between the behavior of the two schemes is that a failure in a Saga workflow will incur a compensating function call.

\begin{figure}[t]
\captionsetup{justification=centering,margin=0cm}
\begin{lstlisting}[style=pythonlang, xleftmargin=0in,label={ch2:lst:2pc-coordinator}, caption={Two-phase commit coordinator function.}]
def serializable_transfer(context, message: Transfer):
    subtract_credit = SubtractCreditMessage(amount=message.amount)
    context.tpc_invocation("account_function",
                          message.debtor,
                          subtract_credit)

    add_credit = AddCreditMessage(amount=message.amount)
    context.tpc_invocation("account_function",
                          message.creditor,
                          add_credit)
\end{lstlisting}
\vspace{-5mm}
\end{figure}

\para{Two-Phase Commit} The \textit{serializable\_transfer} function of \Cref{ch2:lst:2pc-coordinator} receives a context (the underlying context of StateFun as we have extended it to support transactions) and a message. The message is of type \textit{Transfer}, and it contains three fields: the amount of money transferred, the creditor, and the debtor. The amount mentioned in the message must be subtracted from the debtor and transferred to the creditor. To this end, assuming that there is a function type registered in the system as \textit{account\_function}, as per the original StateFun API, we need to construct an object containing the parameters for the \textit{account\_function} and push that message to the transaction coordinator. This is done in lines 5-7: we give the TPC coordinator the function type to invoke, alongside the ID of the debtor to form the address of the function instance, and the \textit{SubtractCreditMessage}, which is going to be given to that function as a parameter. Subsequently, we do the same for the creditor: we construct an \textit{AddCreditMessage}, and we pass it over to the function type \textit{account\_function}. In short, the transaction coordinator function instance will make sure that the two function instances are invoked with serializable guarantees. It does this by coordinating a two-phase commit protocol across the function instances with locking to ensure isolation. More details on these aspects are given in \Cref{ch2:sec:implementation}.

\begin{figure}[t]
\captionsetup{justification=centering,margin=0cm}
\begin{lstlisting}[style=pythonlang, xleftmargin=0in,label={ch2:lst:saga-coordinator}, caption={Saga coordinator function.}]
def sagas_transfer(context, message: Transfer):
    subtract_credit = SubtractCreditMessage(amount=message.amount)
    add_credit = AddCreditMessage(amount=message.amount)
    context.saga_invocation_pair("account_function",
                                 message.debtor,
                                 subtract_credit,
                                 add_credit)
    context.saga_invocation_pair("account_function",
                                 message.creditor,
                                 add_credit,
                                 subtract_credit)
\end{lstlisting}
\vspace{-5mm}
\end{figure}

\para{Sagas} Similarly to two-phase commit, our API offers the ability to specify Sagas: as seen in \Cref{ch2:lst:saga-coordinator}, the \textit{saga\_invocation\_pair} function in line 6 will receive the target function name, the ID of the debtor as well as two messages: the \textit{subtract\_credit} and its compensating action \textit{add\_credit}. If there is a failure during the execution of \textit{subtract\_credit}, our Sagas transaction coordinator will execute the compensating action  \textit{add\_credit}, which will put back the original credit to the debtor's account. The details on how Sagas are executed are given in \Cref{ch2:sec:implementation}.

\para{Extensions to Regular Functions}
To allow the execution of a transaction by the two types of coordinator functions across any arbitrary function instances, some extensions to regular functions are required. First, functions that can partake in a coordinated transaction need to be able to fail explicitly. After a failure is communicated to a coordinator function, it results in a transaction rollback. Currently, there is no notion of failing an invocation in Flink StateFun; the function invocation may simply perform no side effects. To allow explicit failure, a field containing these details is added to the protocol between StateFun and the remotely deployed functions. From the API perspective, a function failure can be triggered by throwing an exception. The failure of a function can be roughly compared to integrity constraint violations based on the state encapsulated in a function instance in traditional database terms. Second, any batching mechanism needs to be changed. TPC coordinator functions ensure isolated transactions. This means that any function invocation that is part of such a transaction may not be batched between other function invocations. Third, appropriate locking should be implemented on the level of function instances to ensure the isolation of serializable transactions based on two-phase commit coordinator functions. Finally, the function instances should transparently communicate with the coordinator functions so as not to burden developers with this task.

\section{Transactional Workflows} \label{ch2:sec:implementation}

In this section, we present our Python API in more detail, and we present the implementation for transactional workflows across stateful serverless functions on Apache Flink StateFun (T-Statefun). Our implementation consists of coordinator functions that enforce either a distributed serializable transaction with a two-phase commit or a Saga workflow as a transaction without isolation.

\subsection{Coordinator Functions}
Coordinator Functions orchestrate transactional workflows across ordinary Stateful functions. To achieve this, coordinator functions encapsulate the state of active transactional workflows that they are in charge of, but hold no state of the participating function executions or custom user-defined state. A coordinator function can be invoked simply by its name (uniquely identified by a type internally) and an ID generated randomly at initialization time. Then an input message will arrive at the coordinator's input queue. If the coordinator function is involved in an ongoing transaction, the message will be queued until the workflow that is executing completes. The coordinator functions' Python API is listed in \Cref{ch2:tbl:API}. 

\Cref{ch2:fig:transaction-communication-flow} shows the common communication flow between a coordinator function and regular function instances. Specializations of this communication for two-phase commit and Saga workflows are described in \Cref{ch2:sec:impl-sagas-func} and \Cref{ch2:sec:impl-tpc-func} respectively. Messages that are not always sent in both cases are annotated with a *. \Cref{ch2:fig:transaction-communication-flow} shows the enriched internal structure for regular function instances compared to \Cref{ch2:fig:original-communication-flow}. These are the extensions that we implement for regular functions so that they can participate in transactional workflows. 

\newcommand{\tripleitem}{%
  \begingroup
  \stepcounter{enumi}%
  \edef\tmp{(\theenumi, }%
  \stepcounter{enumi}
  \edef\tmp{\tmp\theenumi, }%
  \stepcounter{enumi}
  \edef\tmp{\endgroup\noexpand\item[\tmp\theenumi)]}%
  \tmp}
  
\newcommand{\doubleitem}{%
  \begingroup
  \stepcounter{enumi}%
  \edef\tmp{(\theenumi, }%
  \stepcounter{enumi}
  \edef\tmp{\endgroup\noexpand\item[\tmp\theenumi)]}%
  \tmp}

\subsection{Saga Coordination}
\label{ch2:sec:impl-sagas-func}
The programming model of the Saga coordinator function is shown in \Cref{ch2:lst:saga-coordinator} through an example. \Cref{ch2:tbl:API} presents the API. In Sagas, the developer is responsible for defining pairs of function invocations so that the invocation of the second function compensates for the one of the first function~\cite{sagas}. Additionally, the Saga coordinator function can define side effects (e.g., outgoing egress messages) based on the transaction's completion scenarios (success or failure). The function invocations composing a Saga are executed in parallel in the current implementation. In the following, we describe the messages specifically for Sagas seen in \Cref{ch2:fig:transaction-communication-flow}.

\para{Initialization \& Remote Coordinator Function Call} First, a message is sent to the coordinator function to initialize a transaction (step 1).
The message is taken from the queue to initialize the transaction (step 2).
Then, the remote Saga coordinator function is called with the incoming message (step 3).
The remote function returns the definition of the Saga workflow to its embedded counterpart (step 4). This includes the function invocations involved in the transaction and their compensating invocations, as well as the side effects to perform on success or failure.

\para{Processing the Remote Coordinator Function's Result} When the embedded function processes the result of the remote function (step 5), a random transaction ID is generated, and a map is created holding the addresses of function instances and the result of their execution (at this stage, those are initialized as \textit{null} values). It follows that only one invocation per function instance can be involved in a particular workflow. If multiple invocations of a single function instance are required, this can be solved at the application level by allowing a single message, which combines multiple function invocations, to be sent to the function instance.

\para{Invoking Regular Functions} In step 6, each of the participating regular (non-coordinator) function instances receives a function invocation in its input queue. All the invocations are sent simultaneously, and the function instances can do the work in parallel. These function invocations are distinguishable as function invocations that belong to a Saga workflow.
Each Saga function invocation is fetched from the queue, and it is either directly sent to the remote function or batched with other invocations for efficiency (step 7). Because Sagas do not require isolation, a function invocation can be batched with other invocations.
Then, it is sent to the regular remote function (step 8). After processing it, the function's response is added to the queue of its stateful embedded representation in StateFun (step 9).
When the response of the stateless remote function is processed in the embedded stateful function at step 10, the indices in the in-flight function invocation metadata and the new list added to the Protobuf interface, i.e., the regular function extensions, are used to identify the result status of the Saga function invocations and the corresponding coordinator's addresses. If the function invocation fails, no side effects of the function are performed. After this, this function can continue processing other function invocations.

\para{Saga Success vs Compensation} Based on the success status of the Saga function invocation, a success or failure message is sent to the coordinator function (step 11).
When the embedded coordinator function processes the success status of each function invocation, the map is updated with either a success or failure status (step 12). If a function instance fails, any function instances that successfully executed their function invocation are messaged with their respective compensating actions (step 13), and the side effects in case of a failure are performed (steps 14, 15, 16). The coordinator function has to wait until the result of all function invocations is received before it is done. In case any of the function invocations fails, the coordinator function sends the compensating messages to all function instances that successfully processed their invocation.
Note that there is no need to send compensating invocations to function instances that failed since those function instances have applied no side effects. The compensating messages are processed as regular messages and are only required when any of the function invocations fail. This means that the performance of a Saga workflow will be worse if it is likely to fail, as this will require extra messaging and processing, up to double.
As a matter of fact, this is the trade-off offered by optimistic transaction approaches like Sagas.

\begin{figure}
    \centering
    \captionsetup{justification=centering,margin=0cm}
    \includegraphics[width=\columnwidth]{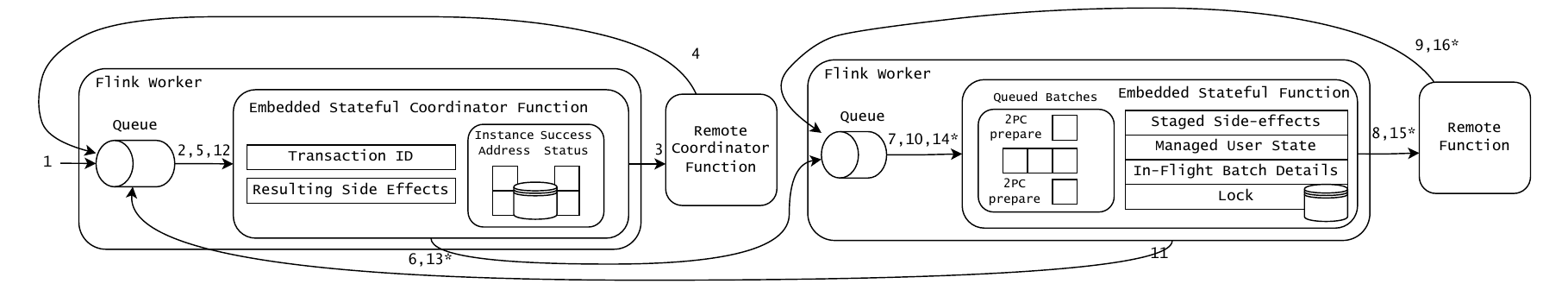}
    \caption{Communication flow for transactions in T-Statefun.}
    \label{ch2:fig:transaction-communication-flow}
\end{figure}

\subsection{Two-phase Commit Coordination}
\label{ch2:sec:impl-tpc-func}
 
In \Cref{ch2:lst:2pc-coordinator}, we presented the programming model for a two-phase commit coordinator function; \Cref{ch2:tbl:API} shows the available functions of the two-phase commit API. Similar to Saga coordinator functions, two-phase commit coordinator functions can also define side effects to execute for any completion scenario. Beyond successful and failed completion, two-phase commit transactions can also be completed as ``retryable''. This occurs when the transaction is aborted due to a deadlock. In the following, we describe the workflow of the two-phase commit as seen in \Cref{ch2:fig:transaction-communication-flow}. Note that the initialization of the workflow, i.e., steps 1-5, is the same as in Sagas. Thus, we do not detail it here. 
 
\para{\textit{PREPARE} \& Two-phase Locking Growing Phase} Each involved function instance is messaged with its respective function invocation in step 6. This message is identifiable as a \textit{PREPARE} message of the two-phase commit protocol.
When a two-phase commit function invocation arrives at the embedded stateful regular function, and a batch of invocations for this function is currently in-flight, this two-phase commit function invocation is not batched with other invocations. Instead, the two-phase commit function invocations split batches and send them to the remote function in isolation, as shown in \Cref{ch2:fig:transaction-communication-flow}. This practice increases the complexity of the batching mechanism, as it now requires a queue of batches rather than an append-only batch as shown in  \Cref{ch2:fig:original-communication-flow}.

\para{Invoking Regular Remote Functions} When the message (and current state) is processed and sent to the remote function in steps 7 and 8, the transaction ID and the address of the two-phase commit coordinator function are stored in the details of the in-flight batch of invocations. The lock on the function instance is also set at this point.
The response from the stateless remote function includes the function invocation status and any side effects (step 9). 
Suppose a \textit{FunctionInvocationException} is thrown at the stateless remote function. In that case, the response of the remote function is discarded, a response to the coordinator function instance is sent to notify it that the invocation failed, and the regular function instance's lock is removed, as it knows the transaction will be aborted. If the function invocation is successful, the lock is kept, and a success response is sent to the coordinator function instance. The state effects are then stored as staged side effects in the function instance (step 10). Any other messages that arrive while the function instance is locked are put in the queued batches.

\para{ABORT \& Two-phase Locking Shrinking Phase Upon Failure} The message at step 11 notifies the two-phase commit coordinator function instance whether the function invocation succeeded.
If the two-phase commit function instance receives the message that a function invocation failed (step 12), it immediately sends an \textit{ABORT} message to all other function instances and performs the appropriate side effects (step 13), and calls the two-phase lock shrinking phase. After this, the two-phase commit function is done. 

\para{COMMIT \& Two-phase Locking Shrinking Phase} If the two-phase commit function instance receives the message that a function invocation was successful, it updates the map it keeps of all involved function instances. If all function instances succeed, it sends \textit{COMMIT} messages to all involved function instances and publishes the appropriate side effects (i.e., applies the changes to the embedded function state).

\para{COMMIT/ABORT \& Two-phase locking Shrinking Phase} When a function instance receives a \textit{COMMIT} message (step 14), it executes its staged side effects, releases the lock and continues processing the next request. When a function instance receives an \textit{ABORT} message, it discards its staged changes, releases the lock, and continues processing. Note that a function could also receive the \textit{ABORT} message while the \textit{PREPARE} message is still in the queue or in-flight. In this case, the \textit{PREPARE} message is discarded.
Messages 15 and 16 are never sent for two-phase commit transactions.

\para{Deadlock Detection}
Due to the use of locks, the two-phase commit protocol is susceptible to deadlocks. A deadlock can happen when two or more different two-phase commit transactions wait on the locks on function instances that are held by other transactions.
To deal with deadlocks, we have implemented a deadlock detection mechanism, which we describe below. 
All participants in the two-phase commit transaction can be partitioned across different machines, and the state of active transactions is encapsulated in different coordinator function instances. Thus, we do not want transactions to rely on any centralized component for handling deadlocks.
We implemented the Chandy-Misra-Haas algorithm \cite{chandy-misra-haas} that provides a simple way to detect deadlocks in a distributed manner, without dependence on a single global coordinator.
Whenever a deadlock is detected in a transaction, it immediately completes as a retryable transaction and sends \textit{abort} messages to all involved function instances. Upon receiving a retryable result status, a two-phase commit regular function may send itself a delayed invocation with the same initial message (and possibly a counter attached) to perform a retry. This is left to the developer so that the system remains flexible across various use cases.

\section{The Transactional Guarantees of T-Statefun}
\label{ch2:sec:deterministic-db}

Our approach offers serializable transactions by virtue of using the two-phase locking protocol. 
Under certain transactional scenarios, which we discuss in this section, our approach can achieve \textit{strict} serializability, where the processing of transactions happens in the same order that the transactions have reached the system. 
In order to achieve strict serializability, our approach would require extensions. In the following, we explain various design decisions or changes that need to occur in our system to support different flavors of serializability.

\para{Single-partition Transactions} A single-threaded operator instance executes every operation on the state of a given partition. Thus,  single-partition transactions are guaranteed to be processed in a serial manner. This also follows that single-partition transactions will be guaranteed strict serializability even when executed in a distributed fashion. Moreover, transactions that operate on different partitions are going to scale horizontally.

\para{Multi-partition Transactions} In the general case, a transaction in our approach can access multiple functions, mutate multiple state partitions, or both. 
Since two-phase locking is used, the system can enforce serializability across multiple functions and data partitions of the same function.
In addition, our approach does not guard against changes in the order of transaction executions. For example, induced by transaction aborts due to a deadlock or system failures, transactions may be resubmitted for execution.

\para{Strict Serializability} Our approach features three core advantages that provide important foundations for achieving strict serializability. First, since we support one-shot transactions, the system is aware of the keys that will be touched from a transaction prior to its execution. Furthermore, these one-shot transactions can be arranged prior to their execution in a specific serial order -- that order can be set to be the order of arrival, thus guaranteeing strict serializability. Second, Apache Flink, which executes our transactions, recovers from a failure by falling back to the latest completed checkpoint and re-processes input requests following the checkpoint. This strategy allows us to reconstruct the exact same state as prior to the failure under the assumption of deterministic computations. 
Finally, data-parallel processing in disjoint state partitions allows us to execute concurrent transactions in a parallel manner and without the need for concurrency control.

\para{Relation to Deterministic Databases}
Interestingly, the three aforementioned characteristics of our approach resemble design choices opted by deterministic databases~\cite{abadi2018overview, abadithecase, Ren2014Evaluation}, which achieve strict serializability: the concurrent processing of a specific set of transactions across a distributed system is guaranteed to result in one, single runtime state.

Furthermore, one could draw inspiration from deterministic databases for advancing its transactional model in two ways. First, transactions on dataflow systems would benefit from an input transaction log for pre-determining the order of transactions in a way that would not introduce aborts during execution, essentially implementing a protocol like Calvin~\cite{abadithecase}. Second, one could leverage a determinism service \cite{silvestre2021clonos} to wrap nondeterministic computations, which would cause its state to diverge when recovering from a failure. Essentially, pre-ordering a batch of transactions and ensuring deterministic transaction processing would help dataflow-based transactional FaaS systems guarantee strict serializability.

\section{Experimental Evaluation} \label{ch2:sec:experimentalSetup}
\label{ch2:sec:experiments}

In this section, we describe in detail our experimental evaluation methodology. For the lack of a benchmark aimed at SFaaS, we opted for an extension of the \textit{Yahoo! Cloud Serving Benchmark} (YCSB)\cite{ycsb} benchmark. Furthermore, we go through the experimental evaluation of our system, which is split into six experiments with the following goals. 

\begin{description}
    \item i) Determine the overhead that function coordination introduced to StateFun (\Cref{ch2:sec:exp1}).
    
    \item ii) Compare between the two transaction protocols with/out rollback operations (\Cref{ch2:sec:exp2}).
    
    \item iii) Evaluate the system's scalability  (\Cref{ch2:sec:exp3}).
    
    \item iv) Perform a microbenchmark with a fixed number of machines and a variable number of keys and proportions of \textit{transfer} operations (\Cref{ch2:sec:exp4}).
    
    \item v) Compare against the CockroachDB with Kafka clients deployment (\Cref{ch2:sec:exp5}).
    
    \item vi) Compare against Beldi (\Cref{ch2:sec:exp6}).
    
\end{description}

\noindent Regarding resources used, for (i, ii, iv, v), we used three 4-CPU StateFun workers/CockroachDB nodes, and for (iii), each worker had 2 CPUs. In (v), we kept the default settings, meaning that CockroachDB replicates data three times for fault tolerance and high availability. For (vi), we allowed AWS and DynamoDB to autoscale while measuring the maximum concurrency reached by AWS Lambda.

\begin{figure}[t]
    \centering
    \captionsetup{justification=centering,margin=0cm}
    \includegraphics[width=0.7\columnwidth]{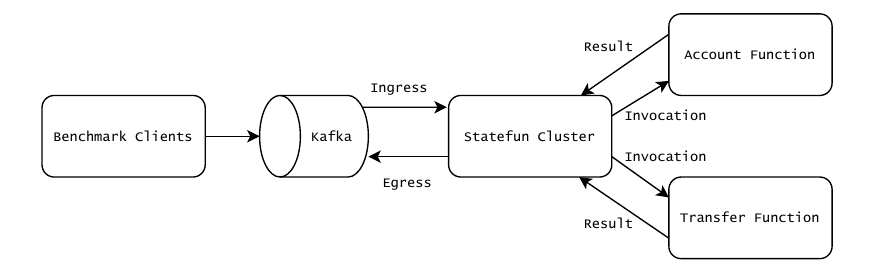}
    \caption{StateFun Benchmark Design}
    \label{ch2:fig:application_architecture}
\end{figure}

\subsection*{Benchmark Workload} \label{ch2:sec:workload}

In YCSB, the first step is to insert records into the system with a unique ID and several task-specific fields. After the data insertion stage, the benchmark performs operations on the initialized state. YCSB defines \textit{read} and \textit{write} operations as part of their core workloads. Because this work's main contribution is distributed transactions across stateful function instances, we added a new operation based on an extension introduced in \cite{dey2014ycsb+}. This operation is called a \textit{transfer}, and it atomically subtracts \textit{balance} from one account and adds this to another, meaning that records also include a numeric \textit{balance} field. These additions mean that the workloads can consist of the following three operations: 

\vspace{1.5mm}
\noindent\textbf{read} Reads the state associated with a single key and outputs it to the egress.

\vspace{1.5mm}
\noindent\textbf{write} Updates a field associated with a key and outputs a \textit{success} message to the egress.

\vspace{1.5mm}
\noindent\textbf{transfer} Requires two keys and a specified amount, subtracts the amount from the balance of one key, and adds it to the other. Depending on the transaction result, the output is either a success or failure message to the egress.
\vspace{1.5mm}

Across experiments, we vary the proportion of each operation in the resulting workloads. In YCSB, the user selects the probability distribution of the operations' record IDs. In this work, we assume uniform key distributions. The added benefit is that the number of requests for a single key can be increased transparently by decreasing the system's total number of records. Finally, YCSB allows variations in the number of fields and the size of the values associated with each field. In this evaluation process, all records have ten fields containing a random string of 128 bits and a single integer field. A StateFun application is implemented with the following two functions to support the operations defined in \Cref{ch2:sec:workload}:

\para{ -- Account Function} This is a regular function containing the record state for each key. It processes messages to read the state, updates the fields, and subtracts or adds balance as part of a transaction. It throws an exception and rolls back the transaction if the key does not exist or if there is an insufficient balance to subtract the transaction amount.

\para{ -- Transfer Function} The transfer function is a transactional/coordinator function that takes a message consisting of two different keys and an amount. That message represents a transaction consisting of two function invocations, one to each function key. This function is implemented with both the two-phase commit and the Saga API.
\vspace{1.5mm}

\Cref{ch2:fig:application_architecture} depicts the architecture of the system under test. The benchmark publishes the workload to a Kafka cluster. StateFun reads from Kafka as ingress, invokes the appropriate functions, and then publishes the result to a Kafka topic as an egress. For CockroachDB (v21.1.7), Kafka clients read from the relevant topics and submit queries to the non-geo-replicated database. 

Although CockroachDB and Kafka can provide exactly-once semantics individually, because the state (CockroachDB) and messaging (Kafka clients) are not managed by a single entity and do not share a single checkpointing mechanism, this deployment offers at-least-once semantics. More specifically, clients that consume Kafka queues that deliver transaction-initiating events need to pull an event from a Kafka topic, submit a query to CockroachDB, and acknowledge the transaction's execution back to Kafka. However, in the event of a client (or database) failure, the transaction may be executed, but the message to the queue may never be acknowledged. Not having returned the acknowledgment to Kafka, the client will re-execute the same transaction after recovery. In general, unless the transactions come with application-specific idempotence keys, the system by itself cannot enforce exactly-once processing guarantees, falling back to at-least-once guarantees.

Our StateFun-based implementation and the CockroachDB deployment are deployed on SurfSara\footnote{\url{https://userinfo.surfsara.nl/systems/hpc-cloud}}, an HPC cloud with instances with up to 80-vCPUs. For our experiments, we used a two-VM Kubernetes cluster to simplify the deployment and management of the system's separate components with enough vCPUs to support the system's configuration under test. Beldi was deployed on AWS. All components shown in \Cref{ch2:fig:application_architecture} can be horizontally scaled as necessary. Additionally, we give the Kafka cluster and the clients enough resources to ensure that they can handle the load:  when a bottleneck appears, it can be attributed to the system performing the application logic, i.e., the StateFun cluster, CockroachDB, or Beldi's API.

 \begin{figure}[t]
    \centering
    \captionsetup{justification=centering,margin=0cm}
    \includegraphics[width=0.48\columnwidth]{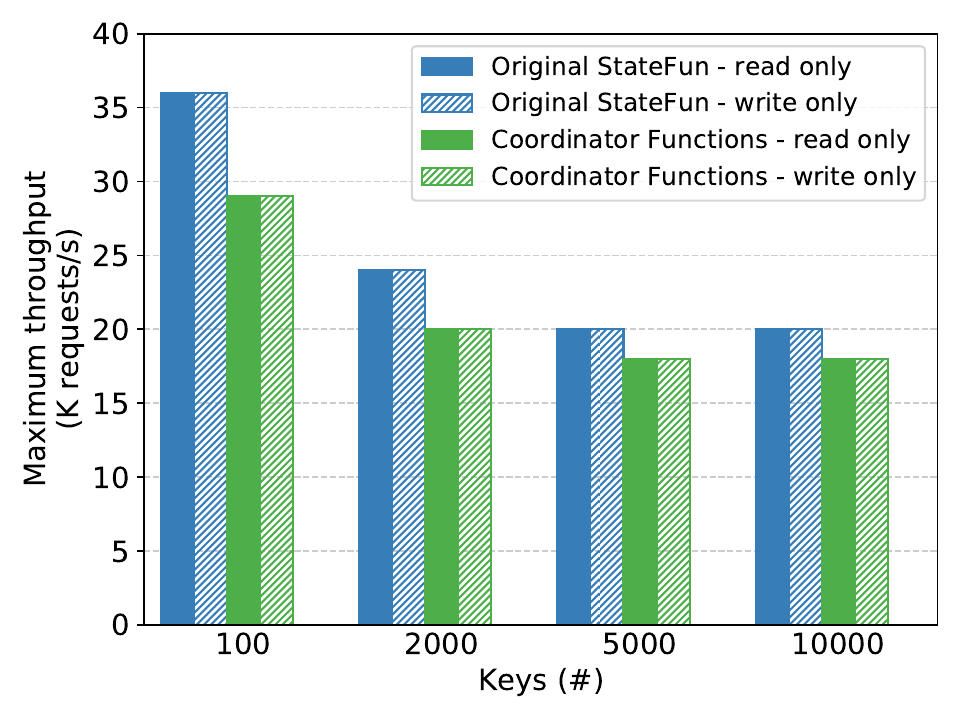}
    \caption{Maximum throughput for the original StateFun vs. StateFun with coordinator functions.}
    \label{ch2:fig:throughput-overhead}
\end{figure}

\begin{figure*}[t]
     \centering
     \begin{subfigure}[b]{0.44\textwidth}
         \centering
         \captionsetup{justification=centering,margin=0cm}
         \includegraphics[width=\textwidth]{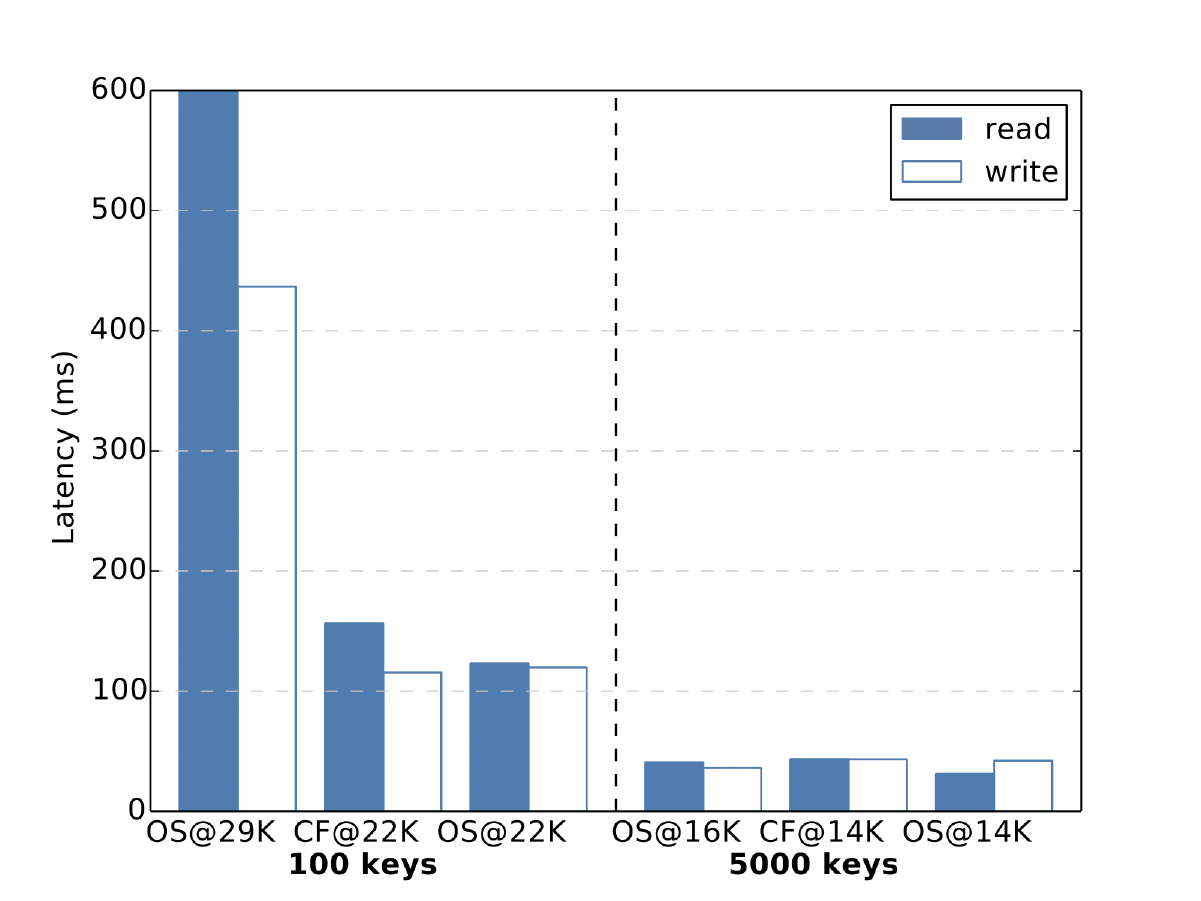}
         \caption{Mean}
         \label{ch2:fig:exp1-mean}
     \end{subfigure}
     \hfill
     \begin{subfigure}[b]{0.44\textwidth}
         \centering
         \captionsetup{justification=centering,margin=0cm}
         \includegraphics[width=\textwidth]{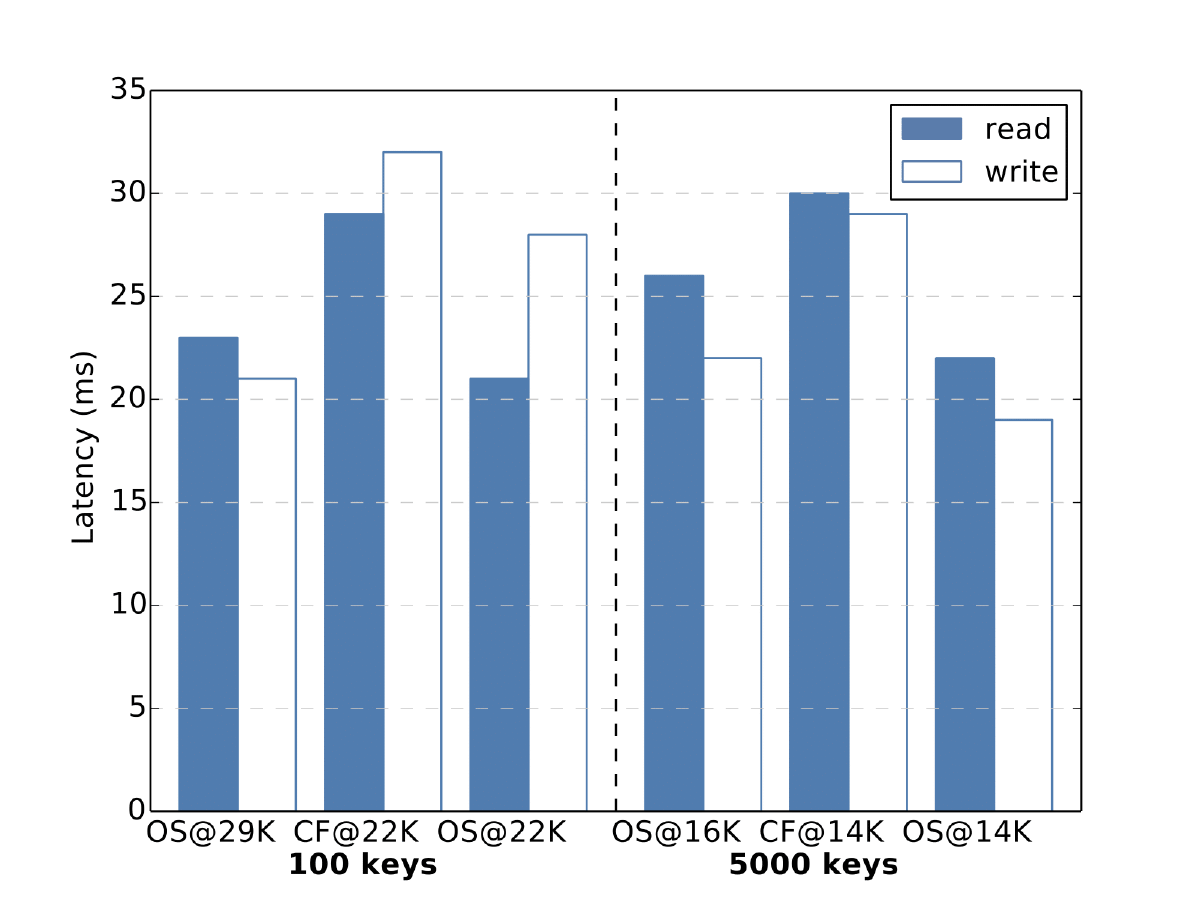}
         \caption{Median}
         \label{ch2:fig:exp1-median}
     \end{subfigure}
     \hfill
     \begin{subfigure}[b]{0.44\textwidth}
         \centering
         \captionsetup{justification=centering,margin=0cm}
         \includegraphics[width=\textwidth]{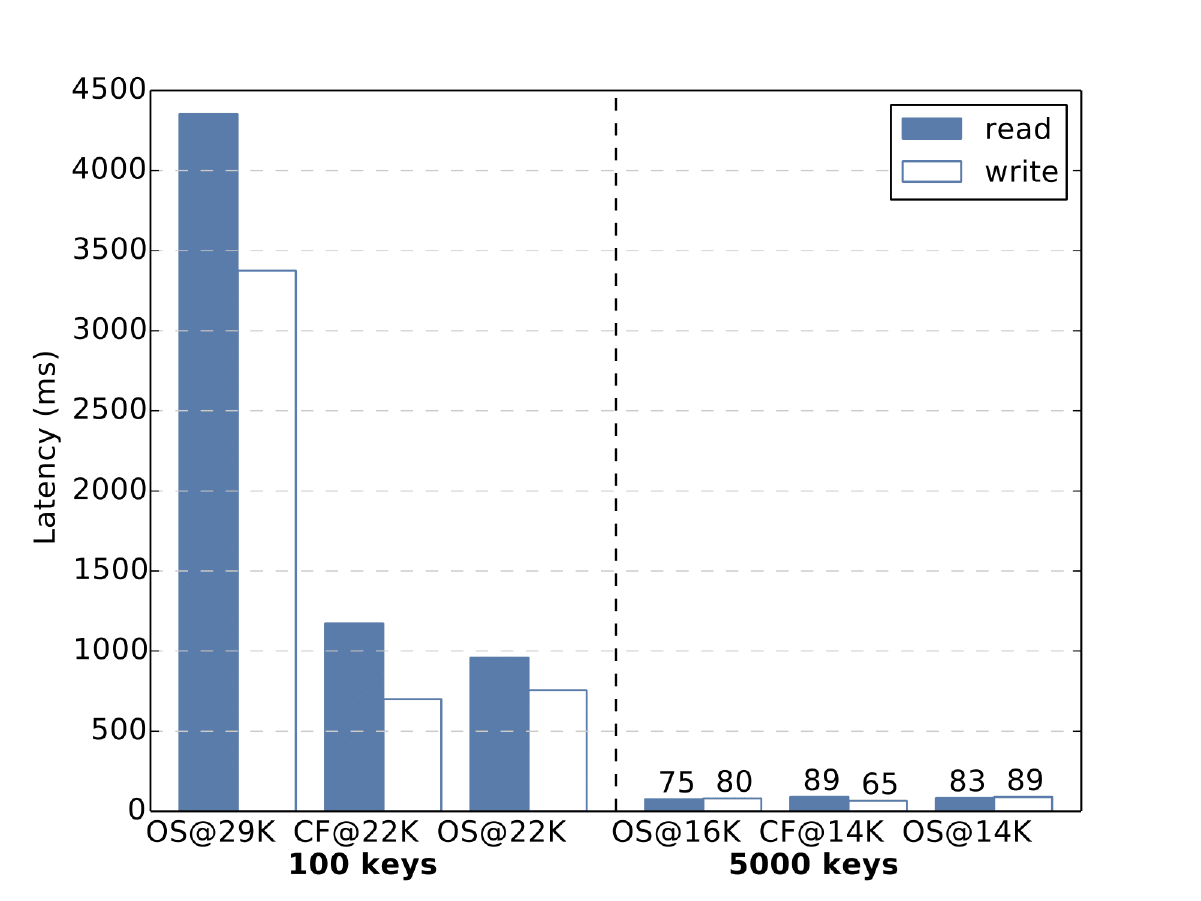}
         \caption{95th percentile}
         \label{ch2:fig:exp1-95}
     \end{subfigure}
        \caption{Graphs comparing latencies of original StateFun (OS) and StateFun with coordinator functions (CF) at different throughputs for read-only and write-only workloads.}
        \label{ch2:fig:exp1-latency}
\end{figure*}

\begin{figure*}[t]
     \centering
     \includegraphics[width=\textwidth]{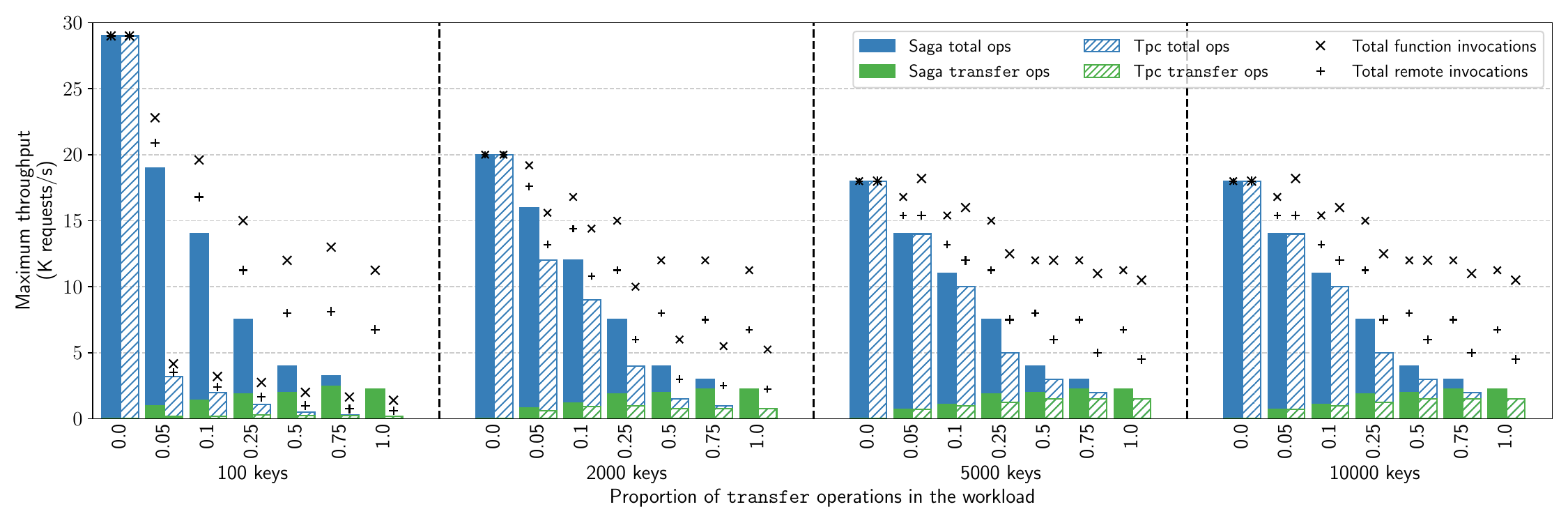}
     \vspace{-7mm}
     \caption{Maximum throughput for workloads with increasing proportions of \textit{transfer} operations in the workload}
     \label{ch2:fig:exp2-throughput}
\end{figure*}

\subsection*{Evaluation metrics}

We evaluate the systems based on two metrics. First, the throughput is either at max or stable (80\%), showing the number of workload operations the system can handle per second, and the latency, showing the time it takes to process an operation.

The maximum throughput of each workload and system configuration is found by steadily increasing the input throughput created by the benchmark clients in Kafka until the StateFun cluster/CockroachDB can no longer consistently handle the load, as measured by the system's output throughput in Kafka. At some point, the output throughput starts fluctuating, and we define this value as the maximum throughput for the configuration. In the comparison with Beldi, we could not measure it this way since it will always rescale to accommodate the new load. So our approach in this matter is to take the 80\% throughput of the StateFun configuration and run Beldi with the same input throughput.

We use the Kafka event time for the ingress and egress events of operations to measure their end-to-end latency. Since latency is always dependent on the throughput, in our experiments, we set the throughput to 80\% of the maximum throughput to allow consistent operation of the system under test and measure the latency accurately. When comparing latencies, the different throughput rates at which the latency is measured should be considered.

\subsection{Coordination Overhead}\label{ch2:sec:exp1}

In the first experiment, the performance of StateFun with coordinator functions is compared against the original on non-transactional workloads to see how much computational overhead the coordination logic has added. In \Cref{ch2:fig:throughput-overhead}, we show the maximum throughput achieved by the two systems for a varying number of keys. While in \Cref{ch2:fig:exp1-latency}, we show the different latencies for the systems across \textit{read} and \textit{write} workloads at different throughputs and numbers of keys. 

\para{Throughput} The first observation we can make is that there is a 20\% decrease in throughput in the case of 100 keys that plateaus to 10\% as the number of keys increases. The decreased performance is because the batching mechanism is more complex than the original append-only approach by enforcing isolated function invocations as part of a two-phase commit transaction. In addition, the coordinator functions keep track of transaction progress, which incurs some overhead. Another observation is that there is no noticeable throughput difference between workloads with only \textit{read} or \textit{write} operations. The reason behind this behavior is that, in StateFun, both operations need to access the remote function, making the communication layer the bottleneck. 

\para{Latency} The latencies in \Cref{ch2:fig:exp1-latency} are approximately 20\% higher for our version of StateFun for \textit{read} workloads. However, as the number of keys increases, the difference becomes smaller, towards 7\%. This decrease in performance is due to the additional logic required for function coordination. Another interesting observation is the indifference in performance for \textit{write} workloads. The reason is that StateFun batches every read operation before serialization, adding up over time for larger batches. In contrast, only the last version needs to be serialized for writes. Additionally, serialization happens at the remote function for both types of operations, explaining why it does not affect throughput, but it does affect latency. Finally, we consider the introduced overhead as a reasonably low price to pay for having full-fledged transaction execution primitives added to the system.

 \begin{figure*}[t]
     \centering
     \resizebox{\textwidth}{!}{
     \begin{subfigure}[b]{0.3\textwidth}
         \centering
         \captionsetup{justification=centering,margin=0cm}
         \includegraphics[width=\textwidth]{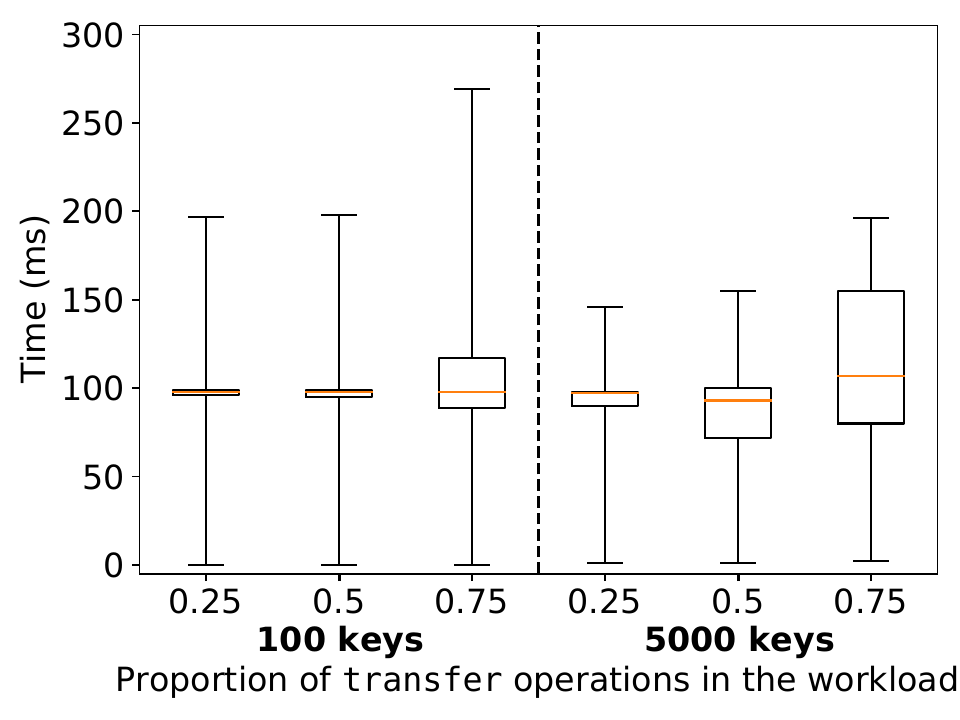}
         \caption{Time regular locks are held}
         \label{ch2:fig:time-regular-locks-are-held-transfer}
     \end{subfigure}
     \hfill
     \begin{subfigure}[b]{0.3\textwidth}
         \centering
         \captionsetup{justification=centering,margin=0cm}
         \includegraphics[width=\textwidth]{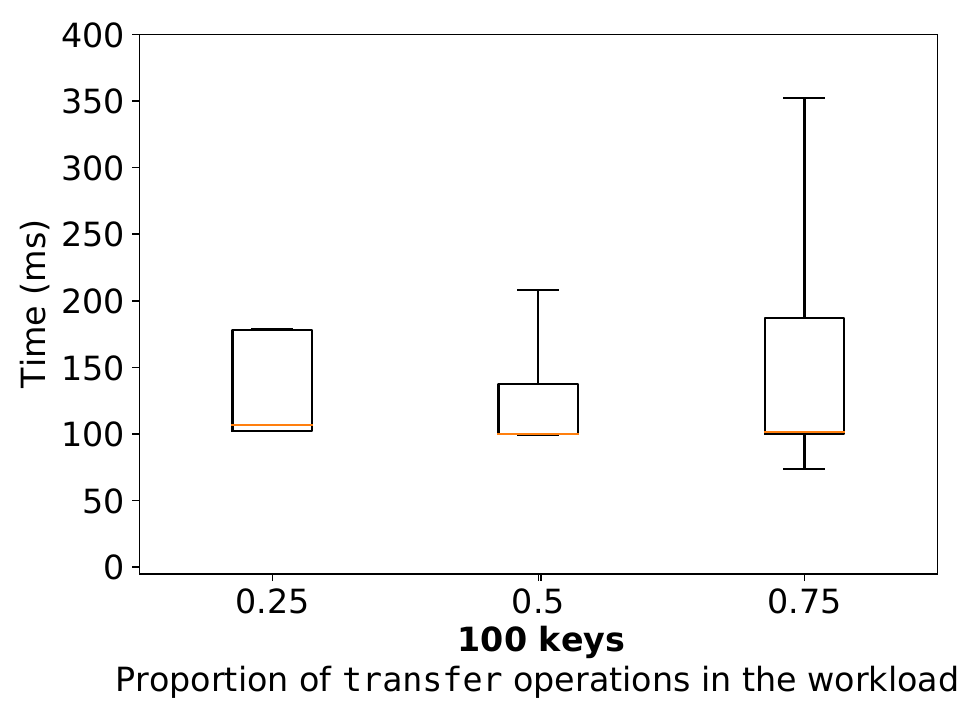}
         \caption{Time to detect deadlocks}
         \label{ch2:fig:deadlock-detect-time-transfer}
     \end{subfigure}
     \hfill
     \begin{subfigure}[b]{0.3\textwidth}
        \centering
        \captionsetup{justification=centering,margin=0cm}
        \resizebox{\textwidth}{!}{
        \begin{tabular}{|l|l|r|}
        \toprule
            \textbf{Keys} & \textbf{Transfer} & \textbf{Deadlocks /} \\
                          & \textbf{proportion} & \textbf{ transfer ops} \\
            \hline
            100 & 0.25 & 9/12014 (0.07\%) \\
                & 0.5 & 27/24107 (0.11\%) \\
                & 0.75& 82/35875 (0.22\%) \\
            \hline
            5000& 0.25 & 0/60121 \\
                & 0.5 & 0/120089 \\
                & 0.75& 0/179794 \\
        \bottomrule
        \end{tabular}}
        \\
        \caption{Frequency of deadlocks}
        \label{ch2:fig:deadlock-frequency-transfer}
     \end{subfigure}
     }
        \caption{Details of locking behavior for a workload for 100 and 5000 keys with various proportions of transfers without rollbacks. The boxplots show the 5th and 95th percentiles.}
        \label{ch2:fig:latency-lock-transfer}
\end{figure*}

 \begin{figure}[t]
    \centering
    \includegraphics[width=0.48\columnwidth]{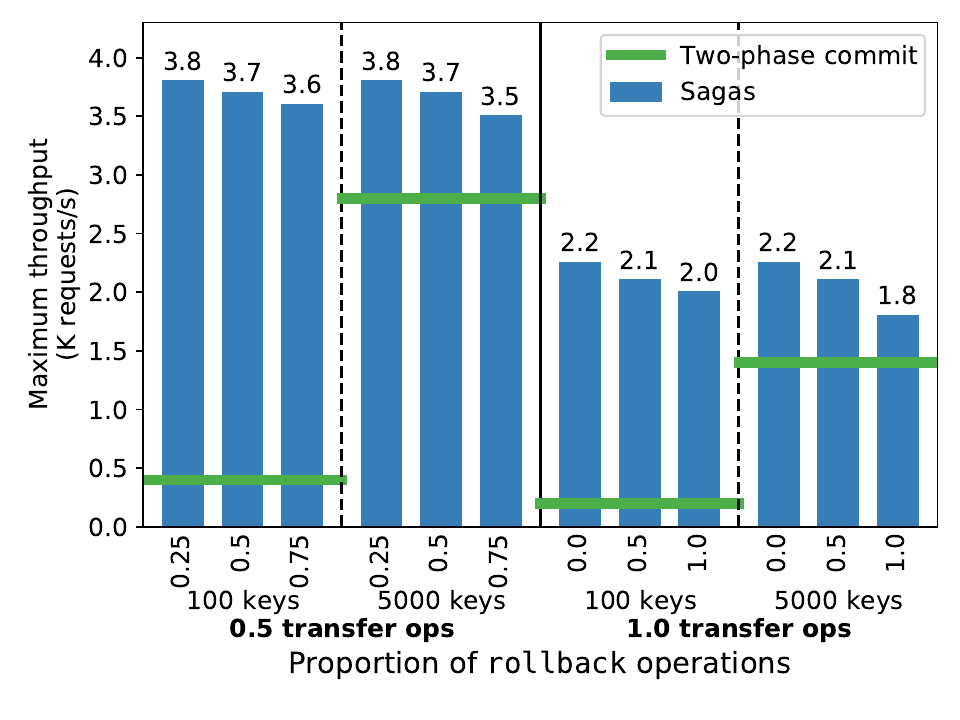}
    \caption{Throughput with different proportions of rolled back \textit{transfer} operations for workloads with 50\% and 100\% \textit{transfer} operations}
    \label{ch2:fig:rollback-throughput}
\end{figure}

\subsection{Sagas vs Two-Phase Commit} \label{ch2:sec:exp2}

The second experiment shows a performance comparison between the two implemented transaction protocols, their impact on the maximum throughput in perfect conditions (\Cref{ch2:fig:exp2-throughput}), and with failures, measuring the impact of locking for the two-phase commit (\Cref{ch2:fig:latency-lock-transfer}) and of rollbacks (\Cref{ch2:fig:rollback-throughput}) for the Saga protocols. In these experiments, we set a certain proportion of the workload to be \textit{transfer} operations, and the remaining proportion is equally shared between \textit{read} and \textit{write} operations. In our case, each \textit{transfer} operation causes three remote function invocations (coordinator function and one function per account holder taking part in the transfer). When evaluating two-phase commit functions, we do not include messages sent to detect deadlocks in the total number of invocations. Therefore, the indicator should be considered a lower bound on the actual number of messages. Finally, we used a uniform key access distribution for these experiments. At the same time, in some real-world scenarios, this can be skewed (e.g., lots of transactions on very active accounts vs. a long tail of inactive ones).

 \Cref{ch2:fig:exp2-throughput} plots the achieved throughput against the absolute number of \textit{transfer} operations in the workload with a varying number of keys, given that the benchmark provided the accounts enough balance to ensure all transactions succeeded. It also displays indicators for the absolute amount of total internal function invocations, considering additional internal invocations required for transactions, and the absolute amount of total remote function invocations. We observe that Sagas perform much better than two-phase commit for a few keys (100 and 2000). This happens for two reasons: i) Sagas can still benefit from the batching mechanism of StateFun since they do not require isolation, and ii) the locking in two-phase commit severely limits the throughput. However, it is also interesting that two-phase commit performs comparably to Sagas for a higher number of keys (5000-10000) even though it provides much stronger guarantees. This is because there is less contention on a single function, decreasing the effect of locking, while batching provides no benefits, as also shown in \Cref{ch2:fig:throughput-overhead}. A second observation from  \Cref{ch2:fig:exp2-throughput} is that the total function invocations still drop when the proportion of transactions increases. This is because the total function invocations account for the additional messaging required to coordinate transactions, leading to the overall throughput of workloads with a high proportion of \textit{transfer} operations being relatively low.
 
\begin{figure*}[t]
    \centering
    \includegraphics[width=\textwidth]{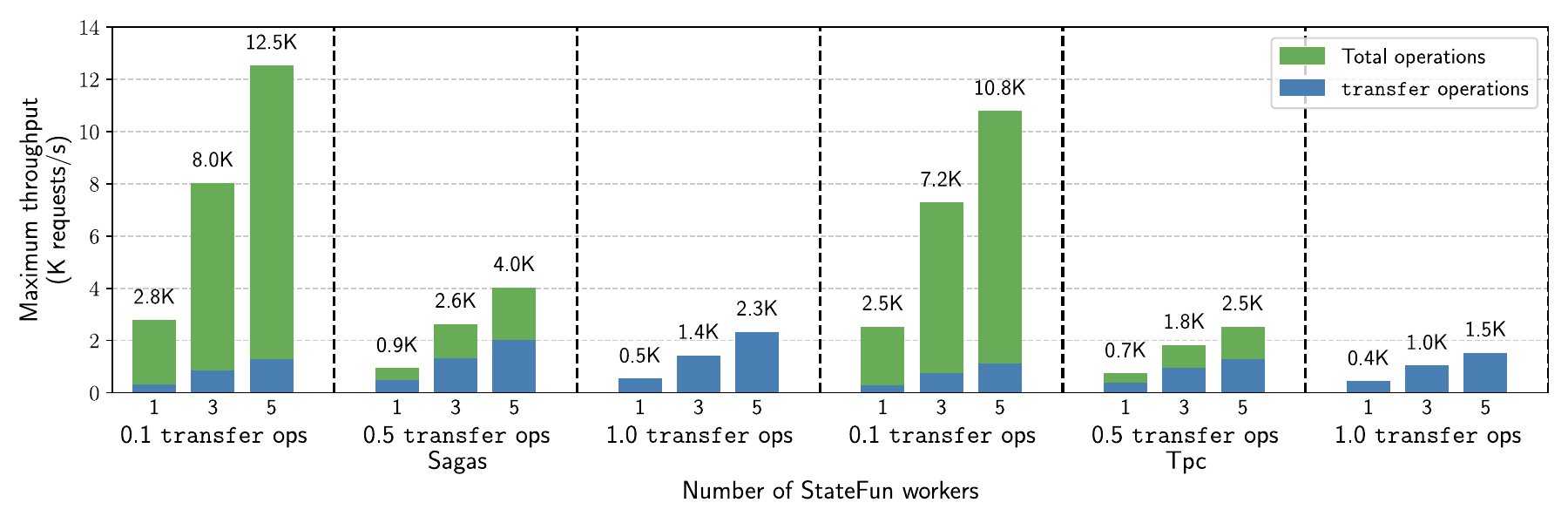}
    \caption{Maximum throughput for the system with 5000 keys for different numbers of StateFun workers for workloads with different proportions of \textit{transfer} operations}
    \label{ch2:fig:exp3-scalability}
\end{figure*}

\begin{figure*}[t]
     \centering
     \begin{subfigure}[b]{0.44\textwidth}
         \centering
         \captionsetup{justification=centering,margin=0cm}
         \includegraphics[width=\textwidth]{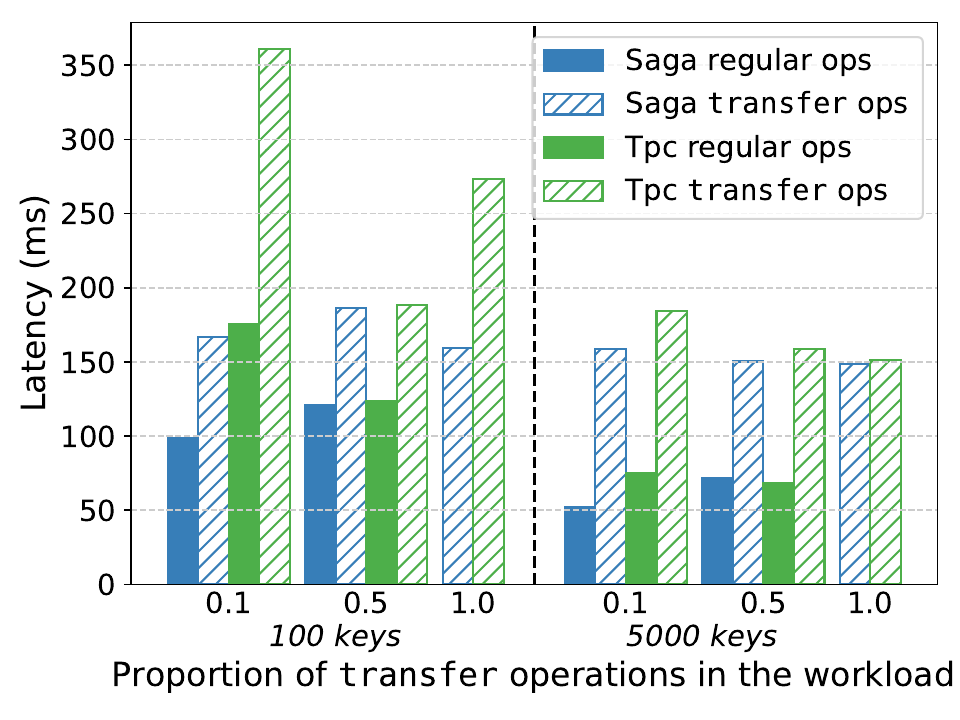}
         \caption{Mean}
         \label{ch2:fig:exp4-mean}
     \end{subfigure}
     \hfill
     \begin{subfigure}[b]{0.44\textwidth}
         \centering
         \captionsetup{justification=centering,margin=0cm}
         \includegraphics[width=\textwidth]{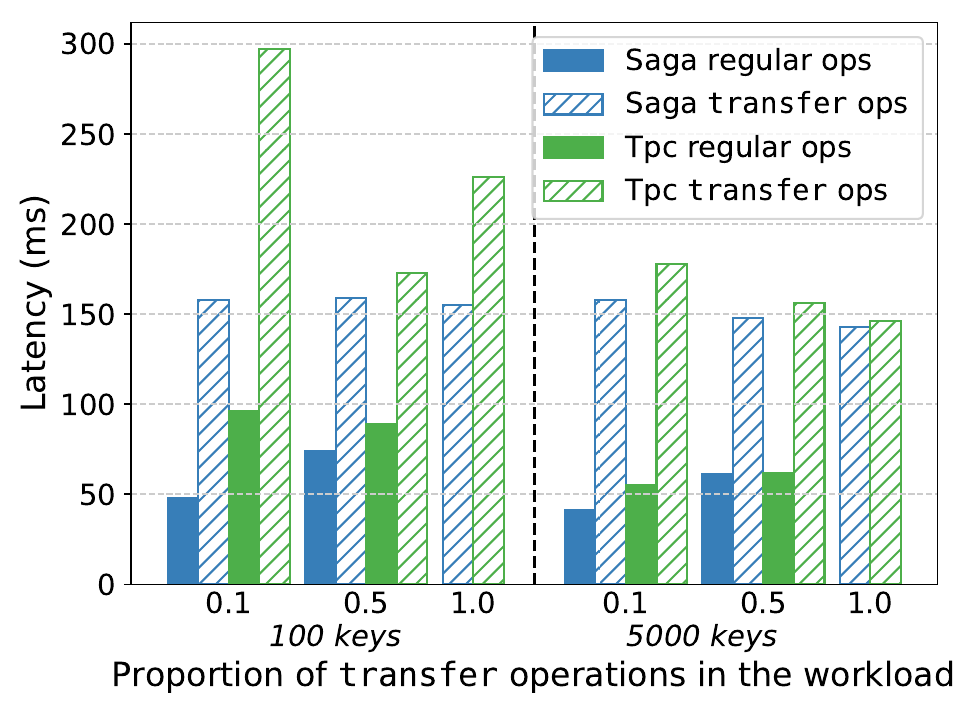}
         \caption{Median}
         \label{ch2:fig:exp4-median}
     \end{subfigure}
     \hfill
     \begin{subfigure}[b]{0.44\textwidth}
         \centering
         \captionsetup{justification=centering,margin=0cm}
         \includegraphics[width=\textwidth]{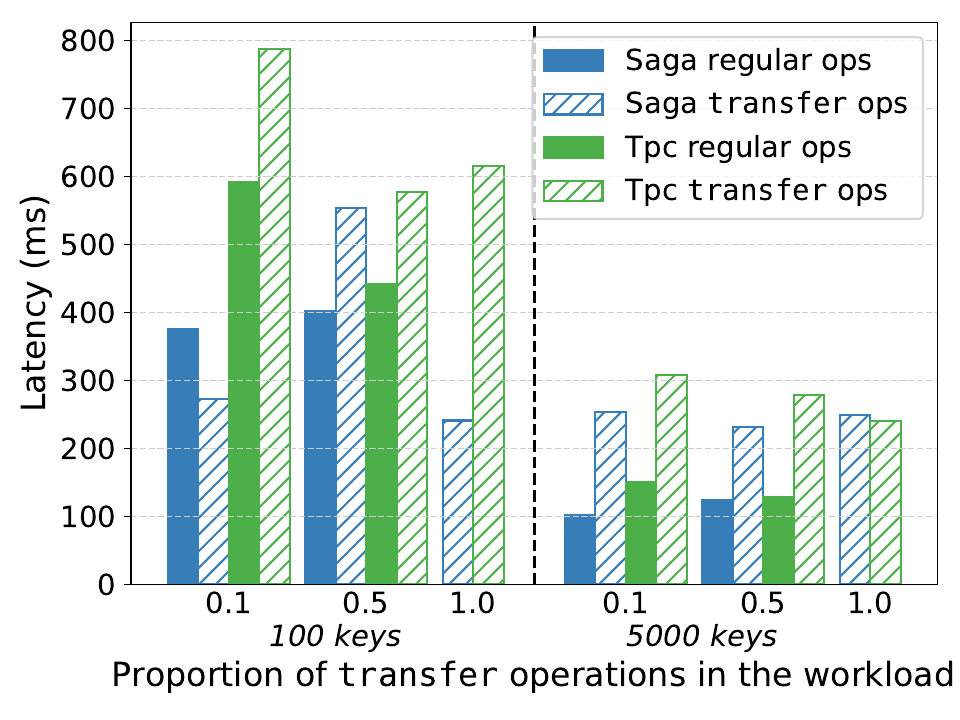}
         \caption{95th percentile}
         \label{ch2:fig:exp4-95}
     \end{subfigure}
     \par\bigskip
     \begin{subfigure}[b]{0.6\textwidth}
        \centering
        \captionsetup{justification=centering,margin=0cm}
        \resizebox{\columnwidth}{!}{%
        \begin{tabular}{|c c c | c c c | c c c | c c c|}
        \toprule
            \multicolumn{6}{|c|}{\textbf{100 keys}} & \multicolumn{6}{c|}{\textbf{5000 keys}} \\
            \hline
            \multicolumn{3}{|c|}{\textbf{Sagas}} & \multicolumn{3}{c|}{\textbf{Tpc}} &
            \multicolumn{3}{c|}{\textbf{Sagas}} &
            \multicolumn{3}{c|}{\textbf{Tpc}} \\
            0.1 & 0.5 & 1.0 & 0.1 & 0.5 & 1.0 & 0.1 & 0.5 & 1.0 & 0.1 & 0.5 & 1.0 \\
            \hline
             11K & 3K & 2K & 1.5K & 0.38K & 0.16K & 9K & 3K & 2K & 8K & 2K & 1.2K  \\
        \bottomrule
        \end{tabular}
        }
        \caption{Throughputs at which latency was measured}
        \label{ch2:fig:exp4-latency-throughputs}
     \end{subfigure}
        \caption{Graph comparing latencies for Sagas and two-phase commit coordinator function for different keys and transaction proportions in the workload at 80\% of the respective maximum throughputs}
        \label{ch2:fig:exp4}
\end{figure*}

\para{Locking Overhead} In \Cref{ch2:fig:latency-lock-transfer}, we measure the behavior of locking and deadlocks that accompany the two-phase commit protocol. The lock duration is measured between the point in time where the function instance sends the response to the \textit{prepare} message and when it either receives a \textit{commit} or \textit{abort} message,  sending the next batch to the remote function. In \Cref{ch2:fig:time-regular-locks-are-held-transfer}, we see little to no difference in the median across the different workloads, but when the proportion of \textit{transfer} operations is higher, the higher percentiles increase significantly. Next, we want to measure the deadlock frequency, and \Cref{ch2:fig:deadlock-frequency-transfer} shows the number of deadlocks against the total number of \textit{transfer} operations in the workload. As expected, there are no deadlocks in workloads with 5000 keys, since contention is low. For 100 keys, we observe an increasing number of deadlocks while the proportion of \textit{transfer} operations increases. However, the percentage of deadlocks across all \textit{transfer} operations is still small. Finally, \Cref{ch2:fig:deadlock-detect-time-transfer} shows the time it takes to detect a deadlock, i.e., perform the Chandy-Misra-Haas algorithm. We observe that the median of the time this takes is similar across all workloads, and it also shows that as the amount of \textit{transfer} operations increases, so do the higher percentile times.

\para{Rollback Overhead} \Cref{ch2:fig:rollback-throughput} shows the maximum throughput for workloads with 50\% and 100\% \textit{transfer} operations where different proportions of \textit{transfer} operations fail for Sagas and two-phase commit coordinator functions. As expected, when using two-phase commit, a rollback does not increase the load in the system because the coordinator function needs to send a second message either way. Again, nothing out of the ordinary happened as the proportion of \textit{transfer} operations to be rolled back increased. The throughput decreased as the protocol required additional compensating messages to be sent in the system. However, with 5000 keys, the difference is small at 50\% \textit{transfer} operations: 8\% when going from 25 to 75\% rollbacks and increasing to 18\% with 100\% \textit{transfer} operations. This is larger than the 100-key case that can still leverage the batching mechanism of StateFun and limit the performance drop to 10\% in the worst case. Still, no matter the decrease in performance due to the compensating actions of the Saga protocol, it remains 20\% faster than two-phase commit in the worst-case scenario of 5000 keys and 75\% rollbacks.

\subsection{Scalability Comparison} \label{ch2:sec:exp3}

In the last experiment, we evaluated the scalability of the proposed system with coordinator functions. In \Cref{ch2:fig:exp3-scalability} we display the maximum throughput for both two-phase commit and Sagas at different amounts of StateFun workers and different transaction proportions in the workload. For Sagas, the scalability from 1 to 5 workers is close to 90\% throughout for all workloads. For two-phase commit, the scalability from 1 to 5 workers starts at 87\% at 10\% \textit{transfer} operations and drops to 75\% at 100\% \textit{transfer} operations. 

The reason for the low decrease in scalability on both protocols is that as the number of workers increases, more traffic needs to go over the network. In the Sagas' case, the efficiency does not decrease across all workloads for the same reasons as expressed in \Cref{ch2:sec:exp2}. Namely, the system can still utilize batching, no locking is required, and the number of messages is two times lower than the two-phase commit protocol when all transactions succeed. On the other hand, the 8\% decrease in scalability in two-phase commit from 10\% to 100\% \textit{transfer} operations is due to the protocol's requirements for locks, more messages, and the inability to use batching. Considering all the impending factors, it still achieves decent efficiency with strong consistency guarantees in fully transactional workloads. 

\subsection{Microbenchmark} \label{ch2:sec:exp4}

As a final experiment, we conduct a microbenchmark on the system. At first, we keep the number of resources fixed, and then for every \textit{transfer} proportion and number of keys, we measure the throughput at 80\% load and the corresponding latency. By the results presented in \Cref{ch2:fig:exp4} we can see that for a use case with a low number of keys, the Sagas beat by a large margin the two-phase commit protocol in both throughput, with more than a 650\% increase in performance, and latency that is at least two times lower. The contention becomes less of a problem for a larger number of keys. We observe a smaller difference between the two protocols at around 40\% on average for throughput and a stable difference in latency around 20\%. To conclude, Sagas seems to be the obvious choice for a few keys or high contention, if the business logic permits it. In any other case, the choice is mainly about the consistency guarantee requirements since the difference is not that significant.

\begin{figure}[t]
    \centering
    \includegraphics[width=0.48\columnwidth]{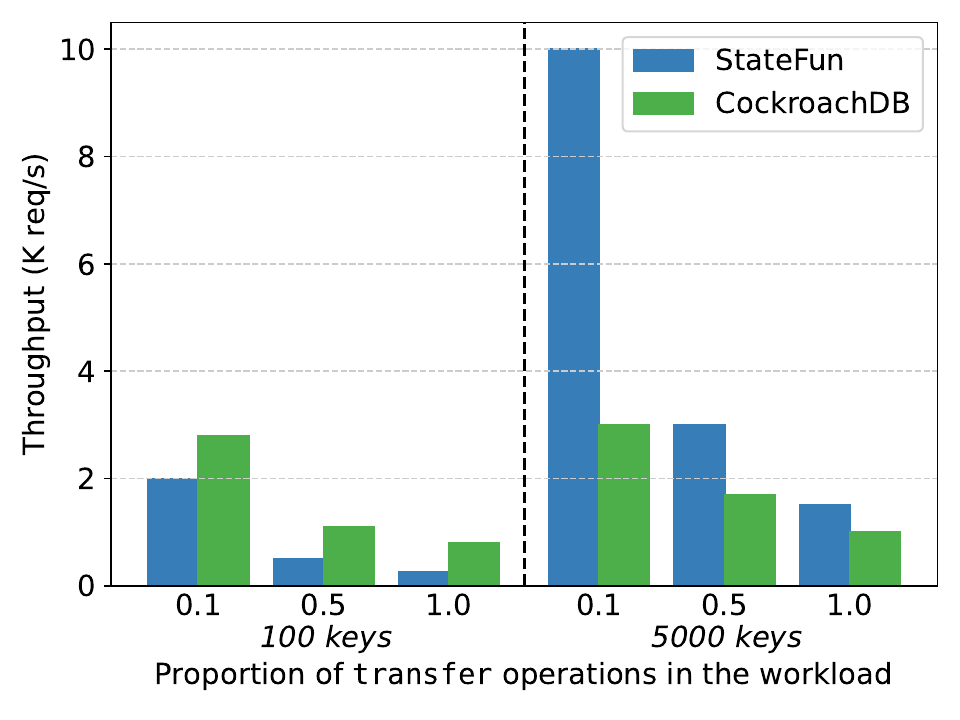}
    \caption{Comparing the maximum throughput of CockroachDB and Flink StateFun for workloads with different proportions of \textit{transfer} operations (the remaining operations are \textit{read} or \textit{update} operations with equal probability). Both systems are deployed with 3 instances, each with 4 CPUs.}
    \label{ch2:fig:cockroach-throughput}
\end{figure}

\begin{table}[t]
    \centering
    \resizebox{0.48\columnwidth}{!}{
    \centering
    \begin{tabular}{|l c c c c |}
    \toprule
        \multicolumn{5}{|c|}{\textbf{100 keys}} \\
        \hline
        \textbf{Operations} & \multicolumn{2}{c}{\textbf{StateFun}} & \multicolumn{2}{c|}{\textbf{CockroachDB}} \\
        & median & 95th \%tile & median & 95th \%tile \\
        \hline
        \textit{Transfer} (0.1) & 297 & 787 & 48 & 86 \\
        \textit{Non-transfer} & 96 & 591 & 47 & 79 \\
        \hline
        \textit{Transfer} (0.5) & 173 & 577 & 22 & 72 \\
        \textit{Non-transfer} & 89 & 441 & 17 & 68 \\
        \hline
        \textit{Transfer} (1.0) & 226 & 615 & 41 & 114 \\
        \bottomrule
        \multicolumn{5}{|c|}{\textbf{5000 keys}} \\
        \hline
        \textbf{Operations} & \multicolumn{2}{c}{\textbf{StateFun}} & \multicolumn{2}{c|}{\textbf{CockroachDB}} \\
                & median & 95th \%tile & median & 95th \%tile \\
        \hline
        \textit{Transfer} (0.1) & 178 & 308 & 36 & 113 \\
        \textit{Non-transfer} & 55 & 150 & 21 & 107 \\
        \hline
        \textit{Transfer} (0.5) & 156 & 278 & 21 & 64 \\
        \textit{Non-transfer} & 62 & 129 & 11 & 62 \\
        \hline
        \textit{Transfer} (1.0) & 146 & 240 & 43 & 78 \\
    \bottomrule
    \end{tabular}
    }
    \caption{Latency compared for StateFun and CockroachDB, each system was run at 80\% of the maximum throughput measured as shown in figure \ref{ch2:fig:cockroach-throughput}}
    \label{ch2:tbl:cockroach-latency}
\end{table}

\subsection{Comparison Against the State-of-the-Art} \label{ch2:sec:exp5} \label{ch2:sec:exp6}

\para{CockroachDB} We compare the performance of StateFun against a production-grade distributed database, CockroachDB, in terms of throughput and latency. Due to the fundamental differences between the two systems, this is merely a reference comparison. In this experimental setting, the input requests consist of a varying proportion of transactional and non-transactional requests. We signify transactional requests as transfer operations and non-transactional requests as non-transfer operations.

As \Cref{ch2:fig:cockroach-throughput} shows, CockroachDB outperforms StateFun in terms of throughput by a constant factor when transactions are evoked on a small number of unique keys. In addition, this experiment configuration examines the performance of the two systems when there is high lock contention since subsequent transactions on the same key have to wait for previous ones to complete.
Notably, the performance difference in terms of throughput remains the same while the proportion of transactions in the input request set increases from 0.1 to 0.5 to 1.
When there are many keys (e.g., 5000), StateFun outperforms CockroachDB. In fact, for a small proportion of transactions (0.1), StateFun achieves four times more throughput. As the number of transactions grows, the performance difference shrinks. These results can be explained by a more sophisticated or aggressive batching mechanism that enables StateFun to efficiently batch non-transactional requests. When there are many non-transactional requests, the effect of batching provides a significant performance advantage, which is shrinking as the number of non-transactional requests becomes smaller.

On the other hand, CockroachDB is superior in terms of latency performance as \Cref{ch2:tbl:cockroach-latency} depicts. Both median latency and latency at the 95th percentile are roughly six times better on average than StateFun's in all configurations. This result can be explained because CockroachDB is run with default settings, and there is no batching implemented at the application level. This contributes to lower throughput, but it also favors lower latency.
On the other hand, StateFun can inherently apply batching at several points in the system, such as when i) it sends a request to a remote function, ii) fetches requests from Kafka, and iii) produces responses to Kafka.

In summary, CockroachDB seems more suitable for handling skewed transactional workloads, although the performance improvement against StateFun is constant in terms of throughput. Thus, a potential superiority based on the locking mechanism of CockroachDB is capped and does not result in a scalable advantage. Furthermore, CockroachDB replicates data three times, leading to additional overhead but providing the capacity to serve requests even in the case of node failures. On the other hand, StateFun provides no replication and needs to recover from a checkpoint following a node failure. On the other hand, StateFun can leverage its sophisticated batching mechanism to drive significantly better throughput for workloads containing a modest number of transactions. Notably, while CockroachDB supports full transactional SQL and StateFun supports only one-shot functions, due to the simplicity of the workload, the feature set should not have a significant impact on performance. In addition, the executed workloads allow for less locking and more batching. Finally, CockroachDB demonstrates reliably low latency in all configurations, roughly six times lower than StateFun.

 \begin{table*}[t]
    \centering
    \resizebox{0.8\textwidth}{!}{
    \centering
    \begin{tabular}{|l c | c c c | c c c |}
    \toprule
        \multicolumn{8}{|c|}{\textbf{100 keys}} \\
        \hline
        \textbf{Operations} & \textbf{Throughput} & \multicolumn{3}{c|}{\textbf{StateFun}} & \multicolumn{3}{c|}{\textbf{Beldi}} \\
        & & CPU & median & 95th \%tile & Max. concurrency & median & 95th \%tile \\
        \hline
        \textit{Transfer} (0.1) & 1.5K & 80 & 298 & 723 & 128 & 49 & 739 \\
        \textit{Non-transfer} & &  & 99 & 572 & & 83 & 693 \\
        \hline
        \textit{Transfer} (1.0) & 0.16K & 80 & 223 & 532 & 182 & 91 & 724 \\
        \bottomrule
        \multicolumn{8}{|c|}{\textbf{5000 keys}} \\
        \hline
        \textbf{Operations} & \textbf{Throughput} & \multicolumn{3}{c|}{\textbf{StateFun}} & \multicolumn{3}{c|}{\textbf{Beldi}} \\
        & & CPU & median & 95th \%tile & Max. concurrency & median & 95th \%tile \\
        \hline
        \textit{Transfer} (0.1) & 8K & 80 & 184 & 287 & 1000* & 123 & 174 \\
        \textit{Non-transfer} & & & 52 & 184 & & 50 & 77 \\
        \hline
        \textit{Transfer} (1.0) & 1.2K & 80 & 146 & 273 & 902 & 114 & 847 \\
    \bottomrule
    \end{tabular}
    }
    \caption{Comparison between latencies of Beldi and StateFun (*experiment is throttled and runs at a lower throughput $\approx 4$K, i.e., experiment lasted longer)}
    \label{ch2:tbl:beldi-comparison}
 \end{table*}

\para{Beldi} We also compare StateFun with a stateful function as a service library and runtime, Beldi, which runs on AWS Lambda and uses DynamoDB as backend storage for transactions. Because of the intricacies of the serverless environment and the restricted way it can be configured, we limit our comparison to latency performance, given a fixed amount of throughput requests, since we have limited visibility to the number of resources used by Beldi. AWS only exposes the concurrency level of the Lambda functions and allows for the restriction of that to a maximum number. In Figure~\ref{ch2:tbl:beldi-comparison}, max concurrency refers to the max concurrency utilized by AWS Lambdas. Max concurrency was fairly stable throughout each experiment. Notably, there is no information regarding the specification of the underlying hardware that is used.

Furthermore, even latency performance does not provide a fair comparison because Beldi only measures latency from when a request's execution starts until the time it completes, without considering the amount of time spent for routing and waiting in an input queue before the request's execution begins. Consequently, a performance throttle in Beldi due to excess load will not show in the measured latency. We try to compensate for this by measuring the experiment's completion time and estimating Beldi's actual throughput. On the other hand, we run StateFun in an IaaS cloud infrastructure where we provide it with a specific amount of computational resources and measure latency end-to-end. The disparity between the two infrastructures and experimental settings limits the insights that can be extracted.

Figure~\ref{ch2:tbl:beldi-comparison} shows the experimental results, from which we draw two notable observations regarding latency.
For non-transfer operations, regardless of the number of keys, StateFun and Beldi achieve the same level of low-latency performance.
Beldi demonstrates 2-3 times superior performance in terms of median latency for transfer operations, while tail latency at the 95th percentile suggests no important differences between the two systems. In Beldi, latency only captures delays that are internal to the system, which may be owed to lock contention inside Beldi, communication stalls between Lambda functions and DynamoDB, as well as queuing in DynamoDB. Unfortunately, it is impossible to pinpoint the exact factors and their merit in the observed tail latency.

Lastly, an important factor in the experiments is that Beldi is let free to auto-scale up to 1000 concurrently executing functions. This aggressive availability of resources far exceeds the 80 CPUs given to the remote functions executing on StateFun. Interestingly, when the number of unique keys is large, meaning that lock contention is low, this level of concurrency is not adequate to accommodate the input throughput of 8K requests per second. In this case, AWS Lambdas used all the available concurrency, and the execution of requests was throttled, waiting for CPUs to become available. Given the experiment's duration, we approximated the level of throughput achieved by Beldi at 4K requests per second. Note that Beldi's latency remains unaffected since it does not account for external delays, such as queuing. On the other hand, we observe that when the number of unique keys is small, meaning that lock contention is high, Beldi is quite efficient. It used more concurrency than what was available to StateFun, but at the same overall level of magnitude. Beldi's efficiency is probably owed to juggling between requests that can be executed immediately and others that should be put to sleep until they can get hold of the lock they require to proceed.

Finally, the observed performance of Beldi does not account for garbage collection. Beldi features a garbage collector to shrink its transaction log periodically, but the garbage collector does not need to run during the presented experiments because their duration is too short. In general, however, the garbage collector is expected to add overhead not represented in our set of experiments.

\section{Related Work} \label{ch2:sec:relatedWork}

\para{SFaaS Systems} SFaaS has been a very active area in both research and the open-source community. From the research community, the most relevant work is Beldi \cite{beldi}, which, like AFT\cite{aft}, builds on top of Amazon's AWS Lambda to add fault tolerance and transaction support, allowing for more complex state management. Their principal difference is that Beldi's execution environment is completely serverless, while AFT relies on external servers for transaction support. To make that happen, Beldi uses atomic logging, extending  Olive~\cite{setty2016olive}, to ensure fault tolerance for read and write operations, with garbage collection to manage the logs' growth. Regarding transactions, Beldi supports a variant of the two-phase commit protocol, enforcing strong consistency guarantees with wait-die deadlock prevention. Cloudburst with Hydrocache \cite{causal-consistency-cloudburst} provides causal consistency guarantees within the same DAG workflow backed by Anna~\cite{anna-kvs}, a key-value state backend. Another promising SFaaS system,  FAASM\cite{shillaker2020faasm}, supports direct memory access between functions while maintaining isolation and speeds up initialization times compared to containers. At the time of writing, FAASM does not provide transactional support.
Finally, the two most prominent open-source SFaaS projects are Cloudstate\footnote{\url{https://cloudstate.io/}}, based on stateful actors, and Apache Flink StateFun, which is presented in detail in \Cref{ch2:sec:flinkstatefun}. In Cloudstate, communication is allowed between different actors within the same cluster and between user-defined functions over gRPC with at-least-once processing guarantees.

\para{Transactional Programming Model} The most notable difference among these systems in terms of programming model is state access. Both StateFun and Cloudstate encapsulate state within a specific function instance. In contrast, Cloudburst and Beldi allow any function access to any state stored in Anna or DynamoDB, respectively. Regarding transactions, only Beldi offers a programming model where the developer writes two markers (begin/end\_tx), and every function invocation in between will execute as part of a transaction. Our contribution is a programming model that supports transactions on StateFun with the choice of strong or relaxed consistency guarantees.

\para{Stream Processing Transactions} Furthermore, transactions on top of stream processing systems have received some attention in the literature. In \cite{botan2012transactional}, the authors introduce a transactional model over both data streams and traditional tabular data. Following a similar model, in  \cite{gotze2019snapshot}, the authors add guarantees for snapshot isolation and consistency across partitioned state. Then TSpoon~\cite{affetti2020tspoon}, an extension of FlowDB~\cite{affetti2017flowdb}), proposes a data management system built on top of a stream processor that supports transactions, giving the option of both strong and weak transactional guarantees and queryable state.  
Our work focuses on transactional workflows between generic stateful functions executed on a serverless dataflow system.

\para{Distributed Databases} The mentioned stream processing systems share the same main goal as distributed databases \cite{mohan1986transaction, kallman2008h, spanner, yan2018carousel, ren2019slog, cockroachdb}, that is, how to scale to multiple machines while providing serializable transactional guarantees. This is an old problem in database research. The R* system \cite{mohan1986transaction} was one of the first to try the two-phase commit protocol with distributed deadlock detection. Then, more recent approaches like H-store~\cite{kallman2008h} showed that distributed database solutions could provide both very high performance and transactional guarantees when transactions touch a single partition. Currently, research in distributed databases revolves around globally distributed databases with Spanner~\cite{spanner} introducing serializable transactions using a timestamp mechanism across all locations/machines based on atomic clocks. Furthermore, approaches like Carousel~\cite{yan2018carousel} and SLOG~\cite{ren2019slog} improve globally distributed database transactions. Carousel enhances transaction execution by minimizing network usage, while SLOG offers a fine-grained transaction protocol based on the proximity between the data and the client. Finally, CockroachDB~\cite{cockroachdb} provides serializable globally distributed transactions without a complicated time mechanism.

\para{Benchmarks} The large variety of use cases and systems makes them difficult to compare using a standardized benchmark. The related benchmarks that could be used to evaluate SFaaS systems are the \textit{Yahoo! Cloud Serving Benchmark} (YCSB)\cite{ycsb} and the DeathStarBench \cite{deathstar}. Given that StateFun is based on Flink, which is a stream processing system, a stream processing benchmark~\cite{karimov2018benchmarking} would be another alternative. However, its workloads are not representative of those executed by an SFaaS system. In addition, we did not consider TPC-C \cite{tpcc} because it was created to test relational database management systems, including transactions, and requires many additional features not present in SFaaS. We ultimately chose to develop and use an extension of YCSB \cite{dey2014ycsb+} that introduced explainable transactional workloads, allowing for an easier interpretation of the results.

\section{Discussion \&\ Open Problems}

\para{Programming Models for the Cloud}
Although the stateful dataflow model has been very successful as an execution model, it has not been leveraged thus far as an intermediate representation. Historically, MapReduce/Hadoop~\cite{dean2008mapreduce} and Dryad \cite{dryadlinq} were first proposed as a means of authoring and executing distributed data-parallel applications using high-level language constructs, such as Java functions and LINQ~\cite{linq} respectively. Many systems have followed that execution model subsequently, including streaming dataflow systems such as Apache Storm~\cite{storm}, Flink~\cite{flink}, Naiad~\cite{murray2013naiad}. However, none of these systems could execute general-purpose cloud applications; their programming model focuses on distributed collection processing and adopts a functional programming API.`

\para{Dataflows for Cloud Applications} We believe that abstractions such as stateful functions can play the role of a high-level programming model for dataflow engines and have a high impact on cloud programming. The current approach to programming in the cloud is to either use domain-specific languages (DSLs)  such as Bloom \cite{alvaro2010boom}, Hilda\cite{hilda}, and Erlang \cite{armstrong2013programming}, or as libraries within mainstream languages like Akka \cite{wyatt2013akka}, Spring Boot (\url{spring.io}). The main observation here is that the developer either has to learn a new domain-specific language or use libraries that leak implementation details into the business logic. Very close to the spirit of this work are virtual actors, and Orleans \cite{bykov2011orleans,orleans} from which Apache Flink's StateFun drew inspiration. However, Orleans requires a specialized runtime and does not offer exactly-once function execution. As we show in this paper, implementing very complex protocols (with lots of corner cases) can be simpler since we benefit from the state management and exactly-once guarantees of modern dataflow systems. Since dataflow systems are well understood, scalable, and consistent nowadays, we believe they will play a critical role in the future of cloud execution engines.

\para{Future Dataflow Systems} However promising they can be, dataflow engines still suffer from several issues. Stream processors such as Apache Flink \cite{flink}, or Jet \cite{jet} have been designed for continuous operation on high-throughput streams. However, stateful functions have very different workload characteristics. For instance, lots of cloud applications may have to call external services  -- a source of non-determinism \cite{silvestre2021clonos}, and functions calling other functions, expecting return values, introduce cycles in the dataflow graph. Current dataflow systems either do not support cycles or support a few special cases of cycles. This is because cycles can cause deadlocks and various other issues \cite{faucet, carbone2017state} that need to be dealt with. Finally, in this paper, we introduced transactions at the function level without having to touch the core of Apache Flink's dataflow engine. However, proper implementation of transactions would require the dataflow system itself to be aware of transaction boundaries (e.g., commit, prepare)  and incorporate transaction processing into its fault-tolerance protocol. We think that more research needs to be performed to get dataflow systems fully capable of leveraging their potential.

\section{Conclusions} \label{ch2:sec:conclusion}

In this chapter, we tackle the problem of supporting transactional workflows across cloud applications on a serverless platform. This problem is notorious in the microservices and cloud applications landscape.
In addition to that, we introduced a programming model and corresponding implementation for authoring workflows across stateful serverless functions with configurable transactional guarantees.
Developers can opt for a distributed transaction across functions with strict atomicity and consistency guarantees or a Saga workflow that provides eventual atomicity and consistency.
These complementary alternatives faithfully represent the requirements of real-world use cases.
We described our implementation on top of Apache Flink StateFun, and evaluated our implementation on an extended version of the YCSB benchmark that we developed in terms of a) throughput and latency overhead against the original StateFun, b) performance efficiency between distributed transactions and Saga workflows, and c) scalability.
We found that our transactional workflows add affordable overhead to the system around 10\%, Sagas significantly outperform distributed transactions on a scale of 15\% -- 34\% depending on the amount of ongoing transactional workflows in the system, and scalability manifests a factor of 90\% for Sagas compared to 75\% -- 87\% for two-phase commit. Furthermore, our comparison against a serverless SFaaS runtime showed that our work could achieve higher throughput, but it also incurs higher latency. Finally, we compared against a popular distributed database, CockroachDB, which achieved better performance in high contention scenarios and in terms of latency. Notably, our work achieved better results in sparse key distributions, while it provides exactly-once processing semantics compared to our deployment of Kafka with a CockroachDB backend at-least-once semantics.

\chapter{Stateflow: a Domain-Specific Language for General-Purpose Cloud Applications}
\label{chapter3}

\vfill

\begin{abstract}

\Cref{chapter2} examined how transactional guarantees can be integrated into existing stateful function-as-a-service platforms, focusing on extending Apache Flink Statefun. While this approach demonstrated the feasibility of augmenting runtime environments with strong consistency guarantees, it did not address the equally important challenge of programmability. Cloud developers continue to face significant complexity when translating high-level application logic into distributed execution semantics. In this chapter, we turn our attention to this challenge.

We introduce \emph{Stateflow}, a domain-specific language and compiler pipeline that enables developers to author general-purpose cloud applications using familiar object-oriented constructs. Stateflow compiles such programs into a dataflow intermediate representation that is portable across multiple execution backends. This chapter details the design of the programming model, its compilation strategy, and the execution guarantees it preserves, thereby advancing the thesis's objective of democratizing cloud applications.

\end{abstract}

\vfill

\blfootnote{This chapter is based on the following work: \\\faFileTextO~\hangindent=15pt\emph{K. Psarakis, W. Zorgdrager,  M. Fragkoulis, G. Salvaneschi, and A. Katsifodimos. Stateful Entities: Object-oriented Cloud Applications as Distributed Dataflows, EDBT'24 (vision) and CIDR'23 (abstract)}~\cite{stateflow}.}

\newpage

\dropcap{O}{}rganizations nowadays enjoy reduced costs and higher reliability, but cloud developers still struggle to manage infrastructure abstractions that leak through in the application layer. As a result, managing application components, such as service invocation, messaging, and state management, requires much more effort than developing the application's business logic \cite{blanastransactions}. Worse, moving a cloud application between cloud providers is prohibitive due to significant differences in the underlying systems.

While there are multiple approaches for distributed application programming (e.g.,  Bloom \cite{alvaro2010boom}, Hilda \cite{hilda}, Cloudburst \cite{cloudburst}, AWS Lambda, Azure Durable Functions, and Orleans \cite{bykov2011orleans,orleans}), in practice developers mainly use libraries of popular general purpose languages such as Spring Boot in Java, and Flask in Python.

None of these approaches offers processing guarantees, failing to support \emph{exactly-once processing}: the ability of a system to reflect the changes of a message to the state exactly once. Instead, they offer at-most- or at-least-once processing semantics. Programmers then have to  ``pollute'' their business logic with consistency checks, state rollbacks, timeouts, retries, and idempotency\cite{rodrigosurvey, microservices-drawbacks}.

We argue that no matter how we approach cloud programming, unless an execution engine can offer exactly-once processing guarantees so that it can be assumed at the level of the programming model, we will never remove the burden of distributed systems aspects from programmers. To the best of our knowledge, the only systems able to guarantee exactly-once message processing~\cite{silvestre2021clonos,carbone2017state} at the time of writing are batch~\cite{dean2008mapreduce, dryadlinq, CUSTOM:web/Spark} and streaming~\cite{storm, flink, murray2013naiad} dataflow systems. However, their programming model follows the paradigm of functional dataflow APIs, which are cumbersome to use and require training and heavy rewrites of the typical imperative code that developers prefer to use for expressing application logic.

For these reasons, we argue that the dataflow model should be used as a low-level intermediate representation (IR) for the modeling and executing distributed applications, but not as a programmer-facing model. Technically, one of the main challenges in adopting a dataflow-based IR is that the dataflow model is functional, with immutable values propagating across operators that typically do not share a global state. Hence, adopting a dataflow-based IR entails translating (arbitrary) imperative code into a functional style.
Compiler research has systematically explored only the opposite direction: to compile code in functional programming languages into a representation that is executable on imperative architectures -- like modern microprocessors. Yet, the translation from imperative to functional or dataflow programming remains largely unexplored.

\begin{figure*}[t]
    \centering
    \includegraphics[width=1\textwidth]{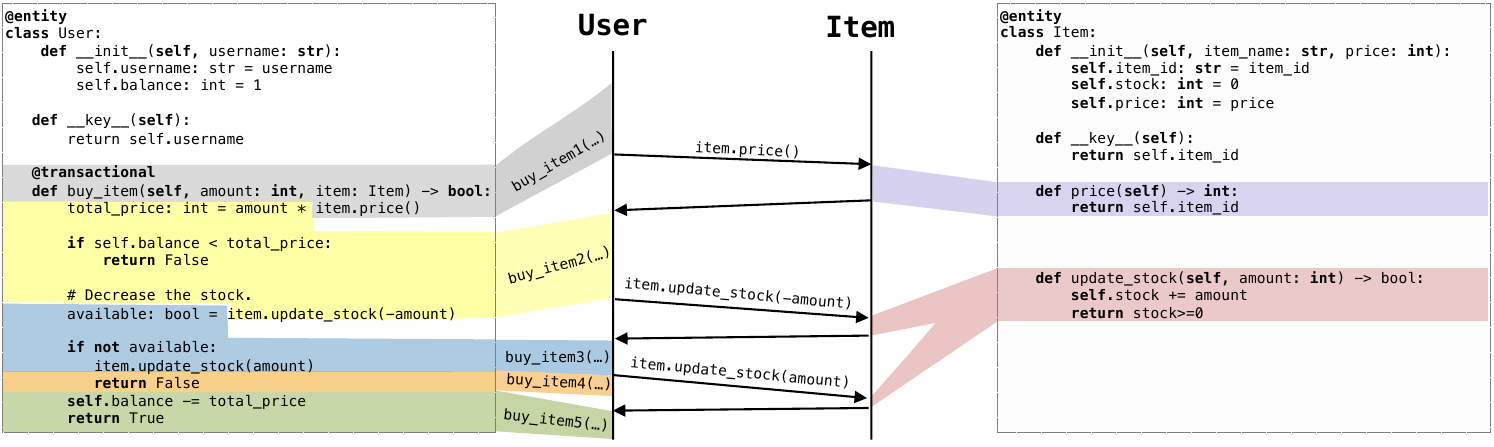}
    \caption{Two stateful entities: \textit{User} and \textit{Item}. The content of imperative functions is split into multiple functions that access the common state of a given entity. Those functions are then encoded into a stateful dataflow that can be executed in a distributed streaming dataflow engine. As a result, $i)$ imperative code is executed in an event-based manner without the need to block, and $ii)$ the code retains exactly-once processing guarantees without the need for programmers to write failure-handling code such as state management, call retries or idempotency.}
    \label{ch3:fig:calls}
\end{figure*}

This chapter presents a prototypical programming model and an IR that compiles imperative, transactional object-oriented applications into distributed dataflow graphs and executes them on existing dataflow systems. 
Instead of designing an external Domain-Specific Language (DSL) for our needs, we opted for an internal DSL embedded in Python - a popular language for cloud programming.
Specifically, a given Python program is first compiled into an IR, an enriched stateful dataflow graph independent of the target execution engine. That dataflow graph can then be compiled and deployed to various distributed systems. The current set of supported systems includes Apache Flink Statefun~\cite{statefun} and Styx (\Cref{chapter4}). The choice of a runtime system is entirely independent of the application layer, which allows switching to different runtime systems with no changes to the application code.

\vspace{2mm}
\noindent The contributions of this chapter go as follows:
\begin{itemize}
    \item To the best of our knowledge, this is the first work to propose compiling and executing imperative programs into distributed, stateful streaming dataflows.
    \item We present a compiler pipeline that analyzes an object-oriented application and transforms it into an IR tailored to stateful dataflow systems.
    \item We describe an IR for cloud applications and how that IR translates to a dataflow execution graph, targeting various distributed systems, thereby making cloud applications portable across different systems and infrastructures.
    \item We compare Stateflow, a novel transactional dataflow system, against Apache Flink Statefun and demonstrate the limitations of existing dataflow systems, motivating further research. Our experimental evaluation shows that Stateflow incurs low latency in the YCSB+T \cite{dey2014ycsb+} workload.
\end{itemize}

\noindent The proposed programming model presented in this chapter can be found at:\\ \url{https://github.com/delftdata/stateflow}.

\section{From Imperative Code to Dataflows}

Historically, imperative programming and functional programming have evolved in parallel:  imperative as a direct codification of (operational) computational models (e.g., Von Neumann architecture, Turing machines) and functional inspired by mathematical abstractions (e.g., lambda calculus, program denotation).
While functional programming has been embraced by several languages (e.g., Haskell \cite{hudak1992report}, ML \cite{harper1986standard}), imperative programming has taken the scene, with most mainstream languages featuring object-oriented (mutable) abstractions.
Over the last few years, imperative languages like Java and Python, which support various domain-specific packages, e.g., networking, statistics, numeric computation, etc., have become extremely popular among non-expert programmers.

Yet, the benefits of functional programming have been known for a while. Most notably, functional code is often {\it embarrassingly parallelizable} because of the lack of side effects and mutability. Developers working with imperative languages -- let alone non-expert developers -- can hardly access this feature.

\subsection{Approach Overview}

The main principle behind our compiler pipeline is that developers simply annotate Python classes with \textit{@stateflow}, and the system automatically analyzes and transforms these classes into an intermediate representation, which is then transformed into stateful dataflow graphs, ready to be deployed on a dataflow system.  Similar to (Virtual) Actors \cite{bykov2011orleans,wyatt2013akka}, \emph{entities} can make calls to methods of other entities. \Cref{ch3:fig:calls} depicts two sample entities: \textit{User} and \textit{Item}. Details of the programming model are provided in \Cref{ch3:sec:assumptions}.

In the first pass of an Abstract Syntax Tree (AST) static analysis, we extract the class's variables (i.e., instance attributes referenced with \textit{self}), the names of each method, and all respective types indicated by the programmer (\Cref{ch3:sec:assumptions}). In the second round of analysis, classes that interact with each other are identified to create a function call graph (\Cref{ch3:sec:classes-dataflows}). Then, the call graph is analyzed to identify calls to other functions (possibly residing in a remote machine), at which point functions have to be split, composing the final dataflow (\Cref{ch3:sec:splitting}). 

This dataflow graph, enriched with the compiled classes, execution plans, and all metadata obtained from static analysis, comprises the intermediate representation (\Cref{ch3:sec:ir}). Finally, that intermediate representation can be translated, deployed, and executed in different target systems (\Cref{ch3:sec:runtimes}). 

\begin{figure}[t!]
    \centering
    \captionsetup{justification=centering}
    \includegraphics[width=0.6\columnwidth]{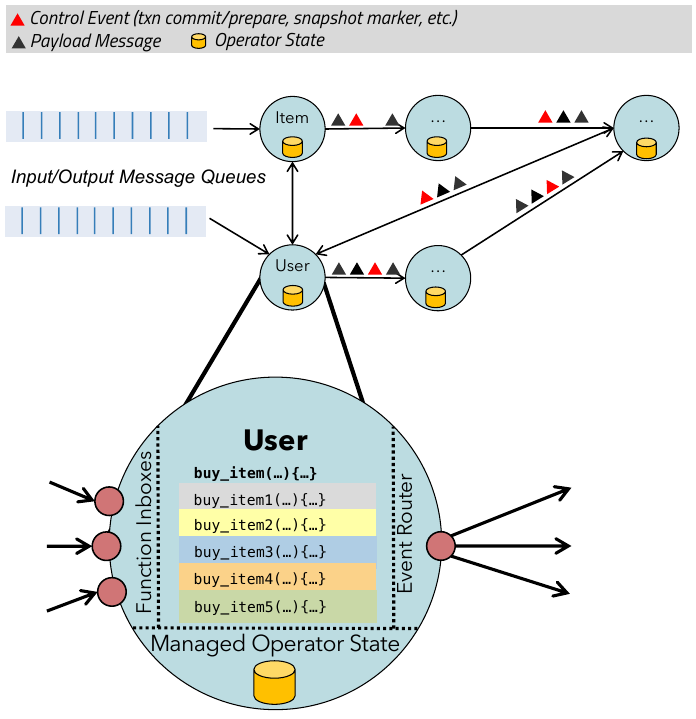}
    \vspace{-2mm}
    \caption{Logical dataflow graph of five entities, focusing on the \textit{User} entity found in \Cref{ch3:fig:calls}.}
    \label{ch3:fig:dataflow}
    \vspace{-4mm}
\end{figure}

\subsection{Programming Model \& Limitations}
\label{ch3:sec:assumptions}

\para{Expressiveness} Our programming model allows programmers to specify simple, object-oriented Python programs. Classes can have references to other classes and call their functions. We term an instance of such a class as a \textit{stateful entity}. The Stateflow compiler currently can analyze conditionals, \textit{for-}loops that iterate through Python lists, as well as general \textit{while} loops. 

\para{Limitations} Stateflow requires static type hints for the input/output of stateful entity functions and ensures the existence of those hints via a static pass over the analyzed classes. Moreover, the functions cannot be recursive. Another assumption that Stateflow makes is that each entity contains a \textit{key()} function. This \textit{key()} function is used by a routing and translation mechanism to partition and distribute the load among parallel instances of that entity within a cluster. Furthermore, the key of a stateful entity cannot change throughout that entity's lifetime. Finally, the entities' state needs to be serializable, i.e., connections to databases, local pipes, and other non-serializable constructs are not allowed and will eventually generate a runtime error.

\para{Running Example}
\Cref{ch3:fig:calls} contains the code for a User and an \textit{Item} entities. Note that since \textit{Item} is a stateful entity, a call to \textit{item.update\_stock(...)} is a remote function call. Both \textit{User} and the \textit{Item} entities are partitioned across the cluster nodes, using the given entity's \textit{key} function.

\subsection{From Entities to Dataflow Operators}
\label{ch3:sec:classes-dataflows}

Each Python class translates to an operator (also called a vertex) in the dataflow graph. In a dataflow graph, an operator cannot be "called" directly, like a function of an object. Instead, an \textit{event} has to enter the dataflow and reach the operator holding the \textit{code} of that entity (e.g., the \textit{User} class) as well as the actual \textit{state} of the entities that instantiate the class (e.g., the \textit{balance} and \textit{username} of the \textit{User} in \Cref{ch3:fig:calls}).

Specifically, each dataflow operator can execute all functions of a given entity, and the triggering function depends on the incoming event. Since operators can be partitioned across multiple cluster nodes, each partition stores a set of stateful entities indexed by their unique key. When an entity's function is invoked, the entity's state is retrieved from the local operator state. Then, the function is executed using the arguments found in the incoming event that triggered the call, as well as the state of the entity at the moment that the function is called.

\para{Example} A \textit{User} operator as seen in \Cref{ch3:fig:dataflow}, is partitioned on \textit{username}. Upon invocation of a function of the \textit{User} entity, an event is sent to the dataflow graph's input queues. The incoming event is partitioned on \textit{username} by an ingress router. Via the dataflow graph, the event ends up at the operator storing the state for that specific \textit{User}. The system then reconstructs the \textit{User} object using the operator's code and the function's state and executes the function. Finally, the function return value is encoded in an outgoing event forwarded to the egress router. This egress router determines if the event can be sent back to the client (caller outside the system, such as an HTTP endpoint) or if it needs to loop back into the dataflow to call another function.

\para{The Need for Function Splitting} For simple functions that do not call other remote functions, both the translation to dataflows and the execution are straightforward. However, if the function \textit{User.buy\_item} calls the (remote) function \textit{item.update\_stock} whose state lies on a different partition, the situation becomes more complicated. Note that a streaming dataflow should never stop and wait for a remote function to complete and return before moving on with processing the next event. Instead, it must ``suspend'' the execution of, e.g., \textit{buy\_item} of \Cref{ch3:fig:calls}, right at the spot that the remote function \textit{item.price()} is called until the remote function is executed. An event comes back to the \textit{User} operator with a return value.

To do this, we adopt a technique to transform the imperative functions into the continuation passing style (CPS)  \cite{cps}. More specifically, we propose an approach to split a function definition into multiple ones (\Cref{ch3:sec:splitting}) at the AST level as depicted (approximately) in \Cref{ch3:fig:calls}.

\subsection{From Imperative Functions to Dataflows}
\label{ch3:sec:splitting}

\para{References to Remote Functions} After the first round of static analysis, the compiler identifies if a function definition has references to a remote stateful entity using Python type annotations. These functions may require \emph{function splitting}. The algorithm traverses the statements of a function definition, and the function is split either when a remote call occurs or on a control-flow structure. For example, the following \textit{buy\_item} calls the remote function \textit{item.update\_stock}:
\begin{lstlisting}[style=pythonlang, numbers=left, xleftmargin=0in, label={ch3:lst:wordcount1}]
def buy_item(self, amount: int, item: Item):
    total_price: int = amount * item.price
    is_removed: bool = item.update_stock(amount)
    return total_price 
\end{lstlisting}

\noindent This function is split at the assign statement on line 3 and results in two new function definitions:
\begin{lstlisting}[style=pythonlang, numbers=left, xleftmargin=0in, label={ch3:lst:wordcount2}]
def buy_item_0(self, amount: int, item: Item):
    total_price: int = amount * item.price
    update_stock_arg = amount
    return total_price, {"_type": "InvokeMethod",
                         "args": [update_stock_arg], ..}
    
def buy_item_1(self, total_price, update_stock_return):
    is_removed: bool = update_stock_return
    return total_price   
\end{lstlisting}
The \textit{buy\_item\_0} function defines the first part of the original function \textit{and} evaluates the arguments for the remote call.
The \textit{buy\_item\_1} function assumes the remote call \textit{item.update\_stock} has been executed, and its return variable is passed as an argument. In general, each function that was split takes as arguments the variables it references in its body and returns the variables it defines. For example, since \textit{buy\_item\_0}  defines the variable \textit{total\_price}, its value is returned from the function. Next, since \textit{buy\_item\_1}  uses \textit{total\_price}, it is defined as a parameter. 

\para{Control Flow} The compiler also needs to split functions when encountering remote function calls within control flow constructs like \textit{if}-statements or \textit{for}-loops. In short, an \textit{if}-statement is split into three new definitions: one that evaluates its condition, one that evaluates the `true' path, and one that evaluates the `false' path. Similarly, a \textit{for}-loop is also split into three new definitions: one that evaluates the iterable, one that evaluates the \textit{for}-body path, and one that evaluates the code path after the loop. The function splitting algorithm is recursively applied to the statements inside the \emph{for} path and inside the true and false path of the \textit{if}-statement.  

\subsection{Intermediate Representation}
\label{ch3:sec:ir}

Our intermediate representation is a stateful dataflow graph enriched with a number of aspects. After the static analysis, each dataflow operator is enriched with the entity/method names that it can run, their input/return types, as well as their method body. After splitting functions, we also need to build what we term a state machine. For every split function (Section \ref{ch3:sec:splitting}), we maintain an execution graph that tracks the execution stage of a given stateful entity's function invocation. 

Essentially, the process of deriving the state machine consists of unrolling the control flow graph of the program. Conceptually, the translation to a state machine is possible by deriving a finite program representation. To this end, we $i)$ do not allow unbounded recursion, and we $ii)$ keep track of the current iteration for loop control structures by enriching the state machine with the additional state. When invoking a function that was split, the state machine is inserted into the function-calling event. As the event flows through the system, the execution graph is traversed, and the proper functions are called. The execution graph stores intermediate results -- the return values of the invoked functions.

\section{Supported Dataflow System Runtimes}
\label{ch3:sec:runtimes}

Stateful entities can be deployed as dataflow graphs to streaming dataflow systems, offering exactly-once fault-tolerance guarantees.

\para{Flink's Statefun}
The IR is translated to a streaming dataflow graph that, for example, Apache Flink can execute. In that case, a Kafka source pushes events to the ingress router, which is a map operator performing a \textit{keyBy} operation to route an event to the correct stateful map operator instance where function execution will take place. Each execution's output is forwarded to the egress router, which forwards outputs to a Kafka sink.

We use Kafka to re-insert an event into the streaming dataflow, thereby avoiding cyclic dataflows, which are not supported by most streaming systems. Notably, our system implements all the logic required for routing and execution in this process.
On the downside, when an event reenters a dataflow to reach the next function block of a split function, race conditions attributed to events coming from non-split functions could lead to state inconsistencies due to other events changing the same function's state in the meantime. Time tracking with watermarks, support for cyclic dataflows, and locking could solve these problems. Since the IR is well-aligned with Statefun's dataflow, only simple translation and mapping are required when using the Statefun runtime.

\para{Styx: a Transactional Dataflow System} Existing dataflow systems cannot execute multi-partition transactions. To this end, we built Styx, a prototype dataflow system in Python. Styx treats each function -- and the state effects it creates via calls to other functions -- as a transaction with ACID guarantees. We achieve consistency by implementing an extension of Aria \cite{aria}, a deterministic transaction protocol. The dataflow system is built to allow for dataflow cycles used in function-to-function communication and leverages co-routines for optimal resource utilization. For fault-tolerance, Styx implements the consistent snapshots protocol~\cite{chandy1985distributed, carbone2017state}, which has been adopted by many streaming dataflow systems~\cite{flink, SilvaZD16, ArmbrustDT18} alongside a replayable source as an ingress, allowing Styx to rollback messages and restore the snapshot upon failure. Although still a prototype, Styx is already able to execute transactional workloads (YCSB-T \cite{dey2014ycsb+} and partly TPC-C) with promising performance (\Cref{ch3:sec:experiments}).

\para{Local} A Styx dataflow graph can execute all its components in a local environment. The only difference is that the state is kept in a local HashMap data structure instead of a state management backend. Local execution allows developers to debug, unit test, and validate a Stateflow program as they would do for an arbitrary application. Afterward, they can deploy the program to one of the supported runtime systems.

\section{Preliminary Experiments}
\label{ch3:sec:experiments}

For the experiments of this section, we opted for running Apache Flink Statefun against Styx (\Cref{ch3:sec:runtimes}). 

\para{Workload} We are using workloads A and B from the original YCSB benchmark \cite{ycsb}. A is update-heavy -- 50\% reads 50\% updates, and B is ready-heavy -- 95\% reads 5\% updates. In addition, we use the transactional workload T from YCSB+T  \cite{dey2014ycsb+}, which atomically transfers an amount from one entity's bank account to another (2 reads and 2 writes). For the throughput test, we defined a mixed workload M (45\% reads 45\% updates 10\% transfers). For the latency tests, we use Zipfian and uniform key distributions. 

\para{Setup} We conducted all the experiments on 14 CPUs: 4 for the Kafka cluster, 6 for the systems, and 4 for the benchmark clients. For Statefun, we gave half of the resources to the Flink cluster and the other to the remote functions. Styx requires a single core coordinator, and the rest are used for its workers.

\para{Baseline} In Styx, we execute complex business logic resulting in state operations. YCSB is a benchmark that supports simple inserts, deletes, and updates, not complete executions of transactions across multiple function calls. It is, therefore, expected that Stateflow, since it executes function calls and application logic, would have a larger overhead than key-value stores. Styx is not a key-value store; instead, it is a stateful function-as-a-service compiler (Stateflow) and runtime that allows programmers to author object-oriented Python code.

\para{Latency} In the first experiment, we measured the end-to-end latency of all the YCSB workloads against the integrated backend systems with both Zipfian and uniform key distributions at a low rate of 100RPS.
As seen in \Cref{ch3:fig:endpoint} both systems perform well with low latencies across all workloads and distributions. Some interesting observations go as follows. First, Statefun performs the same in both the A and B workloads and in both Zipfian and uniform distributions. This happens because Statefun does not use locking, allowing for concurrent access (but also inconsistency). Additionally, since all functions must run in an external Python runtime, the cost of reads and writes is the same due to network costs. We also observe that Styx outperforms Statefun because it allows for internal function-to-function communication and does not require roundtrips to Kafka. Note that Styx additionally supports transactional workloads with higher latency than the rest. Still, if we consider that a transfer operation is 2 read and 2 write operations, the transactional overhead of the system is minimal. Finally, we did not run Statefun against transactional workloads since it offers no transaction support. 

\begingroup
\setlength{\abovecaptionskip}{0pt}
\begin{figure}[t]
    \centering
    \includegraphics[width=0.6\columnwidth]{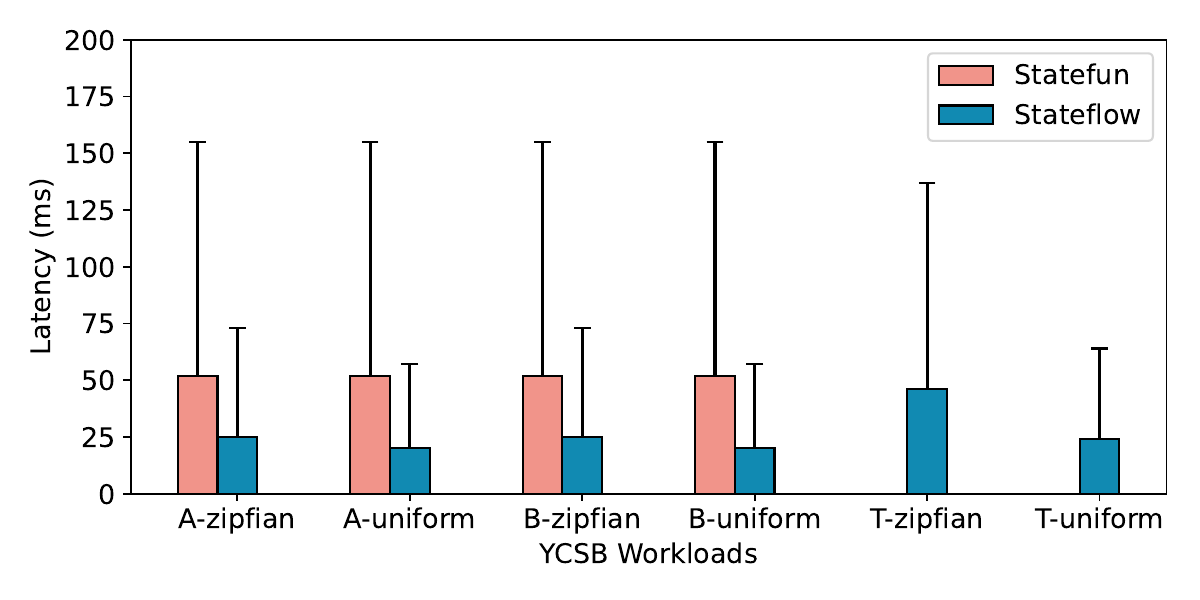}
        \caption{Average latency at the 99th percentile, in YCSB (100~RPS) with both Zipfian and uniform key distributions.}
    \label{ch3:fig:endpoint}
\end{figure}
\endgroup

\para{Throughput} In the second experiment, we gradually increase the input throughput and measure the end-to-end latency. This time, we use the mixed workload that we defined, M (45\% reads, 45\% updates, 10\% transfers). In \Cref{ch3:fig:throughput}, we observe consistent results with the latency experiment up until the point where the difference in efficiency appears. The reason for this is that Styx is using more execution cores since it bundles execution, state, and messaging. In contrast, the Statefun deployment uses half its CPUs for messaging and state within the Apache Flink cluster and the other half for execution in a remote stateless function runtime. In the current experiments, this balanced deployment was the optimal one in terms of resource utilization. 

\para{System Overhead} Finally, we also measured the overhead that program translation (function splits, instrumentation, etc.) incurs as part of the complete runtime (not depicted for the sake of space preservation). We created a synthetic workload that varied different state sizes from 50 to 200 KB. For each event, we measured the duration of different runtime components. Some components, like object construction, are attributed to program transformation overhead, whereas others, like state storage, are attributed to the runtime. In short, function splitting/instrumentation is only responsible for less than 1\% of the total overhead.

\para{Conclusion} The experimental evaluation demonstrates the potential of dataflows as an intermediate representation and execution target for scalable cloud applications. In short, these preliminary experiments show that we can translate imperative programs that hide all the aspects of distributed systems and error management from programmers and still achieve high performance. That said, the experiments also uncover the limitations of dataflow systems and implementation issues that we address in the following section.

\begingroup
 \setlength{\abovecaptionskip}{0pt}
 \setlength{\belowcaptionskip}{0pt}
\begin{figure}[t]
    \centering
    \captionsetup{justification=centering}
    \includegraphics[width=0.6\columnwidth]{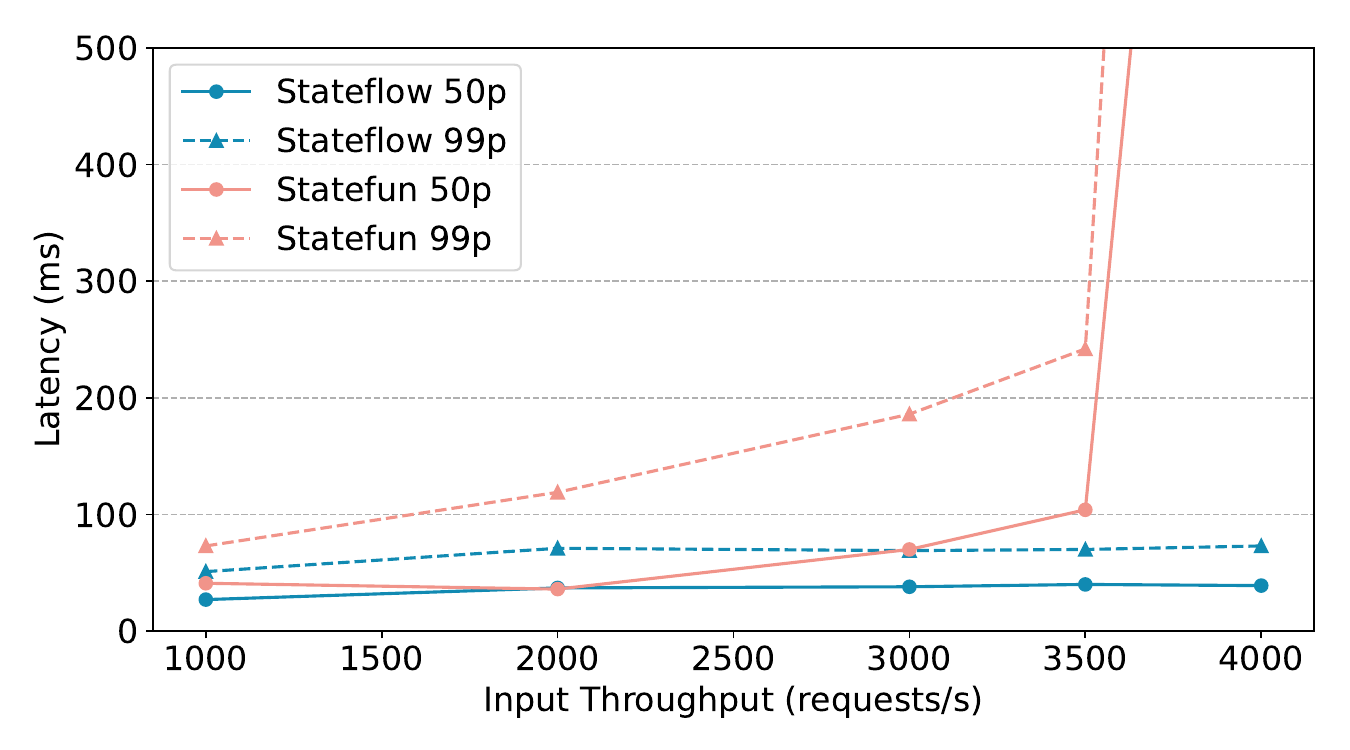}
    \caption{Average and 99th percentile latency for the M workload, with increasing input throughput.}
    \label{ch3:fig:throughput}
\end{figure}
\endgroup

\section{Open Problems \&\ Opportunities}
\label{ch3:sec:roadmap}

The ability to query the global state of a dataflow processor, as well as perform transactional state updates on its state, can transform a dataflow processor into a full-fledged, distributed database system. The envisioned system will be capable of executing Turing-complete ``stored procedures''  (such as the entity functions in the case of this chapter) that are distributed and partitioned and can perform function-to-function calls with exactly-once guarantees. This is the ultimate goal of this work.

In this section, we discuss several opportunities emerging mainly from transactional workloads with low-latency requirements and outline future research directions to enable the adoption of dataflow systems for executing general cloud applications.

\para{Program Analysis}
The dataflow model is essentially a {\it finite state machine} where nodes are the functions from the original ({\it Turing-complete}) program and arcs indicate event flow. 
In the case of loops, events also carry information about the previous iterations of the loop (e.g., the variables that are read and written in the loop body and the loop condition clause). This information handles loops correctly (\Cref{ch3:sec:ir}).
For method calls, if a method is mapped to a single state, it would be problematic to determine where to return after a call if, in the codebase, there are multiple calls that have different return points. We map each method {\it call} into a transition to a state that is specific for that call. This means that calls to the same method may result in a different state in the automata, ensuring that each state has the correct return point as the next. This approach requires to {\it unroll} the program, expanding each potential method call that may occur at runtime into a different state.

Following this approach, recursive functions would result in a state for each recursive step. Since unbounded recursion would result in infinite automata, we prohibit recursion.
Yet, from a compiler perspective, since a program can be CPS-transformed, recursion can be translated into loops via tail-call elimination~\cite{10.1145/214448.214454}, which could potentially affect the dataflow engine's performance.

In addition, in what is traditionally referred to as {\it dataflow languages} (e.g., Esterel~\cite{BERRY199287}, Lucid~\cite{lucid85}), the computation is driven by data propagation -- just like in streaming dataflows. However, the expressivity of such languages has been intentionally limited to enable efficient execution (automatic) verification techniques. While in this work, we aim to target Turing-complete Python programs, the trade-off between expressivity, efficiency, and automatic verification is yet to be researched in the future.

\para{Transactions} Current dataflow systems guarantee the consistency of single-event effects on a given state key.  To support transactional executions across stateful entities, we could employ \textit{single-shot} transactions\cite{kallman2008h} or, like in our prototypical dataflow system (\Cref{ch3:sec:runtimes}), borrow ideas from deterministic databases\cite{abadithecase,abadi2018overview,aria} for minimizing the coordination of transactions. In practice, a large percentage of transactions can be expressed as single-shot transactions \cite{single-shot-aws}; very popular databases such as Amazon's DynamoDB \cite{sivasubramanian2012amazon} and VoltDB \cite{stonebraker2013voltdb} support single-shot transactions. These ideas can define how a programming model can support patterns adopted by practitioners in recent years, starting with SAGAs \cite{sagas} and Try-Confirm-Cancel \cite{DBLP:conf/cidr/Helland07}.

\para{Exactly-once, Latency \& External Systems} Exactly-once guarantees can incur high latency: the outputs of a dataflow only become visible after an epoch terminates successfully\cite{carbone2017state}. Epoch intervals cannot be too small because they would incur a high overhead. However, one can leverage causal recovery \cite{wang2019lineage} and determinants \cite{silvestre2021clonos} alongside replayable sinks to minimize the latency within each epoch.
The replayable sinks are required to be able to retrieve determinants. However, at the border of a system, i.e., when a message leaves the dataflow graph and is sent to an external system, replayable sinks may be hard to assume. In that case, one should use more traditional techniques for deduplication (e.g., the standard idempotence keys used in the HTTP protocol). Under certain assumptions (deterministic computations, persistent/replayable request queues, etc.), such idempotence keys can be generated automatically. However, this will not be the case for a generic distributed application, which will have to generate, keep track of, check, and recycle unique identifiers to enforce the delivery of its output exactly-once. These issues have not been studied enough in the context of distributed databases or models for cloud programming.

\para{Querying Stateful Entities}
In previous work \cite{squery}, we have shown that querying the global state of a dataflow processor can be not only efficient but can also come with certain correctness guarantees. Some work on querying actors has already been done in the context of Orleans \cite{bernstein2017indexing}. However, querying (e.g., with SQL) a set of entities still poses a number of challenges, especially with respect to the tradeoff between the freshness and consistency of query results. To this end, we could borrow ideas from RAMP (read-atomic) transactions \cite{bailis2016scalable} that match well the execution model of transactions and read operations in stateful entities.

\section{Related Work}

The idea of democratizing distributed systems programming is not new. For instance, in~\cite{hydro},  the authors mention that a combination of dataflows and reactivity would provide a good execution model for cloud applications. In this work, we share the same belief and build a prototype towards that direction.

\para{Programming models} In the past, approaches like Distributed ML~\cite{krumvieda1993distributed}, Smalltalk~\cite{deutsch1984efficient}, and Erlang~\cite{armstrong2013programming} aimed at simplifying the programming and deployment of distributed applications. Many of those ideas, including the Actor model, can be reused and extended today.
Erlang implemented a flavor of the actor model. 
Akka~\cite{wyatt2013akka} offers a low-level programming model for actors. Closest to our work is the Virtual Actors model introduced by Orleans \cite{bykov2011orleans,orleans}, which aims at simplifying Cloud programming and even supports some form of transactions~\cite{eldeeb2016transactions}.
However, Orleans requires a specialized runtime system for virtual actors, which does not support exactly-once messaging and does not compile its actors into stateful dataflows.

\para{Imperative programming to Dataflows} The idea of translating imperative code to dataflow is not new. In the database community, there has been work on detecting imperative parts of general applications that can be converted into SQL queries (e.g.,\cite{emani2017dbridge}) but also for automatic parallelization of imperative code in multi-core systems. For instance, the work by Gupta and Sohi~\cite{gupta2011dataflow} compiles sequential imperative code to dataflow programs and executes them in parallel. Our work draws inspiration from both these lines of work and extends them by considering the partitioning of state and other considerations that we outline in~\Cref{ch3:sec:roadmap}.

\para{Stateful Functions} A new breed of systems marketed as stateful functions, such as Cloudburst \cite{cloudburst}, Lightbend's \url{Cloudstate.io}, and Apache Flink's \url{Statefun.io}~\cite{tstatefun}, as well as our early prototype in Scala~\cite{SFaaS-in-action}, also aim at abstracting away the details of deployment and scalability. However, none of those compiles general-purpose object-oriented code into dataflows.

\section{Conclusions}

In this chapter, we argue that if we want to hide failures from the top-level programming models of Cloud applications, exactly-once guarantees should become first-class citizens. While dataflow systems can provide such guarantees, their programming model makes the development of general Cloud applications cumbersome. To this end, we have developed a compiler pipeline that statically analyzes an object-oriented Python application to create an intermediate representation in the form of a dataflow graph and then submits that dataflow graph to existing dataflow systems. Leveraging dataflow systems' exactly-once guarantees can essentially hide all Cloud failures from programmers with low overhead: our preliminary experimental evaluation demonstrates that function splitting and program transformation incur less than 1\% overhead and the YCSB+T benchmark, with low-latency execution. 

\para{Current Status} Despite the encouraging results, lots of problems remain open, specifically in the area of transaction execution, programming models, program analysis, and dataflow engines for general cloud applications. Our work currently focuses primarily on $i)$ strengthening the formal underpinnings of program transformation to dataflows, $ii)$ extending the programming model with different transactional paradigms, and $iii)$ further developing Styx, our novel transactional dataflow system.

\chapter{Styx: a Transactional Dataflow-Based Runtime for Stateful Functions as a Service}
\label{chapter4}

\vfill

\begin{abstract}

The previous chapter (\Cref{chapter3}) introduced Stateflow, a high-level programming model for cloud applications that compiles object-oriented Python code into distributed dataflows. While Stateflow simplifies application development and abstracts away failures and transactions, it surfaces a set of core runtime requirements to support its execution model—namely, the need for efficient, fault-tolerant, and transactional orchestration of stateful functions.

This chapter presents Styx, a distributed runtime system purpose-built to meet these requirements. Styx implements a novel transactional execution protocol over a streaming dataflow engine, enabling exactly-once semantics and serializable transactions across arbitrary function calls. By integrating deterministic execution, co-located state and compute, and an efficient acknowledgment mechanism, Styx addresses the key limitations identified in existing serverless platforms and transactional SFaaS systems, as discovered in \Cref{chapter2}.

The chapter introduces the architecture, programming model, and execution protocol of Styx and evaluates its performance across a range of benchmarks. We conclude the chapter with a demonstration that showcases the system in action, highlighting its ease of use, scalability, and fault tolerance.

\end{abstract}

\vfill

\blfootnote{

\faFileTextO~\hangindent=15pt\emph{K. Psarakis, G. Christodoulou, G. Siachamis, M. Fragkoulis, and A. Katsifodimos. Styx: Transactional Stateful Functions on Streaming Dataflows, SIGMOD'25}~\cite{styx}.

\faFileTextO~\hangindent=15pt\emph{K. Psarakis, O. Mraz, G. Christodoulou, G. Siachamis, M. Fragkoulis, and A. Katsifodimos. Styx in Action: Transactional Cloud Applications Made Easy (Demo), VLDB'25}~\cite{styxdemo}.
}

\newpage

\dropcap{D}{}espite the commercial offerings of the Functions-as-a-Service (FaaS) cloud service model, its suitability for low-latency stateful applications with strict consistency requirements, such as payment processing, reservation systems, inventory keeping, and low-latency business workflows, is quite limited. The reason behind this unsuitability is that current FaaS solutions are stateless, relying on external, fault-tolerant data stores (blob stores or databases) for state management. In addition, while multiple frameworks can perform workflow execution (e.g., AWS Step Functions \cite{stepfunctions}, Azure Logic Apps \cite{logicapps}), they do not provide primitives for \textit{transactional} execution of such applications.
As a result, distributed applications (e.g., microservice architectures) suffer from serious consistency issues when the responsibility of transaction execution is left to developers \cite{blanastransactions,rodrigosurvey}.

In line with recent research \cite{cloudburst, boki, beldi, tstatefun, nightcore, spenger2022portals}, we agree that for FaaS offerings to become mainstream, they should include state management support for stateful functions according to the Stateful Functions-as-a-Service (SFaaS) paradigm. In addition, we argue that a suitable runtime for executing workflows of stateful functions should also provide $i)$ end-to-end serializable transactional guarantees across multiple functions, $ii)$ low-latency and high-throughput execution, and $iii)$ a high-level programming model, devoid of low-level primitives for locking and transaction coordination. To the best of our knowledge, no existing approach addresses all these requirements together.

\begin{figure}[t]
    \centering
    \includegraphics[width=0.6\columnwidth]{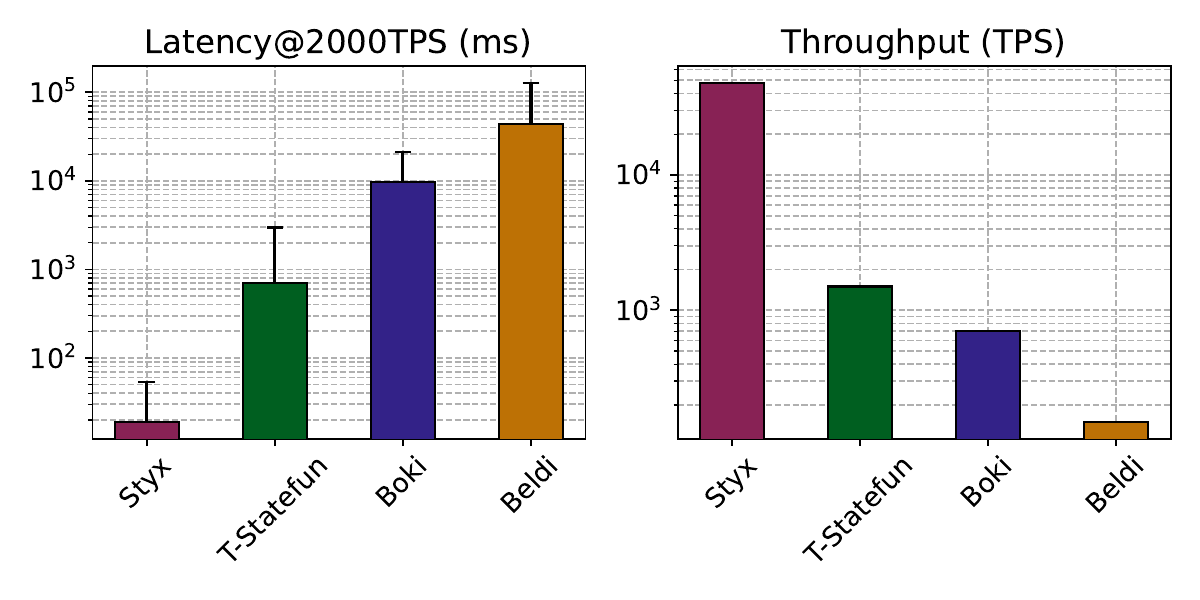}
    \caption{Styx outperforms the SotA by at least one order of magnitude in transactional workloads (\Cref{ch4:sec:exp}). The figure shows median (bar)/99p (whisker) latency and throughput. For the latency plot, the input throughput is 2000 transactions per second (TPS), and for the throughput plot, we report the throughput that the systems achieve at subsecond latency.}
    \label{ch4:fig:summary_comparison}
\end{figure}

The state-of-the-art transactional SFaaS with serializable guarantees, Boki~\cite{boki}, Beldi~\cite{beldi}, and T-Statefun~\cite{tstatefun} do support transactional end-to-end workflows but induce high commit latency and low throughput. The main reason behind their inefficiency is the separation of state storage and function logic, as well as the use of locking and Two-Phase Commit (2PC)~\cite{Gray1978} to coordinate and ensure the atomicity of cross-function transactions.

This paper proposes Styx, a novel dataflow-based runtime for SFaaS. Styx ensures that each transaction's state mutations will be reflected once in the system's state, even under failures, retries, or other potential disruptions (known as exactly-once processing).
Additionally, Styx can execute arbitrary function orchestrations with end-to-end serializability guarantees, leveraging concepts from deterministic databases to avoid costly 2PCs. 

Our work stems from two important observations. First, modern streaming dataflow systems such as Apache Flink \cite{flink} guarantee exactly-once processing\cite{flink,carbone2017state,silvestre2021clonos} by transparently handling failures. A limitation of those streaming systems is that they cannot execute general cloud applications such as microservices or guarantee transactional SFaaS orchestrations. Second, deterministic database protocols\cite{calvin,aria} that can avoid expensive 2PC invocations have not been designed for complex function orchestrations and arbitrary call-graphs. For the needs of transactional SFaaS, Styx leverages a deterministic transactional protocol, enabling early commit replies to clients (i.e., before a snapshot is committed to persistent storage).

Our work is in line with recent proposals in the area, such as DBOS~\cite{dbos}, Hydro~\cite{hydro}, and SSMSs~\cite{li2024serverless}. Contrary to these systems, our work adopts the streaming dataflow execution model and guarantees serializability \textit{across} functions. As shown in \Cref{ch4:fig:summary_comparison}, Styx achieves one order of magnitude lower median latency, two orders of magnitude lower 99p latency at 2000 transactions/sec, and one order of magnitude higher throughput compared to state-of-the-art (SotA) serializable SFaaS systems \cite{boki, beldi, tstatefun}.

\vspace{2mm}

\noindent In short, this paper makes the following contributions:

\noindent \textbf{--} Styx combines deterministic transactions with dataflows and overcomes the challenges that arise from this design choice (\Cref{ch4:sec:background}).

\noindent \textbf{--} Styx enables high-level SFaaS programming models that abstract away transaction and failure management code (\Cref{ch4:sec:programming-model}). Styx does so, by guaranteeing exactly-once processing (\Cref{ch4:sec:dataflow-system}) and transactional serializability across arbitrary function calls (\Cref{ch4:sec:seq_f_ex} and \Cref{ch4:sec:cmt_t}).

\noindent \textbf{--} Styx extends the concept of deterministic databases to support arbitrary workflows of stateful functions, contributing a novel acknowledgment scheme (\Cref{ch4:sec:ack_shares}) to track function completion efficiently, as well as a function-execution caching mechanism (\Cref{ch4:sec:caching}) to speed up function re-executions.

\noindent \textbf{--} Styx's deterministic execution enables early commit replies: transactions can be reported as committed, even before a snapshot of executed transactions is committed to durable storage (\Cref{ch4:sub:early}).

\noindent \textbf{--} Styx outperforms the state-of-the-art \cite{beldi,boki,tstatefun} by at least one order of magnitude higher throughput in all tested workloads while achieving lower latency and near-linear scalability (\Cref{ch4:sec:exp}).

\vspace{2mm}
\noindent Styx is available at: \url{https://github.com/delftdata/styx}

\section{Motivation} \label{ch4:sec:background}
In this section, we analyze the specifics of streaming dataflow systems design and argue that they can be extended to encapsulate the primitives required for consistently and efficiently executing workflows of stateful functions.
Our work is based on a key observation: the architecture of high-performance cloud services closely resembles a parallel dataflow graph, where the state is partitioned and co-located with the application logic~\cite{styxcidr}.
Additionally, as we detail in \Cref{ch4:sec:determinism-transactions}, there is a synergy between deterministic transactions and dataflow systems. Such a combination can offer state consistency and ease of programming as monolithic solutions did in the past, while improving scalability and eliminating developer involvement. Finally, we show how deterministic transactions can be extended for SFaaS, where transaction boundaries are unknown, unlike online transaction processing (OLTP).

\subsection{Dataflows for Stateful Functions}

Stateful dataflows are the execution model implemented by virtually all modern stream processors \cite{flink,murray2013naiad,NoghabiPP17}. Besides being a great fit for parallel, data-intensive computations, stateful dataflows are the primary abstraction supporting workflow managers such as Apache Airflow \cite{airflow}, AWS Step Functions \cite{stepfunctions}, and Azure's Durable Functions\cite{durable_functions}. In the following, we present the primary motivation behind using stateful dataflows to build a suitable runtime for orchestrating general-purpose cloud applications.

\para{Exactly-once Processing} Message-delivery guarantees are fundamentally hard to deal with in the general case, with the root of the problem being the well-known Byzantine Generals problem \cite{lamport1982byzantine}. However, in the closed world of dataflow systems, exactly-once processing is possible \cite{flink,carbone2017state,silvestre2021clonos}. As a matter of fact, the APIs of popular streaming dataflow systems, such as Apache Flink, require no error management code (e.g., message retries or duplicate elimination with idempotency IDs).

\para{Co-Location of State and Function} The primary reason streaming dataflow systems can sustain millions of events per second \cite{flink,jet} is that their state is partitioned across operators that operate on local state. While the structure of current Cloud offerings favors the disaggregation of storage and computation, we argue that co-locating state and computation is the primary vehicle for high performance and can also be adopted by modern SFaaS runtimes, as opposed to using external databases for state storage.

\para{Coarse-Grained Fault Tolerance} To ensure atomicity at the level of workflow execution, existing SFaaS systems perform fine-grained fault tolerance \cite{beldi,boki}; each function execution is logged and persisted in a shared log before the next function is called. This requires a round-trip to the logging mechanism for each function call, which adds significant latency to function execution. Instead of logging each function execution, streaming dataflow systems \cite{checkmate,carbone2017state,chandy1985distributed} opt for a coarse-grained fault tolerance mechanism based on asynchronous snapshots, reducing this overhead.

\subsection{Determinism \&\ Transactions}
\label{ch4:sec:determinism-transactions}

Given a set of database partitions and a set of transactions, a deterministic database\cite{abadi2018overview, abadithecase} will end up in the same final state despite node failures and possible concurrency issues. Traditional database systems offer \textit{serializable} guarantees, allowing multiple transactions to execute concurrently, ensuring that the database state will be equivalent to the state of one serial transaction execution. Deterministic databases guarantee not only serializability but also that a given set of transactions will have exactly the same effect on the database state despite transaction re-execution.
This guarantee has important implications \cite{abadi2018overview} that have not been leveraged by SFaaS systems thus far.

\para{Deterministic Transactions on Streaming Dataflows} Unlike 2PC, which requires rollbacks in case of failures, deterministic database protocols \cite{aria,calvin} are "forward-only": once the locking order \cite{calvin} or read/write set \cite{aria} of a batch of transactions has been determined, the transactions are going to be executed and reflected on the database state, without the need to rollback changes. This notion is in line with how dataflow systems operate: events flow through the dataflow graph, from sources to sinks, without stalls for coordination. This match between deterministic databases and the dataflow execution model is the primary motivation behind Styx's design choice to implement a deterministic transaction protocol on top of a dataflow system.

\subsection{Challenges} 
Despite their success and widespread applicability, dataflow systems need to undergo multiple changes before they can be used for transactional stateful functions. In the following, we list challenges and open problems tackled in this work.

\para{Programming Models} Dataflow systems at the moment are only programmable through functional programming-style dataflow APIs: a given cloud application has to be rewritten by programmers to match the event-driven dataflow paradigm. Although it is possible to rewrite many applications in this paradigm, it takes a considerable amount of programmer training and effort. We argue that dataflow systems would benefit from object-oriented or actor-like programming abstractions in order to be adopted for general cloud applications, such as microservices.

\para{Support for Transactions} Transactions in the context of streaming dataflow systems typically refer to processing a set of input elements and their state updates with ACID guarantees \cite{zhang2024survey}. Despite progress, critical challenges remain open, such as the performance overhead incurred by multi-partition transactions, as well as the need to block flows of data for locking and message re-ordering. In this work, we argue that in order to implement transactions in a streaming dataflow system, we need to "keep the data moving" \cite{stonebraker20058} by avoiding disruptions in the natural flow of data while tightly integrating transaction processing into the system's state management and fault tolerance protocols.

\para{Deterministic OLTP and SFaaS} OLTP databases that use deterministic protocols like Calvin~\cite{calvin, zhou2022lotus, aria} either require each transaction's read/write set a priori or are extended to discover the read-write sets of a transaction by first executing it. Additionally, in both scenarios, deterministic protocols assume that a transaction is executed as a single-threaded function that can perform remote reads and writes from other partitions.
In the case of SFaaS, arbitrary function calls enable programmers to take advantage of both the separation of concerns principle, which is widely applied in microservice architectures \cite{rodrigosurvey}, as well as code modularity. Although deterministic database systems have been proven to perform exceptionally well~\cite{abadi2018overview}, designing and implementing a deterministic transactional protocol for arbitrary workflows of stateful functions is non-trivial. Specifically, arbitrary function calls create complex call-graphs that need to be tracked in order to establish a transaction's boundaries before committing.

\para{Dataflows for Arbitrary-Workflow Execution} The prime use case for dataflow systems nowadays is streaming analytics. However, general-purpose cloud applications have different workload requirements. Functions calling other functions and receiving responses introduce cycles in the dataflow graph. Such cycles can cause deadlocks and need to be dealt with \cite{faucet}.


\noindent In this work, we tackle these challenges and propose a dataflow system tailored to the needs of stateful functions with built-in support for deterministic transactions and a high-level programming model. 

\section{Programming Model} \label{ch4:sec:programming-model}

The programming model of Styx is based on Python and comprises operators that encapsulate partitioned mutable state and functions that operate on that. An example of the programming model of Styx is depicted in \Cref{ch4:fig:programming-model}.


\subsection{Programming Model Notions}

\para{Stateful Entities} Similar to objects in object-oriented programming, \entities{} in Styx are responsible for maintaining and mutating their own \sstate{}. Moreover, when a given entity needs to update the state of another entity, it can do so via a function call. Each entity bears a unique and immutable \key{}, similar to Actor references in Akka~\cite{akka}, with the difference that entity keys are application-dependent and contain no information related to their physical location. The dataflow runtime engine (\Cref{ch4:sec:dataflow-system}) uses that key to route function calls to the right operator that accommodates that specific \entity{}. 

\para{Functions} \functions{} can mutate the state of an entity. By convention, the \context{} is the first parameter of each function call. Functions are allowed to call other functions directly, and Styx supports both synchronous and asynchronous function calls. For instance, in lines 9-11 of \Cref{ch4:fig:programming-model}, the instantiated reservation entity will call asynchronously the function \textit{'reserve\_hotel'} of an entity with key \textit{'hotel\_id'} attached to the Hotel operator. Similarly, one can make a synchronous call that blocks waiting for results. In this case, Styx will block execution until the call returns. Depending on the use case, a mix of synchronous and asynchronous calls can be used. Asynchronous function calls, however, allow for further optimizations that Styx applies whenever possible, as we describe in \Cref{ch4:sec:seq_f_ex} and \Cref{ch4:sec:cmt_t}.

\begin{figure}[t]
\begin{lstlisting}[style=pythonlang]
from styx import Operator
from deathstar.operators import Hotel, Flight

reservation_operator = Operator('reservation', n_partitions=4)

@reservation_operator.register
async def make_reservation(context, flight_id, htl_id, usr_id):

    context.call_async(operator=Hotel,
                       function_name='reserve_hotel',
                       key=htl_id)
    context.call_async(operator=Flight,
                       function_name='reserve_flight',
                       key=flight_id)

    reservation = {"fid":flight_id, "hid":htl_id, "uid":usr_id}
    await context.state.put(reservation)|\label{ch4:code:context}|
    
    return "Reservation Successful"
\end{lstlisting}
\caption{Deathstar's~\cite{deathstar} Hotel/Flight reservation in Styx. From lines 9-14, the $reserve\_hotel$ and $reserve\_flight$ functions are invoked asynchronously. Finally, in lines 16-17, the reservation information is stored. In Styx, the transactional and fault tolerance logic are handled internally.}
\label{ch4:fig:programming-model}
\end{figure}

\para{Operators} Each \entity{} directly maps to a dataflow operator (also called a vertex) in the dataflow graph. When an \textit{event} enters the dataflow graph, it reaches the operator holding the \textit{function code} of the given entity as well as the \textit{state} of that entity. In short, a dataflow operator can execute all functions of a given entity and store the state of that entity. Since operators can be partitioned across multiple cluster nodes, each partition stores a set of stateful entities indexed by their unique \key{}. When an entity's function is invoked (via an incoming event), the entity's state is retrieved from the local operator state. Then, the function is executed using the arguments found in the incoming event that triggered the call.

\para{State \& Namespacing} As mentioned before, each entity has access only to its own state. In Styx, the state is \emph{namespaced} with respect to the entity it belongs to. For instance, a given key "\textit{hotel53}" within the operator \textit{Hotel} is represented as: \textit{entities://Hotel/hotel53}. This way, a reference to a given key of a state object is unique and can be determined at runtime when operators are partitioned across workers. Programmers can store or retrieve \sstate{} through the \textit{context} object by invoking \textit{context.put()} or \textit{get()} (e.g., in \Cref{ch4:code:context} of \Cref{ch4:fig:programming-model}). Styx's \textit{context} is similar to the context object used in other systems such as Flink Statefun, AWS Lambda, and Azure Durable Functions.

\para{Transactions} A transaction in Styx begins with a client request. The functions that are part of the transaction form a workflow that executes with serializable guarantees. Styx's programming model allows transaction aborts by raising an uncaught exception. In the example of \Cref{ch4:fig:programming-model}, if a hotel entity does not have enough availability when calling the \textit{'reserve\_hotel'} function, the \textit{'make\_reservation'} transaction should be aborted, alongside potential state mutations that the \textit{'reserve\_flight'} has made to a flight entity. In that case, the programmer has to raise an exception as follows:

\begin{lstlisting}[style=pythonlang]
...
# Check if there are enough rooms available in the hotel
if available_rooms <= 0: 
    raise NotEnoughSpace(f'No rooms in hotel: {context.key}')
...
\end{lstlisting}

The exception is caught by Styx, which automatically triggers the abort/rollback sequence of the transaction where the exception occurred and sends the user-defined exception message as a reply.

\para{Exactly-once Function Calling}

Styx offers \emph{exactly-once processing} guarantees: it reflects the state changes of a function call execution exactly-once. Thus, programmers do not need to ``pollute'' their application logic with consistency checks, state rollbacks, timeouts, retries, and idempotency \cite{rodrigosurvey, microservices-drawbacks}. We detail this capability in \Cref{ch4:sec:fault-tolerance}.

\begin{figure}[t]
    \centering
    \includegraphics[width=0.6\columnwidth]{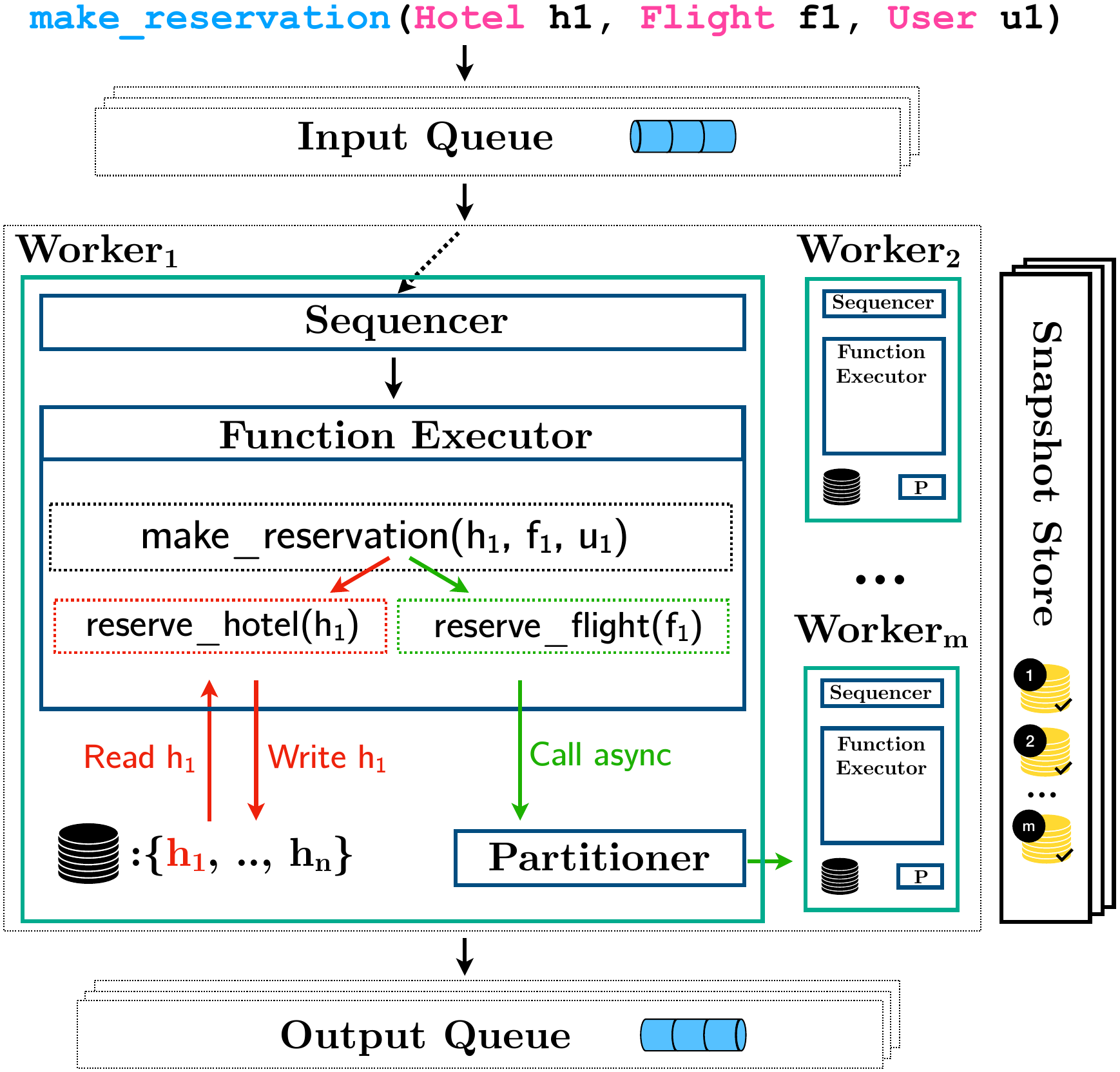}
        \caption{Stateful-Function execution in Styx. In each worker, one coroutine manages the sequencing of incoming transactions, while another coroutine handles their processing. In this example, transaction (\textit{make\_reservation}) consists of two functions: \textit{reserve\_hotel} and \textit{reserve\_flight}. A function can access local state (\textit{reserve\_hotel}) but also perform remote calls to different partitions (\textit{reserve\_flight}). This remote call uses the partitioner to locate the correct worker storing that partition.}
    \label{ch4:fig:styxfunction}
\end{figure}

\section{Styx's Architecture} 
\label{ch4:sec:dataflow-system}

In this section, we describe the components (\Cref{ch4:fig:styxfunction}) and the main design decisions of Styx.

\subsection{Components}

\para{Coordinator} The coordinator manages and monitors Styx's workers, as well as the runtime state of the cluster (transactional metadata, dataflow state, partition locations, etc.). It also performs scheduling and health monitoring. Styx monitors the cluster's health using a heartbeat mechanism and initiates the fault-tolerance mechanism (\Cref{ch4:sec:fault-tolerance}) once a worker fails.

\para{Worker} As depicted in \Cref{ch4:fig:styxfunction}, the worker is the primary component of Styx, processing transactions, receiving or sending remote function calls, and managing state.

The worker consists of two primary coroutines. The first coroutine ingests messages for its assigned partitions from a durable queue and sequences them. The second coroutine receives a set of sequenced transactions and initiates the transaction processing. By utilizing the coroutine execution model, Styx increases its efficiency since the most significant latency factor is waiting for network or state-access calls. Coroutines allow for single-threaded concurrent execution, switching between coroutines when one gets suspended during a network call, allowing others to make progress. Once the network call is completed, the suspended coroutine resumes processing.

\begin{figure*}[t]
    \centering
    \includegraphics[width=\columnwidth]{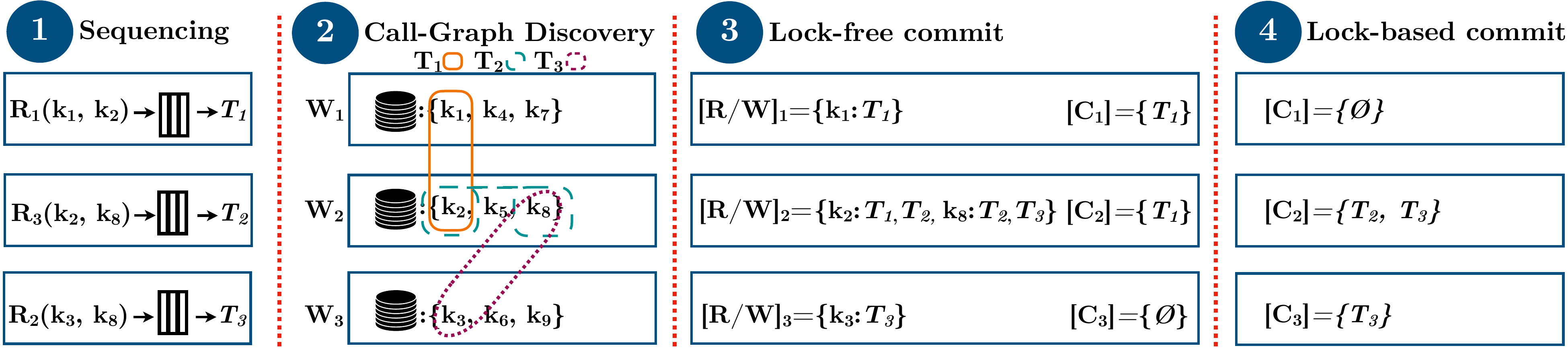}
    \caption{The transaction execution pipeline in Styx is divided into $4$ parts. First, each external request ($R_i$) is sequenced as a transaction and is assigned a unique ID. Afterward, the transactions execute their application logic, accessing local keys and performing remote function calls. While a transaction executes, Styx tracks its accessed keys ($[R/W]_i$) and incrementally constructs its call-graph. Subsequently, Styx commits the transactions that do not participate in unresolved conflicts without having to perform locking. For example, we observe that workers $W_{1}$ and $W_{2}$ are capable to commit $C_1 = C_2 = \bigl\{T_1\bigr\}$ while \textit{$T_{1}$} interacts with the same keys as \textit{$T_{2}$}; although it has the lowest id. In the final part, we commit all the transactions by resolving the conflicts with a lock-based mechanism ($C_2 = \bigl\{T_2, T_3\bigr\}$), $C_3 = \bigl\{T_3\bigr\}$).}
    \label{ch4:fig:styxlifecycle}
\end{figure*}

\para{Partitioning Stateful Entities Across Workers} Styx makes use of the entities' \textit{key} to distribute those entities and their state across a number of workers. By default, each worker is assigned a set of keys using hash partitioning.

\para{Input/Output Queue} For fault tolerance, Styx assumes a persistent input queue from which it receives requests from external systems (e.g., from a REST gateway API). Styx requires the input queue to be able to deterministically replay messages based on an offset when a failure occurs. As we detail in \Cref{ch4:sec:fault-tolerance}, the replayable input queue is necessary for Styx to produce the same sequence of transactions after the recovery is complete and to enable early commit-replies (\Cref{ch4:sub:early}). In the same way, Styx sends the result of a given transaction to an output queue from which an external system (e.g., the same REST gateway API) can receive it. Currently, Styx leverages Apache Kafka \cite{kreps2011kafka}.

\para{Durable Snapshot Store} Alongside the replayable queue, durable storage is necessary for storing the workers' snapshots. Currently, Styx uses Minio, an open-source S3 clone, to store the incremental snapshots as binary data files.

\subsection{Transaction Execution Pipeline}
Styx employs an epoch-based transactional protocol that concurrently executes a batch of transactions in each epoch. A transaction may include multiple functions that, during runtime, form a call-graph of function invocations. Each function may mutate its entity's state, and the effects of function invocations are committed to the system state in a transactional manner. In \Cref{ch4:fig:styxfunction}, once \textit{make\_reservation} enters the system, it is persisted and replicated by the input queue. Then, a worker ingests the call into its local sequencer that assigns a Transaction ID (TID) and processes all the encapsulated function calls as a single transaction. In the \textit{make\_reservation} case, the transaction consists of two functions: \textit{reserve\_hotel} and \textit{reserve\_flight}. For this example, let us assume that \textit{reserve\_hotel} is a local function call and \textit{reserve\_flight} runs on a remote worker. \textit{reserve\_hotel} will execute locally in an asynchronous fashion using coroutines and apply state changes. In contrast, \textit{reserve\_flight} 
will execute asynchronously on a remote worker, applying changes to the remote state.

\begin{figure}[t]
\centering
    \includegraphics[width=0.6\columnwidth]{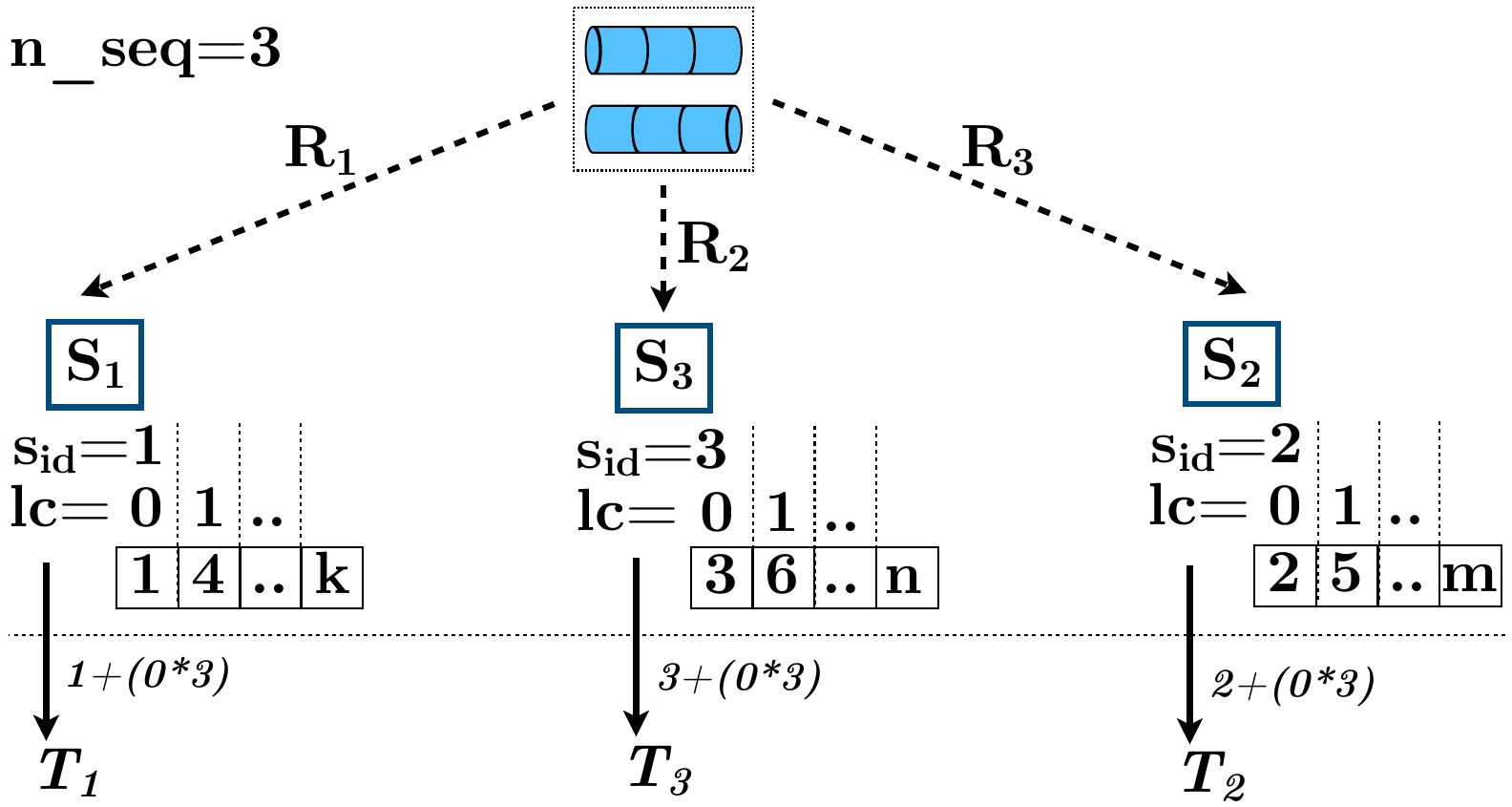}
    \caption{Example of TID assignment in Styx with three sequencers. Their identifiers $\{1, 2, 3\}$ lead to the following sequences: $S_1 = \{1, 4, ..., k\}$,  $S_2 = \{2, 5, ..., m\}$,  $S_3 = \{3, 6, ..., n\}$ following the formula expressed in \Cref{ch4:eq:seq}.}
    \label{ch4:fig:sequencer}
\end{figure}

\section{Sequencing \&\ Function Execution} \label{ch4:sec:seq_f_ex}

The deterministic execution of functions with serializable guarantees requires a sequencing step that assigns a transaction ID (TID), which, in combination with the read/write (RW) sets, can be used for conflict resolution (\Cref{ch4:sec:cmt_t}). The challenge we tackle in this section is determining the boundaries of transactions (i.e., when a transaction's execution starts and finishes), which emerges from the execution of arbitrary function call-graphs \Cref{ch4:sec:ack_shares}.

\subsection{Transaction Sequencing} \label{ch4:sec:sequencing}
In this section, we discuss the sequencing mechanism (\textbf{\circleb{1})} of Styx. Deterministic databases ensure the serializable execution of transactions by forming a global sequence. In Calvin~\cite{calvin}, the authors propose a partitioned sequencer that retrieves the global sequence by communicating across all partitions, performing a deterministic round-robin.

\para{Eliminating Sequencer Synchronization}
Instead of the original sequencer of Calvin that sends $\mathcal{O}(n^2)$ messages for the deterministic round-robin, Styx adopts a method similar to the one followed by Mencius~\cite{barcelona2008mencius}, allowing Styx to acquire a global sequence without any communication between the sequencers ($\mathcal{O}(1)$). This is achieved by having each sequencer assign unique transaction identifiers (TIDs) as follows:

\begin{equation} \label{ch4:eq:seq}
TID_{sid, lc} = sid + (lc * n\_seq)
\end{equation}
\vspace{3mm}

\noindent where $sid \in \mathbb{N}_1$ is the sequencer id assigned by the Styx coordinator in the registration phase, $lc \in \mathbb{N}_0$ is a local counter of each sequencer specifying how many TIDs it has assigned thus far and $n\_seq \in \mathbb{N}_1$ is the total number of sequencers in the Styx cluster. In the example of \Cref{ch4:fig:styxlifecycle}, the sequencers of the three workers will sequence $R_1$, $R_2$ and $R_3$ to $T_1$, $T_3$ and $T_2$ respectively. \Cref{ch4:fig:sequencer} illustrates how those TIDs are generated in parallel. Note that, conceptually, Styx implements a partitioned sequencer where the global sequence $S = \{S_1 \cup S_2 \cup \dots \cup S_n\}$ is the union of all partitioned sequences.

\para{Mitigating Sequence Imbalance} In case a single sequencer $S_1$ receives more traffic than the other sequencers, its local counter ($lc_1$) will increase more than the local counter of the rest of the sequencers. As a result, in the next epoch, sequencer $S_1$ would produce larger TIDs than the rest of the sequencers. This means that new transactions arriving at a less busy sequencer will receive higher priority for execution: transactions with higher TID receive less priority in our transactional protocol. In case of high contention in the workload, this would increase latencies for the busy ($S_1$) worker node. To avoid this, at the end of an epoch, the coordinator calculates the maximum $lc$ ($max(lc_1, lc_2, \ldots, lc_n)$) and communicates it to all workers so that they can adjust their local counter re-balancing sequences in every epoch. Balancing the workers' transaction priorities reduces the 99th percentile latency.

\para{Replication and Logging} There is no need to replicate and log the sequence within Styx since the input is logged and replicated within the replayable queue. In case of failure, after transaction replay, the sequencers will produce the exact same sequence (\cref{ch4:sec:seq_recovery}).


\subsection{Call-Graph Discovery}

After sequencing, Styx needs to execute the sequenced transactions and determine their call-graphs and RW sets (\circleb{2}). To this end, the function execution runtime ingests a given sequence of transactions to process in a given epoch. The number of transactions per epoch is either set by a polling interval or by a configurable maximum number of transactions that can run per epoch (by default, 1000 transactions per epoch). We have chosen an epoch-based approach since processing the incoming transactions in batches increases throughput.

Styx's runtime executes all the sequenced transactions on a snapshot of the data to discover the read/write sets. Transactions that span multiple workers will implicitly change the read/write sets of the remote workers via function calls. There is an additional issue related to discovering the RW set of a transaction: before the functions execute, the call-graph of the transaction is unknown. This is an issue because the protocol requires all transactions to be completed before proceeding to the next phase. To tackle this problem, Styx proposes a function acknowledgment scheme, which is explained in more detail in \Cref{ch4:sec:ack_shares}.

After this phase, all the stateful functions that comprise transactions will have finished execution, and the RW sets will be known. In \Cref{ch4:fig:styxlifecycle}, transactions $T_1$, $T_2$, and $T_3$ will execute and create the following RW sets: $Worker_1 \rightarrow \{k_1: T_1\}$, $Worker_2 \rightarrow \{k_2: T_1,T_2$ and $k_8: T_2, T_3\}$ and $Worker_3 \rightarrow \{k_3: T_3\}$.

\begin{figure}[t]
\centering
    \includegraphics[width=0.65\columnwidth]{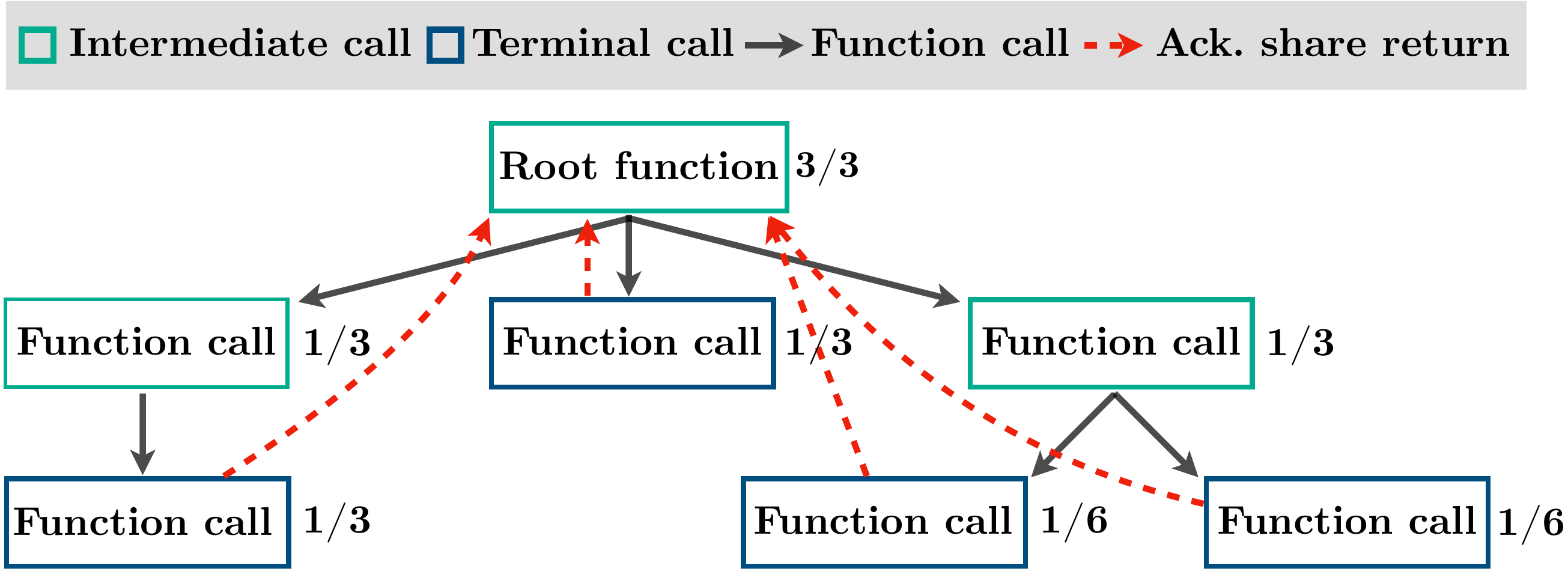}
        \caption{Asynchronous function call chains. A given root function call may invoke other functions throughout its execution. The original acknowledgment $(3/3)$ splits into parts as the function execution proceeds, and each function receives its own ack-share. For instance, in this function execution, the root function calls three other functions, thus splitting the ack-share into three equal parts. The same applies to subsequent calls, where the caller further splits their ack-share. The sum of ack-shares of terminal (blue) calls (i.e., function calls that do not perform further calls) adds to exactly $3/3$, which allows the root function to report the completion of execution.}
    \label{ch4:fig:chains}
\end{figure}

\subsection{Function Execution Acknowledgment} \label{ch4:sec:ack_shares}
In the SFaaS paradigm, the call-graph formed by a transaction is unknown; functions could be coded by different developer teams and can form complex call-graphs. This uncertainty complicates determining when a transaction has completed processing, which is essential because phase \circleb{3} can only start after all transactions have finished processing. To that end, each asynchronous function call of a given transaction is assigned an \textit{ack\_share}. A given function knows how many shares to create by counting the number of asynchronous function calls during its runtime. The caller function then sends the respective acknowledgment shares to the downstream functions. For instance, in \Cref{ch4:fig:chains}, the transaction entry-point (root of the tree) calls three remote functions, splitting the ack\_share into three parts (3 x $1\//3$). The left-most function invokes only one other function and passes to it its complete ack\_share ($1\//3$). The middle function does not call any functions, so it returns the share to the root function when it completes execution, and the right-most function calls two other functions, splitting its share ($1\//3$) to 2 x $1\//6$. After all the function calls are complete, the root function should have collected all the shares. When the sum of the received shares adds to 1, the root/entry-point function can safely deduce that the execution of the complete transaction is complete.

This design is devised for two reasons: i)~if every participating function just sent an ack when it is done, the root would not know how many acks to expect to decide whether the entire execution has finished, and ii) if we used floats instead of fractions, we could stumble upon a challenge related to adding floating point numbers.
For instance, if we consider floating-point numbers in the example mentioned above of the three function calls, the sum of all shares would not equal 1, but 0.99, since each share contributes 0.33. Subsequently, we cannot accurately round inexact division numbers; therefore, Styx uses fraction mathematics instead.

A solution close to the \textit{ack\_share} is distributed futures~\cite{distributedfutures}. However, it would not work in the SFaaS context as it either requires information about the entire call-graph for it to work asynchronously, or it would need to create a chain of futures that would make it synchronous. Hence, it would introduce high latency for our use case.

\section{Committing Transactions} \label{ch4:sec:cmt_t}
After completing an epoch's call-graph discovery, Styx needs to determine which transactions will commit and which will abort based on the transactions' Read/Write (RW) sets and TIDs. To this end, this section presents two different commit phases: $i)$ an optimistic lock-free phase that commits only the non-conflicting transactions, and $ii)$ a lock-based phase that only commits the transactions that were not able to commit in the first phase. The lock-based commit phase commits all conflicting transactions by acquiring locks in a TID-ordered sequence. To make the second phase faster, we have devised a caching scheme that can reuse the already discovered call-graph to avoid re-executing long function chains whenever possible (\cref{ch4:sec:caching}).

\subsection{Lock-free Commit Phase}

In case of conflict (i.e., a transaction $t$ writes a key that another transaction $t'$ also reads or writes on), similarly to~\cite{aria}, only the transaction with the lowest transaction ID will succeed to commit (\circleb{3}). The transactions that have not been committed are put in a queue to be executed in the next phase \circleb{4}  (maintaining their previously assigned ID).

In addition, workers ($W$) send their local conflicts to every other worker through the coordinator ($2*|W|$ messages): this way, every worker retains a global view of all the aborted/rescheduled transactions and can decide, locally, which transactions can be committed. Finally, note that transactions can also abort, not because of conflicts, but due to application logic causes (e.g., by throwing an exception). In that case, Styx removes the related entries from the read/write sets to reduce possible conflicts further.

In this phase, all the transactions that have not been part of a conflict apply their writes to the state, commit, and reply to the clients. In the example shown in \Cref{ch4:fig:styxlifecycle}, only $T_1$ can commit in $W_1$ and $W_2$ due to conflicts in the RW sets of $W_2$ regarding $T_2$ and $T_3$; more specifically, at keys $k_2$ and $k_8$.

\subsection{Lock-based Commit Phase}

In the previous phase, \circleb{3}, only transactions without conflicts can be committed. We now explain how Styx deals with transactions that have not been committed in a given epoch due to conflicts (\circleb{4}). First, Styx acquires locks in a given sequence ordered by transaction ID. Then, it reruns all transactions concurrently since all the read/write sets are known and commits them. However, if a transaction's read/write set changes in this phase, Styx aborts the transaction and recomputes its read/write set in the next epoch. Now, in \Cref{ch4:fig:styxlifecycle}, $W_2$ can sequentially acquire locks for $T_2$ and $T_3$, leading to their commits in $W_2$ and $W_3$.

\begin{figure}[t]
\centering
    \includegraphics[width=0.65\columnwidth]{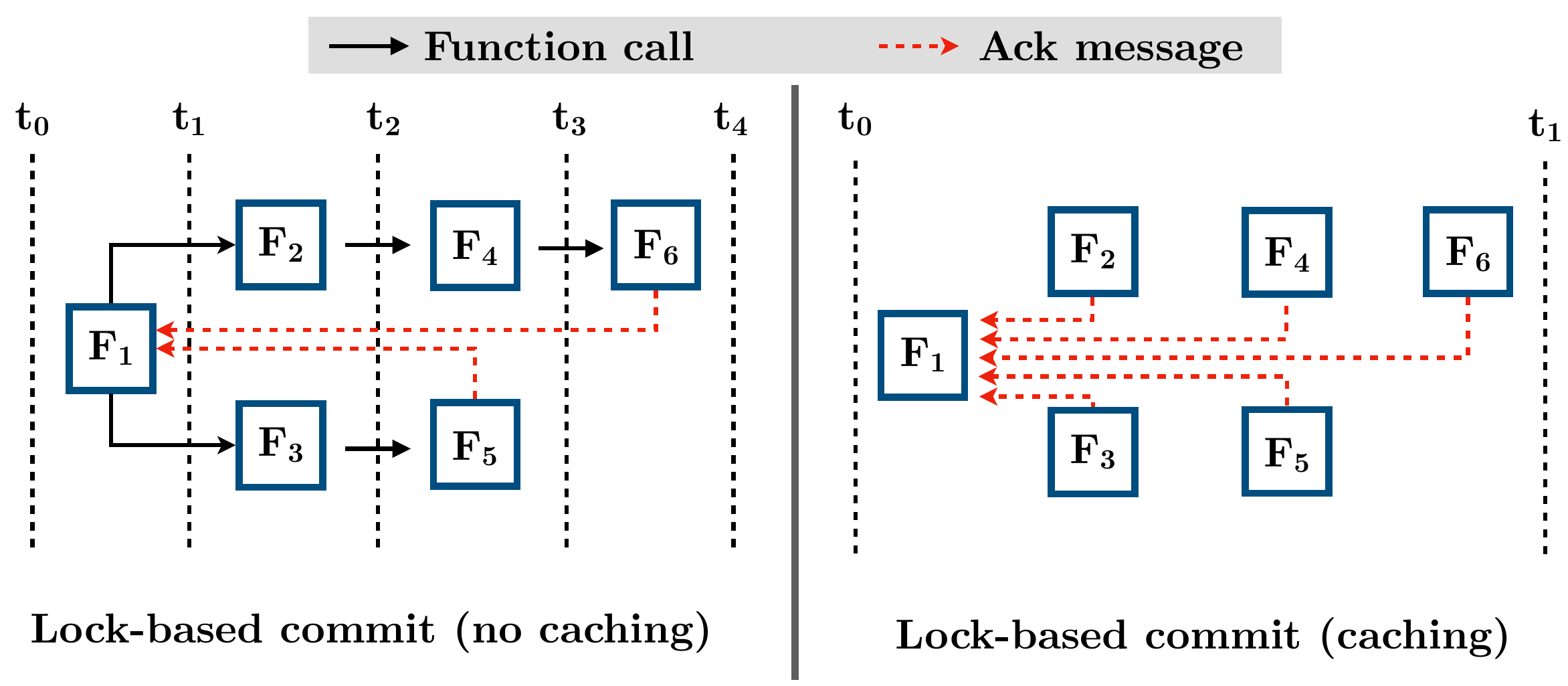}
    \caption{If no function caching is performed (left), the transaction execution will execute a deep call-graph; the messages will be sent sequentially and be equal to the number of function calls (5) in addition to the acks (2). Styx's function caching optimization (right) will lead to a concurrent function execution in the lock-based commit phase, between $t_0$ and $t_1$, and send only five acks asynchronously.}
    \label{ch4:fig:function_caching}
\end{figure}

\subsection{Call-Graph Caching} \label{ch4:sec:caching}
As depicted in \Cref{ch4:fig:styxlifecycle}, the lock-based commit phase \circleb{4} is used to execute any transactions that did not commit during the lock-free commit phase \circleb{3}. By the time the lock-based commit phase starts, the state of the database may have changed since the lock-free commit. As a result, function invocations need to be re-executed to account for the data updates. 

On the left part of \Cref{ch4:fig:function_caching}, we depict such a function invocation. At time $t_0$, F$_1$ is invoked, which in turn invokes two function chains: $F_1 \rightarrow F_2 \rightarrow F_4 \rightarrow F_6$ and $F_1 \rightarrow F_3 \rightarrow F_5$. Once the two function chains finish their execution (on time $t_4$ and $t_3$ respectively), they can acknowledge their termination to the root call $F_1$. 

\para{Potential for Caching} During our early experiments, we noticed cases where $F_1$ is invoked and the parameters with which it calls $F_2$ (and in turn the invocations across the $F_1 \rightarrow ... \rightarrow F_6$ call chain) do not change. The same applied to the RW set of those function invocations; the RW sets remained unchanged. Since Styx tracks those call parameters as well as the functions' RW sets, it can cache input parameters during the lock-free commit phase and reuse them during the lock-based commit, avoiding long sequential re-executions along the call chains. This case is depicted on the right part of \Cref{ch4:fig:function_caching}: the function-call chain does not need to be invoked in a sequential manner from $F_1$ all the way to $F_6$, leading to high latency. Instead, the individual workers can re-invoke those function calls locally and concurrently. As a result, all functions can execute in parallel and save on latency and network overhead ($t_4 - t_1$ in \Cref{ch4:fig:function_caching}).
Furthermore, caching does not require user input, is transparent to the API, and does not depend on the synchronous or asynchronous specification. Nonetheless, synchronous calls can be automatically transformed into asynchronous ones under certain conditions~\cite{beillahi2022automated, stateflow}.

\para{Conditions for Parallel Function Re-invocation} Intuitively, if the parameters with which, e.g., $F_2$ is called, and the RW set of $F_2$ remains the same, we can safely assume that function $F_2$ can be invoked concurrently without having to be invoked sequentially by $F_1$. If those functions are successfully completed and acknowledge their completion to the root function $F_1$, it means that the transaction can be committed. To the contrary, if the RW set of any of the functions $F_1 - F_6$ changes, or the parameters of any of the functions along the call chains change, the transaction must be fully re-executed. In that case, Styx will have to reschedule that transaction in the next epoch.

\subsection{Early Commit Replies via Determinism} \label{ch4:sub:early}

Implementing Styx as a fully deterministic dataflow system offers a set of advantages involving the ability to communicate transaction commits to external systems (e.g., the client), even before the state snapshots are persisted to durable storage. A traditional transactional system can respond to the client only when $i)$ the requested transaction has been committed to a persistent, durable state or $ii)$~the write-ahead log is flushed and replicated. In Styx's case, that would mean when an asynchronous snapshot completes (i.e., is persisted to durable storage such as S3), leading to high latency.

Since Styx implements a deterministic transactional protocol that executes an agreed-upon sequence of transactions among the workers, after a failure, the system would run the same transactions with exactly the same effects. This determinism enables Styx to give early commit replies: \emph{the client can receive the reply even before a persistent snapshot is stored.}
The assumption here is that the input queue, persisting the client requests, will provide Styx's sequencers with the requests in the same order after replay, a guarantee that is typically provided by most modern message brokers. 
Performing state mutations and message passing before persistence has also been explored in DARQ's speculative execution~\cite{darq}.

\section{Fault Tolerance}
\label{ch4:sec:fault-tolerance}

Styx implements a coarse-grained fault tolerance mechanism. Instead of logging each function execution, it adopts a variant of existing checkpointing mechanisms used in streaming dataflow systems \cite{silvestre2021clonos,carbone2017state,chandy1985distributed}. Styx asynchronously snapshots state and stores it in a replicated fault-tolerant blob store (e.g., Minio / S3), enabling low-latency function execution. We describe Styx's fault tolerance mechanism below.

\begin{algorithm}[t]
    \footnotesize
    \DontPrintSemicolon
    \SetAlgoLined
    
    \KwResult{Compacted Snapshot stored in durable storage}
    \SetKwInOut{Input}{Input}\SetKwInOut{Output}{Output}
    \SetKwComment{comm}{\hfill$\triangleright$\ }{}
    \Input{$\delta$: Delta changes, $O_{input}$: Input offset, $O_{output}$: Output offset, $E_{count}$: Epoch count, $SEQ_{count}$: Sequence count}
    \Output{$\mathcal{S}$: Compacted snapshot}
    \BlankLine
    
    \If{snapshotInterval}{
        state $\leftarrow \delta$ \comm*[r]{Prepare data and metadata for snapshot}
        metadata $\leftarrow \{O_{input}, O_{output}, E_{count}, SEQ_{count} \}$\;
        $\mathcal{S^{\delta}} \leftarrow$ serialize(state, metadata)\;
        store $\mathcal{S^{\delta}}$\; 
        inform coordinator\;
    }
    \If{compactionInterval}{
        $\mathcal{S} \leftarrow \emptyset$\;
        \ForEach{$\mathcal{S^{\delta}}$}{
            $\mathcal{S} \leftarrow \text{compact}( \mathcal{S}, \mathcal{S^{\delta}})$ \comm*[r]{Compact delta snapshots}
        }
    }
    
    \caption{Snapshotting Mechanism}
    \label{ch4:algo:snapshot}
\end{algorithm}

\begin{algorithm}[t]
    \footnotesize
    \DontPrintSemicolon
    \SetAlgoLined
    
    \KwResult{Recovered state from durable storage, possible duplicate messages}
    \SetKwInOut{Input}{Input}\SetKwInOut{Output}{Output}
    \SetKwComment{comm}{\hfill$\triangleright$\ }{}
    \Input{
        $\mathcal{S}$: Latest compacted snapshot,\\
        $\mathcal{S^{\delta}}$: Incremental (delta) snapshots,\\
        $O^{last}_{output}$: Offset of last output,\\
    }
    \Output{
        $\mathcal{R}$: Set of possible duplicate messages, $state^s$: Snapshotted state, $O^s_{input}$: Snapshotted input offset, $O^s_{output}$: Snapshotted output offset, $ E^s_{count}$: Snapshotted epoch count, $SEQ^s_{count}$: Snapshotted sequencer count
    }
    \BlankLine
    \If{$\mathcal{S^{\delta}} \neq \emptyset$}{
        $\mathcal{S} \leftarrow \text{compact}(\mathcal{S},\mathcal{S^{\delta}})$ \comm*[r]{Compact delta snapshots, if any}
    }
    $state^s, O^s_{input}, O^s_{output}, E^s_{count}, SEQ^s_{count} \leftarrow$ deserialize $\mathcal{S'}$ \; \comm*[r]{Extract persisted state}
    $R \leftarrow \{ m \mid O^s_{\text{output}} \leq m \leq O^{last}_{\text{output}} \}$ \comm*[r]{Possible duplicates (\Cref{ch4:sec:exactly-once-output})}

    \caption{Recovery Mechanism}
    \label{ch4:algo:recovery}
\end{algorithm}

\subsection{Incremental Snapshots \&\ Recovery} \label{ch4:sec:incr_sn_rec}

The snapshotting mechanism of Styx resembles the approach of many streaming systems\cite{carbone2017state, jet, SilvaZD16, ArmbrustDT18}, that extend the seminal Chandy-Lamport snapshots \cite{chandy1985distributed}. Modern stream processing systems checkpoint their state by receiving snapshot barriers at regular time intervals (epochs) decided by the coordinator. In contrast, Styx leverages an important observation: workers do not need to wait for a barrier to enter the system to take a snapshot, since the natural barrier in a transactional epoch-based system like Styx is at the end of a transaction epoch.

\para{Snapshotting} To this end, instead of taking snapshots periodically by propagating markers across the system's operators, Styx aligns snapshots with the completion of transaction epochs to take a consistent cut of the system's distributed state, including the state of the latest committed transactions, the offsets of the message broker, and the sequencer counters ($lc$).
The minimal information included in the snapshot is $O(N + c)$ where $N$ is the number of updates affecting the delta map, and $c$ is the fixed number of integers stored for the Kafka offsets and the sequencer variables.

When the snapshot interval triggers, Styx makes a copy of the current state changes to a parallel thread and persists incremental snapshots asynchronously, allowing Styx to continue processing incoming transactions while the snapshot operation is performed in the background.
The snapshotting procedure is described in Algorithm \ref{ch4:algo:snapshot}.

\para{Recovery} In case of a system failure, Styx $i)$ rolls back to the epoch of the latest completed snapshot, $ii)$ loads the snapshotted state, $iii)$ rolls back the replayable source's topic partitions (that are aligned with the Styx operator partitions) to the offsets at the time of the snapshot, $iv)$ loads the sequencer counters, and finally, $v)$ verifies that the cluster is healthy before executing a new epoch. The recovery procedure is described in Algorithm \ref{ch4:algo:recovery}.

\para{Incremental Snapshots \& Compaction} Each snapshot stores a collection of state changes in the form of \textit{delta maps}. A delta map is a hash table that tracks the changes in a worker's state in a given snapshot interval. When a snapshot is taken, only the delta map containing the state changes of the current interval is snapshotted. To avoid tracking changes across delta maps, Styx periodically performs compactions where the deltas are merged in the background, as shown in \Cref{ch4:fig:snapshot}. The cost of compacting is equivalent to the cost of merging two hashmaps with the same key-spaces $(O(N))$. The total cost will be $O(M*N)$, with $M$ denoting the number of deltamaps we need to compact.

\begin{figure}[t]
\centering
\captionsetup{justification=centering}
    \includegraphics[width=0.6\columnwidth]{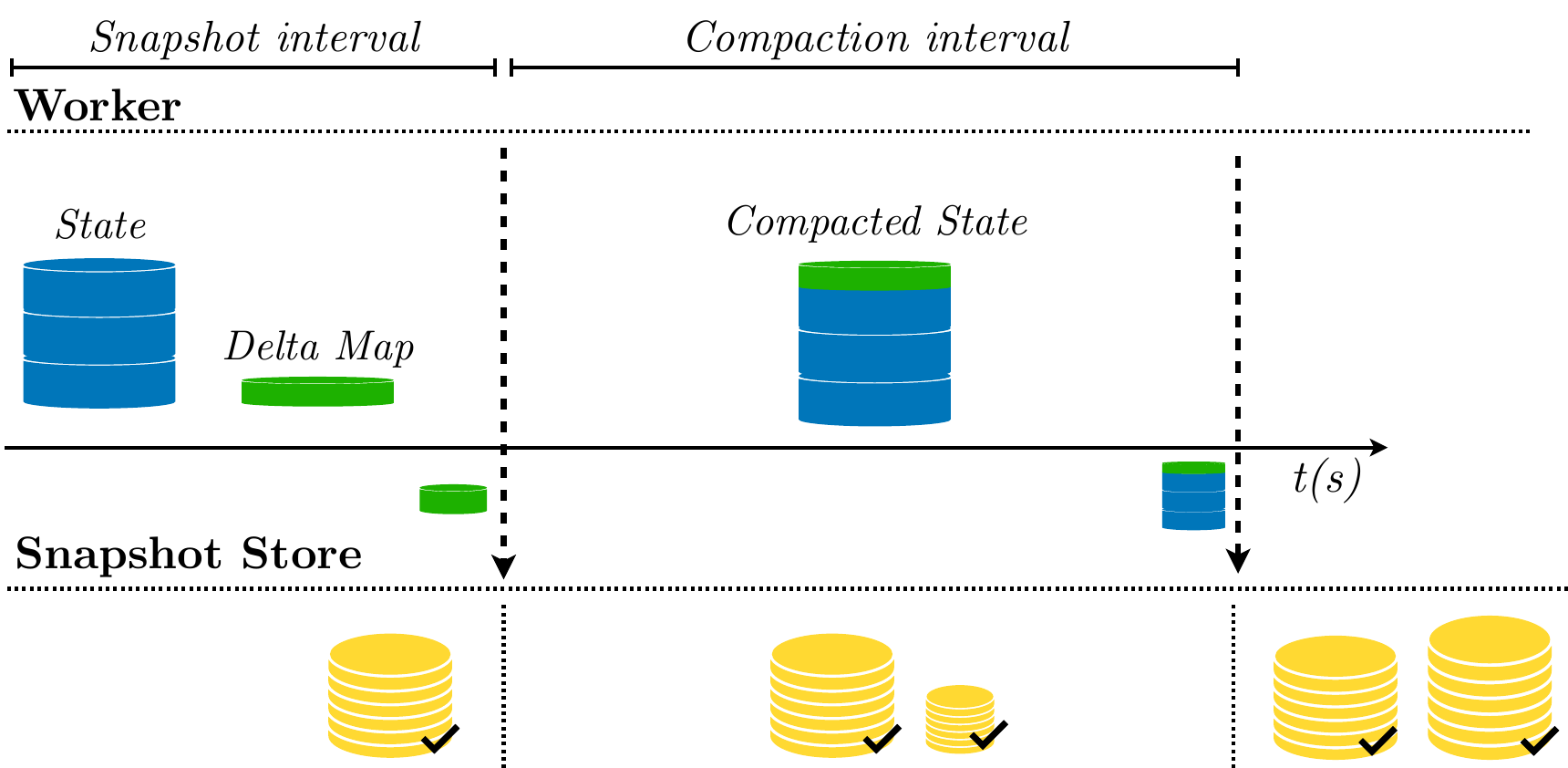}
    \caption{Incremental snapshots with Delta Maps in Styx.}
    \label{ch4:fig:snapshot}
\end{figure}

\subsection{Sequencer Recovery} \label{ch4:sec:seq_recovery}
To guarantee determinism, upon recovery, Styx 's sequencer needs to generate identical sequences as the ones generated between the latest snapshot and failure. The recovery protocol of the sequencer operates as follows: At first, during the snapshot, we store the local counter of each sequencer partition ($lc$) with its ID ($sid$) and the epoch counter. Additionally, at the start of each epoch, Styx  logs the number of transactions contained in that epoch, denoted as epoch size. Logging the epoch sizes is needed due to Styx 's varying epoch sizes and the sequencer rebalancing scheme (\cref{ch4:sec:sequencing}). After failure, the recovered sequencer partitions are initialized with the snapshot's $lc$ and $sid$. Afterward, each partition retrieves from its log all the sizes of all epochs executed since the last snapshot. Finally, after recovery, the sequencer matches the epoch sizes to the ones recorded by the log, leading to the same global sequence observed before failure.

\subsection{Exactly-Once Processing}

At first, the durable input queue, which acts as a replayable source, allows Styx to replay requests after failures. By rolling back the queue partitions (aligned with system operator partitions) to the respective offsets as recorded in the latest snapshot, Styx can reprocess only the transactions whose state changes are not reflected yet in the snapshot. Transactions committed and early-commit replies stored in the egress can be deduplicated (\Cref{ch4:sec:exactly-once-output}).
 
Styx runs each transaction to its completion in a single epoch. A given transaction can execute a large call-graph of functions that can affect the state. If a failure takes place, a transaction's state effects are restored to the latest snapshot, and the complete transaction is re-executed. As a result, no special attention is required to ensure that remote function calls are executed exactly-once, except for resetting all TCP channels between Styx's workers after recovery.

\begin{lemma}\label{ch4:lem:exactlyonce}
The state mutations of committed transactions in Styx are reflected exactly-once, even upon failure.
\end{lemma}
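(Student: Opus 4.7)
The plan is to prove the lemma by induction over the number of failures, maintaining the invariant that at the end of every transaction epoch, the state mutations of every committed transaction are either (a) already persisted in a snapshot in durable storage, or (b) exactly reproducible via deterministic re-execution from the latest snapshot using the replayable input queue. The base case (failure-free execution of one epoch) reduces to showing that within the transactional protocol of \Cref{ch4:sec:cmt_t}, each committed transaction applies its writes once. From the structure of the two commit phases, a transaction either commits in the lock-free phase (when its read/write set does not conflict with a lower-TID transaction) or is deferred to the lock-based phase; these phases are disjoint by construction, so no transaction's writes are applied twice in a single epoch.

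For the inductive step, I would partition the transactions committed before a failure into (i) those whose writes are part of the latest persisted snapshot $\mathcal{S}$, and (ii) those that committed after $\mathcal{S}$ but before the crash (possibly accompanied by early commit replies to clients, cf.\ \Cref{ch4:sub:early}). Category (i) is preserved by Algorithm~\ref{ch4:algo:recovery}, which deserializes $\mathcal{S}$ and any delta snapshots to reconstruct the state. For category (ii), I would invoke the recovery procedure to roll back the replayable source partitions to $O^s_{\text{input}}$ and to restore each sequencer's $(lc, sid)$ together with the epoch counter. Then, because the sequencing formula of \Cref{ch4:eq:seq} is a pure function of these restored values and the partition-aligned input stream, together with the logged per-epoch sizes described in \Cref{ch4:sec:seq_recovery}, replay yields the identical global transaction sequence with identical TIDs.

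The hard part will be establishing that re-execution of that restored sequence produces bitwise-identical state mutations, since Styx's execution is coroutine-based, spans multiple workers via arbitrary function call-graphs, and commits using RW sets discovered at runtime. I would decompose this into three sub-claims. First, because TCP channels across workers are reset on recovery, no in-flight messages from the failed run can contaminate the replayed run; coupled with the epoch-aligned snapshot cut and the ack-share scheme of \Cref{ch4:sec:ack_shares}, every epoch starts from a clean messaging state and completes before the commit phases begin. Second, the lock-free commit decision is a pure function of TIDs and discovered RW sets, and since both are deterministic given the restored input and state, exactly the same set of transactions is committed in the lock-free phase on replay. Third, the lock-based phase acquires locks in the globally unique TID order, so its outcome (including the call-graph caching shortcut of \Cref{ch4:sec:caching}) is also a deterministic function of the post-lock-free state.

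The principal obstacle is ruling out application-level non-determinism inside user functions (for example, sampling wall-clock time or calling external services), which would break the claim that category (ii) transactions produce identical writes upon replay. I would handle this by making explicit the assumption that entity functions are deterministic given their inputs and state — an assumption consistent with the programming model of \Cref{ch4:sec:programming-model} — and noting that non-deterministic side effects to external systems are addressed orthogonally by the output-deduplication machinery referenced in \Cref{ch4:sec:exactly-once-output} rather than by this lemma.
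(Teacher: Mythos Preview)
Your proposal is correct and rests on the same core argument as the paper: durable snapshot plus deterministic replay from the partition-aligned input queue yields an identical post-recovery state, hence each committed transaction's mutations are reflected exactly once. The paper's own proof is considerably terser and more abstract---it sets up symbolic notation ($S_t$, $Q_t$, $\text{Snapshot}(S_t,i,n)$), asserts determinism of transactions and FIFO ordering of the input queue, partitions $Q_t$ into $Q^{\text{success}}_t$ and $Q^{\text{fail}}_t$, and concludes $S''_{t+1} = \text{mutation}(S_t, Q^{\text{fail}}_t) = S'_{t+1}$ in essentially one step---whereas you add substantial scaffolding: induction over the number of failures, an explicit base case analyzing the disjointness of the lock-free and lock-based commit phases, and a three-part decomposition of the determinism claim (clean messaging state, lock-free commit as a pure function of TIDs and RW sets, TID-ordered lock acquisition). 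Your version is more rigorous and more explicitly tied to the concrete mechanisms of \Cref{ch4:sec:seq_recovery} and \Cref{ch4:sec:cmt_t}; the paper's version trades that detail for brevity and leaves the single-failure-to-multiple-failures generalization implicit.
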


\begin{proof}

Let $S_t$ denote the state of the system at time $t$. $Q_t = \{r_1, \dots, r_n\}$ denotes the durable input queue at time $t$ that holds all requests $r_i$ to be processed. We assume that the input queue operates as FIFO and requests $r_i$ are deterministic.
Each $r_i$ will be sequenced as a transaction $T_i = \{upd_l, func_m\} $ by a deterministic sequencer, where $upd_l$ are the state updates and $func_m$ are the function calls of the transaction. We assume that $upd_l$  happens atomically and $func_m$ are also reflected once, given the use of a reliable communication protocol. Given the same initial state $S$ and input from $Q$, it always produces the same state transition $S \to S'$,which means $S'_{t+1} = mutation(S_t, Q_t)$. The execution of a transaction $T_i$ is deterministic.

At any time $t$, the state of the system $S_t$ reflects all transactions in $Q_t$ that have been fully executed and committed. Accordingly, the state $S_t$ ignores partially executed or in-progress transactions in $Q_t$. We denote the latest durable snapshot taken up to time $t$, as $\text{Snapshot}(S_t, i, n)$ where $n$ corresponds to the offsets of the first request $r_i$, and last request $r_n$ of the input queue to be processed up to time $t$. Upon failure, a subset of $Q_t$, $Q^{success}_t = \{r_1, \dots, r_k\}$ will contain successfully committed transactions and a subset $Q^{fail}_t = \{r_{k+1}, \dots, r_n\}$ will contain aborted transactions such that $Q_t = Q^{success}_t + Q^{fail}_t$. In order to recover from a failure, $Q_t$ is rolled back to $S_t$ from
$\text{Snapshot}(S_t, i, n)$ as we persist the offsets of our input queue. Transactions in $Q_t$ are replayed in the original order from offset $i$ to offset $n$ of our input queue. This is ensured by the FIFO queue and the deterministic sequencer. After processing the input transactions, $Q^{success}_t$ includes requests already reflected in $\text{Snapshot}(S_t)$, and $Q^{fail}_t$ includes pending requests. Since $\text{Snapshot}(S_t)$ reflects $Q^{success}_t$ and $Q_t = Q^{success}_t + Q^{fail}_t$, the replay and processing ensure: $S_{t+1}'' = mutation(S_t, Q^{fail}_t) = S_{t+1}'$. Thus, the effects of all transactions will be reflected in the state exactly-once, even after failure.
\end{proof}

\subsection{Exactly-Once Output} 
\label{ch4:sec:exactly-once-output}
A common challenge in the fault tolerance of streaming systems is that of the exactly once output~\cite{ElnozahyAW02, fragkoulis2024survey} in the presence of failures, which is hard to solve for low-latency use cases. For example, in Apache Flink's~\cite{flink} exactly-once output configuration, clients can only retrieve responses after those are persisted in a snapshot or a transactional sink. This arrangement is sufficient for streaming analytics but not for low-latency transactional workloads, as discussed previously in \Cref{ch4:sub:early}. 

To solve that, during recovery, Styx: $i)$ reads the last offset of the egress topic, $ii)$ compares it with the output offset persisted in the snapshot, determining for which transactions the clients have already received replies, $iii)$ retrieves the TIDs attached in those replies, and $iv)$ does not send a reply again to the egress topic for those transactions. Note that this deduplication strategy is based on the fact that TIDs have been assigned deterministically.

\subsection{Addressing Non-Deterministic Functions} \label{ch4:sec:addr_non_det}
As discussed in \Cref{ch4:sec:incr_sn_rec}, Styx's recovery mechanism is based on deterministic replay. To this end, Styx requires that the functions authored by developers are also deterministic, i.e., replaying the same function multiple times, using the same inputs and database state, should yield the same results. However, one can achieve determinism even in the presence of non-deterministic logic inside functions, such as randomness (e.g., random numbers/sampling) or calls to external systems (e.g., calling an external database or API). Styx can follow the approach of existing systems (e.g., Temporal~\cite{temporal}, Clonos~\cite{silvestre2021clonos}). In the following, we explain how this can be achieved.

\para{Randomness} To retain determinism in the case of randomness, Styx can use an external fault tolerant write-ahead log (WAL) to log the random number along with the TID. Thus, in the case of failure and replay, Styx can use the logged random number, essentially making the function call deterministic during replay.

\para{Calls to External Systems} As illustrated in \Cref{ch4:fig:externalcalls}, an interaction with an external system needs to consider three critical points to maintain determinism. Styx assumes that the external system supports idempotency~\cite{ietf}, meaning that if a call is made twice with the same idempotency key, the effects on the external system's state and its return value will remain the same. In \circler{1} Styx needs to log the idempotency key and the TID in the WAL before calling the external system. If the external system produces a response (\circler{2}), Styx can store it in the WAL and retrieve it from there in case of replay. Finally, when Styx completes a snapshot (\circler{3}), it can also clear the WAL for garbage collection since the prior entries are not needed.

Finally, Styx could mask those operations behind an API that exposes the following functionality, such as \textit{system\_x.random} for random number generation and \textit{system\_x.call\_external} for external system calls.

\begin{figure}[t]
    \centering
    \captionsetup{justification=centering}
    \includegraphics[width=0.6\columnwidth]{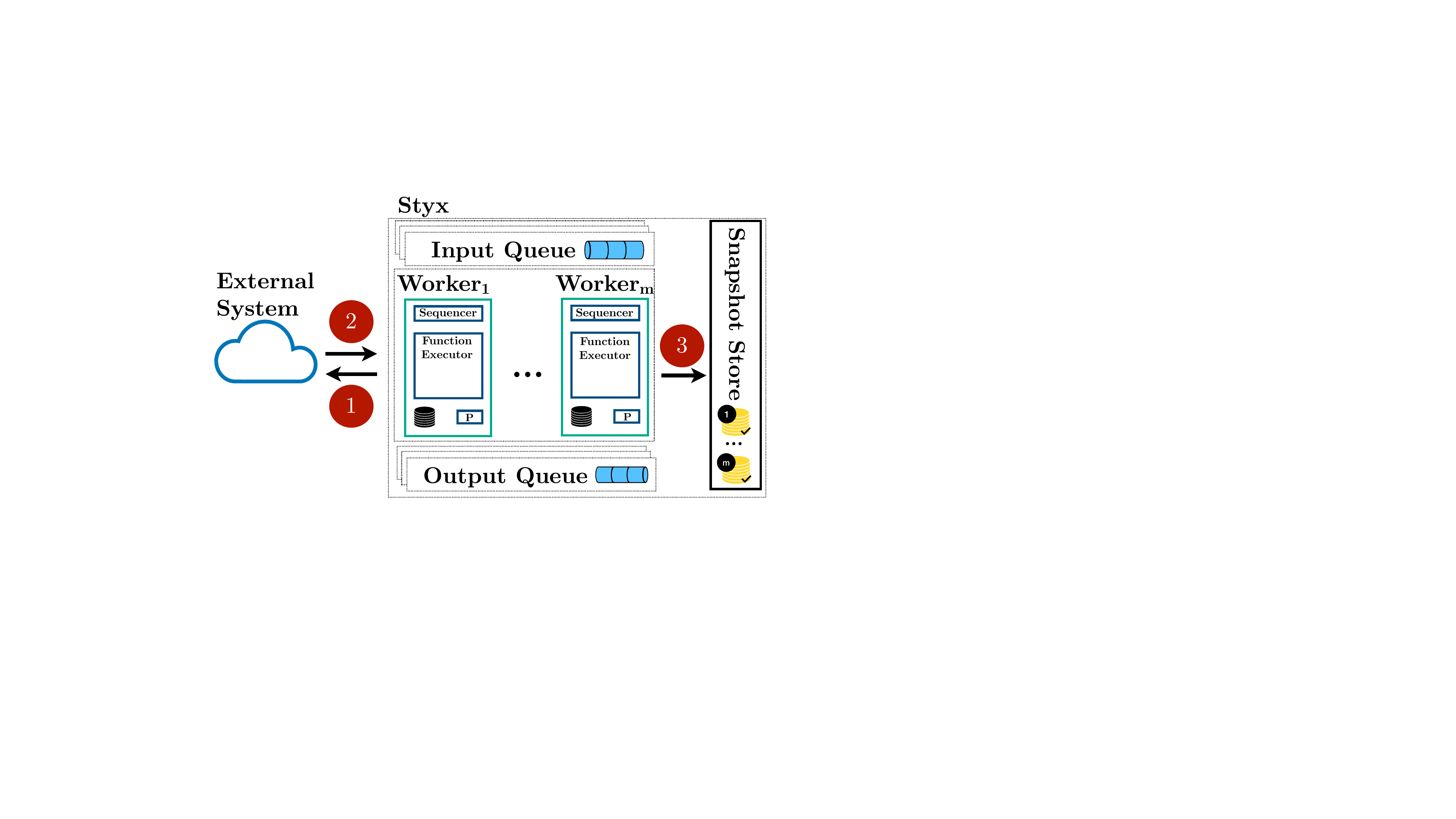}
    \caption{External system call critical points and Styx.}
    \label{ch4:fig:externalcalls}
\end{figure}

\section{Evaluation} \label{ch4:sec:exp}

We evaluate Styx by answering the following questions:
\begin{itemize}
  \renewcommand\labelitemi{--}
  \setlength\itemsep{0mm}
  \item (\Cref{ch4:sec:exp_lat_through}) How does Styx compare to State-of-the-Art serializable transactional SFaaS systems? 
  \item (\Cref{ch4:sec:exp_skew}) How does Styx perform under skewed workload? 
  \item (\Cref{ch4:sec:exp_scalability}) How well does Styx scale? 
  \item (\Cref{ch4:sec:exp:snapshots}) Does the snapshotting mechanism affect performance? 
\end{itemize}

\subsection{Setup}
\label{ch4:sec:exp:setup}

\para{Systems Under Test} In the evaluation, we include SFaaS systems that provide serializable transactional guarantees. Those are:

\parait{Beldi~\cite{beldi}/Boki~\cite{boki}} Both systems use a variant of two-phase commit and Nightcore~\cite{nightcore} as their function runtime and store their data in DynamoDB. Additionally, Boki is deployed with the latest improvements of Halfmoon \cite{qi2023halfmoon}.

\parait{T-Statefun~\cite{tstatefun}} T-Statefun maintains the state and the coordination of the two-phase commit protocol within an Apache Flink cluster and ships the relevant state to remote stateless functions for execution. For fault tolerance, it relies on a RocksDB state backend that performs incremental snapshots. 

\parait{Styx} Styx is implemented in Python 3.12 and uses coroutines to enable asynchronous concurrent execution. Apache Kafka is used as an ingress/egress and Minio/S3 \cite{MinIO} as a remote persistent store for Styx's incremental snapshots. Finally, Styx is a standalone containerized system that works on top of Docker and Kubernetes for ease of deployment.

\para{Workloads/Benchmarks} \Cref{ch4:tbl:scenaria} summarizes the three workloads used in the experiments. 

\parait{YSCB-T~\cite{dey2014ycsb+}} We use a variant of YCSB-T~\cite{dey2014ycsb+}  where each transaction consists of two reads and two writes. The concrete scenario is as follows: First, we create 10.000 bank accounts (keys) and perform transactions in which a debtor attempts to transfer credit to a creditor. This transfer is subject to a check on whether the debtor has sufficient credit to fulfill the payment. If not, a rollback needs to be performed. The selection of a relatively small number of keys is deliberate: we want to assess the systems' ability to sustain transactions under high contention. In addition, for the experiment depicted in \Cref{ch4:fig:zipf_styx_ts} (skewed distribution), we select the debtor key based on a uniform distribution and the creditor based on a Zipfian distribution, where we can vary the level of contention by modifying the Zipfian coefficient.

\parait{Deathstar~\cite{deathstar}} We employ Deathstar~\cite{deathstar}, as adapted to SFaaS workloads by the authors of Beldi~\cite{beldi}. It consists of two workloads: $i)$~the Movie workload implements a movie review service where users write reviews about movies. $ii)$~the Travel workload implements a travel reservation service where users search for hotels and flights, sort them by price/distance/rate, find recommendations, and transactionally reserve hotel rooms and flights. Both Deathstar workloads follow a uniform distribution. Note that T-Statefun could not run in this set of experiments since it does not support range queries.

\parait{TPC-C~\cite{tpcc}} The prime transactional benchmark targeting OLTP systems is TPC-C~\cite{tpcc}. In our evaluation, we used the NewOrder and Payment transactions and had to rewrite them in the SFaaS paradigm, splitting the NewOrder transaction into 20-50 function calls (one call per item) and the Payment transaction into 8 function calls. TPC-C scales in size/partitions by increasing the number of warehouses represented in the benchmark. While a single warehouse represents a skewed workload (all transactions will hit the same warehouse), increasing the number of warehouses decreases the contention, allowing for higher throughput and lower latency. Note that the TPC-C experiments do not include Beldi, Boki, or T-Statefun because they do not support them.

\begin{table}[t]
\centering
\resizebox{0.65\columnwidth}{!}{
    \begin{tabular}{l||c|c|c}
    \textbf{Scenario} & \textbf{\#keys} & \textbf{Function Calls} & \textbf{Transactions} \% \\ \hline \hline
    \textbf{YCSB-T} & 10k & 2 & 100\% \\ \hline
    \textbf{Deathstar Movie} & 2k & 9-10 & 0\% \\ \hline
    \textbf{Deathstar Travel} & 2k & 3 & 0.5\% \\ \hline
    \textbf{TPC-C} & 1m-100m & 8 / 20-50 & 100\%
    \end{tabular}
    }
    \vspace{2mm}
    \caption{Workload characteristics.}
    \label{ch4:tbl:scenaria}
\end{table}

 \para{Resources} For Beldi/Boki, T-Statefun and Styx, we assigned a total of 112 CPUs with 2GBs of RAM per CPU, matching what is presented in the original Boki paper \cite{boki}.  Additionally, throughout all the evaluation scenarios, the data fit in memory across all systems. Unless stated otherwise, Styx and T-Statefun are configured to perform incremental snapshots every 10 seconds. All external systems, i.e., DynamoDB (Beldi, Boki),  Minio, and Kafka (Styx, T-Statefun), are configured with three replicas for fault tolerance.

\parait{External Systems} Boki and Beldi use a fully managed DynamoDB instance at AWS, which does not state the amount of resources it occupies and is in addition to the 112 CPUs assigned to Boki and Beldi. Similarly, the resources assigned to Minio/S3 (Styx and T-Statefun) are not accounted for.

\para{Metrics} Our goal is to observe systems' behavior, measured by their latency, while varying the input throughput.

\noindent\underline{\textit{Input throughput}} represents the number of transactions submitted per second to the system under test. As the input throughput increases during an experiment, we expect the latency of individual transactions to increase until aborts start to manifest due to contention or high load.

\noindent\underline{\textit{Latency}} represents the time interval between submitting a transaction and the reported time when the transaction is committed/aborted. In Styx and T-Statefun, the latency timer starts when a transaction is submitted in the input queue (Kafka) and stops when the system reports the transaction as committed/aborted in the output queue. Similarly, in Beldi and Boki, the latency is the time since the input gateway has received a transaction and the time that the gateway reports that the transaction has been committed/aborted.

\begin{figure*}[t]
    \centering
    \begin{subfigure}[t]{0.49\columnwidth}
        \centering
        \captionsetup{justification=centering}
        \includegraphics[width=\columnwidth]{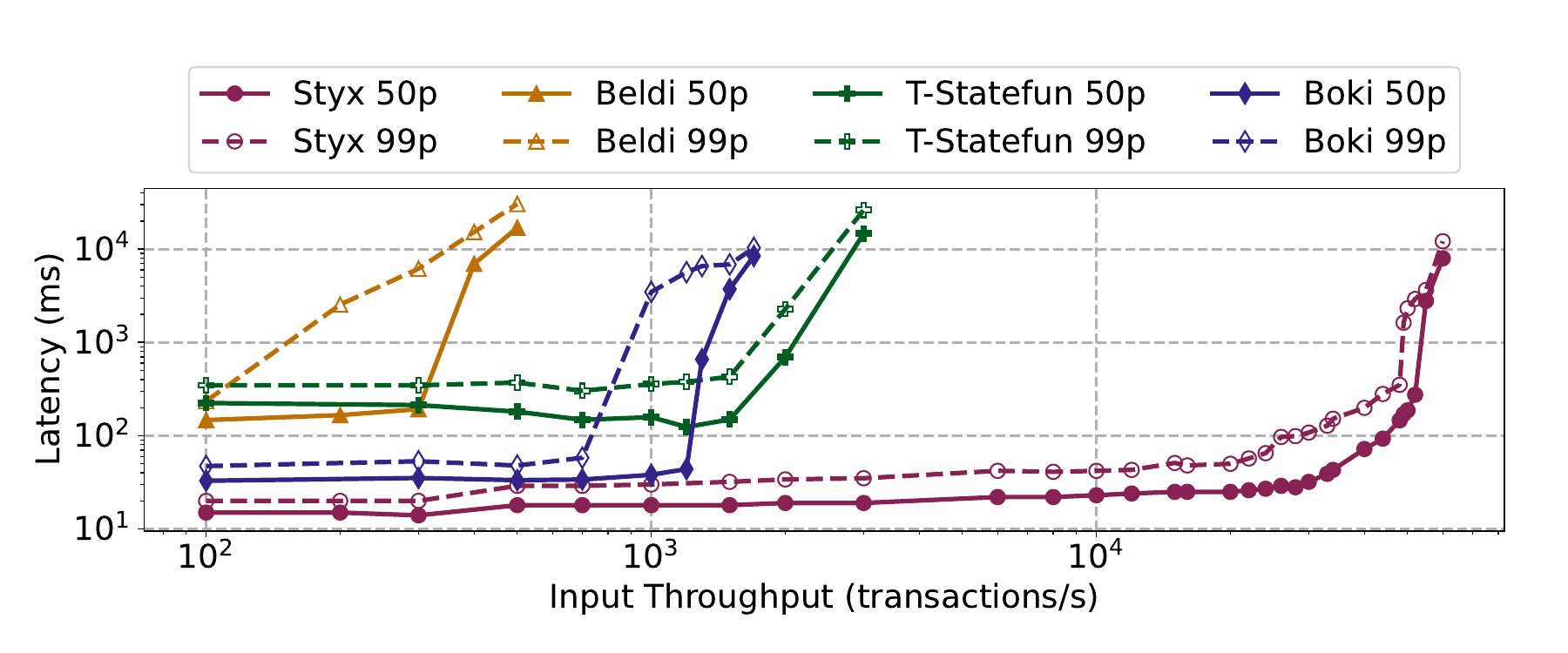}
        \caption[]%
        {{\footnotesize YCSB-T (uniform).}}
        \label{ch4:fig:tl_ycsbt}
    \end{subfigure}
    \hfill
    \begin{subfigure}[t]{0.49\columnwidth}  
        \centering 
        \captionsetup{justification=centering}
        \includegraphics[width=\columnwidth]{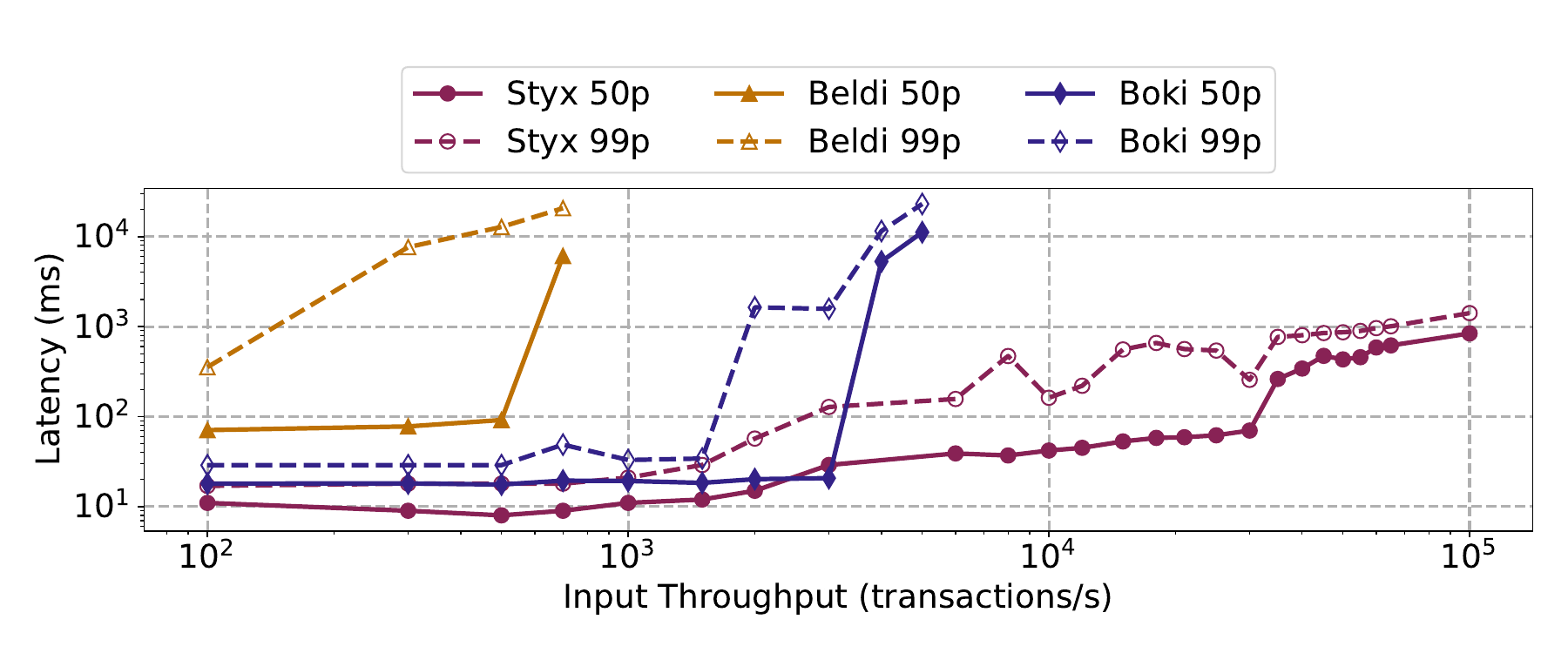}
        \caption[]%
        {{\footnotesize Deathstar Travel Reservation.}}    
        \label{ch4:fig:death_trav}
    \end{subfigure}
    \vskip\baselineskip
    \begin{subfigure}[t]{0.49\columnwidth}   
        \centering 
        \captionsetup{justification=centering}
        \includegraphics[width=\columnwidth]{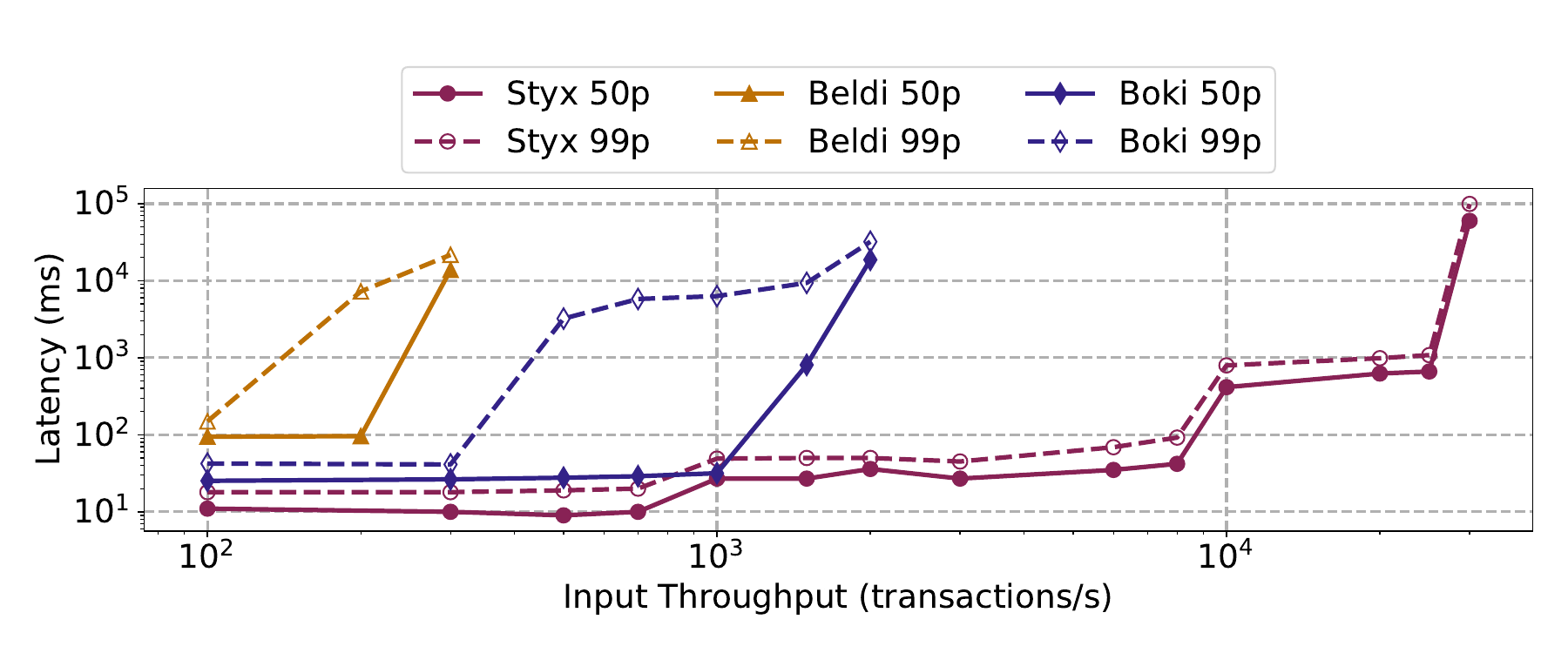}
        \caption[]%
        {{\footnotesize Deathstar Movie Review.}}    
        \label{ch4:fig:death_mov}
    \end{subfigure}
    \hfill
    \begin{subfigure}[t]{0.49\columnwidth}   
        \centering 
        \captionsetup{justification=centering}
        \includegraphics[width=\columnwidth]{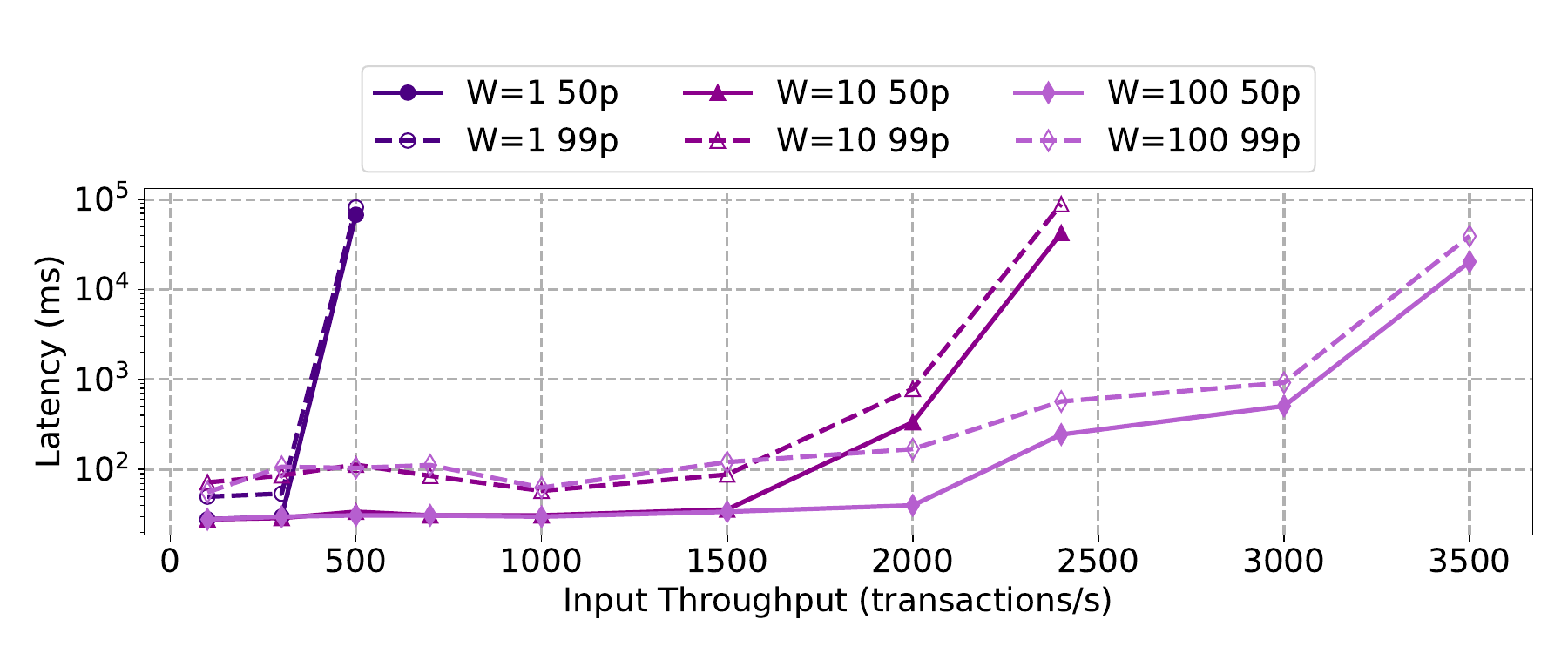}
        \caption[]%
        {{\footnotesize TPC-C on Styx with 1, 10, and 100 warehouses.}}    
        \label{ch4:fig:styx_tpcc}
    \end{subfigure}
    \caption{Evaluation in different scenarios. T-Statefun does not support range queries required by the Deathstar workloads. TPC-C is only supported by Styx.}
    \label{ch4:fig:throughput_latency}
\end{figure*}

\subsection{Latency vs. Throughput} \label{ch4:sec:exp_lat_through} \label{ch4:sec:exp_skew}

We first study the latency-throughput tradeoff of all systems. We retain the resources given to the systems constant (112 CPUs) while progressively increasing the input throughput. We measure the transaction latency. As depicted in \Cref{ch4:fig:throughput_latency}, Styx outperforms its baseline systems by at least an order of magnitude. Specifically, in YCSB-T (\Cref{ch4:fig:tl_ycsbt}), Styx achieves a performance improvement of \textasciitilde20x in terms of throughput against T-Statefun, which ranks second. In addition, Styx outperforms Boki by \textasciitilde30x in Deathstar's travel reservation workload (\Cref{ch4:fig:death_trav}) and by \textasciitilde35x in Deathstar's movie review \Cref{ch4:fig:death_mov}) workload. Finally, in the TPC-C benchmark (\Cref{ch4:fig:styx_tpcc}), which requires a large number of function calls per transaction (20-50), we observe that Styx's performance improves as we increase the input throughput for different numbers of warehouses, reaching up to 3K TPS with sub-second 99$^{\text{th}}$ percentile latency (100 warehouses).

\para{Aborts \& Throughput} Beldi and Boki follow a no-wait-die concurrency control approach, which leads to a significant amount of aborts as the throughput increases. Styx and T-Statefun do not use such a transaction abort mechanism. Instead, they execute all transactions to completion. This difference in handling transactions under high load makes the latencies across systems hard to compare. For this reason, in \Cref{ch4:fig:zipf_styx_ts}, we plot the results of Styx and T-Statefun and present the performance of Beldi and Boki in a separate table (\Cref{ch4:fig:bbabort}), alongside their abort rates. 

\begin{figure}[t]
\centering
  \includegraphics[width=0.65\columnwidth]{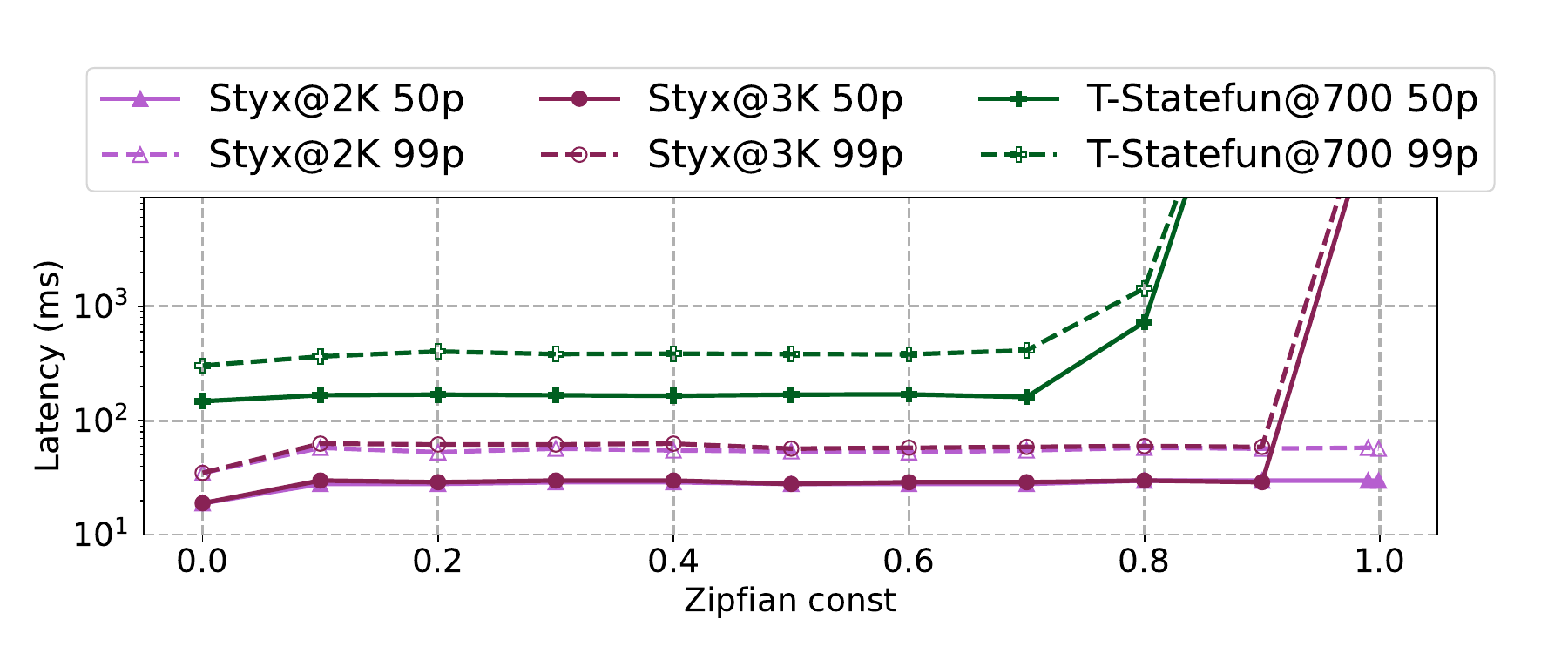} 
\caption{Latency evaluation for varying levels of contention (0.0 - 0.999) with YCSB-T (skewed). We ran Styx with two different input throughput variations to show clearly its behavior under contention. Note that Styx and T-Statefun execute all transactions to completion (abort\%=0).}
  \label{ch4:fig:zipf_styx_ts}
\end{figure}

\begin{table}[t]
\centering
    \resizebox{0.65\columnwidth}{!}{
    \begin{tabular}{rccccccccc}
    \multicolumn{1}{l}{} & \multicolumn{1}{l}{} & 0.0 & 0.2 & 0.4 & 0.6 & 0.8 & 0.9 & 0.99 & 0.999 \\ \hline \hline
    \multicolumn{1}{r|}{Beldi} & \multicolumn{1}{c||}{Abort \%} & \multicolumn{1}{c|}{47.93} & \multicolumn{1}{c|}{45.54} & \multicolumn{1}{c|}{44.31} & \multicolumn{1}{c|}{47.28} & \multicolumn{1}{c|}{52.40} & \multicolumn{1}{c|}{56.06} & \multicolumn{1}{c|}{61.62} & \multicolumn{1}{c|}{60.70} \\
    \multicolumn{1}{r|}{} & \multicolumn{1}{c||}{CMT TPS} & \multicolumn{1}{c|}{104} & \multicolumn{1}{c|}{108} & \multicolumn{1}{c|}{111} & \multicolumn{1}{c|}{105} & \multicolumn{1}{c|}{95} & \multicolumn{1}{c|}{76} & \multicolumn{1}{c|}{76} & \multicolumn{1}{c|}{78} \\ \hline
    \multicolumn{1}{r|}{Boki} & \multicolumn{1}{c||}{Abort \%} & \multicolumn{1}{c|}{48.77} & \multicolumn{1}{c|}{48.23} & \multicolumn{1}{c|}{49.54} & \multicolumn{1}{c|}{51.82} & \multicolumn{1}{c|}{61.29} & \multicolumn{1}{c|}{68.50} & \multicolumn{1}{c|}{74.47} & \multicolumn{1}{c|}{70.71} \\
    \multicolumn{1}{r|}{} & \multicolumn{1}{c||}{CMT TPS} & \multicolumn{1}{c|}{359} & \multicolumn{1}{c|}{362} & \multicolumn{1}{c|}{353} & \multicolumn{1}{c|}{337} & \multicolumn{1}{c|}{271} & \multicolumn{1}{c|}{220} & \multicolumn{1}{c|}{179} & \multicolumn{1}{c|}{205} \\ \hline
    \end{tabular}
    }
    \vspace{2mm}
    \caption{Evaluation of Boki and Beldi for varying levels of contention with YCSB-T. We report the abort ratio and committed transactions rate, and omit latency since the systems do not execute all transactions to completion. Both run at their maximum sustainable throughput.}
    \label{ch4:fig:bbabort}
\end{table}

We observe the following: $i)$ at the highest level of contention ($Zipfian$ at $0.999$) Styx achieves at least 2000 TPS, outperforming the rest by \textasciitilde 5-10x in terms of effective throughput, $ii)$ both Beldi and Boki (that run at their maximum sustainable throughput) abort more transactions as the level of contention increases (\textasciitilde40-70\%), which significantly impacts their effectiveness as shown in \Cref{ch4:fig:bbabort}, and $iii)$ Styx shows an increase in latency only in high levels of contention ($Zipfian>0.99$) while executing at \textasciitilde4x higher throughput than the rest.

\para{Runtime Breakdown} In \Cref{ch4:fig:breakdown}, we show where the systems under test spend their processing time. We use YCSB-T for this purpose since it is the only benchmark supported by all the systems (\Cref{ch4:sec:exp:setup}). We measured the median latency while all the systems were running at 100 TPS for 60 seconds and averaged the proportions of function execution, networking, and state access across all committed transactions. The key observations are: $i)$ Styx's co-location of processing and state led to minimal state access latency, and $ii)$ Styx's asynchronous networking allows for lower network latency.

\begin{table}[t]
\centering
\captionsetup{justification=centering}
    \resizebox{0.65\columnwidth}{!}{
    \begin{tabular}{c||c|c|c}
    \textbf{System} & \textbf{Function Execution} & \textbf{Networking} & \textbf{State Access} \\ \hline \hline
    \textbf{Styx} & \textbf{0.34ms} - 2.2\% & \textbf{14.33ms} - 95.6\% & \textbf{0.32ms} - 2.2\% \\ \hline
    \textbf{Boki} & 1.1ms - 3.3\% & 16.1ms - 49\% & 15.68ms - 47.7\% \\ \hline
    \textbf{T-Statefun} & 2.76ms - 2.2\% & 92.12ms - 74.3\% & 29.11ms - 23.5\% \\ \hline
    \textbf{Beldi} & 1.01ms - 0.7\% & 56.58ms - 38.4\%& 89.57ms - 60.9\% \\ \hline
    \end{tabular}
    }
    \vspace{2mm}
    \caption{Performance breakdown of all systems. (median latency - percentage from the total)}
    \label{ch4:fig:breakdown}
\end{table}

\para{Takeaway} The rather large performance advantages of Styx across all experiments are enabled by the following three properties and design choices: $i)$ the co-location of processing and state with efficient networking as shown in \Cref{ch4:fig:breakdown}, contrary to the other systems that have to transfer the state to their function execution engines; $ii)$ the asynchronous snapshots with delta maps for fault tolerance compared to the replication of Beldi/Boki and the LSM-tree-based incremental snapshots of T-Statefun; $iii)$ the efficient transaction execution protocol employed in Styx compared to the two-phase commit used by Styx's competition.

\subsection{Scalability} 
\label{ch4:sec:exp_scalability}

In this experiment, we test the scalability of Styx by increasing the number of Styx workers. Each worker is assigned 1 CPU and a state of 1 million keys. We measure the maximum throughput on YCSB-T. The goal is to calculate the speedup of operations as the input throughput and number of workers scale together. In addition, we control the percentage of multi-partition transactions in the workload, i.e., transactions that span across workers. In \Cref{ch4:fig:scalability}, we observe that in all settings, Styx retains near-linear scalability. Finally, Styx displays the expected behavior as the number of multi-partition transactions increases.

\begin{figure}[t]
    \centering
    \captionsetup{justification=centering}
    \includegraphics[width=0.65\columnwidth]{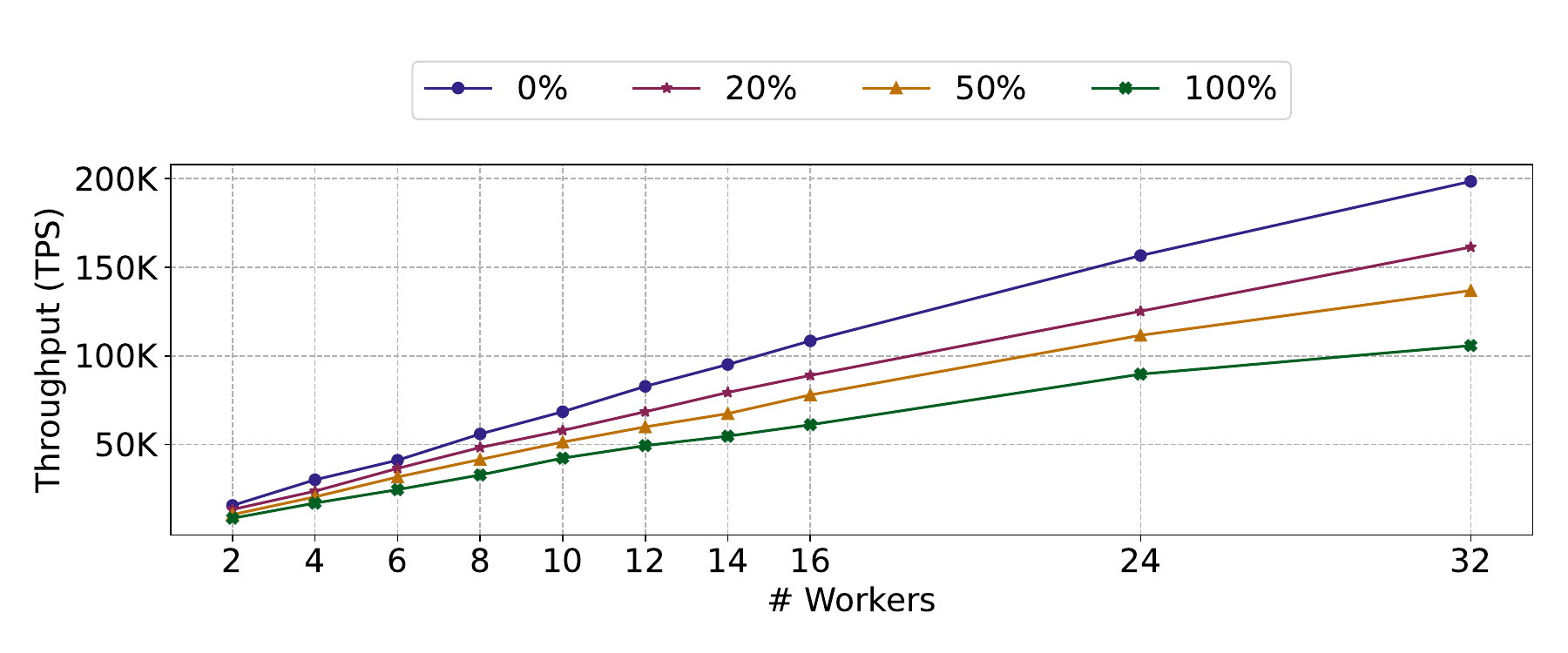}
    \caption{Scalability of Styx on YCSB-T with varying percentages of multi-partition transactions.}
    \label{ch4:fig:scalability}
\end{figure}

\subsection{Fault-Tolerance Evaluation}
\label{ch4:sec:exp:snapshots}

\para{Effect of Snapshots} In \Cref{ch4:fig:exp-snapshots}, we depict the impact of the asynchronous incremental snapshots on Styx's performance. In both figures, we mark when a snapshot starts and ends. The state includes 1 million keys, and we use a 1-second snapshot interval. Styx is deployed with four 1-CPU workers, and the input transaction arrival rate is fixed to 3K YCSB-T TPS. In \Cref{ch4:fig:exp-t-snap}, we observe that during a snapshot operation, Styx shows virtually no performance degradation in throughput. In \Cref{ch4:fig:exp-l-snap}, we observe a minor increase in the end-to-end latency in some snapshots. The reason for that is the concurrent snapshotting thread, which competes with the transaction execution thread during snapshotting. At the same time, it also has to block the transaction execution thread momentarily to copy the corresponding operator's state delta.

\begin{figure}[t]
     \centering
     \captionsetup{justification=centering}
     \begin{subfigure}[b]{0.49\columnwidth}
        \centering
        \captionsetup{justification=centering}
        \includegraphics[width=0.7\columnwidth]{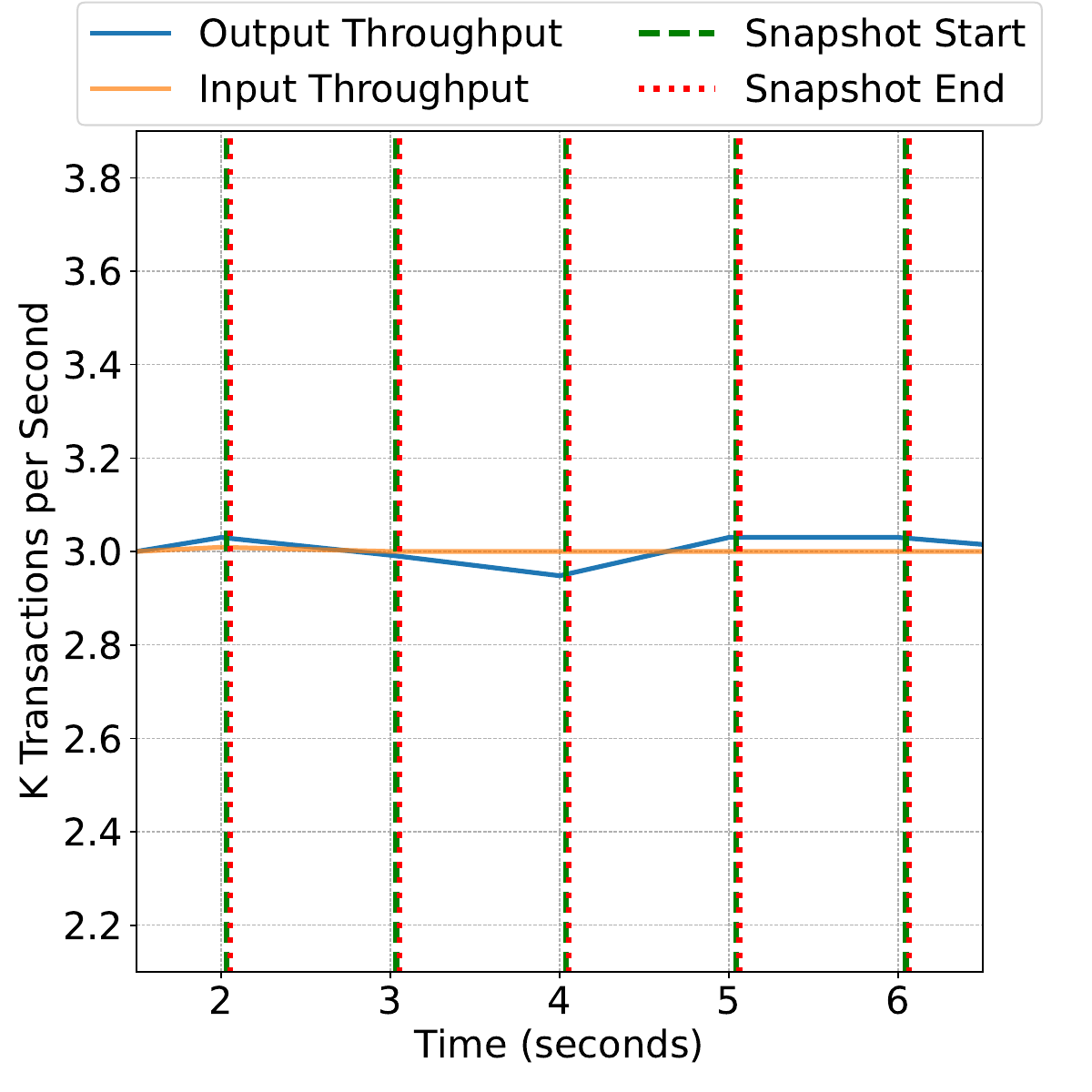}
            \caption{Throughput}
        \label{ch4:fig:exp-t-snap}
     \end{subfigure}
     \begin{subfigure}[b]{0.49\columnwidth}
        \centering
        \captionsetup{justification=centering}
        \includegraphics[width=0.7\columnwidth]{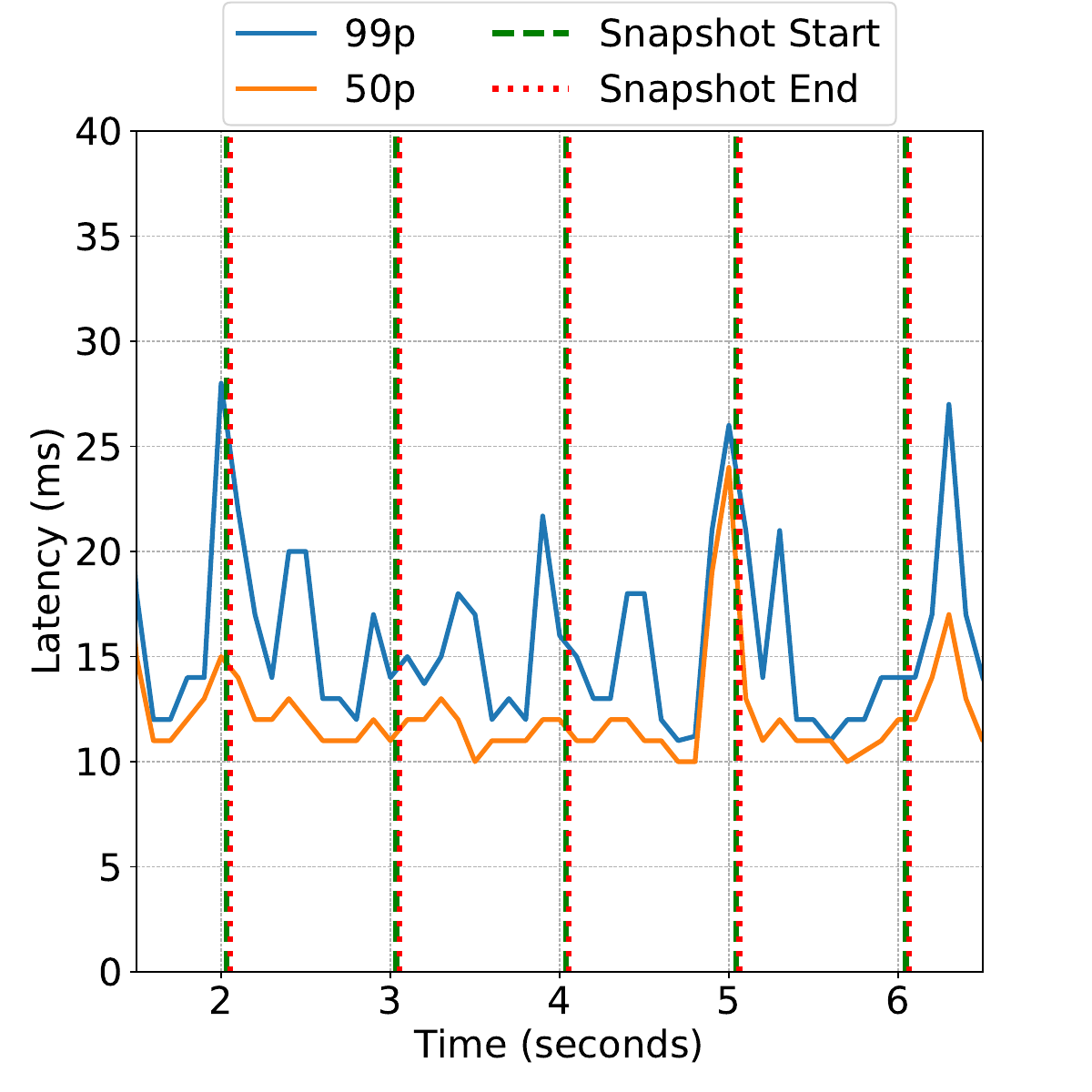}
            \caption{Latency}
        \label{ch4:fig:exp-l-snap}
     \end{subfigure}
        \caption{Impact of Styx's snapshotting on performance}
        \label{ch4:fig:exp-snapshots}
\end{figure}

\begin{figure}[t]
     \centering
     \captionsetup{justification=centering}
     \begin{subfigure}[b]{0.49\columnwidth}
        \centering
        \captionsetup{justification=centering}
        \includegraphics[width=0.7\columnwidth]{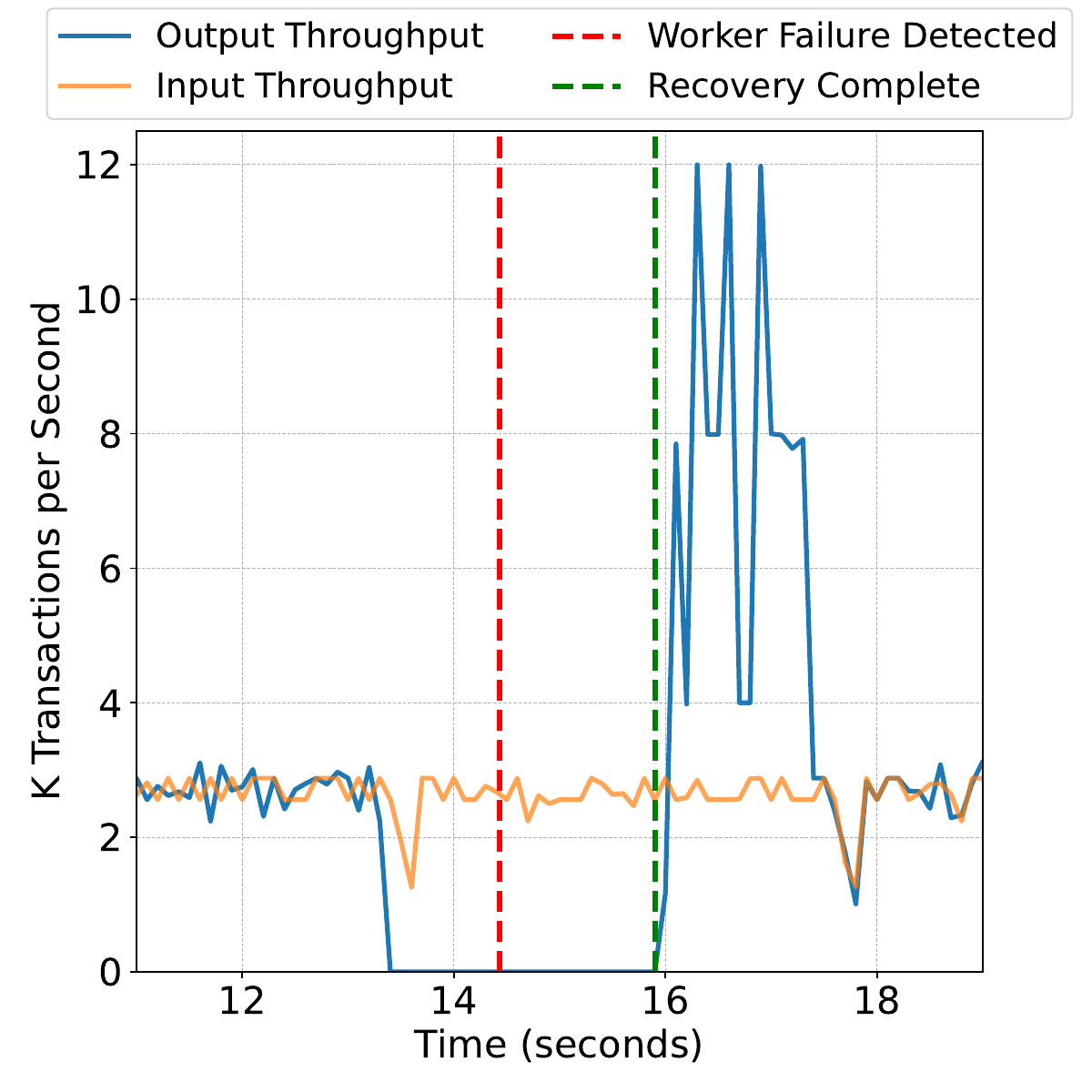}
            \caption{Throughput}
        \label{ch4:fig:exp-t-rec}
     \end{subfigure}
          \begin{subfigure}[b]{0.49\columnwidth}
        \centering
        \captionsetup{justification=centering}
        \includegraphics[width=0.7\columnwidth]{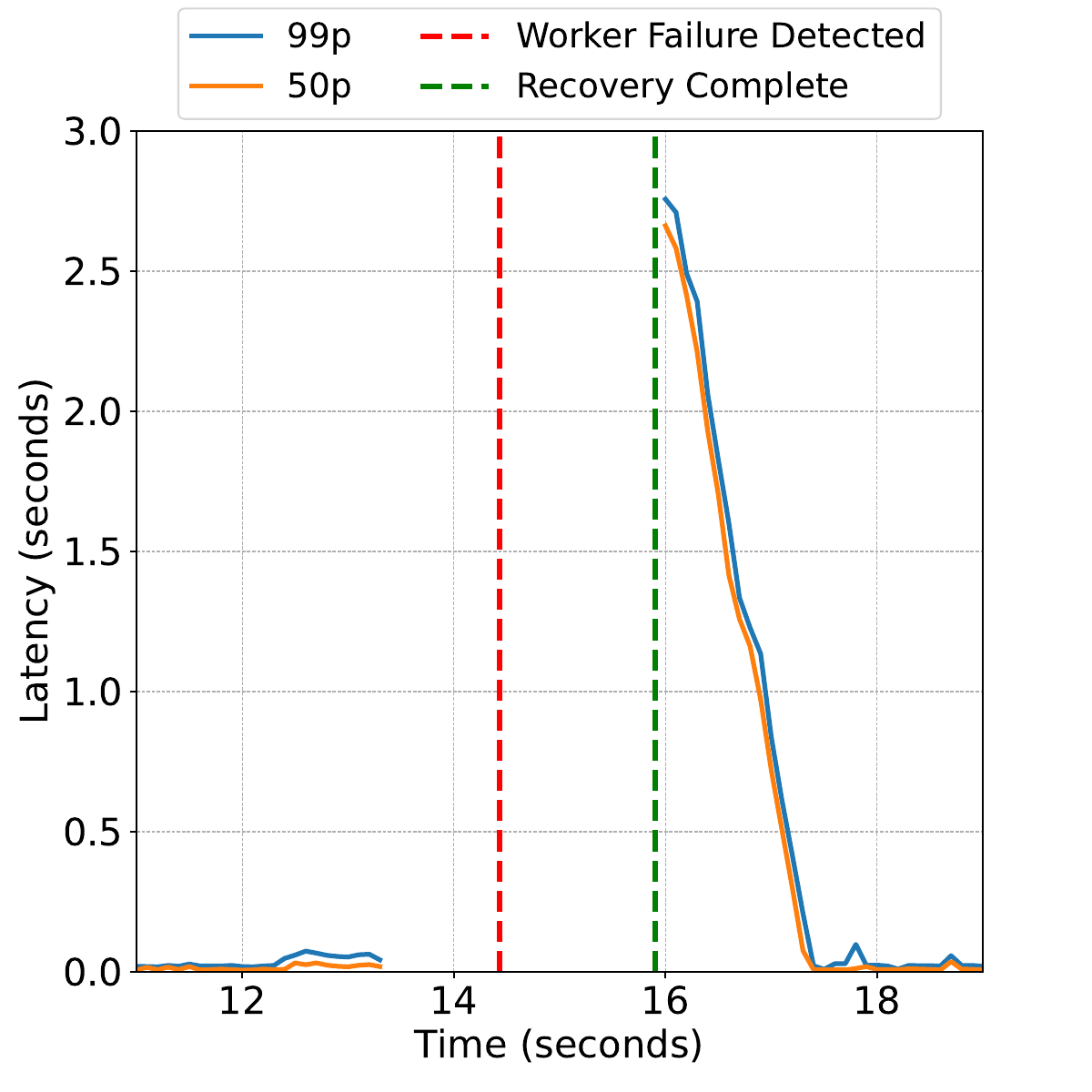}
            \caption{Latency}
        \label{ch4:fig:exp-l-rec}
     \end{subfigure}
        \caption{Styx's behavior during recovery.}
        \label{ch4:fig:exp-recovery}
\end{figure}

\para{Recovery Time} In \Cref{ch4:fig:exp-recovery}, we evaluate the recovery process of Styx with the same parameters as in \Cref{ch4:fig:exp-snapshots}. We reboot a Styx worker at \textasciitilde 13.5 seconds. It takes Styx's coordinator roughly a second to detect the failure. Then, after the reboot, the coordinator re-registers the worker and notifies all workers to load the last complete snapshot, merge any uncompacted deltas, and use the message broker offsets of that snapshot. The recovery time is also observed in the latency (\Cref{ch4:fig:exp-l-rec}) that is \textasciitilde 2.5 seconds (time to detect the failure in addition to the time to complete recovery). In terms of throughput (\Cref{ch4:fig:exp-t-rec}), we observe Styx working on its maximum throughput after recovery completes to keep up with the backlog and the input throughput.

\para{Effect of Large State Snapshots} In \Cref{ch4:fig:incr_snap}, we test the incremental snapshotting mechanism against a larger state of 20 GB from TPC-C using a bigger Styx deployment of 100 1-CPU workers at 10-second checkpoint intervals. From 0 to the 750-second mark, Styx is importing the dataset. Since there are no small deltas (importing is an append-only operation), snapshotting is more expensive than the normal workload execution, where only the deltas are stored in the snapshots. The increase in latency at \textasciitilde550 seconds corresponds to the loading of the largest tables (Stock and Order-Line) in the system. After loading the data and starting the transactional workload at 1000 TPS, we observe a drop in latency due to fewer state changes within the delta maps.

\begin{figure}[t]
    \centering
    \captionsetup{justification=centering}
    \includegraphics[width=0.65\columnwidth]{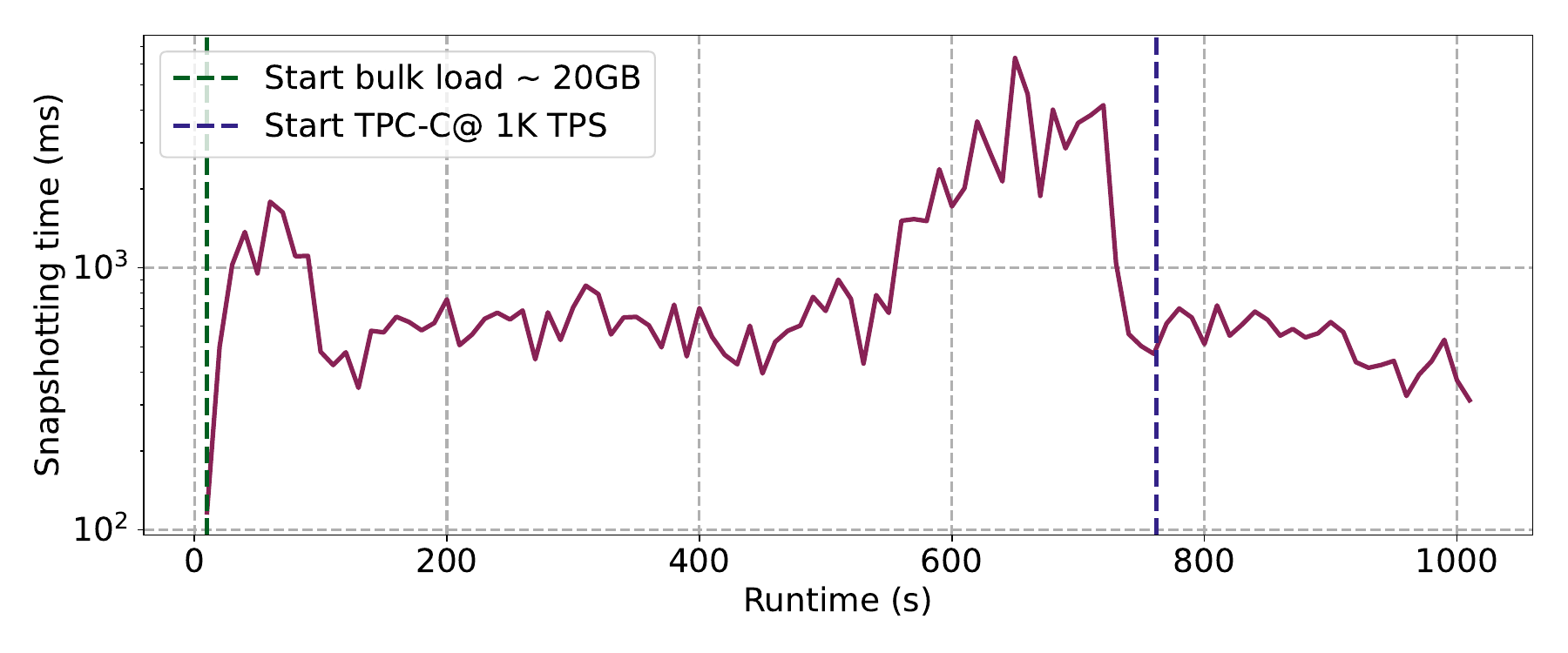}
    \caption{Behaviour of incremental snapshots on Styx with \textasciitilde20GB TPC-C state.}
    \label{ch4:fig:incr_snap}
\end{figure}

\section{Related Work} \label{ch4:sec:rel_work}
\para{Transactional SFaaS} SFaaS has received considerable research attention and open-source work. Transactional support with fault tolerance guarantees (that popularized DBMS systems) is necessary to widen the adoption of SFaaS. Existing systems fall into two categories: i) those that focus on transactional serializability and ii) those that provide eventual consistency. The first category includes Beldi~\cite{beldi}, Boki~\cite{boki}, and T-Statefun~\cite{tstatefun}. Beldi implements linked distributed atomic affinity logging on DynamoDB to guarantee serializable transactions among AWS Lambda functions with a variant of the two-phase commit protocol. Boki extends Beldi by adding transaction pipeline improvements regarding the locking mechanism and workflow re-execution. In turn, Halfmoon~\cite{qi2023halfmoon} extends Boki with an optimal logging implementation. T-Statefun~\cite{tstatefun} also uses two-phase commit with coordinator functions to support serializability on top of Apache Flink's Statefun. For eventually consistent transactions, T-Statefun implements the Sagas pattern. Cloudburst~\cite{cloudburst} also provides causal consistency guarantees within a DAG workflow. Proposed more recently, Netherite\cite{netherite}  offers exactly-once guarantees and a high-level programming model for Microsoft's Durable Functions\cite{durable_functions}, but it does not guarantee transactional serializability across functions. Unum \cite{LiuLNB23} needs to be paired with Beldi or Boki to ensure end-to-end exactly-once and transactional guarantees.

\para{Dataflow Systems} Support for fault-tolerant execution in the cloud with exactly-once guarantees~\cite{fernandez2014making, carbone2017state} is one of the main drivers behind the wide adoption of modern dataflow systems. However, they lack a general and developer-friendly programming model with support for transactions and a natural way to program function-to-function calls. Closer to the spirit of Styx are Ciel~\cite{murray2011ciel} and Noria~\cite{noria}. Ciel proposes a language and runtime for distributed fault-tolerant computations that can execute control flow. Noria solves the view maintenance problem via a dataflow architecture that can propagate updates to clients quickly, targeting web-based, read-heavy computations. However, neither of the two provides a transactional model for workflows of functions like Styx.

\para{Transactional Protocols} Besides Aria~\cite{aria} that inspired the protocol we created for Styx \Cref{ch4:sec:dataflow-system}, two other protocols fit the requirement of no a priori read/write set knowledge: Starry~\cite{zhang2022starry} and Lotus~\cite{zhou2022lotus}. Starry targets replicated databases with a semi-leader protocol for multi-master transaction processing. At the same time, Lotus~\cite{zhou2022lotus} focuses on improving the performance of multi-partition workloads using a new methodology called run-to-completion-single-thread (RCST). Styx makes orthogonal contributions to these works and could adopt multiple ideas from them in the future.

\section{Future Work}

\para{Elasticity in Dataflow Systems} Extensive work has been carried out in dynamic reconfiguration \cite{noria, ds2, dhalion} and state migration \cite{megaphone, meces, rhino} of streaming dataflow systems over the last few years. These advancements are necessary for providing serverless elasticity in the case of state and compute collocation to leverage dataflows as an execution model for serverless stateful cloud applications, which is a future goal of Styx.

\para{Replication for High Availability} 
In the Styx architecture, replication is only applied in the snapshot store and the Input/Output queues to ensure fault tolerance. For high-availability, Styx could adopt replication mechanisms from deterministic databases. Specifically, the design of deterministic transaction protocols, such as Calvin \cite{calvin}, features state replicas that require no explicit synchronization. First, the sequencer replicas need to agree on the order of execution. After that, the deterministic sequencing algorithm guarantees that the resulting state will be the same across partition/worker replicas by all replicas executing state updates in the same order.

\para{Non-Deterministic Functions on Streaming Dataflows} In its current version, Styx requires application logic to be deterministic, similar to OLTP~\cite{stonebraker2013voltdb, kallman2008h}, where stored procedures are required to be deterministic since they run independently on different replicas. The same determinism requirement applies to SFaaS \cite{tstatefun, boki} systems. However, real-world applications may encapsulate logic that makes the outcome of their execution non-deterministic. Examples of non-deterministic operations are calls to external systems and using random number generators or time-related activities. That said, we have a plan for supporting non-deterministic functions in Styx, as discussed in \Cref{ch4:sec:addr_non_det}.

\section{Conclusion} \label{ch4:sec:conclusion}
This paper presented Styx, a distributed streaming dataflow system that supports multi-partition transactions with serializable isolation guarantees through a high-level, standard Python programming model that obviates transaction failure management, such as retries and rollbacks. Styx follows the deterministic database paradigm while implementing a streaming dataflow execution model with exactly-once processing guarantees. Styx outperforms the state-of-the-art by at least one order of magnitude in all tested workloads regarding throughput.

\section{Styx in Action}

In this chapter, we introduced Styx~\cite{styx}, a dataflow-based runtime designed for transactional cloud applications built on the aforementioned principles. Styx ensures that each transaction's state mutations are reflected in the system's state exactly-once, even under failures, retries, or other potential disruptions. Additionally, it supports arbitrary function orchestrations with end-to-end serializability by leveraging a deterministic database protocol, eliminating the need for expensive two-phase commits. Our approach is inspired by two key observations~\cite{styxcidr}. First, modern streaming dataflow systems such as Apache Flink~\cite{flink} guarantee exactly-once processing by transparently handling failures. However, these systems lack the capability to execute general cloud applications and do not support transactional function orchestrations. Second, efficient transaction execution on top of dataflow systems can be enabled through deterministic database protocols like Calvin~\cite{calvin} or Aria~\cite{aria} without the overhead of two-phase commits. Styx bridges this gap by integrating a deterministic transactional protocol that allows early commit replies to clients, improving responsiveness while maintaining consistency.

\para{Demonstration Scenarios} To illustrate the capabilities of Styx, in our demonstration, we focus on three scenarios. \textbf{(1)}~We demonstrate the developer experience by showing how application logic can be free of transaction management and failure handling code. To this end, we have integrated a compiler for transforming object-oriented programs into dataflows optimized for our runtime~\cite{stateflow}. \textbf{(2)}~We highlight the system's deployment and rescaling capabilities, demonstrating how these processes can be performed with minimal overhead. \textbf{(3)}~We showcase how Styx seamlessly recovers from worker failures without affecting application performance. Additionally, the Styx UI provides live system metrics, offering attendees real-time visibility into system operations.

\begin{figure*}[t]
    \centering
    \captionsetup{justification=centering}
    \begin{subfigure}{0.48\textwidth}
        \centering
        \includegraphics[width=\linewidth]{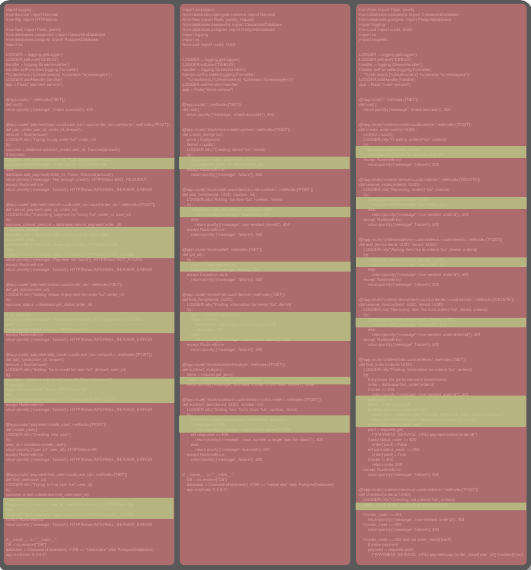}
        \caption{Microservice implementation using the saga pattern (Red: code to ensure atomicity and fault tolerance, Green: business logic).}
        \label{ch4:fig:demo:easy_use-a}
    \end{subfigure}
    \hfill
    \begin{subfigure}{0.48\textwidth}
        \centering
        \captionsetup{justification=centering}
        \begin{lstlisting}[style=pythonlang,basicstyle=\tiny]
from styx import Operator, StatefulFunction
from shopping_cart.operators import stock, payment, cart
from shopping_cart.exceptions import NotEnoughCredit, NotEnoughStock

@stock.register
async def decrement_stock(ctx: StatefulFunction, amount: int):
    item_stock = ctx.get()
    item_stock -= amount
    if item_stock < 0:
        raise NotEnoughStock(f"Item: {ctx.key} does not have enough stock")
    ctx.put(item_stock)

@payment.register
def pay(ctx: StatefulFunction, amount: int):
    credit = ctx.get()
    credit -= amount
    if credit < 0:
        raise NotEnoughCredit(f"User: {ctx.key} does not have enough credit")
    ctx.put(credit)

@cart.register
def checkout(ctx: StatefulFunction):
    items, user_id, total_price, paid = ctx.get()
    for item_id, qty in items:
        ctx.call_async(operator=stock,
                       function_name='decrement_stock',
                       key=item_id,
                       params=(qty, ))
    ctx.call_async(operator=payment,
                   function_name='pay',
                   key=user_id,
                   params=(total_price, ))
    paid = True
    ctx.put((items, user_id, total_price, paid))
    return "Checkout Successful" \end{lstlisting}
    
        \caption{Checkout workflow in Styx.}
        \label{ch4:fig:demo:easy_use-b}
    \end{subfigure}
    \caption{Comparison between the microservice paradigm (\cref{ch4:fig:demo:easy_use-a}) and Styx (\cref{ch4:fig:demo:easy_use-b}).}
    \label{ch4:fig:demo:easy_use}
    \vspace{-2mm}
\end{figure*}

\subsection{Demonstration Overview}


\begin{figure*}[t]
    \centering
    \captionsetup{justification=centering}
    \includegraphics[width=\textwidth]{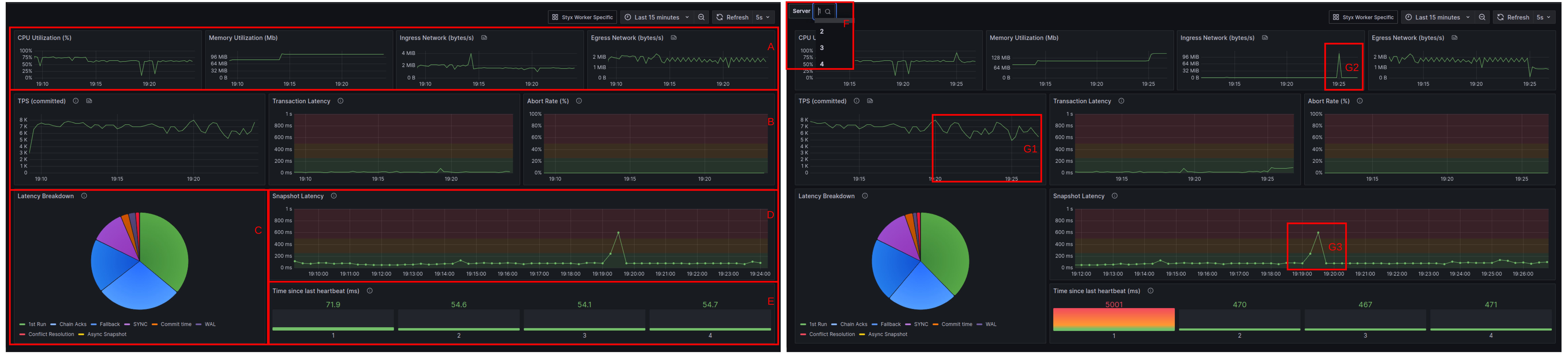}
    \caption{Styx monitoring dashboards.}
    \label{ch4:fig:demo:dashboard}
    \vspace{-3mm}
\end{figure*}

\subsubsection*{Scenario 1: Application Development}

Figure~\ref{ch4:fig:demo:easy_use} showcases the difference in developing a simplified shopping cart application with three services (stock, payment, cart) between a traditional microservice implementation and Styx. Styx eliminates the boilerplate code needed to ensure ACID guarantees in the microservice implementation and allows developers to focus solely on the core application logic. Beyond accelerating development, Styx enhances maintainability and reduces the likelihood of bugs by providing serializable transactions as a service. This results in cleaner, more reliable code, ultimately achieving long-term software quality. Attendees can change parts of the application code and submit applications to the Styx runtime.

\subsubsection*{Scenario 2: Deployment and Rescaling}

Styx is designed for seamless deployment and rescaling. To facilitate real-time monitoring, we have developed two dashboards. In the depicted scenario, we execute the YCSB-T workload across four Styx workers at 10.000 transactions per second (TPS) following a uniform distribution within one million keys.

\para{Part 1: System Overview} Attendees will be able to assess system performance across all Styx workers. The \textit{System Overview} dashboard~\Cref{ch4:fig:demo:dashboard} provides a high-level summary of key system metrics:
\begin{itemize}
    \item \textbf{Resource Metrics (A)}: Displays the average CPU and memory utilization, as well as ingress and egress network traffic across Styx workers.
    \item \textbf{Performance Metrics (B)}: Visualizes transaction throughput per second, average transaction latency, and abort rate. Under normal conditions, epoch latency (Styx uses a deterministic epoch-based commit protocol) for the YCSB-T workload remains below 250 ms (green-shaded region). At the same time, the abort rate fluctuates based on the level of contention from 0\% (no contention) to 100\% (all transactions within an epoch contain the same key at least once).
    \item \textbf{Latency Breakdown (C)}: A pie chart categorizes transaction latency into distinct components. Typically, the primary contributors to latency are the first optimistic transaction execution (1st Run) with the call-graph discovery (Chain Acks), the lock-based fallback commit mechanism (Fallback), and others like cross-worker synchronization, write-ahead-logging (WAL), conflict resolution + commit, and the asynchronous snapshots.
    \item \textbf{Snapshot Latency (D)}: Time taken for a complete delta snapshot throughout the deployment.
    \item \textbf{Worker Health (E)}: The final panel tracks time since the last heartbeat. If this value remains below 1000 ms (green-shaded), it confirms that all workers are healthy and operational.
    \item \textbf{Reconfiguration (G1-3)}:  At "19:19:30", we downscale the deployment from four partitions to three, and we observe an increase in snapshot latency and ingress network (data transferred across workers through S3), and finally a decrease in TPS since we decreased the parallelism.
\end{itemize}

The demonstration attendees will be able to change the skew factors of the supported workloads and perform updates to observe changes in the performance (latency, throughput) of Styx applications in real time.

\para{Part 2: Worker Specific} With a global view of the system's health in mind, attendees can drill down into the performance of individual workers using the \textit{Worker Specific} dashboard. This dashboard mirrors the system overview but focuses on a selected Styx worker. A~\textbf{drop-down menu (F)} allows attendees to choose a specific worker, enabling direct comparison with the overall system metrics. By analyzing the worker-specific metrics, attendees can quickly pinpoint anomalies. If a single worker exhibits significantly higher transactional latency or reduced throughput compared to the others, it may indicate an overloaded or unhealthy state.

\subsubsection*{Scenario 3: Fault Tolerance}

The final scenario showcases Styx’s ability to handle failures efficiently. During the demonstration, attendees can manually terminate a Styx worker to observe how the system detects the failure and triggers its recovery process.

\para{Part 1: System Overview} The system overview dashboard provides a real-time visual indicator of worker failures. When a Styx worker stops responding, the \textit{Time Since Last Heartbeat} metric (panel E) spikes, signaling the loss of communication. This event is accompanied by a sharp increase in transactional latency and a temporary dip in throughput until the system fully recovers. Once the operators assigned to the dead worker are rescheduled to a new or existing worker, Styx begins handling the delayed transactions, and these metrics gradually return to their normal ranges.

\para{Part 2: Worker Specific} Using the worker-specific dashboard, attendees can further investigate the failed worker’s behavior. The impact of the failure is more pronounced here. Transaction \textit{latency} and throughput fluctuations become more drastic, and for a brief period, the failed worker will stop reporting metrics entirely. Once the recovery process is completed, these values stabilize, confirming that the system has successfully recovered. This scenario demonstrates Styx’s resilience and self-healing mechanisms, ensuring system reliability even in the event of failures.
\chapter{State Migration in Styx: Towards Serverless Transactional Functions}
\label{chapter5}

\vfill

\begin{abstract}

 \Cref{chapter4} laid the foundation for Styx as a transactional dataflow engine for Serverless Functions-as-a-Service (SFaaS), achieving strong guarantees and performance with minimal developer effort. However, the vision of serverless is not fulfilled by programmability and transactionality alone. True serverless systems must also offer operational transparency: the ability to scale up and down dynamically, adapt to varying loads, and reassign resources autonomously, while preserving fault tolerance and correctness. To achieve this, Styx must evolve beyond its current static deployment model and embrace elastic state management. This transition demands a robust state migration mechanism that preserves Styx’s transactional semantics and exactly-once guarantees even under dynamic reconfiguration. This chapter takes on this challenge by introducing the design and implementation of state migration in Styx, marking a critical step toward fully serverless transactional functions.

\end{abstract}

\vfill

\blfootnote{Parts of this chapter are under review:\\ \faFileTextO~\hangindent=15pt\emph{K. Psarakis, G. Christodoulou, G. Siachamis, M. Fragkoulis, and A. Katsifodimos. State Migration in Styx: Towards Serverless Transactional Functions (Under Review)}.}

\newpage

\dropcap{D}{}evelopment-wise, Styx internals are already transparent to the developer, making it, in that sense, `serverless' since they do not require any transactional or fault-tolerance code to be written by the developer. The next challenge is to make its operational aspects, such as resource utilization, transparent. The first step toward this is implementing a state migration mechanism. In the meantime, Styx continues to evolve towards a serverless system. In this chapter, we propose an extension to Styx, which adds state migration capabilities. For Styx, which collocates state and processing, state migration is the cornerstone of its elasticity mechanism that enables a serverless offering. Now, Styx's state can be assigned to and moved between workers at key-set granularity. Key movement can occur in two ways: either on demand for transactions that need direct access to keys or asynchronously for non-accessed keys. 

\vspace{2mm}

\noindent This chapter makes the following contributions:

\noindent \textbf{--} Styx is elastic and can migrate state with near-zero downtime while maintaining high-throughput and low-latency (Section~\ref{ch5:sec:styx-state-migration}).

\noindent \textbf{--} Styx's tailored state migration approach outperforms the stop and restart baseline in scale-up and scale-down scenarios by having 4x less downtime and keeps the transactions to sub-second latencies mid-migration.(Section~\ref{ch5:sec:exp_state_migration}).

\section{State Migration in Styx}\label{ch5:sec:styx-state-migration}

 Implementing transactions on top of dataflows, the architectural core of stream processing engines (SPEs), adds additional challenges to state migration support in Styx. Methods that strictly target SPEs for state migration ~\cite{rhino,meces,megaphone} do not apply to our use case since they do not manage transactional semantics. On the other hand, state migration methods for transactional databases~\cite{squall,clay,abebe2020morphosys} are tightly coupled to the traditional OLTP database architecture. They cannot be directly applied to a dataflow engine such as Styx. Therefore, Styx requires a new tailor-made approach that maintains transactionality and adapts well to its exactly-once execution and snapshotting mechanisms.

The most straightforward approach for migrating state in any system is Stop-and-Restart (S\&R), where the system will stop processing incoming requests, shuffle the data to their new assignments, and restart processing. More sophisticated approaches often adopt some of the following mechanisms to migrate state: $i)$ maintain state replicas across workers to minimize the amount of data in need of migration~\cite{megaphone,rhino,abebe2020morphosys}, $ii)$ on-demand migration that only sends the data once a worker requires them and~\cite{meces,squall}, $iii)$ async migration to transfer data during idle time to progress a migration asynchronously~\cite{squall,abebe2020morphosys}.

In Styx, we implement two state migration approaches: a version of S\&R tailored to Styx that serves as our baseline method and an approach denoted as Online Migration (OM), which combines elements of migration approaches $(i)$ and $(ii)$, matching the current state-of-the-art.

In this section, we will first present an overview of state migration in Styx, specify the migration stages (triggering, handling, and resumption of processing), and the changes we made to Styx to support them. Then, we will elaborate on the S\&R and OM methods and discuss how we maintain fault-tolerance, determinism, end-to-end exactly-once, and serializability during state migration. In essence, in this work, we aim to extend Styx by adding elasticity, which is the first and most crucial step for Styx to be serverless.

\subsection*{Overview of Changes for State Migration}\label{ch5:sec:migration:overview}

To support transactional state migration, we extend Styx's base core runtime. This section outlines the architectural modifications required to enable this capability. We assume that migration is initiated by an external client or Styx's coordinator based on metrics such as load imbalance or resource utilization. The migration triggering policy, including autoscaling heuristics and monitoring, is considered orthogonal and left as future work. To enable correct and efficient state movement, Styx introduces the following system-level changes:

\para{Partition-level State and Metadata} To make state movement more flexible, operator state and Kafka offsets are tracked at a finer granularity, specifically, on a per-partition basis. Without considering state movement, offsets, and state were maintained per operator, which made it impossible to distinguish between multiple partitions of the same operator on a single worker. The finer-grained tracking enables selective state migration without relying on hashing to determine a key’s partition or requiring complete operator-level checkpoints.

\para{Shadow Partitions} Message arrival from Kafka is not guaranteed to be aligned with Styx’s internal reconfiguration. To handle this potential misalignment, Styx keeps shadow partitions temporarily, forwarding out-of-partition transactions to the correct partition.  In that way, Styx ensures correctness without requiring global input suspension. This issue is particularly evident when down-scaling, where partitions are removed. For instance, when a client issues a down-scale migration action, triggering repartition, other clients can potentially keep sending transactions to Styx based on the previous partitioning scheme. This leads to transactions arriving in an outdated partition that no longer exists. Keeping old partitions as shadow partitions, responsible only for forwarding such transactions without any state mutation responsibilities, is essential for Styx to preserve exactly-once processing guarantees.

\para{Global Offset Restoration} A rerouting mechanism is required, not only during downscaling involving the shadow partitions but as a part of the general migration solution. Its role is twofold: $i)$ to detect incoming transactions from the input queue, routed to an outdated partition due to client partitioning misalignment, and $ii)$ reroute them to the correct partition. Following our previous example, a transaction can be routed to an outdated partition until all of Styx's clients update their routing table and align with the new partitioning scheme. To maintain exactly-once processing and output, it is essential to restore the input/output queue offsets, as they might get updated in at most two places (previous and new partitioning). This ensures that no records are skipped or reprocessed during migration in case of failure.

\para{Blocking Actions Minimization} To ensure low latency even in the presence of large data transfers during migration, we had to make the two following adjustments to Styx $i)$ add compression to large messages and $ii)$ streaming asynchronous snapshots. First, Styx enforces compression using the Zstandard compression algorithm~\cite{zstd} for messages larger than a configurable size (by default set to 1MB). Second, regarding the snapshot mechanism in Styx, Styx now spawns a background thread that receives state deltas in a streaming fashion to prevent blocking if the delta becomes large (i.e., under heavy load, the involved subset of keys is significantly large). In the previous version, Styx would accumulate the entire delta and then send it to the background thread, leading to significant latency spikes that are now resolved.

\para{Composite Key Partitioning} In its current version, Styx also supports composite key partitioning to enhance data locality. Keys can be grouped by logical attributes (e.g., \textit{warehouse\_id} in TPC-C that is a prefix in all table primary keys other than the \textit{item}), allowing transactions to access colocated partitions. During migration, Styx leverages this structure to colocate groups of related keys to the same worker. This optimization reduces cross-worker communication and improves transaction commit latency.

\vspace{2mm}

\noindent These design extensions allow Styx to support both synchronous (stop-and-restart) and asynchronous (online) migration strategies without violating transactional or fault tolerance guarantees while maintaining low latency during reconfiguration.

\begin{algorithm}[t]
\footnotesize
\DontPrintSemicolon
\SetAlgoLined
\SetKwInOut{Input}{Input}\SetKwInOut{Output}{Output}
\SetKwComment{comm}{\hfill$\triangleright$\ }{}
\Input{$P_{w}$: Current partitions assigned to a worker, $H^{new}_{w_i}$: New hash function per partition, $K$: Keys, $w_{id}$: Worker id, $O_{in}$: Input offset, $O_{out}$: Output offset, $E_{count}$: Epoch count, $SEQ_{count}$: Sequence count}
\Output{$G$: new Dataflow Graph}
\BlankLine
\ForEach{$w_i \in workers$}{
    $P^{new} \leftarrow \emptyset$ \comm*[r]{Snapshotted partitions}
    \ForEach{$P_{w_i}$ $\in$ $P_w$}{

        \ForEach{$(key, value) \in P_{w_i}$}{
            $new_p  \leftarrow H^{new}_{w_i}(key)$\;
            $P^{new}_{new_p} \leftarrow$ $P^{new}_{new_p} \cup (key, value)$\;
        }
    }
    Store $P^{new}$ to persistent storage\;
    $meta_{new} \leftarrow \{O_{in}, O_{out}, E_{count}, SEQ_{count} \}$\;
    Send $meta_{new}$ to Coordinator\;
}
$G \leftarrow$ \Call{NewDataflowGraph}{()}

\ForEach{$w_i \in workers$}{
    $subG_{w_i} \leftarrow$ \Call{AssignSubgraph}{$(G, w_i)$}\;
    Receive $meta_{new}$ from Coordinator\;
    $O_{in}, O_{out}, E_{count}, SEQ_{count} \leftarrow meta_{new}$
}
\ForEach{$w_i \in workers$}{
    \ForEach{$P_i^{new} \in P^{new}$}{
        $P_i \leftarrow P_i^{new}$ \comm*[r]{Restore partitions}
    }
}
\caption{Stop and Restart}
\label{ch5:algo:stopandrestart}
\end{algorithm}

\section{State Migration Methods}

\subsection{Stop \&\ Restart}\label{ch5:sec:migration:snr}

Stop-and-Restart (S\&R) is the most straightforward migration strategy that we could implement on top of Styx. It suspends execution, performs state migration, and resumes computation with updated routing. Styx implements an optimal variant of S\&R tailored to its transactional runtime. In Algorithm~\ref{ch5:algo:stopandrestart}, we detail S\&R where, at first, for each worker ($w_i$) and all the partitions assigned to them ($P_{w_i}$) Styx rehashes all the keys of that partition based on the new partitioning using its hash function ($H^{new}_{w_i}$) and adds them alongside their values to the new partitions ($P_{new}$). Once a worker finishes the hashing step, the new partitions are stored as a snapshot of the persistent storage. Then, each worker sends their metadata (input/output offsets, sequencer count, and epoch count) to the coordinator, concluding the `Stop' step. To `Restart' Styx based on the new partitioning, each worker is assigned its part from the graph and receives the updated metadata from the coordinator. Finally, the worker loads the new partitions from the previously rehashed snapshot stored in persistent storage. 

While simple, robust, and independent from the transactional protocol,  S\&R incurs downtime due to the rehashing, storing, and loading of the data. This violates availability, making it more suitable for planned migration settings than Styx's serverless requirements.

\begin{figure*}[t]
    \centering
    \includegraphics[width=1\columnwidth]{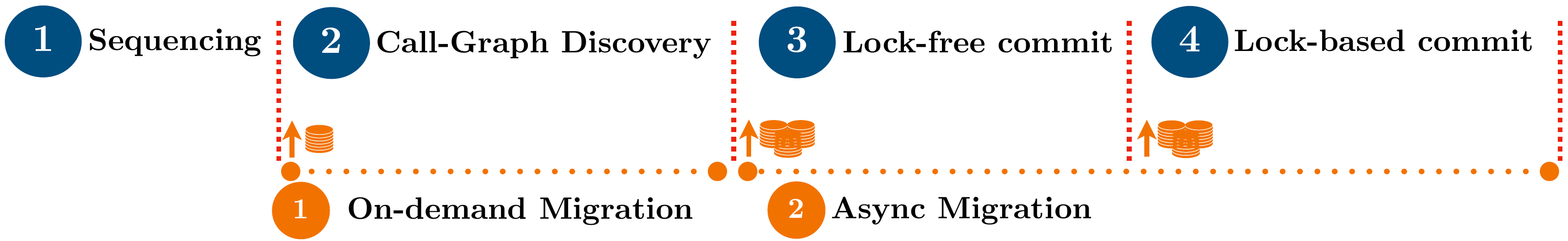}
    \vspace{-3mm}
        \caption{Alignment of epoch phases with online state migration. In phase \protect\circleb{2} transactions determine their call-graphs, and therefore, the associated keys need to be migrated immediately. Contrarily, in \protect\circleb{3} and \protect\circleb{4} keys can be migrated asynchronously without interfering with the transactional protocol.}
    \label{ch5:fig:migration_sync}
\end{figure*}

\begin{algorithm}[t]
\footnotesize
\DontPrintSemicolon
\SetAlgoLined
\SetKwInOut{Input}{Input}\SetKwInOut{Output}{Output}
\SetKwComment{comm}{\hfill$\triangleright$\ }{}

\Input{$P_{w}$: Current partitions assigned to a worker, $H^{new}_{i}$: New hash function per partition, $K$: Keys, $w_{id}$: Worker id, $O_{in}$: Input offset, $O_{out}$: Output offset, $E_{count}$: Epoch count, $SEQ_{count}$: Sequence count}
\Output{$G$: new Dataflow Graph}

\ForEach{$w_i \in workers$}{
    \ForEach{$P_i$ $\in$ $P_w$}{

        \ForEach{$key \in P_i$}{
            $new_p  \leftarrow H^{new}_{i}(key)$\;
            $P^{new}_{new_p} \leftarrow$ $P^{new}_{new_p} \cup (key)$\;
        }
    }

    $meta_{cur} \leftarrow \{O_{in}, O_{out}, E_{count}, SEQ_{count}, P^{new}_{i} \}$\;
    Send $meta_{cur}$ to Coordinator\;

}

$G \leftarrow$ \Call{NewDataflowGraph}{()}

\ForEach{$w_i \in workers$}{
    Receive $meta_{new}$ from Coordinator\;
    $O_{in}, O_{out}, E_{count}, SEQ_{count}, P^{new}_{i} \leftarrow meta_{new}$\;
    $subG_{w_i} \leftarrow$ \Call{AssignSubgraph}{$(G, w_i)$} 
}

\caption{Online Migration}
\label{ch5:algo:onlinemigration}
\end{algorithm}

\subsection{Online}\label{ch5:sec:migration:squall}

To support online, near-zero-downtime migration, Styx introduces an online method adapted to its transactional model. In contrast to S\&R, the online method performs migration in an on-demand and asynchronous manner, allowing the system to remain available throughout. Styx leverages its transactional epoch protocol to piggyback migration steps on normal processing cycles. As shown in Figure~\ref{ch5:fig:migration_sync}, keys accessed during Phase~\circleb{2} are migrated synchronously and on-demand to ensure consistency. Other keys, if idle, are migrated asynchronously during subsequent phases using worker idle time and batching mechanisms in phases \circleb{3} and \circleb{4}. This asynchronous migration is essential for the migration to complete, which is necessary for the fault tolerance mechanism to be reactivated  (snapshots are switched off mid-migration as per the SotA approaches to maintain consistent snapshots). 

In Algorithm~\ref{ch5:algo:onlinemigration}, we detail the Online Migration method where, similar to S\&R, for each worker ($w_i$) and all the partitions assigned to them ($P_{w_i}$) Styx rehashes all the keys of that partition based on the new partitioning using its hash function ($H^{new}_i$). The core difference is that it does not add the values and creates a routing table of where the keys are located and their destination ($P^{new}$). Once a worker finishes the hashing step, it
sends this information alongside its metadata (input/output offsets, sequencer count, and epoch count) to the coordinator. Finally, the worker loads the new routing tables and metadata and performs migration alongside the transactional protocol using the on-demand (\circlem{1}) and asynchronous (\circlem{2}) mechanisms as displayed in Figures~\ref{ch5:fig:migration_sync} and~\ref{ch5:fig:migration_rehash}. In Figure~\ref{ch5:fig:migration_rehash}, we display how a down-scaling action with repartitioning is performed in Styx while going from $4$ to $3$ workers and visualize the on-demand and asynchronous migration.

\begin{figure}[t]
    \centering
    \includegraphics[width=0.4\columnwidth]{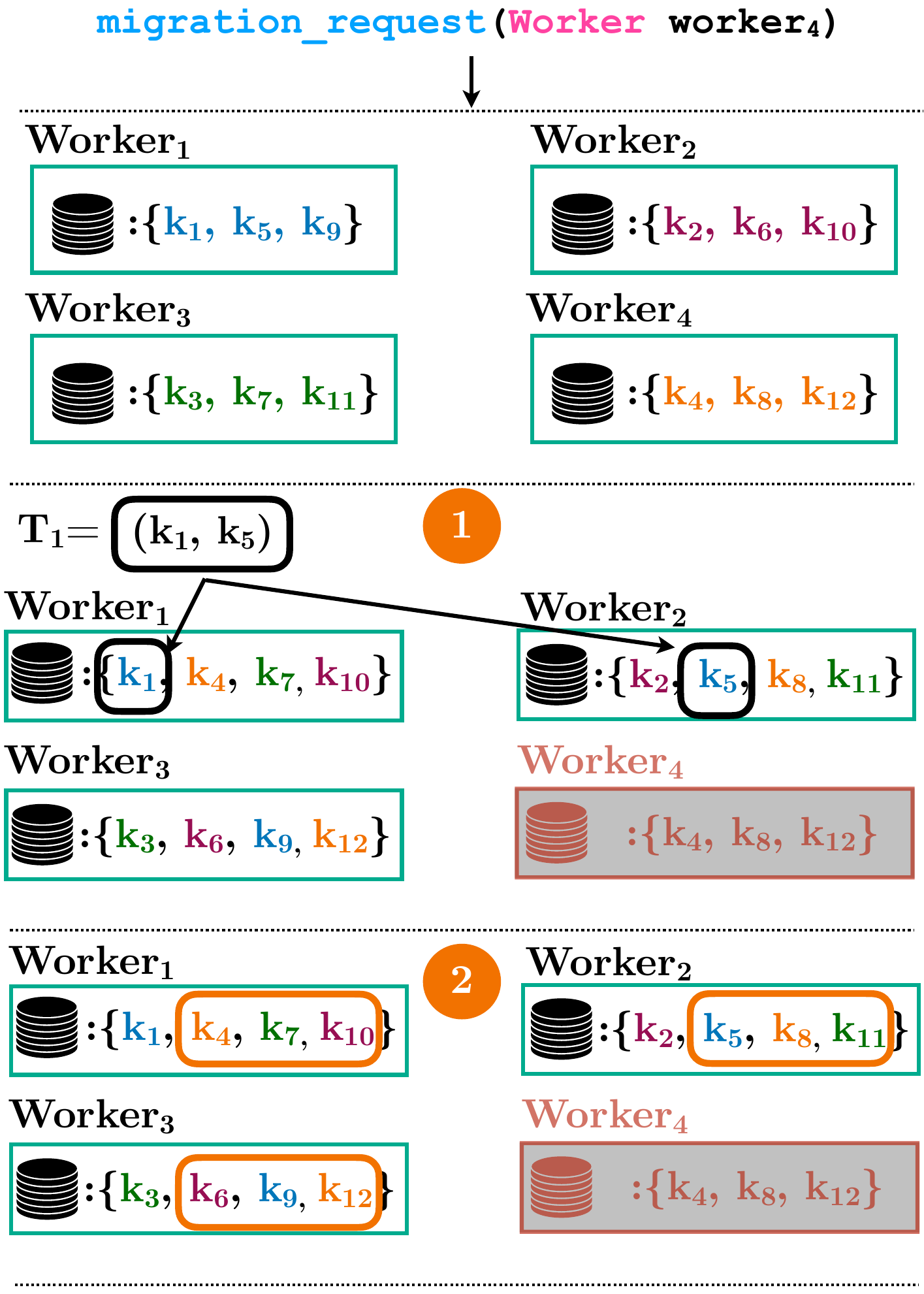}
    \vspace{-3mm}
        \caption{State distribution before and after migration request, which removes $worker_4$. After migration, we see the example of a transaction ($T_1 = \bigl\{k_1, k_5\bigr\}$). $T_1$ interacting with keys $k_1$ and $k_5$, meaning that while $k_1$ is already in-place, $k_5$ needs to be migrated on-demand as migration phase \protect\circlem{1} suggests in Figure~\ref{ch5:fig:migration_sync}. The rest of the keys, assuming that there is no other transaction interacting with them, can be migrated asynchronously in the migration phase \protect\circlem{2}.}
    \label{ch5:fig:migration_rehash}
\end{figure}

\section{Maintaining Guarantees}\label{ch5:sec:migration:correctness}

Both state migration approaches need to preserve Styx correctness guarantees, namely: $i)$ exactly-once processing, $ii)$ determinism, $iii)$ serializable transactional guarantees and, $iv)$ exactly-once output. For the S\&R method, maintaining correctness is straightforward since it stops execution, shuffles the data, and restarts. It does not affect any of the already-in-place mechanisms of Styx detailed in Section~\ref{ch4:sec:fault-tolerance}. Thus, in this subsection, we will primarily explain how the Online Migration method operates while preserving Styx's correctness guarantees.

Styx maintains deterministic execution and guarantees serializability throughout online migration. When a transaction requires access to a key located on another worker (triggering On-Demand Migration), the worker blocks execution until that key is received. This procedure is safe, as Styx's single-process coroutine approach ensures that no other transaction on the same worker can simultaneously request the same key. Transactions can only operate on fully available and up-to-date keys, and migrations are aligned with epochs to ensure consistency. Additionally, the asynchronous phase of Online Migration is only performed after the call-graph of all transactions within the epoch has been discovered. At that point, all the requested key transfers of the on-demand migration phase are guaranteed to have been completed. Moreover, fault tolerance remains unaffected; if a failure occurs, Styx will recover from the latest snapshot and restart the migration without compromising correctness. For the same reason, exactly-once processing and output remain unaffected by the migration mechanism. 

Finally, the only critical point to be addressed in both the S\&R and Online migration methods is out-of-partition events due to client-server partitioning misalignment. In Section~\ref{ch5:sec:migration:overview}, we explained the two new mechanisms of Styx that address this issue, namely Shadow Partitions and centralized offset restoration. Shadow partitions are used temporarily to reroute out-of-partition transactions from Kafka, ensuring that the correct worker and partition process the incoming transaction. To fully address this issue, the Kafka offset progress that might be affected by two different workers is restored by the coordinator before being stored in a snapshot. This coordination ensures exactly-once processing in the case of failure during state migration.

\section{State Migration Experiments on Styx} \label{ch5:sec:exp_state_migration}

In this section, we evaluate the state migration mechanism of Styx in sustainable throughput for scaling up and down while repartitioning the entire state. The repartitioning operation involves considerable data movement since Styx's partitioner is hash-based.

\subsection*{Setup} 

Styx executes in Python 3.13 and contains the optimizations mentioned in Section~\ref{ch5:sec:migration:overview}.

\para{Workload} The workloads used in our experiments follows the SotA transactional approaches~\cite{squall,abebe2020morphosys}, which are the YCSB and TPC-C benchmarks. In this experiment section, all tables are partitioned into 16 parts.

\parait{YSCB~\cite{ycsb}} The Yahoo! Cloud Serving Benchmark is a suite of workloads designed to represent large-scale, commonly developed web services. In our experiments, we use two YCSB datasets: a smaller dataset with 1 million records (1GB) for small-state experiments and a larger one with 10 million records (10GB). Each record consists of a primary key and 10 columns containing 100-byte randomly generated strings. We follow the workload configuration from~\cite{squall}, which includes two transaction types: 15\% of operations perform a single-record update, while the remaining 85\% perform a single-record read. It is important to note that YCSB differs from the YCSB-T variant used in Section~\ref{ch4:sec:exp}.

\parait{TPC-C} We follow the exact spec of TPC-C as in Section~\ref{ch4:sec:exp:setup} and generate two datasets, a small one (10 warehouses, 1GB) and a larger one (100 warehouses, 10GB) for our experiments in this section.


\begin{figure*}[t]
    \centering

    \begin{subfigure}[t]{0.24\textwidth}
        \centering
        \includegraphics[width=\linewidth]{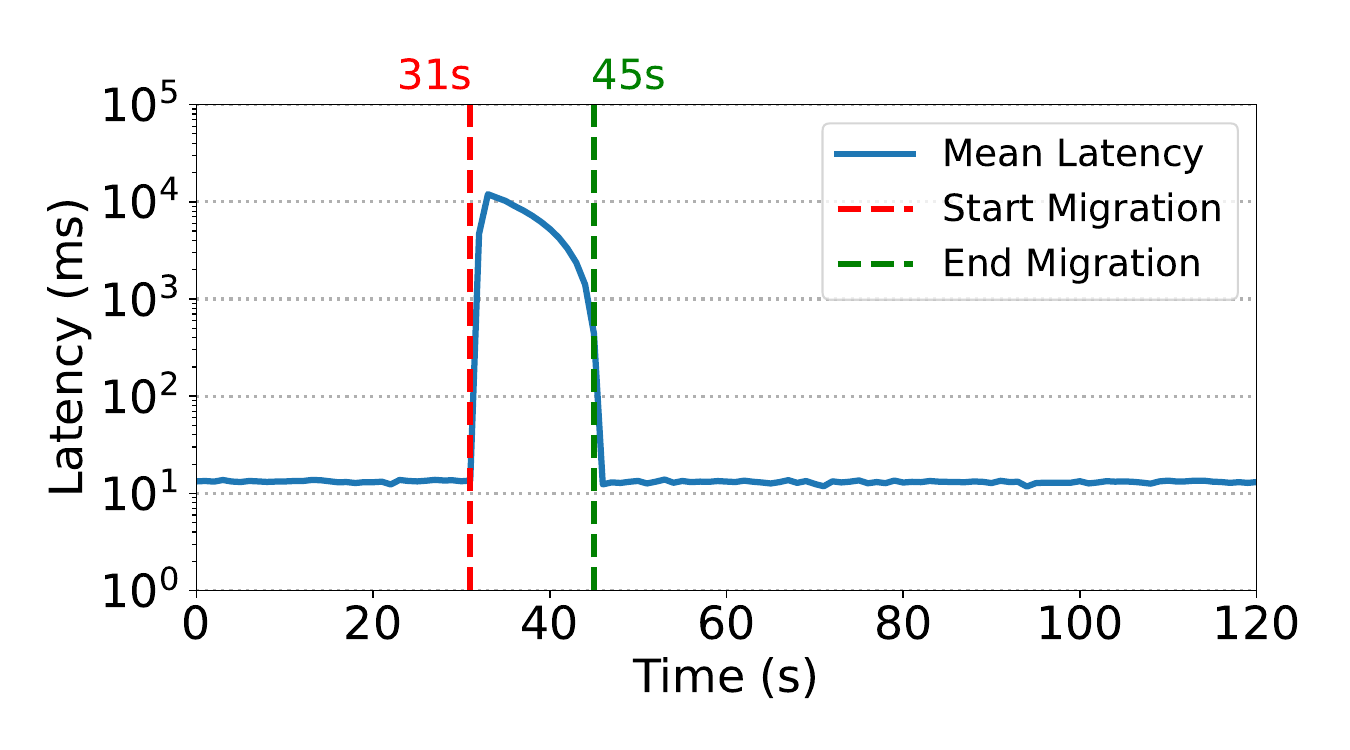}
    \end{subfigure}
    \hfill
    \begin{subfigure}[t]{0.24\textwidth}  
        \centering 
        \includegraphics[width=\linewidth]{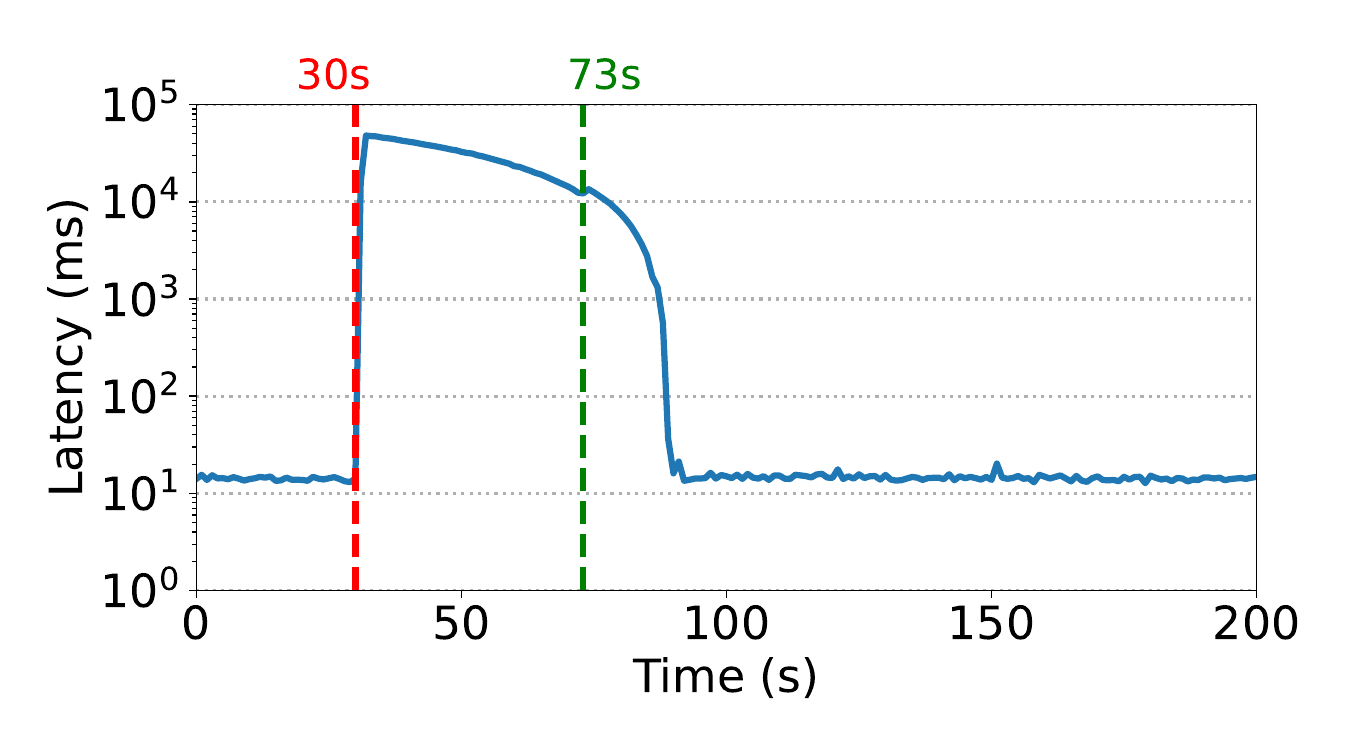}
    \end{subfigure}
    \hfill
    \begin{subfigure}[t]{0.24\textwidth}   
        \centering 
        \includegraphics[width=\linewidth]{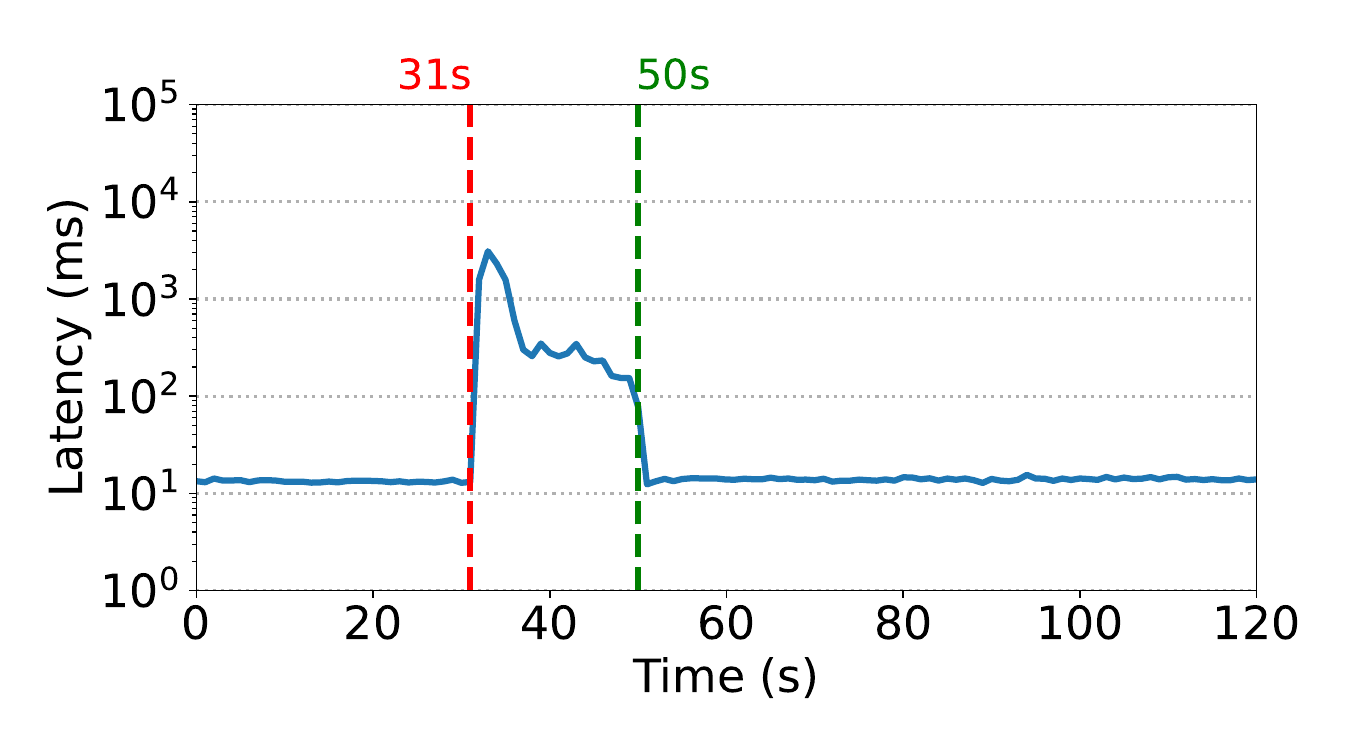}
    \end{subfigure}
    \hfill
    \begin{subfigure}[t]{0.24\textwidth}   
        \centering 
        \includegraphics[width=\linewidth]{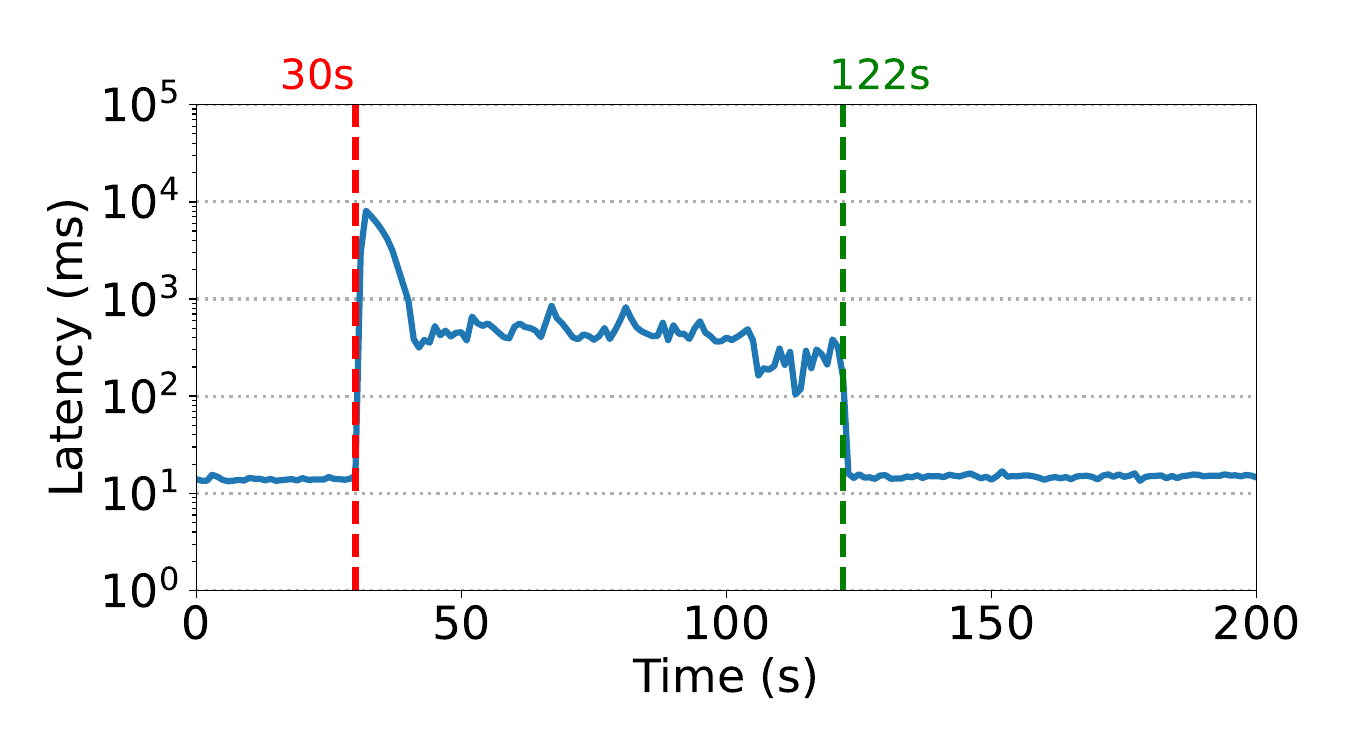}
    \end{subfigure}


    \begin{subfigure}[t]{0.24\textwidth}
        \centering
        \captionsetup{justification=centering}
        \includegraphics[width=\linewidth]{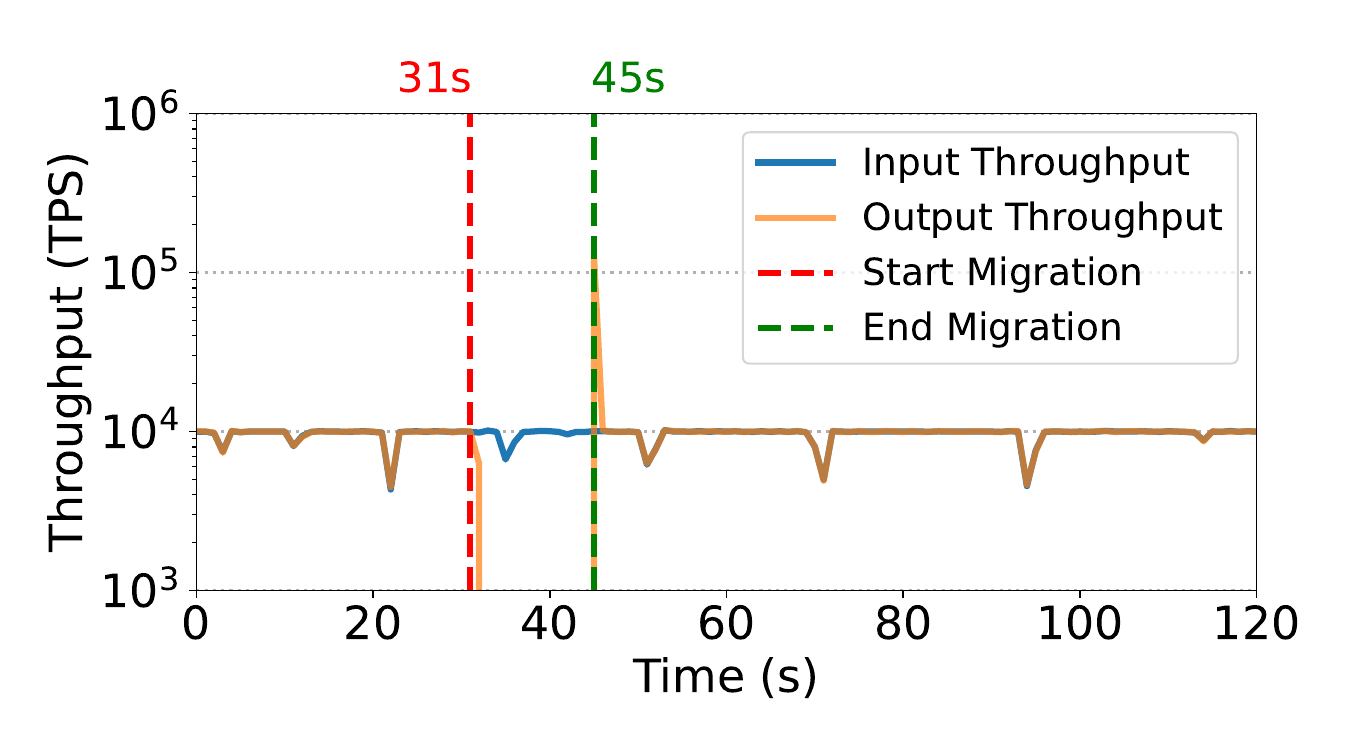}
        \vspace{-5mm}
        \caption*{{\footnotesize S\&R YCSB}}
    \end{subfigure}
    \hfill
    \begin{subfigure}[t]{0.24\textwidth}  
        \centering 
        \captionsetup{justification=centering}
        \includegraphics[width=\linewidth]{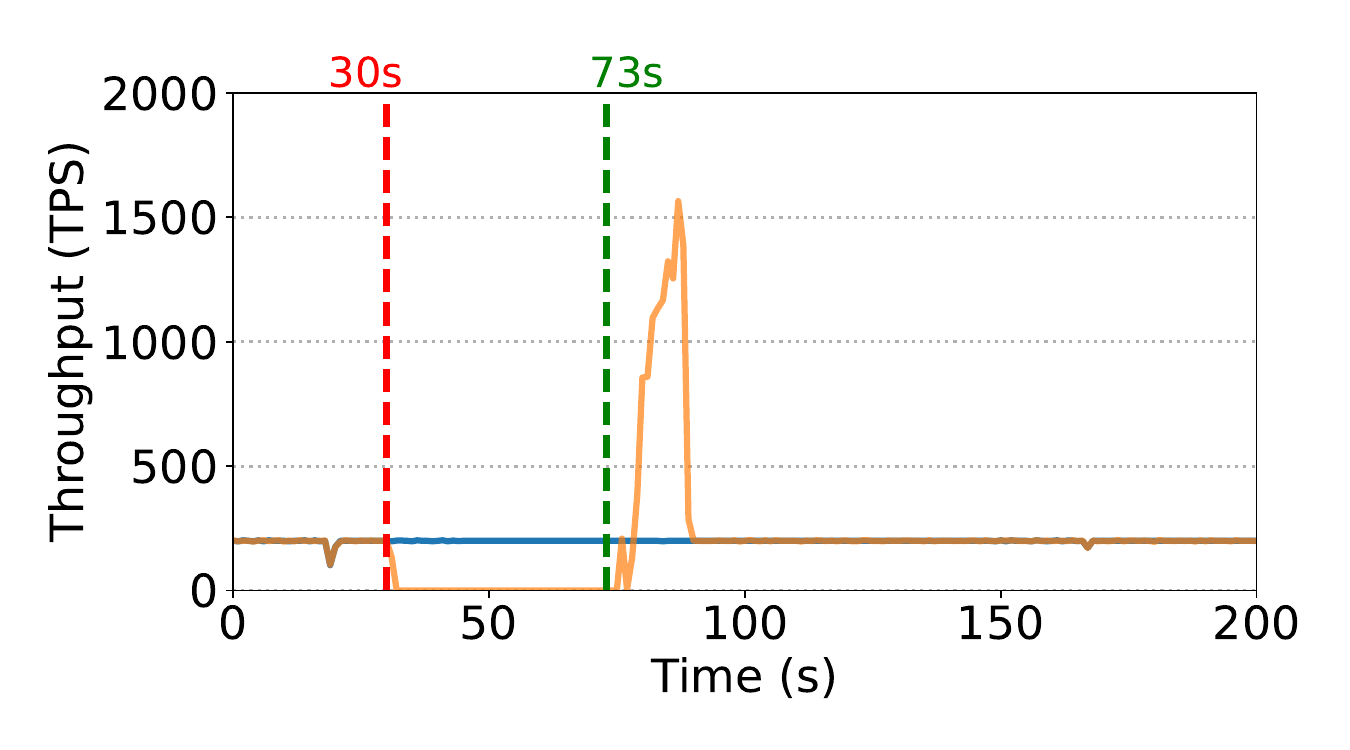}
        \vspace{-5mm}
        \caption*{{\footnotesize S\&R TPC-C}}
    \end{subfigure}
    \hfill
    \begin{subfigure}[t]{0.24\textwidth}   
        \centering 
        \captionsetup{justification=centering}
        \includegraphics[width=\linewidth]{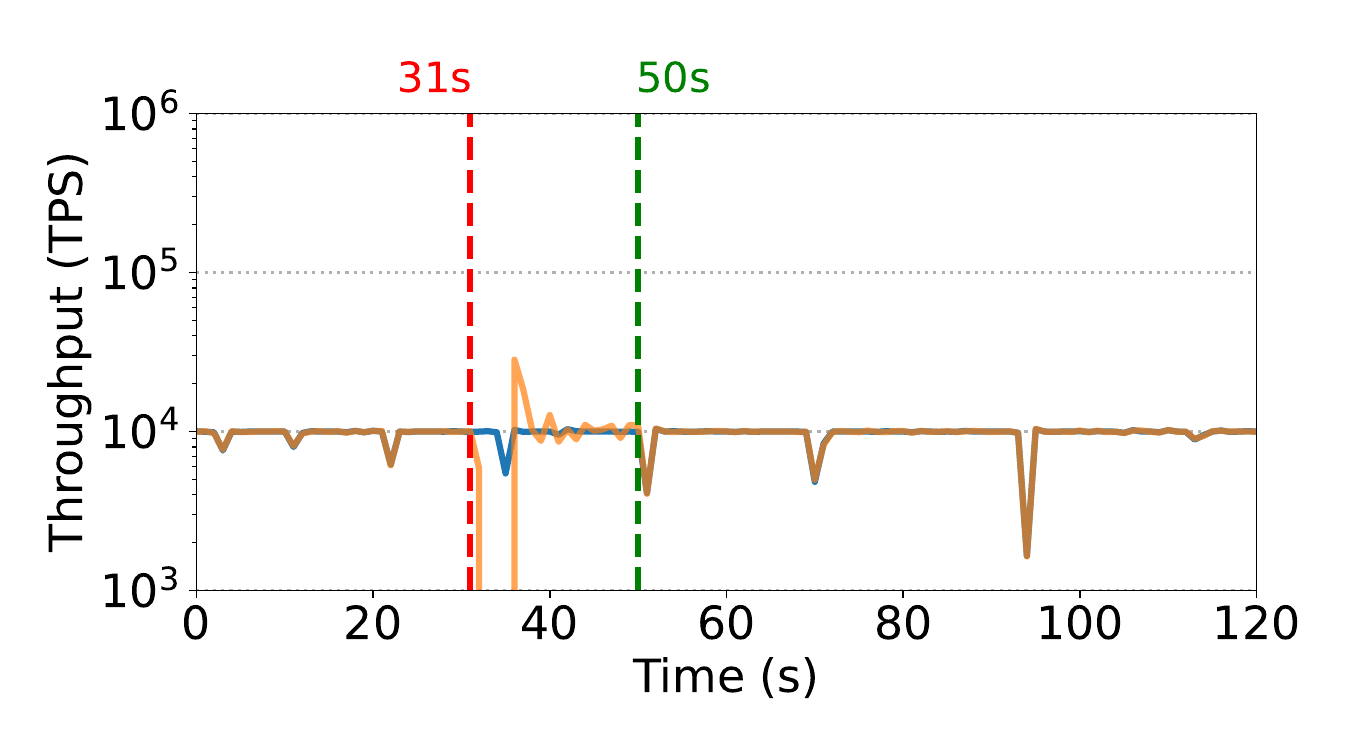}
        \vspace{-5mm}
        \caption*{{\footnotesize Online YCSB}}
    \end{subfigure}
    \hfill
    \begin{subfigure}[t]{0.24\textwidth}   
        \centering 
        \captionsetup{justification=centering}
        \includegraphics[width=\linewidth]{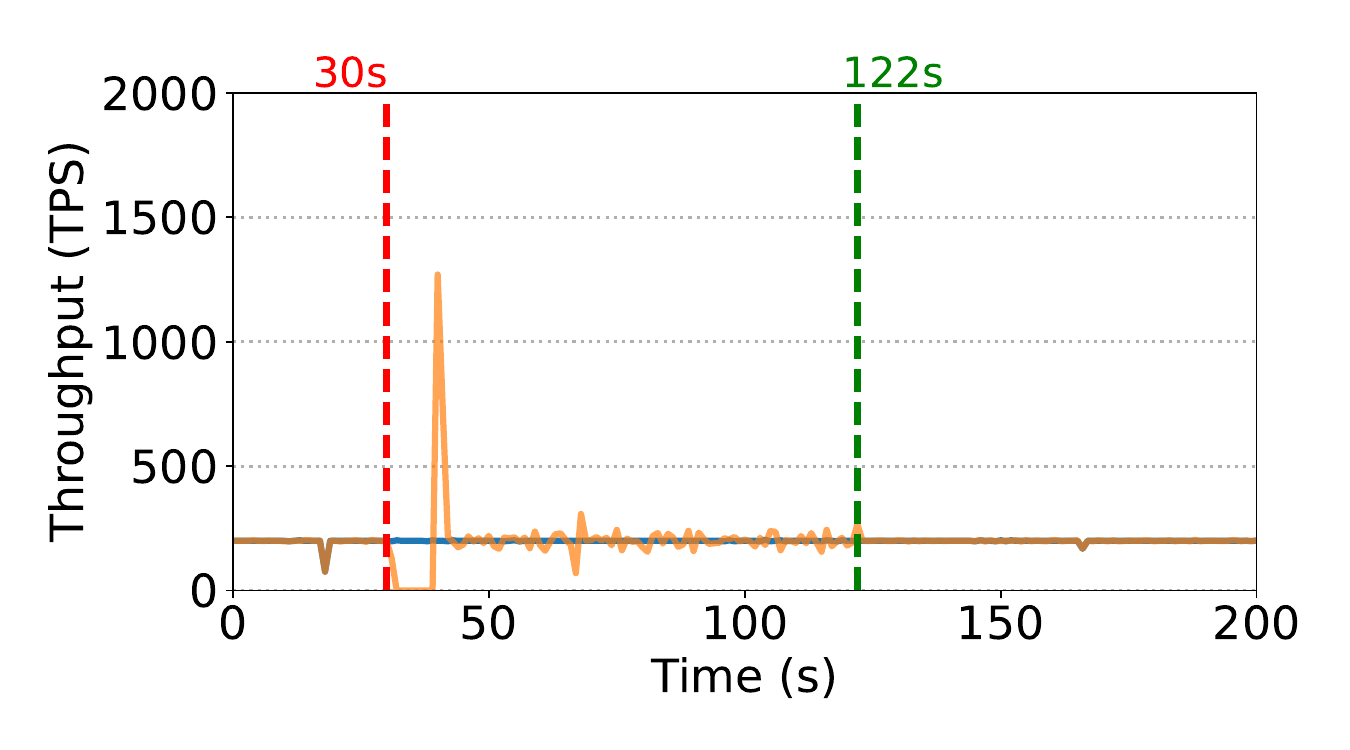}
        \vspace{-5mm}
        \caption*{{\footnotesize Online TPC-C}}
    \end{subfigure}
        \vspace{-2mm}
    \caption{Stop and Restart (S\&R) and Online Migration for big State (10GB) Scale Down in both TPC-C and YCSB: Latency (top row) and Throughput (bottom row).}
    \label{ch5:fig:scale_down_big_combined}
\end{figure*}


\begin{figure*}[t]
    \centering
    \captionsetup{justification=centering}

    \begin{subfigure}[t]{0.24\textwidth}
        \centering
        \includegraphics[width=\linewidth]{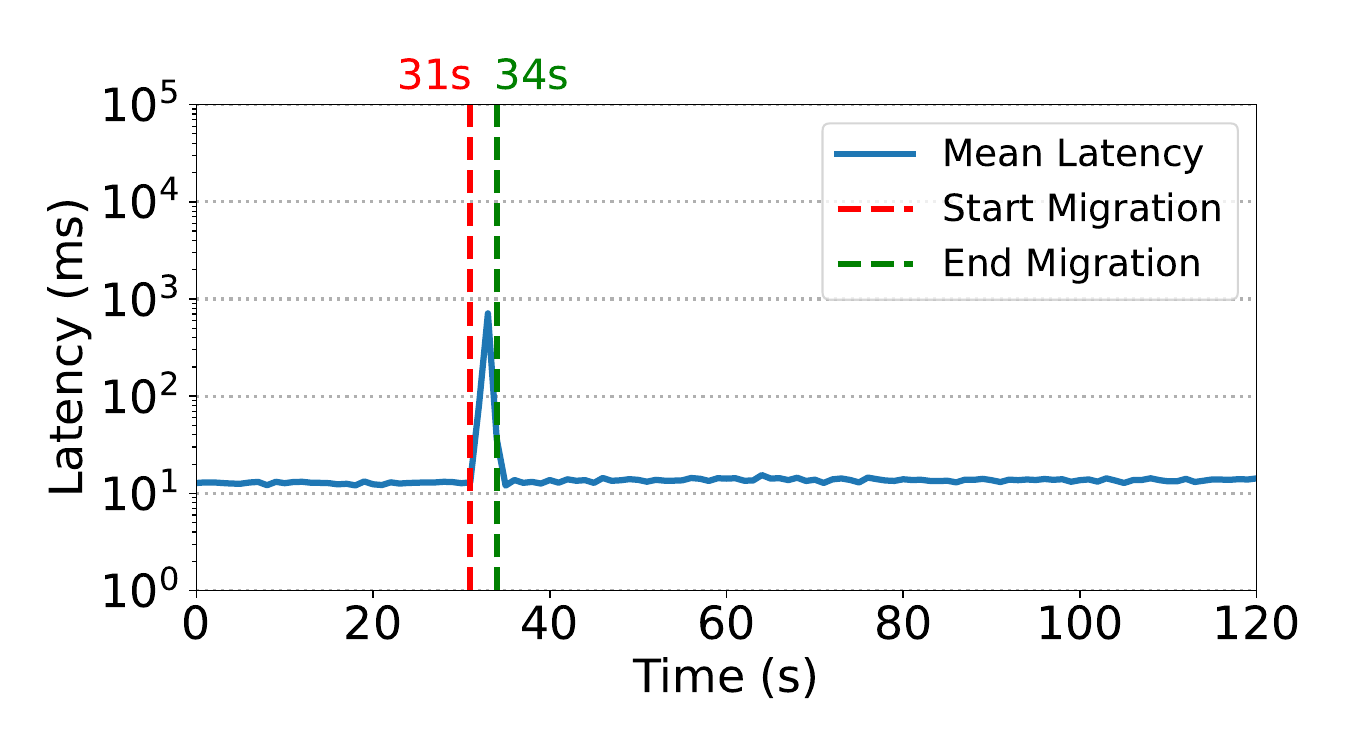}
    \end{subfigure}
    \hfill
    \begin{subfigure}[t]{0.24\textwidth}  
        \centering 
        \includegraphics[width=\linewidth]{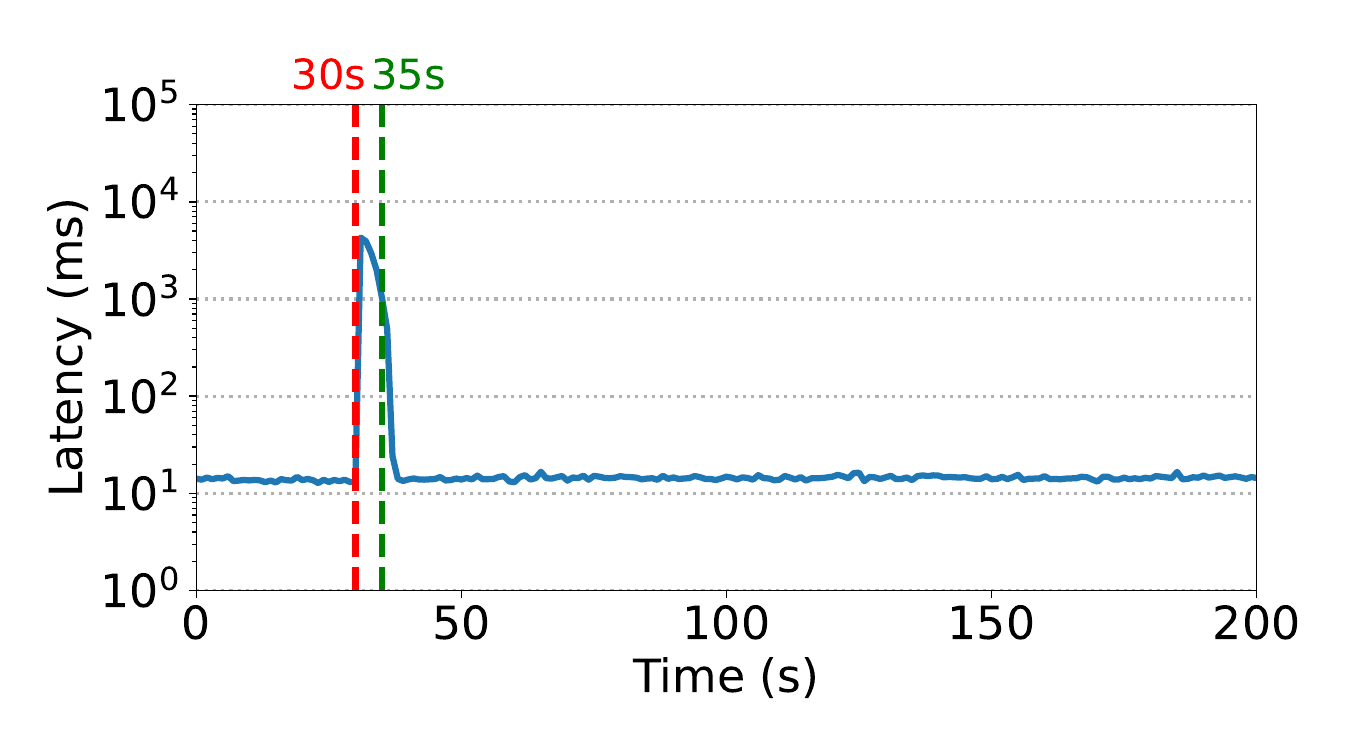}
    \end{subfigure}
    \hfill
    \begin{subfigure}[t]{0.24\textwidth}   
        \centering 
        \includegraphics[width=\linewidth]{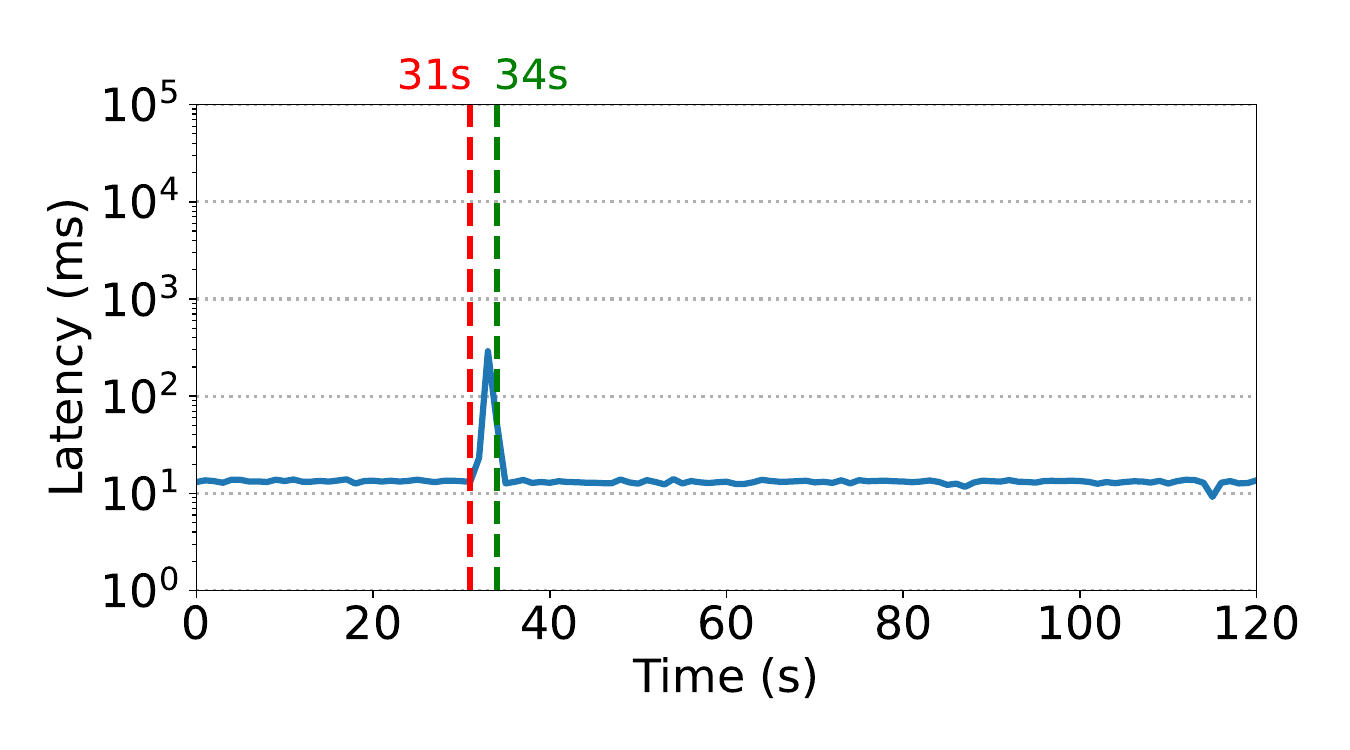}
    \end{subfigure}
    \hfill
    \begin{subfigure}[t]{0.24\textwidth}   
        \centering 
        \includegraphics[width=\linewidth]{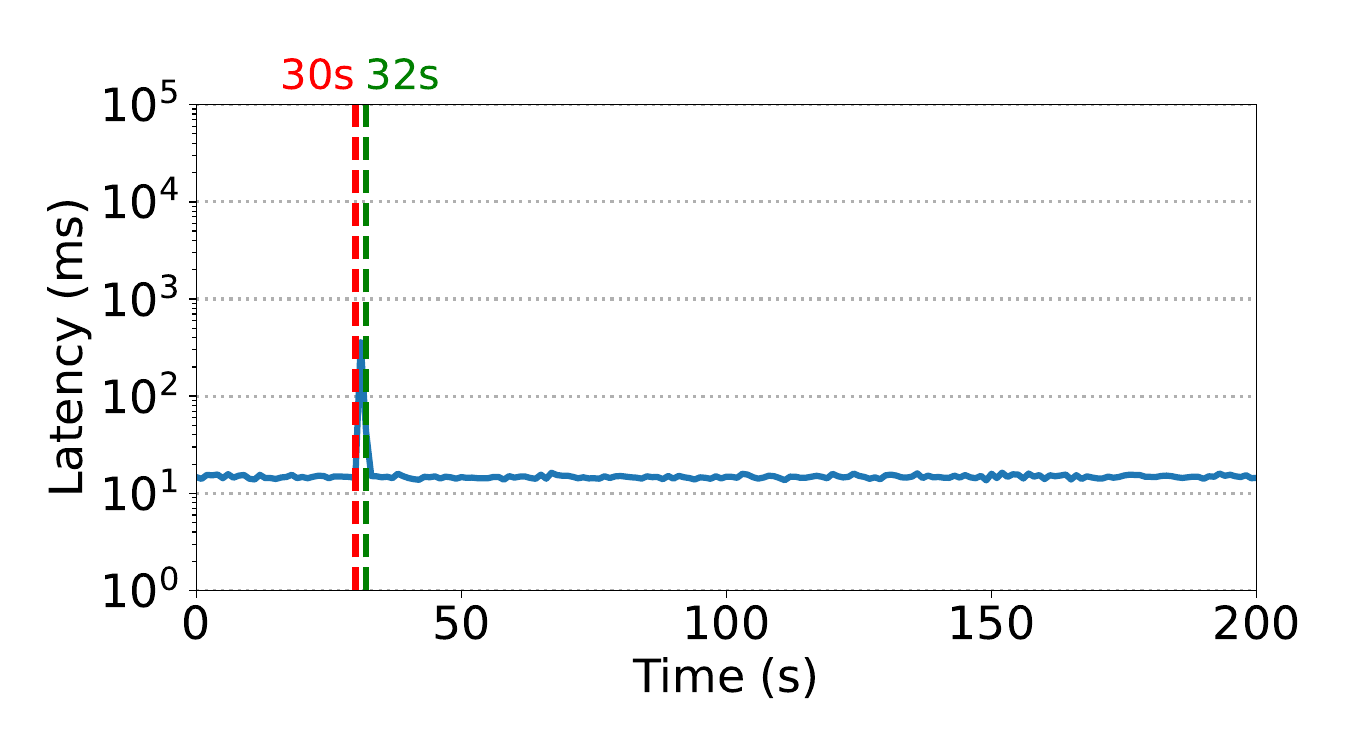}
    \end{subfigure}


    \begin{subfigure}[t]{0.24\textwidth}
        \centering
        \captionsetup{justification=centering}
        \includegraphics[width=\linewidth]{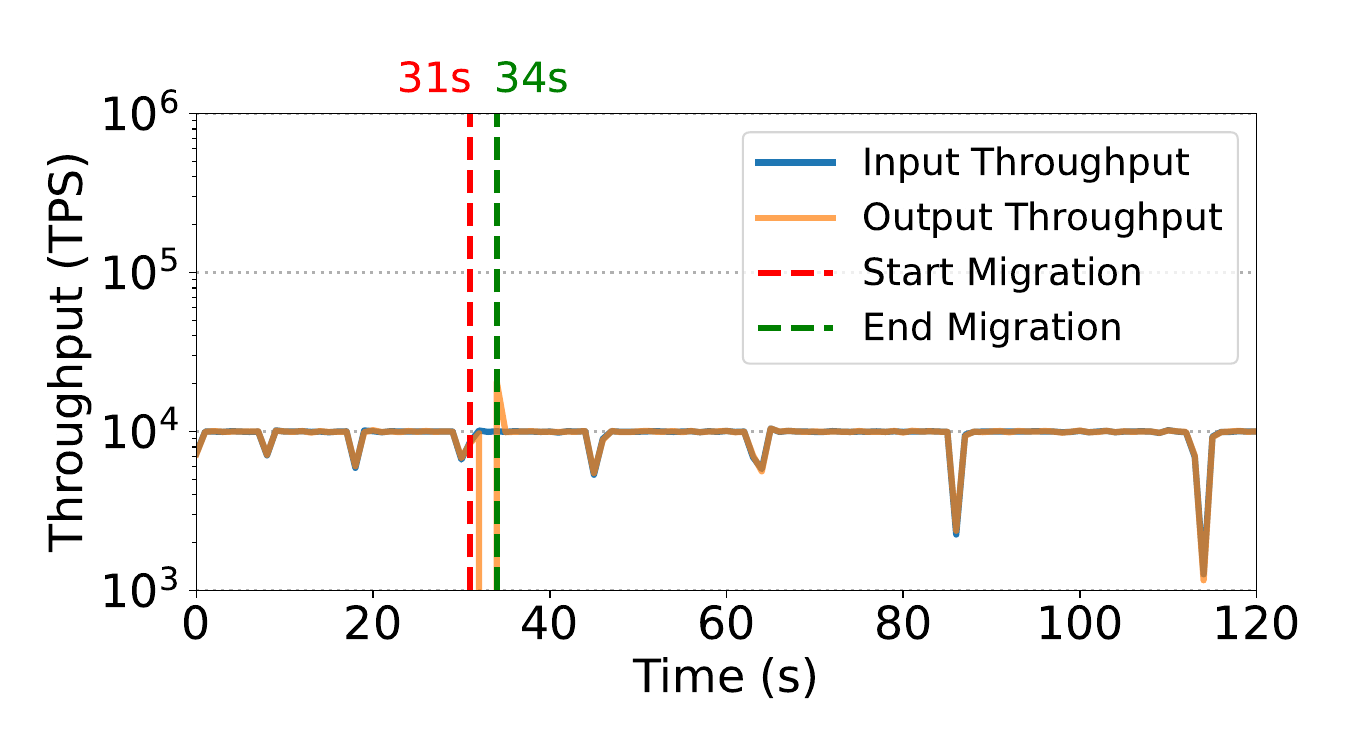}
        \vspace{-5mm}
        \caption*{{\footnotesize S\&R YCSB}}
    \end{subfigure}
    \hfill
    \begin{subfigure}[t]{0.24\textwidth}  
        \centering 
        \captionsetup{justification=centering}
        \includegraphics[width=\linewidth]{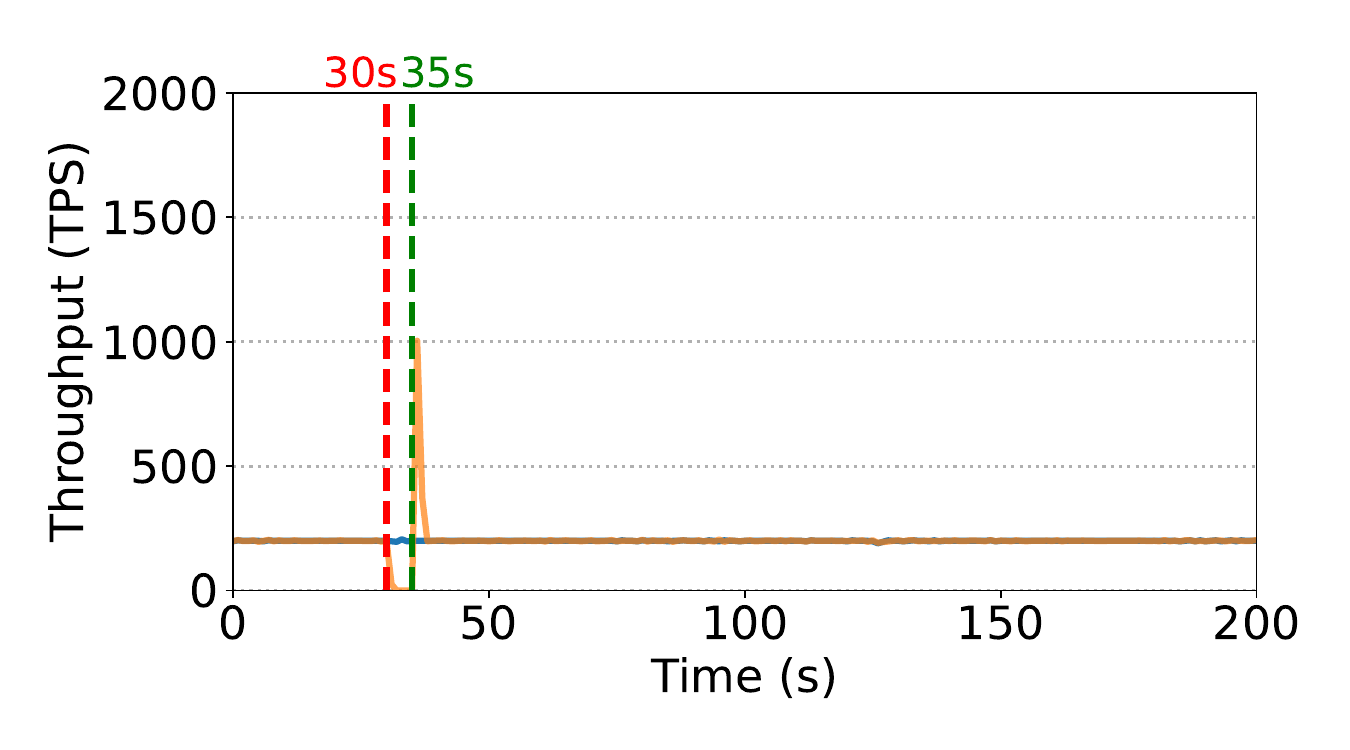}
        \vspace{-5mm}
        \caption*{{\footnotesize S\&R TPC-C}}
    \end{subfigure}
    \hfill
    \begin{subfigure}[t]{0.24\textwidth}   
        \centering 
        \captionsetup{justification=centering}
        \includegraphics[width=\linewidth]{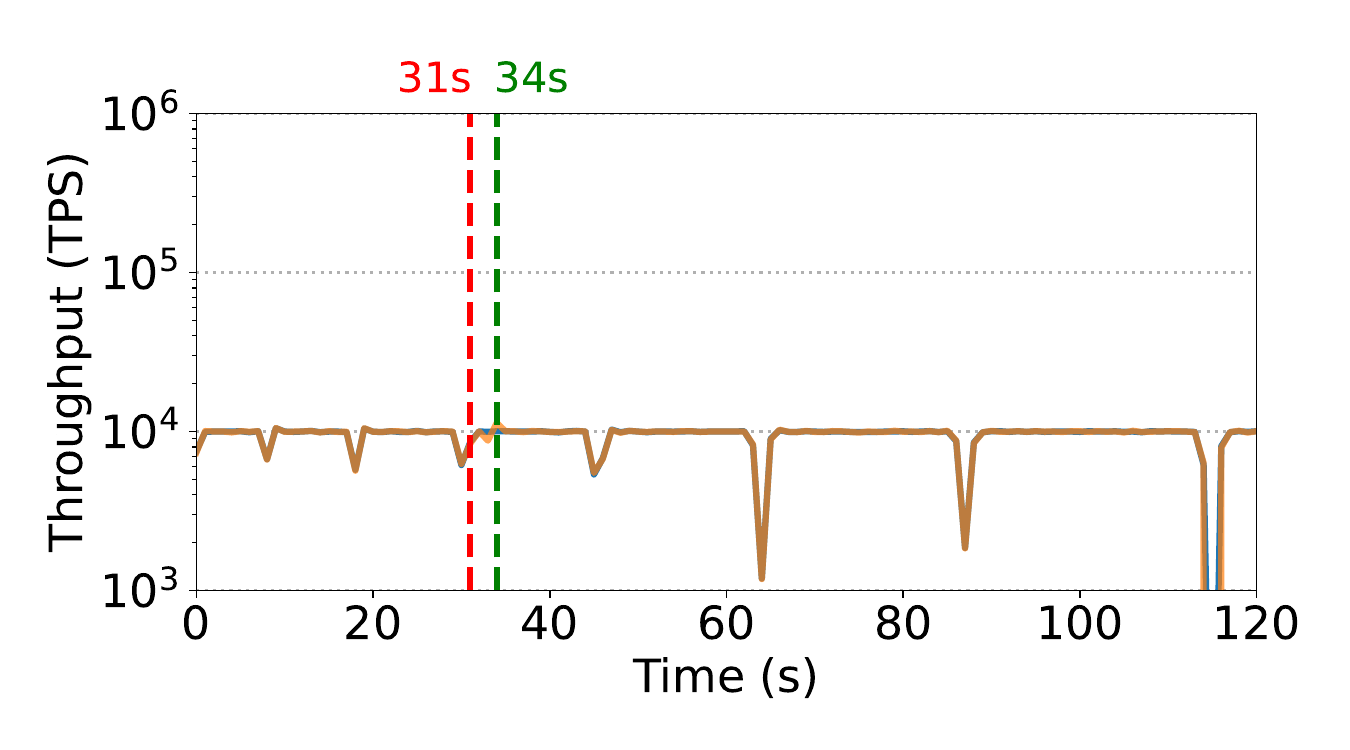}
        \vspace{-5mm}
        \caption*{{\footnotesize Online YCSB}}
    \end{subfigure}
    \hfill
    \begin{subfigure}[t]{0.24\textwidth}   
        \centering 
        \captionsetup{justification=centering}
        \includegraphics[width=\linewidth]{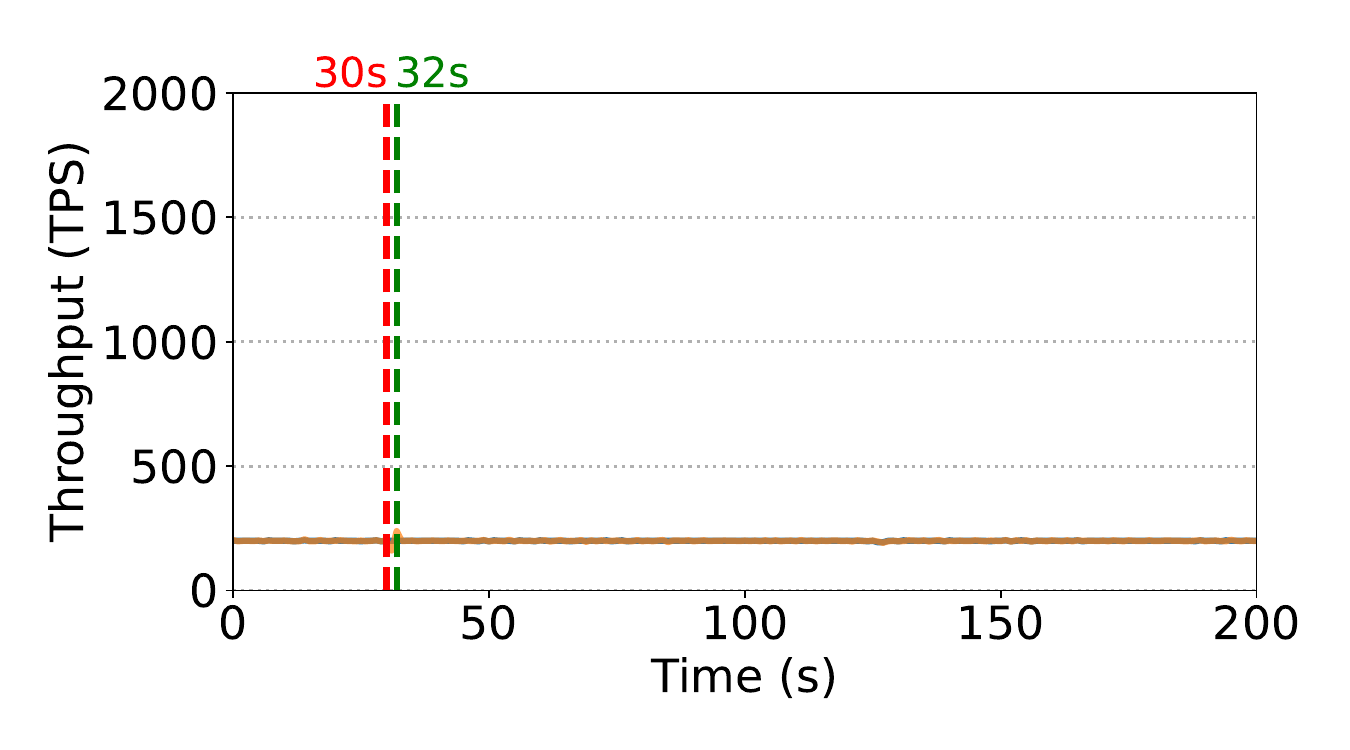}
        \vspace{-5mm}
        \caption*{{\footnotesize Online TPC-C}}
    \end{subfigure}
    \vspace{-2mm}
    \caption{Small State 1GB Scale Down: Latency (top row) and Throughput (bottom row).}
    \label{ch5:fig:scale_down_small_combined}
\end{figure*}

\para{Metrics} In all the migration experiments, we measure input/output throughput, mean latency, and the migration interval. 

\parait{Input/Output Throughput} In addition to \textit{input throughput} as mentioned in Section~\ref{ch4:sec:exp:setup}, we also display the output throughput, which is the number of transaction responses Styx produces per second. During migration, we expect $i)$ the input throughput to remain stable since we do not pause the clients and $ii)$ the output throughput to drop.

\parait{Mean Latency} For the state migration experiments, latency is defined in the same way as in Section~\ref{ch4:sec:exp:setup}. The only deviation from our previous latency reporting is that in line with prior work, \cite{squall,delmonte2020rhino,meces}, we report mean latency instead of the 99th percentile (P99). This decision is motivated by the observation that, during migration, the system enters a transitional phase in which latency spikes are the norm. While P99 latency metrics effectively capture worst-case performance under steady-state conditions, they tend to produce inflated values during migration, which can obscure a clear understanding of system behavior. In this context, mean latency provides a more informative measure of the overall impact of migration on transaction performance.

\parait{Migration Interval} An important migration-only metric is the migration interval, or how long the migration process takes. To show this, we plot the beginning and the end migration timestamps in all experiments.


\begin{figure*}[t]
    \centering
    \captionsetup{justification=centering}
    \begin{subfigure}[t]{0.24\textwidth}
        \centering
        \includegraphics[width=\linewidth]{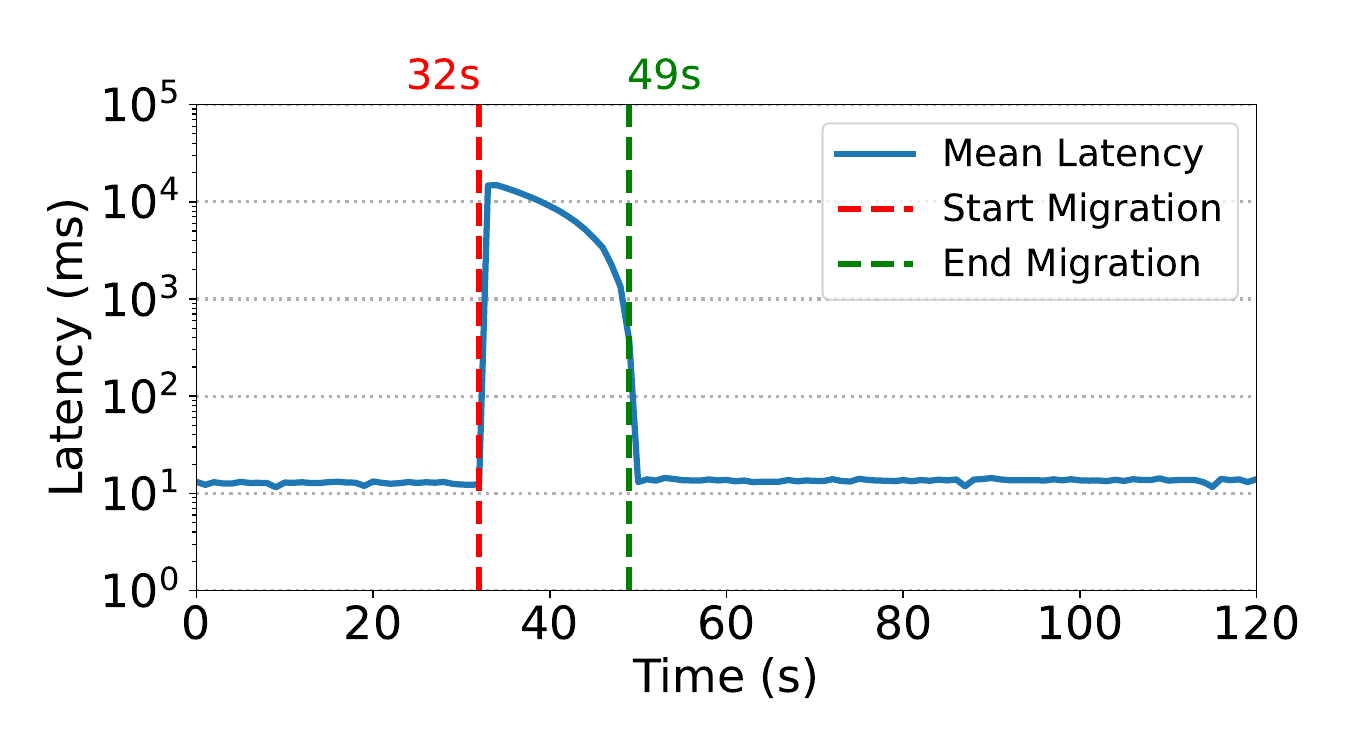}
    \end{subfigure}
    \hfill
    \begin{subfigure}[t]{0.24\textwidth}  
        \centering 
        \includegraphics[width=\linewidth]{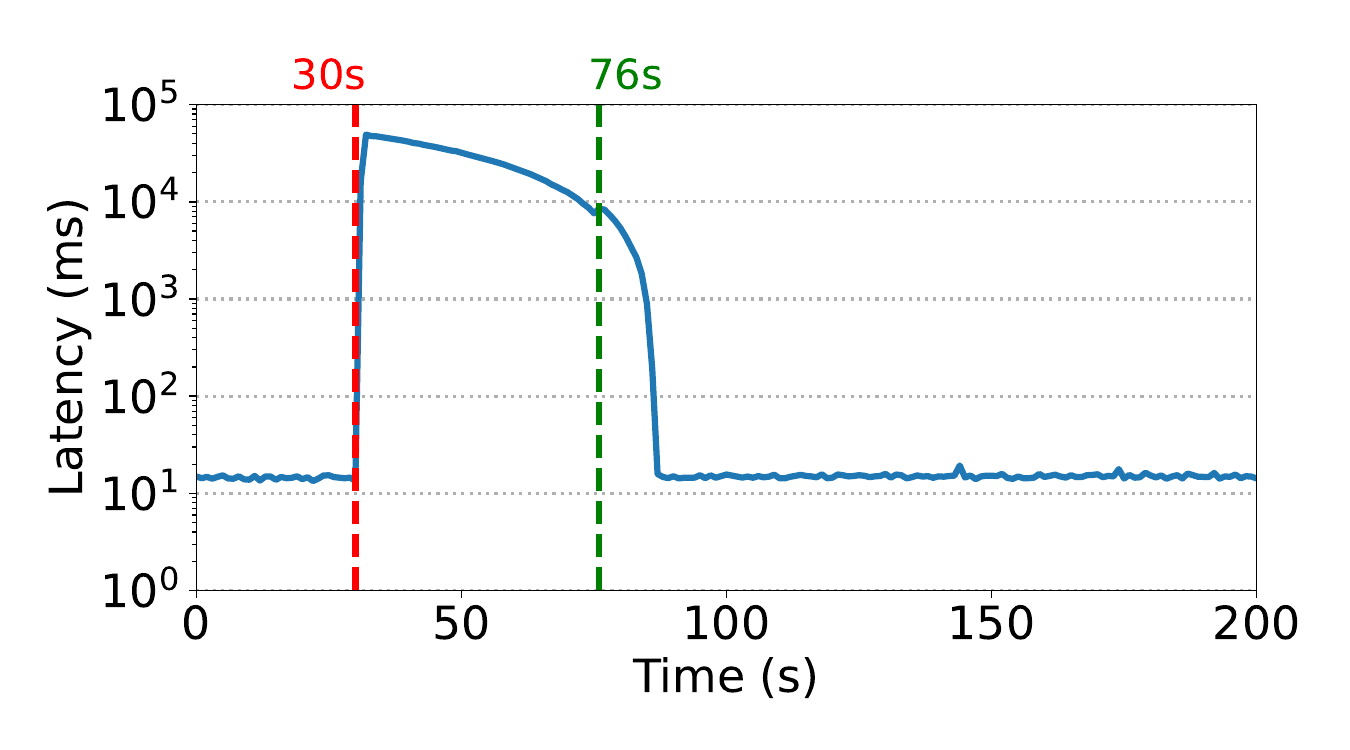}
    \end{subfigure}
    \hfill
    \begin{subfigure}[t]{0.24\textwidth}   
        \centering 
        \includegraphics[width=\linewidth]{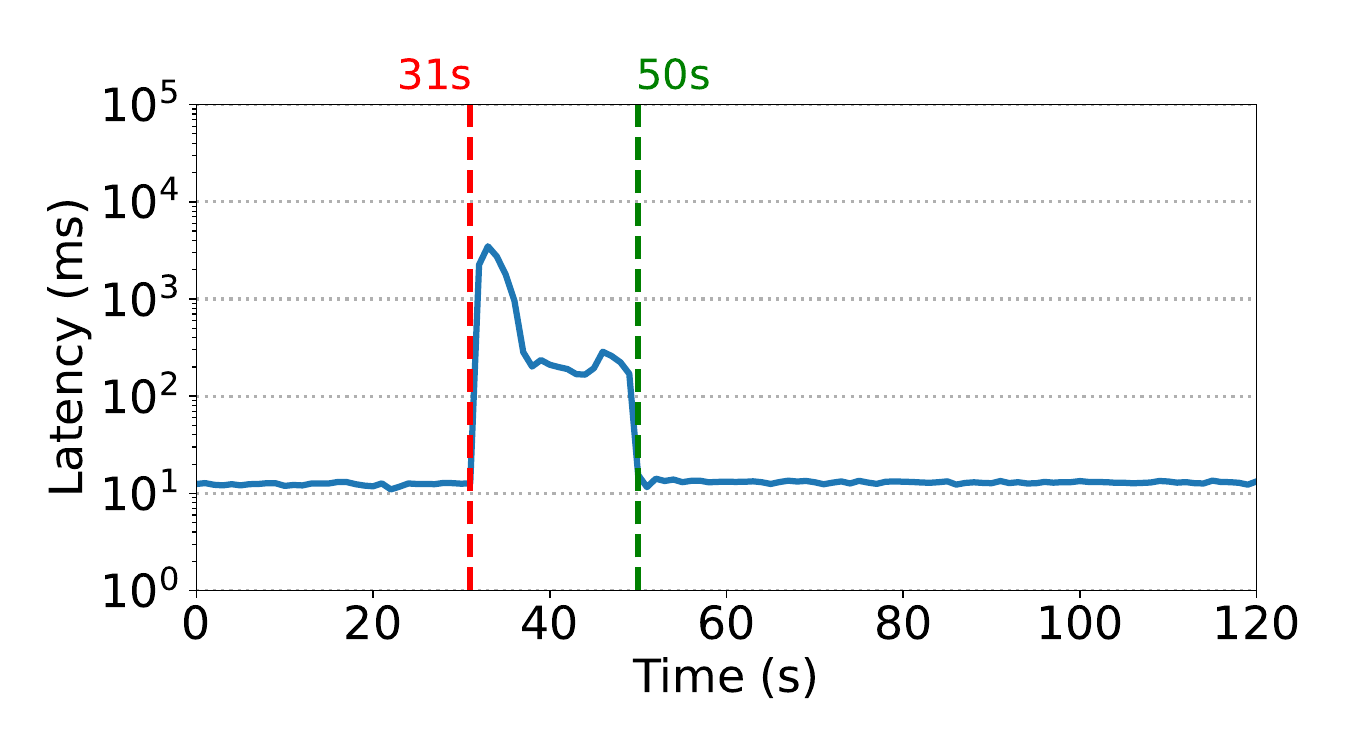}
    \end{subfigure}
    \hfill
    \begin{subfigure}[t]{0.24\textwidth}   
        \centering 
        \includegraphics[width=\linewidth]{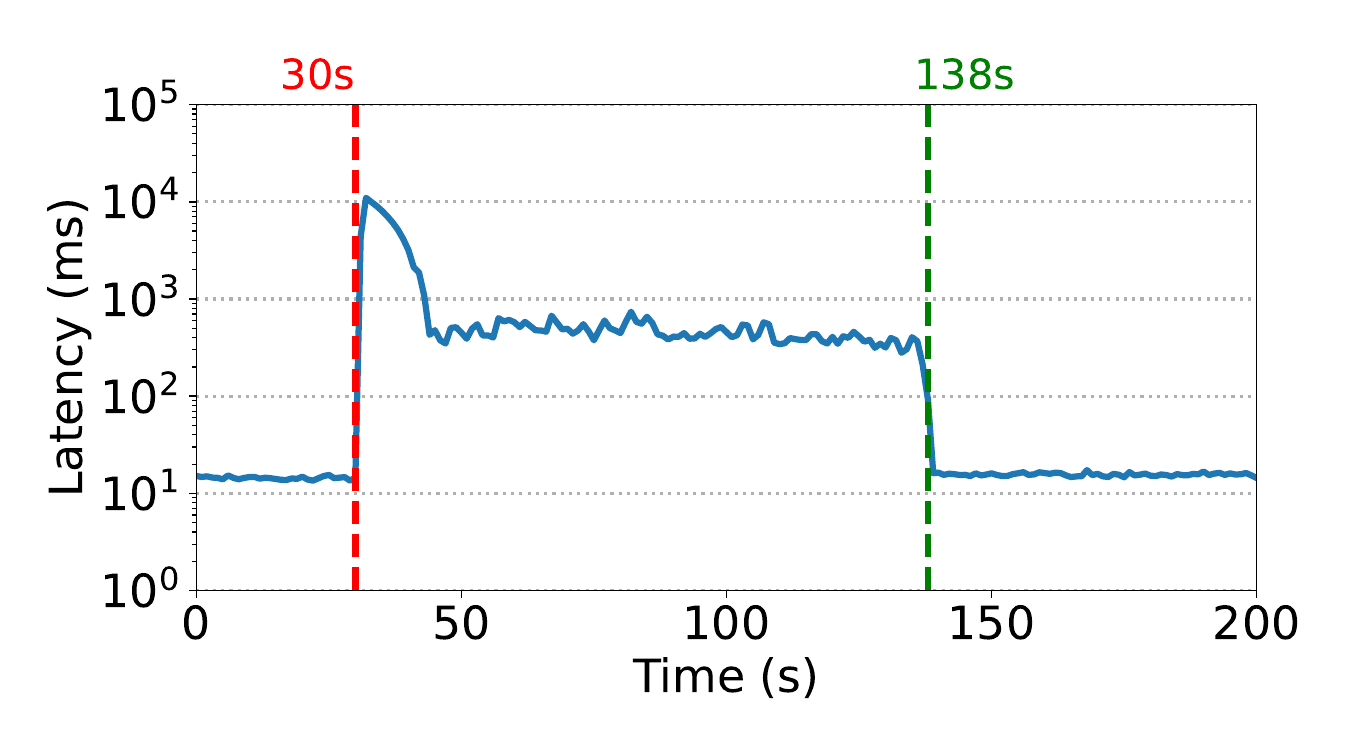}
    \end{subfigure}


    \begin{subfigure}[t]{0.24\textwidth}
        \centering
        \captionsetup{justification=centering}
        \includegraphics[width=\linewidth]{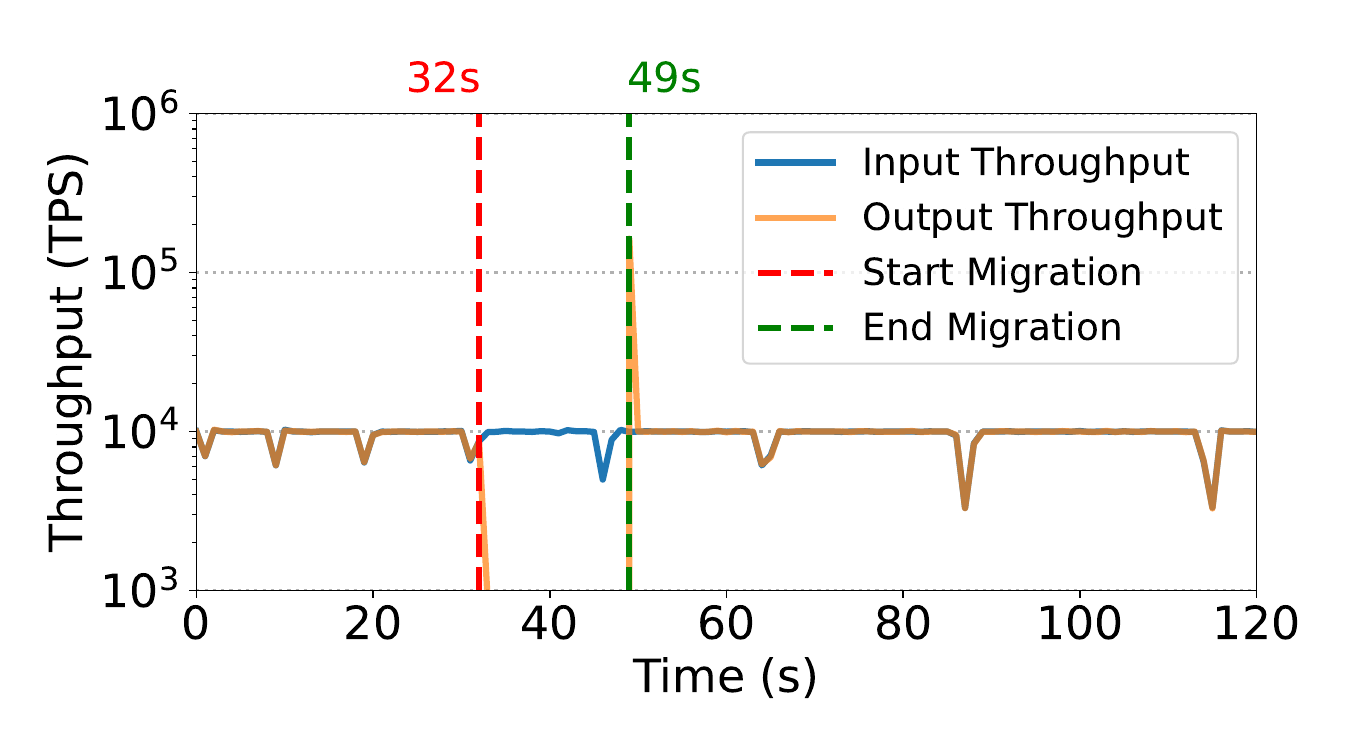}
        \vspace{-5mm}
        \caption*{{\footnotesize S\&R YCSB}}
    \end{subfigure}
    \hfill
    \begin{subfigure}[t]{0.24\textwidth}  
        \centering 
        \captionsetup{justification=centering}
        \includegraphics[width=\linewidth]{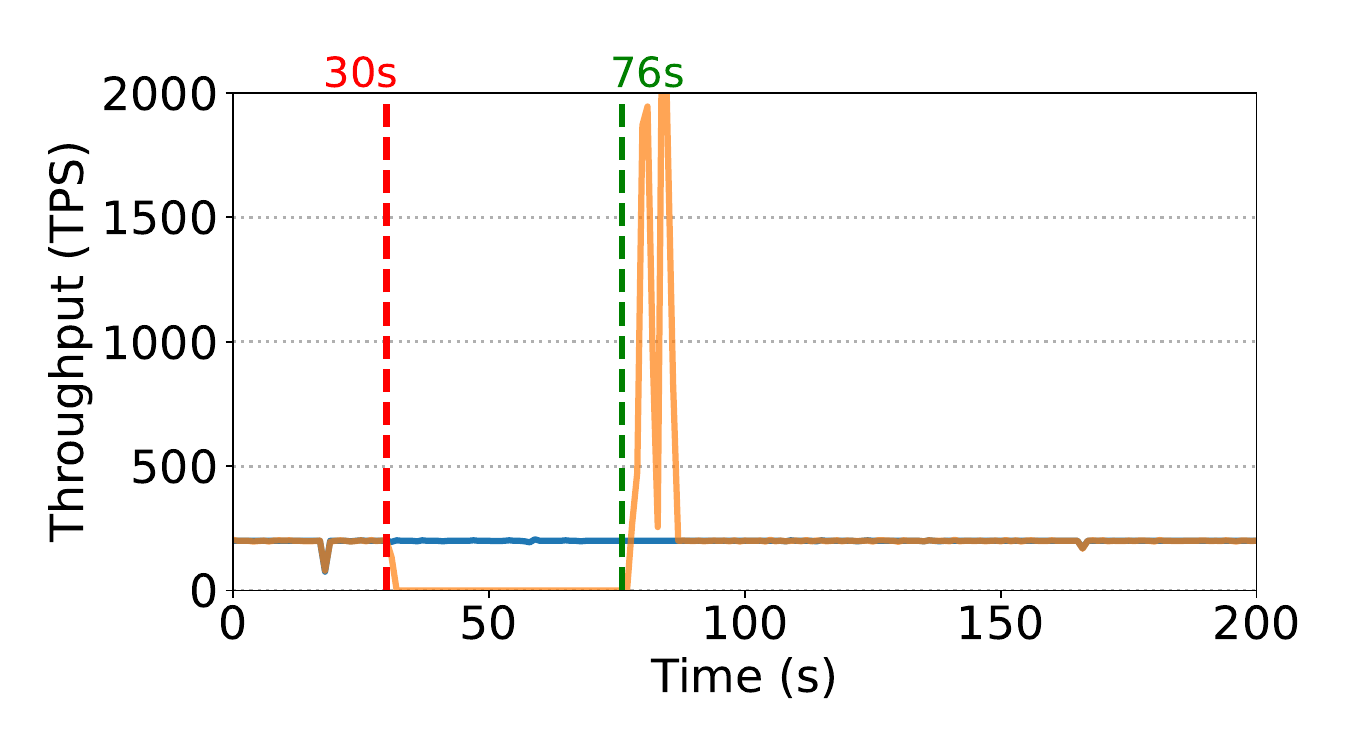}
        \vspace{-5mm}
        \caption*{{\footnotesize S\&R TPC-C}}
    \end{subfigure}
    \hfill
    \begin{subfigure}[t]{0.24\textwidth}   
        \centering 
        \captionsetup{justification=centering}
        \includegraphics[width=\linewidth]{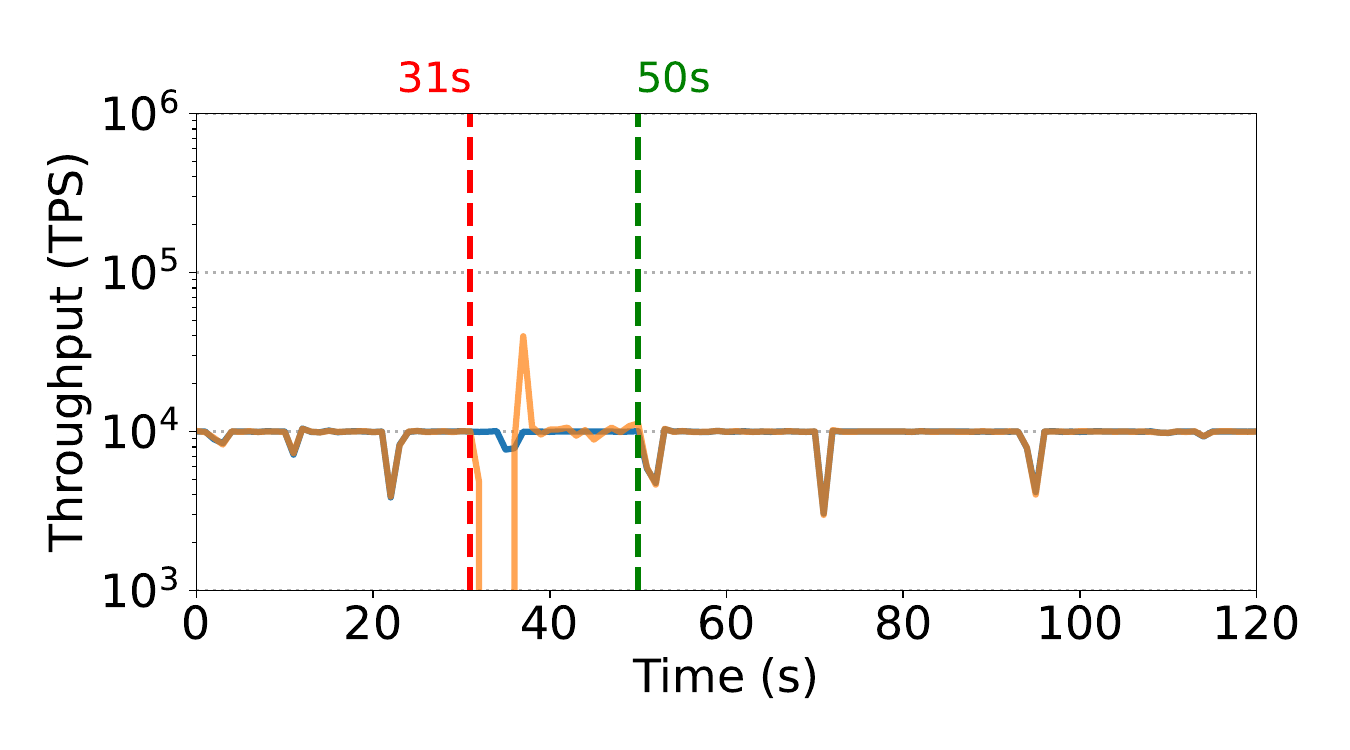}
        \vspace{-5mm}
        \caption*{{\footnotesize Online YCSB}}
    \end{subfigure}
    \hfill
    \begin{subfigure}[t]{0.24\textwidth}   
        \centering 
        \captionsetup{justification=centering}
        \includegraphics[width=\linewidth]{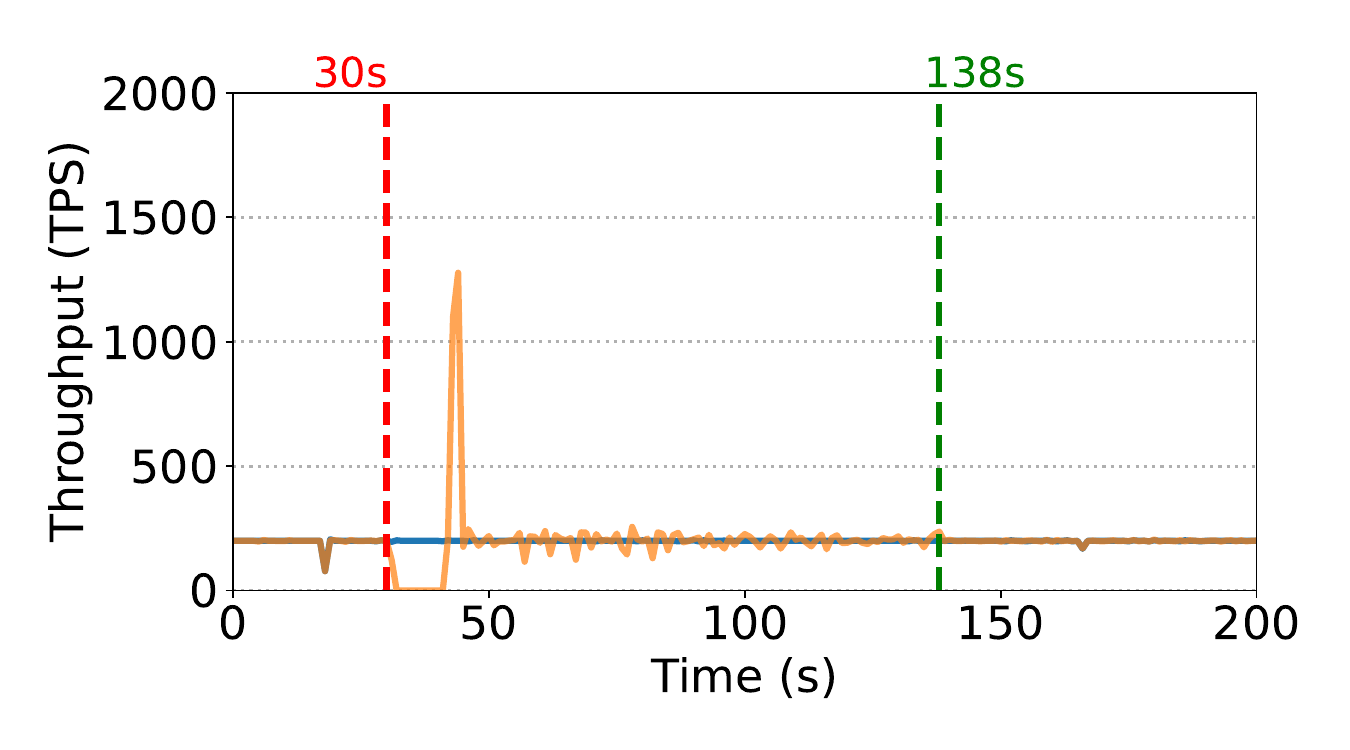}
        \vspace{-5mm}
        \caption*{{\footnotesize Online TPC-C}}
    \end{subfigure}
    \vspace{-2mm}
    \caption{Big State 10 GB Scale Up: Latency (top row) and Throughput (bottom row).}
    \label{ch5:fig:scale_up_big_combined}
\end{figure*}


\begin{figure*}[t]
    \centering
    \captionsetup{justification=centering}
    \begin{subfigure}[t]{0.24\textwidth}
        \centering
        \includegraphics[width=\linewidth]{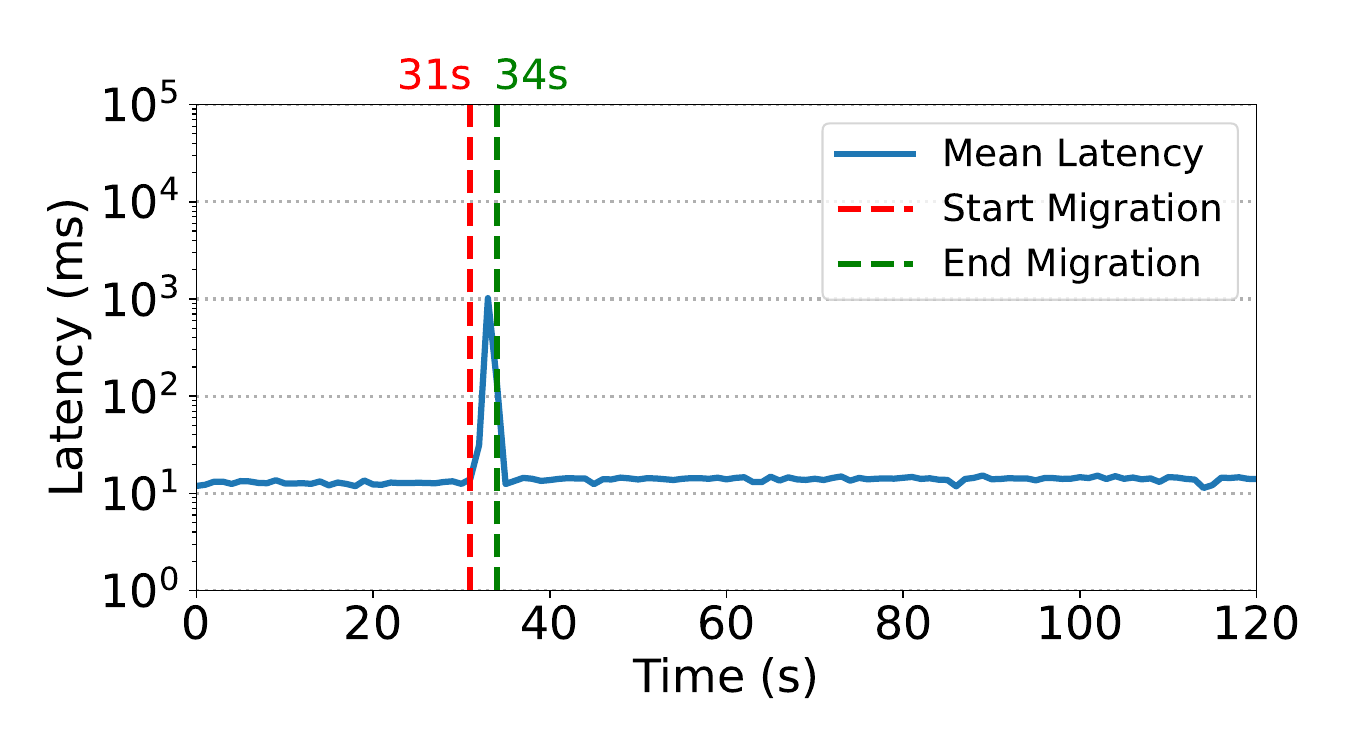}
    \end{subfigure}
    \hfill
    \begin{subfigure}[t]{0.24\textwidth}  
        \centering 
        \includegraphics[width=\linewidth]{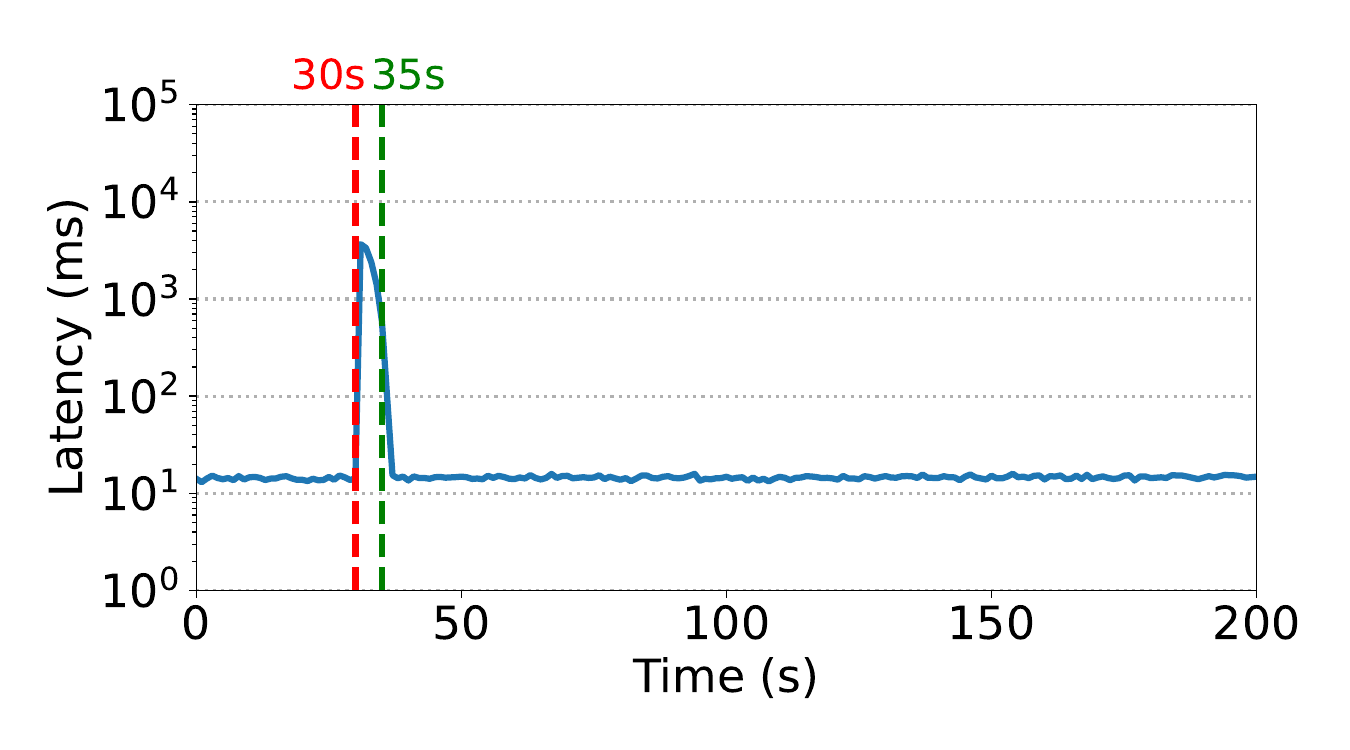}
    \end{subfigure}
    \hfill
    \begin{subfigure}[t]{0.24\textwidth}   
        \centering 
        \includegraphics[width=\linewidth]{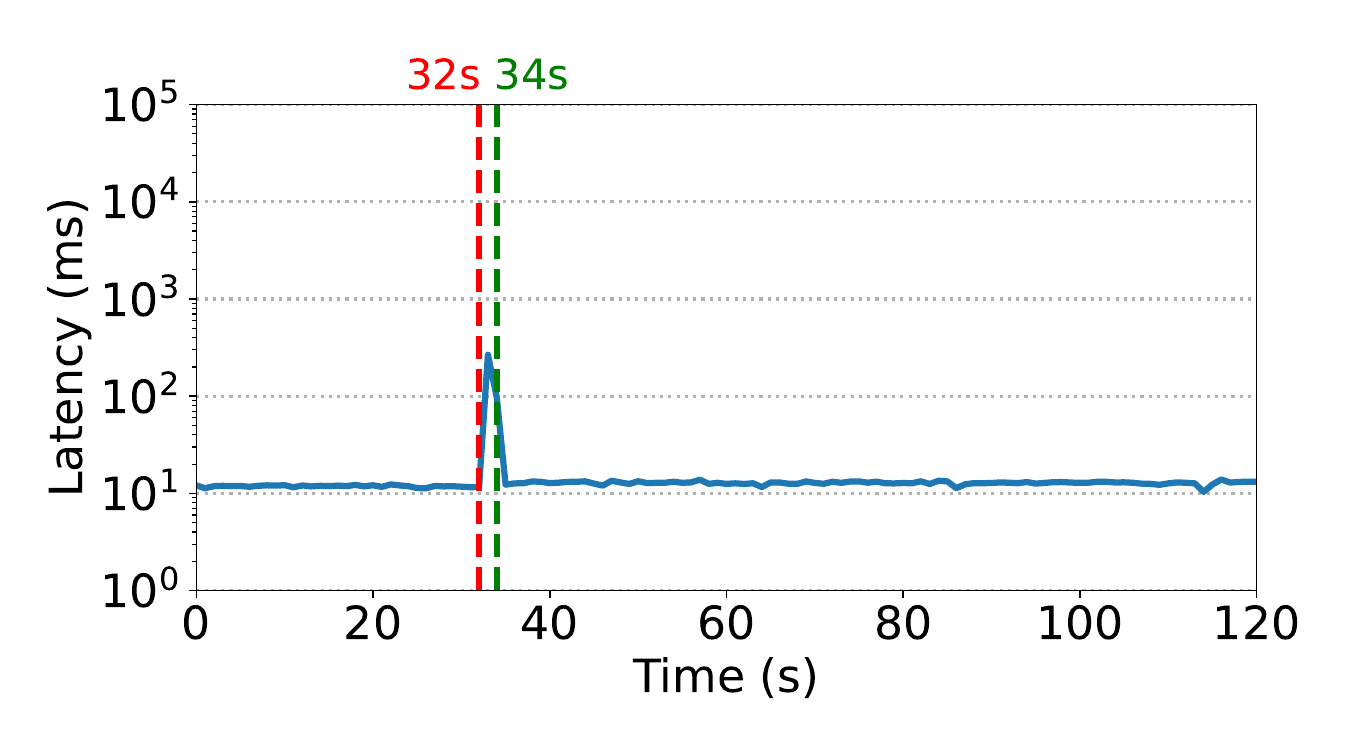}
    \end{subfigure}
    \hfill
    \begin{subfigure}[t]{0.24\textwidth}   
        \centering 
        \includegraphics[width=\linewidth]{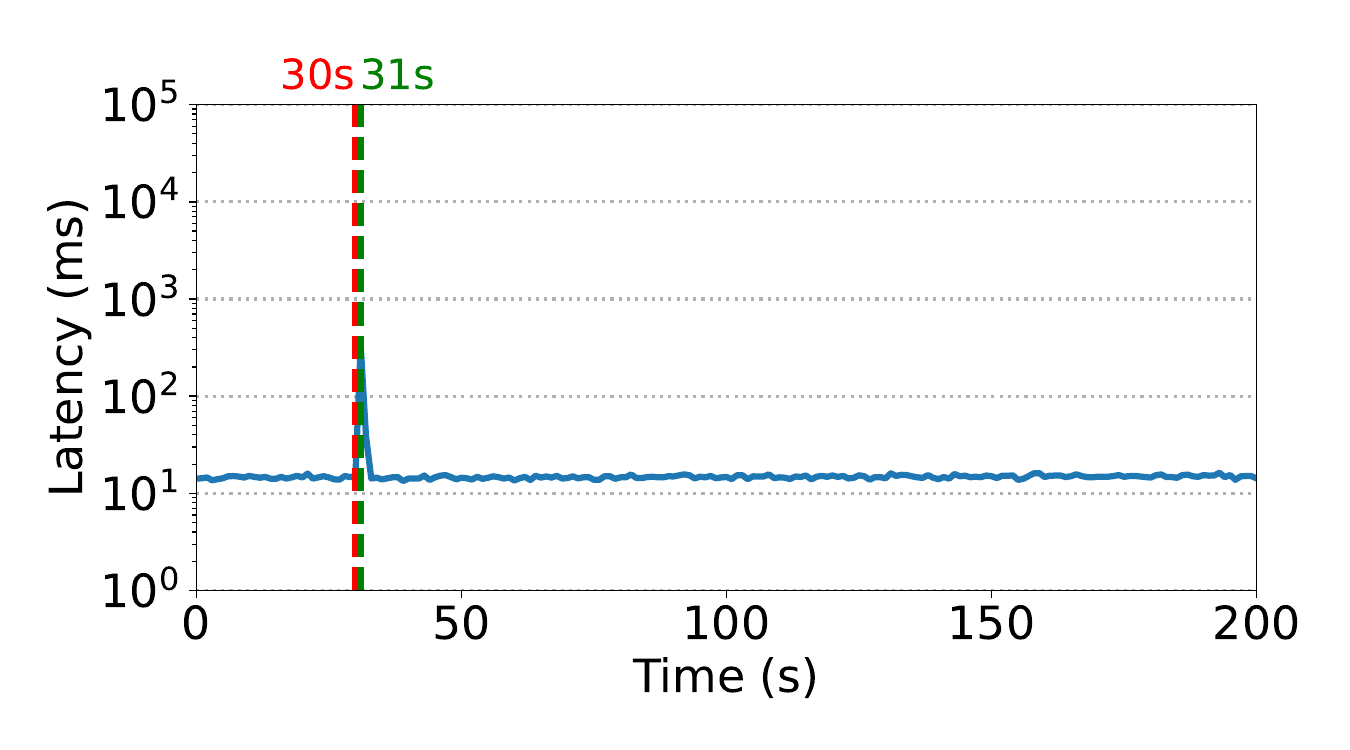}
    \end{subfigure}


    \begin{subfigure}[t]{0.24\textwidth}
        \centering
        \captionsetup{justification=centering}
        \includegraphics[width=\linewidth]{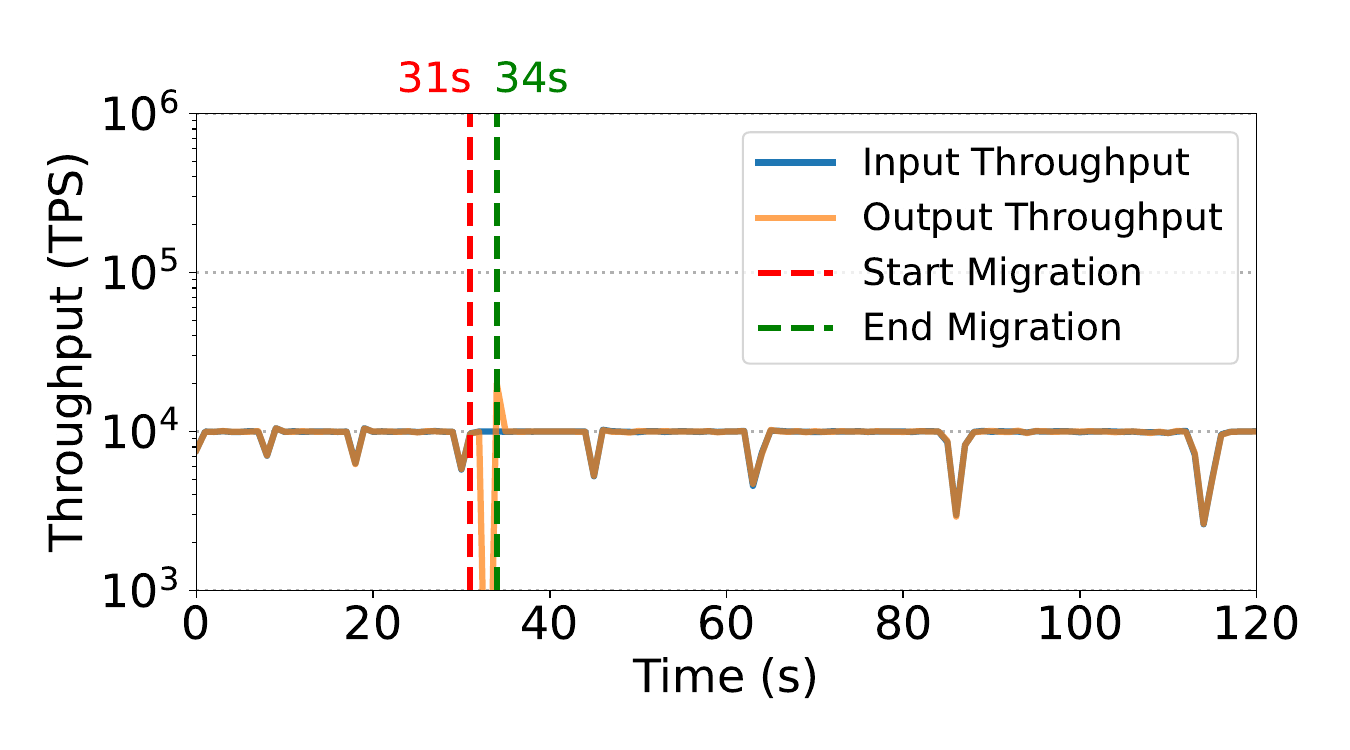}
        \vspace{-5mm}
        \caption*{{\footnotesize S\&R YCSB}}
    \end{subfigure}
    \hfill
    \begin{subfigure}[t]{0.24\textwidth}  
        \centering 
        \captionsetup{justification=centering}
        \includegraphics[width=\linewidth]{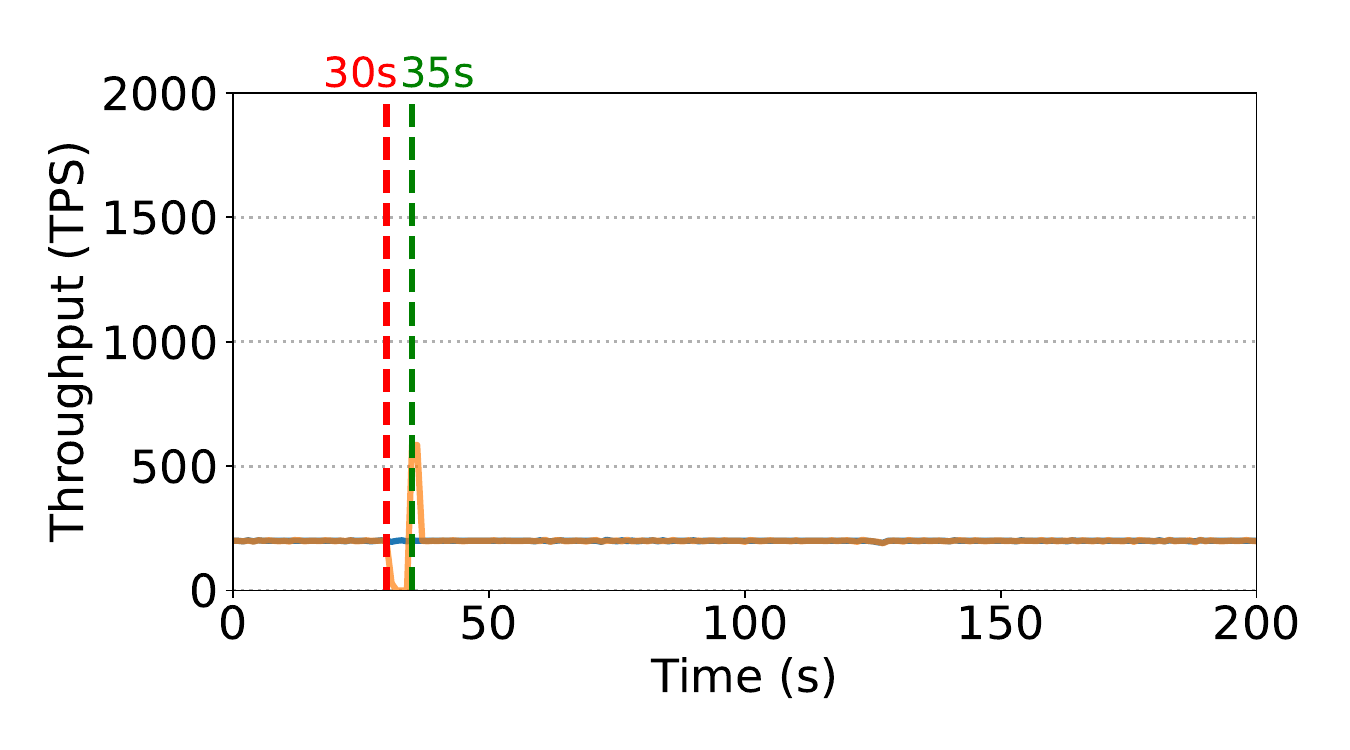}
        \vspace{-5mm}
        \caption*{{\footnotesize S\&R TPC-C}}
    \end{subfigure}
    \hfill
    \begin{subfigure}[t]{0.24\textwidth}   
        \centering 
        \captionsetup{justification=centering}
        \includegraphics[width=\linewidth]{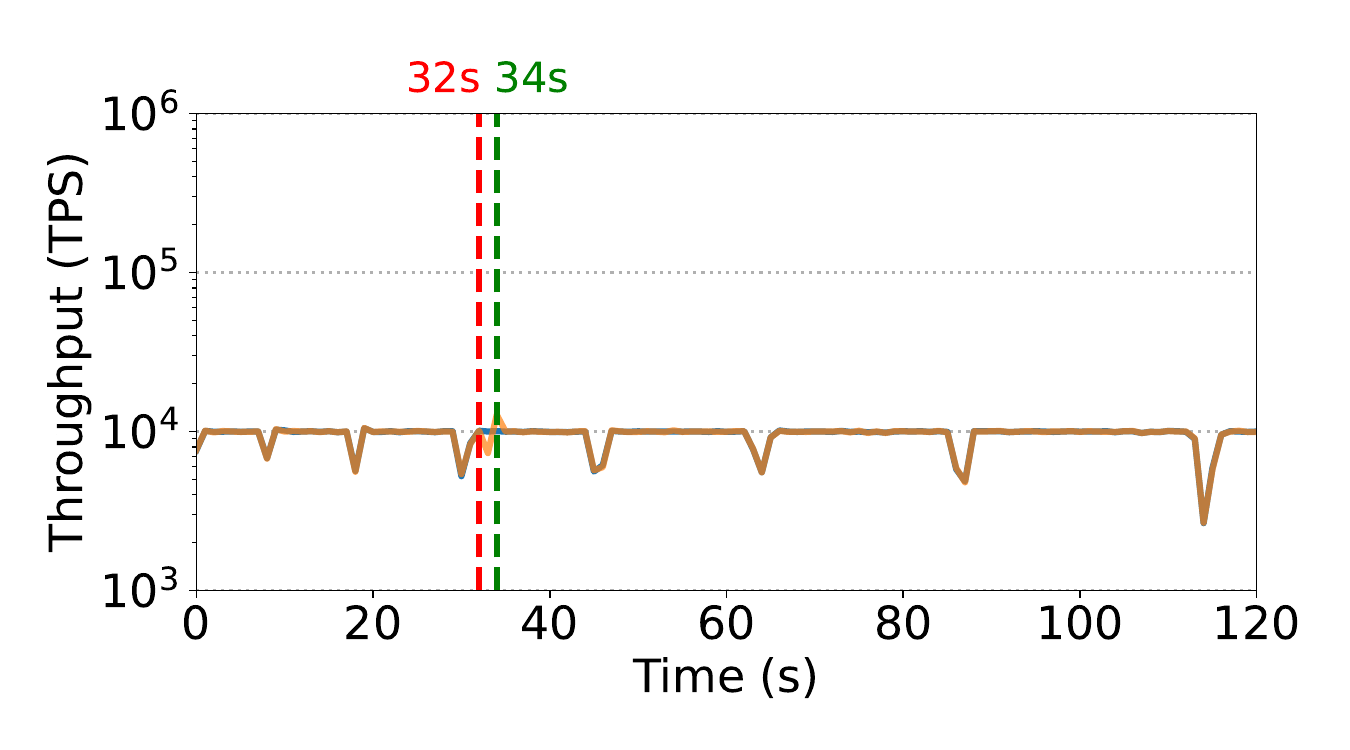}
        \vspace{-5mm}
        \caption*{{\footnotesize Online YCSB}}
    \end{subfigure}
    \hfill
    \begin{subfigure}[t]{0.24\textwidth}   
        \centering 
        \captionsetup{justification=centering}
        \includegraphics[width=\linewidth]{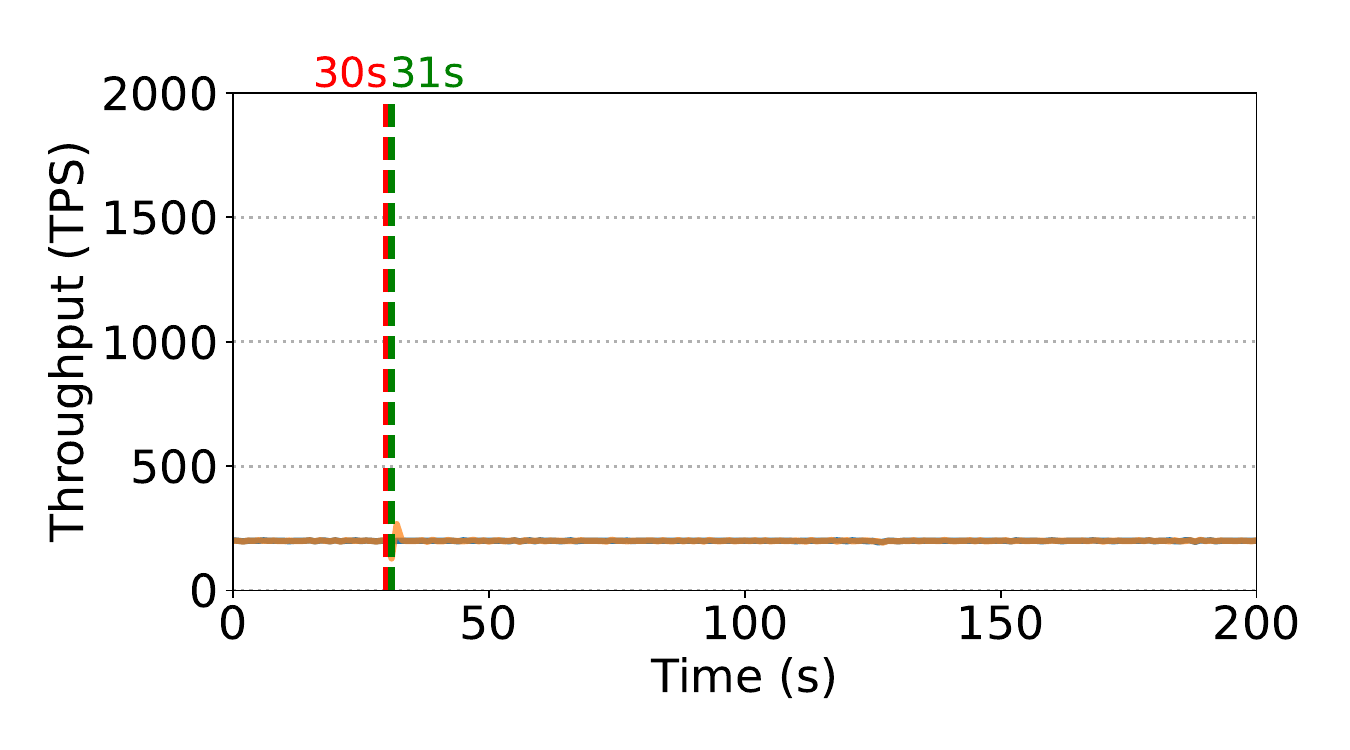}
        \vspace{-5mm}
        \caption*{{\footnotesize Online TPC-C}}
    \end{subfigure}
    \vspace{-2mm}
    \caption{Small State 1GB Scale Up: Latency (top row) and Throughput (bottom row).}
    \vspace{-2mm}
    \label{ch5:fig:scale_up_small_combined}
\end{figure*}

\subsection*{Results}

We have run YCSB and TPC-C workloads in scale-up and down scenarios with big and small state sizes.

\para{Scale-Down} In the scale-down scenario, we go from 16 Styx 1-CPU workers down to 12 in addition to repartitioning the state to the same number of partitions. In Figure~\ref{ch5:fig:scale_down_big_combined}, we observe that the S\&R method in both YCSB and TPC-C is affected by very high latency (tens of seconds), and 14-second downtime in YCSB and 43-second downtime in TPC-C while migrating and repartitioning 10 GB of data. The Online method displays a minor latency hike related to the rehash operation at the beginning of the migration phase in both workloads that do not exceed 10 seconds, and minimal downtime that is close to 5 seconds in YCSB and 10 seconds in TPC-C. In YCSB, both in small and large state, the migration takes around the same time with a minor advantage to the S\&R method. On the contrary, on TPC-C, the Online method takes twice as much since TPC-C contains more keys that need to be transferred, and the async migration is configured to 5 thousand keys per transactional epoch.

In Figure~\ref{ch5:fig:scale_down_small_combined} with the smaller state, we observe the same trends, but the migration impact is much smaller.

\para{Scale-Up} In the scale-up scenario, we go from 12 Styx 1-CPU workers up to 16 in addition to repartitioning the state to the same number of partitions. In Figures~\ref{ch5:fig:scale_up_big_combined} and~\ref{ch5:fig:scale_up_small_combined}
that display the big and small state scale-up experiments; we observe similar behavior to the scale-down experiments, which is to be expected since both migration methods are agnostic to scale-up/down semantics. The only difference is that migration takes longer since 16 threads perform the initial rehashing phase in the scale-down and 12 in the scale-up.

\para{Takeaways} In general, the Online migration method outperforms in all the critical performance indicators such as $i)$ downtime, where the Online migration is at least 4x faster than stop and restart, $ii)$ the peak mean latency does not go above 10 seconds, and once the hashes are computed and Styx catches up to the input, the transactions instantly drop to sub-second latencies. It is important to note that the only metric that the Online method falls behind is the migration time, where its importance lies in the fact that the fault tolerance mechanism is switched off during migration, and the rollback window in the case of failure is larger.

\section{Related Work}

\para{State Migration} Squall~\cite{squall} performs live state migration in transactional systems by locking involved partitions using a dedicated special transaction. It supports on-demand data movement but relies on DBMS-level deadlock handling and assumes range partitioning for optimizations. Clay~\cite{clay} incrementally migrates frequently co-accessed keys using a cost model to balance data movement and transaction performance during the migration. Albatross~\cite{DasNAA11} focuses on migrating state in shared storage systems by incrementally copying in-memory cache and active transaction state. To maintain consistency, Albatross relies on two-phase commit. Meces~\cite{meces} supports fine-grained, on-demand state migration targeting stream processing systems using markers. While ensuring exactly-once semantics, transactions are out-of-scope. Rhino~\cite{rhino} targets query reconfiguration on stream processing systems, maintaining state replicas and utilizing virtual nodes without transaction support.

\para{Autoscaling} Other works targeted dynamic reconfiguration in SPEs. DS2~\cite{ds2} is a control-based autoscaling solution that uses arrival and processing rates of operators to determine when scaling is needed. It focuses on dynamic reconfiguration for SPEs, scaling all operators in a single step by leveraging the topology of the streaming query. To achieve this, the optimal degree of parallelism per operator is calculated progressively. Dhalion~\cite{dhalion} is a control-based framework that uses a backpressure mechanism for rate control. It monitors backpressure signals such as load skew and slow instances to detect resource contention, which triggers scaling actions. DRS \cite{FuDMWYZ17} is a queuing-theory-based autoscaler designed to capture the impact of provisioned resources. It models operator behavior using queuing theory models under steady-state assumptions, offering a more structured framework for latency estimation than relying on backpressure or arrival rates. Lastly, the Horizontal Pod Autoscaler (HPA)\cite{kubernetesHPA} is the default autoscaling solution shipped with Kubernetes. HPA scales horizontally a deployment aiming at matching user-provided target values based on an observed metric, which can be user-defined (e.g., average CPU/memory utilization). However, Styx does not address factors that trigger autoscaling and is left for future work.
\chapter{Discussion \&\ Conclusion}
\label{conclusion}

\dropcap{I}{}n this thesis, we investigated the problem space of transactional cloud applications. While a few systems have tackled parts of this space, each with its advantages and limitations, significant challenges remain. We identified five key research gaps that motivated our work: $i)$ the lack of a principled abstraction and substrate for transactional cloud applications, $ii)$ the limitations of prior systems that hindered their suitability as general-purpose runtimes, $iii)$ the difficulty of developing correct and scalable transactional applications, $iv)$ the challenge of achieving high performance while ensuring strong transactional and fault-tolerance guarantees, and $v)$ the need for efficient and adaptive execution under dynamic workloads. In this chapter, we first summarize the main findings and contributions of this thesis in light of the research gaps identified. We then reflect on the broader implications of our work and discuss avenues for future research that build upon the foundations laid in this thesis.

\section{Main Findings}

\subsection*{Dataflows as a Substrate for Transactional Cloud Applications}

In \Cref{chapter1}, we presented our reasoning behind the selection of a dataflow engine as a suitable substrate for transactional cloud applications, considering the following research question: 

\vspace{2mm}

\rqs{RQ-1} What would be the optimal substrate for a system serving complex cloud applications?

\vspace{2mm}

\noindent To answer \textbf{RQ-1}, we analyzed the requirements of modern cloud applications from a developer's perspective and observed that event-driven microservices share structural similarities with stateful dataflow graphs. This led us to propose dataflow engines as a promising foundation for cloud runtimes, due to their ability to model asynchronous communication, stateful processing, and parallelism. This theoretical foundation guided the practical investigations in subsequent chapters.

\subsection*{The Limitations of Prior Approaches}

To verify our reasoning, in \Cref{chapter2} we enhanced Apache Flink Statefun, an existing SFaaS system, with transactional guarantees and named it T-Statefun. This was led by the following research question:

\vspace{2mm}

\rqs{RQ-2} Is it possible to use an existing SFaaS dataflow system for this purpose? If so, what are the limitations?

\vspace{2mm}

\noindent T-Statefun answered \textbf{RQ-2} and demonstrated that it is indeed feasible to retrofit a stream processing system with transactional capabilities. However, our findings revealed key limitations. Although T-Statefun outperformed existing SFaaS systems, its serializable protocol showed poor scalability under high-contention workloads, and its reliance on remote state access increased latency. Additionally, the system's API complexity made it difficult for non-expert developers to use it effectively, underscoring the need for better programmability.

\subsection*{Difficulty in Programming}

In \Cref{chapter3}, we tackled the programmability issue by introducing \textit{Stateflow}, a domain-specific language embedded in Python that allows developers to write object-oriented cloud applications using familiar imperative constructs. This work addressed the following research question:

\vspace{2mm}

\rqs{RQ-3} Can we design a domain-specific language that works on top of all stream processing systems, creating a simple and easy-to-use object-oriented API?

\vspace{2mm}

\noindent To answer \textbf{RQ-3}, we designed Stateflow as a DSL that compiles Python programs into a dataflow intermediate representation (IR). This abstraction decoupled application logic from the underlying execution engine, enabling the abstraction to be seamlessly compiled against different stream processors. By focusing on Python, we reduced the entry barrier for developers, enabling them to build complex transactional workflows without having to reason about concurrency or distributed systems internals.

\subsection*{Simple Highly Performant Transactional Cloud Applications}

In \Cref{chapter4}, we addressed the limitations in performance and flexibility of existing systems by designing and building Styx, a high-performance distributed dataflow runtime tailored to transactional SFaaS. The following research question drove this work:

\vspace{2mm}

\rqs{RQ-4} Can we build a system that enables developers to write transactional, data-intensive cloud applications without requiring expertise in distributed systems?

\vspace{2mm}

\noindent To answer \textbf{RQ-4}, we developed Styx, a streaming runtime that natively supports Stateflow applications with end-to-end transactional serializability and exactly-once guarantees. Styx achieved high throughput and low latency by applying forward-only deterministic transaction processing and avoiding expensive two-phase commits. It executed arbitrary orchestrations of function calls while preserving consistency and leveraging the dataflow execution model to enable parallelism and fault-tolerance. This work demonstrated that transactional guarantees and performance do not have to be mutually exclusive in cloud application runtimes.

\subsection*{Adaptivity and Efficiency}

Finally, in \Cref{chapter5}, we explored the dynamic scalability of Styx by incorporating elasticity and adaptivity, guided by the following research question:

\vspace{2mm}

\rqs{RQ-5} Can we give Styx elasticity properties, such as state migration, allowing it to become serverless? 

\vspace{2mm}

\noindent To address \textbf{RQ-5}, we extended Styx with a state migration mechanism that enables live relocation of operator state while preserving correctness. We adapted techniques from prior work on stream processor reconfiguration to work within the context of transactional function orchestrations. Our approach enabled Styx to scale applications elastically and recover from load imbalance without downtime, paving the way for a serverless execution model that supports long-lived, stateful cloud applications with fine-grained resource management.

\section{Limitations}

Despite the contributions of this thesis, there are some limitations we would like to address.

\para{High Availability} As discussed in \cref{chapter4}, Styx is designed to recover efficiently from failures using deterministic execution, input message replay, and periodic snapshotting. This enables rapid fault recovery and preserves exactly-once semantics. However, Styx does not yet support high-availability (HA) deployments, where failover is instantaneous, and downtime is imperceptible to users. In its current form, when a failure occurs, affected operators must restore state from durable storage and replay input logs, introducing recovery latency.
Achieving high availability would require mechanisms such as active-standby replication, where each operator has a replica that receives the same input stream and maintains an up-to-date state. Upon failure, the replica could take over immediately, avoiding the need to reload the state from disk and replay messages.

\para{Looser Coupling with the Replayable Message Queue} Styx currently assumes that input streams are ingested via Apache Kafka, which serves as both the ingress and egress mechanism and one of two durability layers. Kafka’s strong ordering guarantees and replayability simplify Styx’s fault tolerance model, enabling deterministic recovery through message replay. However, this design also introduces a tight coupling between Styx and Kafka’s delivery semantics, potentially limiting deployment flexibility and preventing seamless integration with other messaging systems.
In particular, Styx assumes that input queues are partitioned in a way that aligns with its internal parallelism, facilitating efficient sequencing and deterministic execution. These assumptions are made mainly for performance and simplicity, but they are not strictly necessary for correctness. Lifting these assumptions would require rethinking how causal order, partition alignment, and exactly-once semantics are enforced across input queues. Overall, while the current Kafka-centric design benefits from maturity and robustness, enabling broader compatibility with elastic or serverless ingress layers would increase Styx’s deployment portability and make it more suitable for diverse cloud environments.

\para{Analytical Queries}  
 Styx is purpose-built for transactional workloads that require low latency, fine-grained updates, and strong consistency guarantees. It does not currently support analytical queries or Hybrid Transactional and Analytical Processing (HTAP) workloads, which aim to unify real-time decision-making with historical insight. This limits Styx’s applicability in scenarios where the latest transactional state must be queried or analyzed in conjunction with larger historical datasets, for instance, in fraud detection, personalized recommendations, or operational dashboards.
Supporting HTAP workloads in a system like Styx is a non-trivial task. A key challenge lies in balancing the freshness of analytical results with the latency and throughput of transactional processing. Styx could explore integration with analytical backends or log-based data lake ingestion to expose transactional state externally, while maintaining its streaming semantics. Another approach could be the introduction of read-only views with bounded staleness guarantees or side-channel query engines that operate on compacted snapshots. Designing such hybrid execution models while preserving Styx’s exactly-once guarantees, deterministic behavior, and low overhead remains a direction for future work. However, it is beyond the scope of this thesis.

\para{Limited Adaptivity Capabilities} 
Although Styx supports elasticity through manual operator scaling and state migration, it currently lacks fully automated autoscaling mechanisms that are a requirement of modern serverless platforms. Users must manually provision and assign workers, and the system does not yet include dynamic operator placement, load-aware partitioning, or resource-aware scaling. Consequently, Styx cannot automatically respond to changes in workload intensity, nor can it scale to zero during periods of inactivity. This limits its ability to provide cost-efficient and responsive execution, especially in cloud-native and pay-as-you-go settings \cite{trends-and-problems-serverless}.

Nonetheless, the architectural choices in Styx, particularly its modular operator model, logical dataflow execution, and separation between control and data planes, lay a promising foundation for implementing adaptive, serverless behavior. For example, operator placement decisions could be informed by runtime statistics or predictive models, while idle operators could be garbage-collected and rehydrated using persisted state. These enhancements would move Styx toward a fully serverless execution model, where resource management becomes transparent, and users can focus entirely on application logic.

Exploring these capabilities remains outside the scope of this thesis but represents a direction for future research. Integrating reactive scaling policies would enable Styx to support workloads with highly variable demands, thereby creating a general-purpose, transactional serverless dataflow engine.

\para{Long-Running Transactions} Styx currently assumes that transactions are short-lived and bounded in both time and size. As a result, it does not support long-running transactions that can potentially span multiple input epochs. Because the system employs epoch-based processing with coordination barriers, a single long-running transaction can delay the entire epoch's progression, resulting in increased latency and reduced system throughput.

This limitation reflects a broader tension between isolation and liveness in distributed systems. Supporting long-running transactions with strict serializability often requires locking or coordination mechanisms that severely impact performance and availability. In response to this, alternative models such as SAGAs have been proposed \cite{sagas}, which decompose a long transaction into a series of smaller, compensatable steps. These approaches trade strong consistency guarantees for better scalability and resilience, particularly in cloud-native applications \cite{DBLP:conf/cidr/Helland07}.

For many real-world applications, strict isolation may not be necessary. For example, external API calls, human-in-the-loop workflows, or machine learning-based fraud detection can take seconds or even minutes to complete. In these cases, developers may prefer looser guarantees (e.g., eventual consistency, compensation mechanisms) to maintain responsiveness. Enabling support for such workloads would require a redesign of Styx’s execution model, likely through relaxed isolation semantics, compensation patterns, or asynchronous orchestration techniques, which remains an open area for future exploration.

\para{Styx Core Written in Python} Python enables rapid prototyping and lowers the barrier to entry for building distributed systems like Styx. However, it lacks low-level control over memory management, task scheduling, and high-performance networking. These limitations restrict the ability to fine-tune execution performance and concurrency behavior, especially in production environments. While the architectural principles and abstractions of Styx are language-agnostic, implementing the runtime in a systems programming language like Rust or C++ could significantly improve throughput and latency. Languages like Rust, in particular, offer compile-time guarantees that help prevent entire classes of concurrency-related bugs, making it a promising candidate for future reimplementation of the core dataflow engine.

Moreover, user applications in Styx do not need to be written in the same language as the runtime. With the growing maturity of WebAssembly (WASM), it becomes feasible to support applications authored in multiple languages that compile to a common execution target. This could expand Styx’s reach by decoupling the user-facing API from the runtime implementation, paving the way for a multi-language, safer, and more performant execution model.

\section{Future Directions}\label{concl:sec:lim_fw}

In this section, based on the insights and experience we gained from developing Stateflow and Styx, presented in this thesis, we identified open challenges regarding Stateflow/Styx and the field in general. Guided by these, we briefly discuss potential future directions.

\subsection*{Styx}

\para{Auto-Scaling}  
Future work could implement autoscaling mechanisms that adjust operator parallelism in response to workload characteristics. This includes both horizontal scaling (adding/removing workers) and shuffling based on key-based access patterns, ideally without requiring user intervention or manual tuning.

\para{Fault-Tolerant State Migration}  
State migration in existing systems is often performed with fault tolerance mechanisms temporarily disabled, limiting their applicability in highly available environments. A promising direction is the inception of new online, fault-tolerant migration techniques that enable state transfer between workers without pausing or compromising fault tolerance. Moreover, such approaches could open the door to stacked migrations, where new migration plans can be initiated before prior ones have fully completed, enabling faster and more adaptive reconfiguration.

\para{Analytical Workloads for Styx}  
Expanding Styx to support analytical workloads would bridge the gap between real-time transactional processing and decision-making, enabling a unified platform for Hybrid Transactional and Analytical Processing (HTAP). This integration is essential for modern cloud-native applications, which increasingly require the ability to react to operational events with strict transactional guarantees while simultaneously supporting low-latency analytical queries on fresh data.

Achieving HTAP in Styx would involve several system-level enhancements. First, introducing a columnar storage engine alongside the existing row or entity-based execution model would allow Styx to store and scan large volumes of historical data efficiently. Integrating HTAP capabilities also imposes new challenges on resource management and migration. For instance, state migration mechanisms must account for analytical query locality—migrating not only hot keys but also analytical materializations or summary data to where they are most needed. This reinforces the need for lightweight, non-disruptive online migration protocols, which preserve both fault tolerance and query consistency.

\para{Protocol for Long-Running Transactions}  
Currently, the transactional protocol supported in Styx is epoch-based, meaning that new transactions cannot begin until all transactions in an epoch are completed. A long-running transaction would prevent an epoch from completing, resulting in significantly increased latency in all subsequent epochs. To support long-running interactions, future work could explore a specialized protocol for managing either multi-epoch or long-running transactions. This might include compensation mechanisms, sagas, or distributed coordination strategies that maintain serializability without blocking progress.

\subsection*{Stateflow}

\para{Complete Domain-Specific Language for Cloud Applications}  
While Stateflow demonstrates the viability of a Python-embedded DSL for cloud dataflows, future work could formalize a standalone domain-specific language that captures application semantics declaratively. Such a language would provide static analysis, performance optimization, and correctness guarantees that exceed what Python currently offers.

\subsection*{Benchmarks}

\para{Benchmark for Transactional Cloud Applications}  
It is essential to have benchmarks that accurately represent real-world applications to improve existing systems and optimize new ones. However, the community lacks standardized benchmarks for evaluating transactional cloud-native applications. Designing a representative benchmark suite, including realistic workloads, SLAs, and cloud deployment models, would help contextualize systems like Styx and facilitate meaningful comparisons across architectures.



\thumbfalse

\chapter*{Bibliography}
\addcontentsline{toc}{chapter}{Bibliography}
\setheader{Bibliography}

\bibliographystyle{unsrt}
\bibliography{dissertation}

\listoffigures
\listoftables
\glsaddall

\chapter*{Acknowledgments}
\addcontentsline{toc}{chapter}{Acknowledgments}
\setheader{Acknowledgments}

Doing a PhD can often feel like a solitary endeavor; however, this journey would not have been possible without the support of many people, for which I am deeply grateful.

\para{Asterios} Thank you for all the “rope” you gave me to pursue the things I truly enjoyed during this PhD, as well as for the opportunities and life advice along the way.

\para{Geert-Jan} Thank you for our philosophical discussions and for your support during some of the most difficult moments of my PhD.

\para{Thesis Committee} Thank you for the time you dedicated to evaluating this thesis and for honoring me with your participation in the defense. Special thanks to Paris for always answering my questions about his own PhD and distributed systems in general.

\para{George C} You may have joined during my final years, but your critical assistance and fresh perspective made a profound difference. I can say with certainty that I would not be finishing this PhD without you.

\para{Marios} Thank you for your steady guidance throughout these years. Your clarity, patience, and calm perspective helped me navigate difficult moments and grow as a researcher and as a person.

\para{George S. \& Christos} We started as academic brothers, but I now consider you close to real brothers. Thank you for all the great moments, your tolerance, and your help.

\para{My Master Students} I hope I have supported your learning journey as much as you have enriched mine.

\para{WIS Group} Thank you to all colleagues in the WIS group for the supportive and stimulating environment, and for making daily life at the university genuinely enjoyable.

\vspace{2mm}

\noindent To everyone who opened this document to check if their name appears here, thank you. To my friends, both near and far, and to my family: your presence, messages, visits, and timely distractions kept me grounded and sane throughout this journey. Nothing I have achieved would have been possible without your love, patience, and faith in me.

\begin{flushright}
{\makeatletter\itshape
    Kyriakos \\
    Delft, January 2026
\makeatother}
\end{flushright}

\chapter*{Curriculum Vit\ae}
\addcontentsline{toc}{chapter}{Curriculum Vit\ae}
\setheader{Curriculum Vit\ae}

\makeatletter
\authors{\@firstname\ {\titleshape\@lastname}}
\makeatother

\noindent
\begin{longtable}{p{.225\textwidth} p{.70\textwidth}}
    08-12-1994 & Born in Chania, Greece
\end{longtable}

\section*{Professional Experience}

\begin{longtable}{p{.225\textwidth} p{.70\textwidth}}
    2025-present & Software Engineer, Ververica GmbH, Remote \\
    2023 & Research Intern, Huawei Technologies R\&D, The UK \\
    2019-2020 & Research Intern, ING Group, The Netherlands
\end{longtable}

\section*{Education}

\begin{longtable}{p{.225\textwidth} p{.70\textwidth}}
    2021-2025 & Doctor of Philosophy (Ph.D.), Computer Science \\
              & Delft University of Technology, The Netherlands \\ \\
    2018-2020& Master of Science (M.Sc.), Computer Science \\
              & Delft University of Technology, The Netherlands \\ \\
    2012-2018 & Diploma (M.Eng.), Electrical and Computer Engineering \\
              & Technical University of Crete, Greece
\end{longtable}
\chapter*{List of Publications}
\addcontentsline{toc}{chapter}{List of Publications}
\setheader{List of Publications}
\label{publications}

\begin{etaremune}{\small

\item[~1.] J. Arns, H. Ng, \textbf{K. Psarakis}, A. Katsifodimos, and P. Carbone. Event Horizon: Asymmetric Dependencies for Fast Geo-Distributed Operations, in Conference on Innovative Data Systems Research (CIDR), 2026.

\item[\faFileTextO~~2.] \textbf{K. Psarakis}, G. Christodoulou, G. Siachamis, M. Fragkoulis, and A. Katsifodimos. State Migration in Styx: Towards Serverless Transactional Functions, under review, 2025.

\item[\faFileTextO~~3.] \textbf{K. Psarakis}, O. Mraz, G. Christodoulou, G. Siachamis, M. Fragkoulis, and A. Katsifodimos. Styx in Action: Transactional Cloud Applications Made Easy, in Very Large Data Bases (VLDB), 2025.

\item[\faFileTextO~~4.] \textbf{K. Psarakis}, G. Christodoulou, G. Siachamis, M. Fragkoulis, and A. Katsifodimos. Styx: Transactional Stateful Functions on Streaming Dataflows, in  ACM Special Interest Group on Management of Data (SIGMOD), 2025.

\item[5.] R. Laigner, G. Christodoulou, \textbf{K. Psarakis}, A. Katsifodimos, and Y. Zhou. Transactional Cloud Applications: Status Quo, Challenges, and Opportunities, in ACM Special Interest Group on Management of Data (SIGMOD), 2025.

\item[\faFileTextO~~6.] \textbf{K. Psarakis}, G. Christodoulou, M. Fragkoulis, and A. Katsifodimos. Transactional Cloud Applications Go with the (Data)Flow, in Conference on Innovative Data Systems Research (CIDR), 2025.

\item[\faFileTextO~~7.] \textbf{K. Psarakis}, W. Zorgdrager, M. Fragkoulis, G. Salvaneschi, and A. Katsifodimos. Stateful Entities: Object-oriented Cloud Applications as Distributed Dataflows, in International Conference on Extending Database Technology (EDBT), 2024.

\item[8.] G. Siachamis, \textbf{K. Psarakis}, M. Fragkoulis, A. van Deursen, P. Carbone, and A. Katsifodimos. CheckMate: Evaluating Checkpointing Protocols for Streaming Dataflows, in IEEE International Conference on Data Engineering (ICDE), 2024.

\item[9.] G. Siachamis, G. Christodoulou, \textbf{K. Psarakis}, M. Fragkoulis, A. van Deursen, A. Katsifodimos. Evaluating Stream Processing Autoscalers, in ACM Conference on Distributed and Event‐Based Systems (DEBS), 2024.

\item[10.] G. Siachamis, \textbf{K. Psarakis}, M. Fragkoulis, O. Papapetrou, A. van Deursen, A. Katsifodimos. Adaptive Distributed Streaming Similarity Joins, in ACM Conference on Distributed and Event‐Based Systems (DEBS), 2023.

\item[11.] A. Ionescu, K. Patroumpas, \textbf{K. Psarakis}, G. Chatzigeorgakidis, D. Collarana, K. Barenscher, D. Skoutas, A. Katsifodimos, and S. Athanasiou. Topio: an Open-Source Web Platform for Trading Geospatial Data, in International Conference on Web Engineering (ICWE), 2023.
    
\item[\faTrophy~~12.] A. Ionescu, A. Alexandridou, L. Ikonomou, \textbf{K. Psarakis}, K. Patroumpas, G. Chatzigeorgakidis, D. Skoutas, S. Athanasiou, R. Hai, and A. Katsifodimos. Topio Marketplace: Search and Discovery of Geospatial Data, in International Conference on Extending Database Technology (EDBT), 2023.

\item[\faFileTextO~~13.] \textbf{K. Psarakis}, W. Zorgdrager, M. Fragkoulis, G. Salvaneschi, and A. Katsifodimos. Stateful Entities: Object-oriented Cloud Applications as Distributed Dataflows, in Conference on Innovative Data Systems Research (CIDR), 2023.

\item[\faFileTextO~~14.] M. de Heus, \textbf{K. Psarakis}, M. Fragkoulis, and A. Katsifodimos. Transactions across serverless functions leveraging stateful dataflows, in Elsevier Information Systems, 2022.

\item[15.] C. Koutras, \textbf{K. Psarakis}, G. Siachamis, A. Ionescu, M. Fragkoulis, A. Bonifati, and A. Katsifodimos. Valentine in Action: Matching Tabular Data at Scale, in Very Large Data Bases (VLDB), 2021.

\item[\faFileTextO~~\faTrophy~~16.] M. de Heus, \textbf{K. Psarakis}, M. Fragkoulis, and A. Katsifodimos. Distributed Transactions on Serverless Stateful Functions, in ACM Conference on Distributed and Event‐Based Systems (DEBS), 2021.

\item[17.] C. Koutras, G. Siachamis, A. Ionescu, \textbf{K. Psarakis}, J. Brons, M. Fragkoulis, C. Lofi, A. Bonifati, and A. Katsifodimos. Valentine: Evaluating matching techniques for dataset discovery, in IEEE International Conference on Data Engineering (ICDE), 2021.

}\end{etaremune}

\vspace{0.5cm}
\noindent
\faFileTextO~~Included in this thesis.\\
\faTrophy~~Won a best paper or demonstration award.
\chapter*{SIKS Dissertation Series}
\addcontentsline{toc}{chapter}{SIKS Dissertation Series}
\setheader{SIKS Dissertation Series}

\begin{xltabular}{\linewidth}{@{} l @{\hspace{0.5em}} l @{\hspace{1em}} X @{}}


2016

	&	 01	&	 Syed Saiden Abbas (RUN), Recognition of Shapes by Humans and Machines\\

	&	 02	&	 Michiel Christiaan Meulendijk (UU), Optimizing medication reviews through decision support: prescribing a better pill to swallow\\

	&	 03	&	 Maya Sappelli (RUN), Knowledge Work in Context: User Centered Knowledge Worker Support\\

	&	 04	&	 Laurens Rietveld (VUA), Publishing and Consuming Linked Data\\

	&	 05	&	 Evgeny Sherkhonov (UvA), Expanded Acyclic Queries: Containment and an Application in Explaining Missing Answers\\

	&	 06	&	 Michel Wilson (TUD), Robust scheduling in an uncertain environment\\

	&	 07	&	 Jeroen de Man (VUA), Measuring and modeling negative emotions for virtual training\\

	&	 08	&	 Matje van de Camp (TiU), A Link to the Past: Constructing Historical Social Networks from Unstructured Data\\

	&	 09	&	 Archana Nottamkandath (VUA), Trusting Crowdsourced Information on Cultural Artefacts\\

	&	 10	&	 George Karafotias (VUA), Parameter Control for Evolutionary Algorithms\\

	&	 11	&	 Anne Schuth (UvA), Search Engines that Learn from Their Users\\

	&	 12	&	 Max Knobbout (UU), Logics for Modelling and Verifying Normative Multi-Agent Systems\\

	&	 13	&	 Nana Baah Gyan (VUA), The Web, Speech Technologies and Rural Development in West Africa - An ICT4D Approach\\

	&	 14	&	 Ravi Khadka (UU), Revisiting Legacy Software System Modernization\\

	&	 15	&	 Steffen Michels (RUN), Hybrid Probabilistic Logics - Theoretical Aspects, Algorithms and Experiments\\

	&	 16	&	 Guangliang Li (UvA), Socially Intelligent Autonomous Agents that Learn from Human Reward\\

	&	 17	&	 Berend Weel (VUA), Towards Embodied Evolution of Robot Organisms\\

	&	 18	&	 Albert Mero\~{n}o Pe\~{n}uela (VUA), Refining Statistical Data on the Web\\

	&	 19	&	 Julia Efremova (TU/e), Mining Social Structures from Genealogical Data\\

	&	 20	&	 Daan Odijk (UvA), Context \& Semantics in News \& Web Search\\

	&	 21	&	 Alejandro Moreno C\'{e}lleri (UT), From Traditional to Interactive Playspaces: Automatic Analysis of Player Behavior in the Interactive Tag Playground\\

	&	 22	&	 Grace Lewis (VUA), Software Architecture Strategies for Cyber-Foraging Systems\\

	&	 23	&	 Fei Cai (UvA), Query Auto Completion in Information Retrieval\\

	&	 24	&	 Brend Wanders (UT), Repurposing and Probabilistic Integration of Data; An Iterative and data model independent approach\\

	&	 25	&	 Julia Kiseleva (TU/e), Using Contextual Information to Understand Searching and Browsing Behavior\\

	&	 26	&	 Dilhan Thilakarathne (VUA), In or Out of Control: Exploring Computational Models to Study the Role of Human Awareness and Control in Behavioural Choices, with Applications in Aviation and Energy Management Domains\\

	&	 27	&	 Wen Li (TUD), Understanding Geo-spatial Information on Social Media\\

	&	 28	&	 Mingxin Zhang (TUD), Large-scale Agent-based Social Simulation - A study on epidemic prediction and control\\

	&	 29	&	 Nicolas H\"{o}ning (TUD), Peak reduction in decentralised electricity systems - Markets and prices for flexible planning\\

	&	 30	&	 Ruud Mattheij (TiU), The Eyes Have It\\

	&	 31	&	 Mohammad Khelghati (UT), Deep web content monitoring\\

	&	 32	&	 Eelco Vriezekolk (UT), Assessing Telecommunication Service Availability Risks for Crisis Organisations\\

	&	 33	&	 Peter Bloem (UvA), Single Sample Statistics, exercises in learning from just one example\\

	&	 34	&	 Dennis Schunselaar (TU/e), Configurable Process Trees: Elicitation, Analysis, and Enactment\\

	&	 35	&	 Zhaochun Ren (UvA), Monitoring Social Media: Summarization, Classification and Recommendation\\

	&	 36	&	 Daphne Karreman (UT), Beyond R2D2: The design of nonverbal interaction behavior optimized for robot-specific morphologies\\

	&	 37	&	 Giovanni Sileno (UvA), Aligning Law and Action - a conceptual and computational inquiry\\

	&	 38	&	 Andrea Minuto (UT), Materials that Matter - Smart Materials meet Art \& Interaction Design\\

	&	 39	&	 Merijn Bruijnes (UT), Believable Suspect Agents; Response and Interpersonal Style Selection for an Artificial Suspect\\

	&	 40	&	 Christian Detweiler (TUD), Accounting for Values in Design\\

	&	 41	&	 Thomas King (TUD), Governing Governance: A Formal Framework for Analysing Institutional Design and Enactment Governance\\

	&	 42	&	 Spyros Martzoukos (UvA), Combinatorial and Compositional Aspects of Bilingual Aligned Corpora\\

	&	 43	&	 Saskia Koldijk (RUN), Context-Aware Support for Stress Self-Management: From Theory to Practice\\

	&	 44	&	 Thibault Sellam (UvA), Automatic Assistants for Database Exploration\\

	&	 45	&	 Bram van de Laar (UT), Experiencing Brain-Computer Interface Control\\

	&	 46	&	 Jorge Gallego Perez (UT), Robots to Make you Happy\\

	&	 47	&	 Christina Weber (UL), Real-time foresight - Preparedness for dynamic innovation networks\\

	&	 48	&	 Tanja Buttler (TUD), Collecting Lessons Learned\\

	&	 49	&	 Gleb Polevoy (TUD), Participation and Interaction in Projects. A Game-Theoretic Analysis\\

	&	 50	&	 Yan Wang (TiU), The Bridge of Dreams: Towards a Method for Operational Performance Alignment in IT-enabled Service Supply Chains\\

\midrule

2017

	&	 01	&	 Jan-Jaap Oerlemans (UL), Investigating Cybercrime\\

	&	 02	&	 Sjoerd Timmer (UU), Designing and Understanding Forensic Bayesian Networks using Argumentation\\

	&	 03	&	 Dani\"{e}l Harold Telgen (UU), Grid Manufacturing; A Cyber-Physical Approach with Autonomous Products and Reconfigurable Manufacturing Machines\\

	&	 04	&	 Mrunal Gawade (CWI), Multi-core Parallelism in a Column-store\\

	&	 05	&	 Mahdieh Shadi (UvA), Collaboration Behavior\\

	&	 06	&	 Damir Vandic (EUR), Intelligent Information Systems for Web Product Search\\

	&	 07	&	 Roel Bertens (UU), Insight in Information: from Abstract to Anomaly\\

	&	 08	& 	 Rob Konijn (VUA), Detecting Interesting Differences:Data Mining in Health Insurance Data using Outlier Detection and Subgroup Discovery\\

	&	 09	&	 Dong Nguyen (UT), Text as Social and Cultural Data: A Computational Perspective on Variation in Text\\

	&	 10	&	 Robby van Delden (UT), (Steering) Interactive Play Behavior\\

	&	 11	&	 Florian Kunneman (RUN), Modelling patterns of time and emotion in Twitter \#anticipointment\\

	&	 12	&	 Sander Leemans (TU/e), Robust Process Mining with Guarantees\\

 	&	 13	& 	 Gijs Huisman (UT), Social Touch Technology - Extending the reach of social touch through haptic technology\\

 	&	 14	&	 Shoshannah Tekofsky (TiU), You Are Who You Play You Are: Modelling Player Traits from Video Game Behavior\\

	&	 15	&	 Peter Berck (RUN),  Memory-Based Text Correction\\

	&	 16	&	 Aleksandr Chuklin (UvA), Understanding and Modeling Users of Modern Search Engines\\

	&	 17	&	 Daniel Dimov (UL), Crowdsourced Online Dispute Resolution\\

	&	 18	&	 Ridho Reinanda (UvA), Entity Associations for Search\\

	&	 19	& 	 Jeroen Vuurens (UT), Proximity of Terms, Texts and Semantic Vectors in Information Retrieval\\

	&	 20	&	 Mohammadbashir Sedighi (TUD), Fostering Engagement in Knowledge Sharing: The Role of Perceived Benefits, Costs and Visibility\\

	&	 21	&	 Jeroen Linssen (UT), Meta Matters in Interactive Storytelling and Serious Gaming (A Play on Worlds)\\

	&	 22	&	 Sara Magliacane (VUA), Logics for causal inference under uncertainty\\

	&	 23	&	 David Graus (UvA), Entities of Interest --- Discovery in Digital Traces\\

	&	 24	&	 Chang Wang (TUD), Use of Affordances for Efficient Robot Learning\\

	&	 25	&	 Veruska Zamborlini (VUA), Knowledge Representation for Clinical Guidelines, with applications to Multimorbidity Analysis and Literature Search\\

	&	 26	&	 Merel Jung (UT), Socially intelligent robots that understand and respond to human touch\\

	&	 27	&	 Michiel Joosse (UT), Investigating Positioning and Gaze Behaviors of Social Robots: People's Preferences, Perceptions and Behaviors\\

	&	 28	&	 John Klein (VUA), Architecture Practices for Complex Contexts\\

	&	 29	&	 Adel Alhuraibi (TiU), From IT-BusinessStrategic Alignment to Performance: A Moderated Mediation Model of Social Innovation, and Enterprise Governance of    IT"\\

	&	 30	&	 Wilma Latuny (TiU), The Power of Facial Expressions\\

	&	 31	&	 Ben Ruijl (UL), Advances in computational methods for QFT calculations\\

	&	 32	& 	 Thaer Samar (RUN), Access to and Retrievability of Content in Web Archives\\

	&	 33	&	 Brigit van Loggem (OU), Towards a Design Rationale for Software Documentation: A Model of Computer-Mediated Activity\\

	&	 34	&	 Maren Scheffel (OU), The Evaluation Framework for Learning Analytics \\

	&	 35	&	 Martine de Vos (VUA), Interpreting natural science spreadsheets \\

	&	 36	&	 Yuanhao Guo (UL), Shape Analysis for Phenotype Characterisation from High-throughput Imaging \\

	&	 37	&	 Alejandro Montes Garcia (TU/e), WiBAF: A Within Browser Adaptation Framework that Enables Control over Privacy \\

	&	 38	&	 Alex Kayal (TUD), Normative Social Applications \\

	&	 39	&	 Sara Ahmadi (RUN), Exploiting properties of the human auditory system and compressive sensing methods to increase   noise robustness in ASR \\

	&	 40	&	 Altaf Hussain Abro (VUA), Steer your Mind: Computational Exploration of Human Control in Relation to Emotions, Desires and Social Support For applications in human-aware support systems \\

	&	 41	&	 Adnan Manzoor (VUA), Minding a Healthy Lifestyle: An Exploration of Mental Processes and a Smart Environment to Provide Support for a Healthy Lifestyle\\

	&	 42	&	 Elena Sokolova (RUN), Causal discovery from mixed and missing data with applications on ADHD  datasets\\

	&	 43	&	 Maaike de Boer (RUN), Semantic Mapping in Video Retrieval\\

	&	 44	&	 Garm Lucassen (UU), Understanding User Stories - Computational Linguistics in Agile Requirements Engineering\\

	&	 45	&	 Bas Testerink	(UU), Decentralized Runtime Norm Enforcement\\

	&	 46	&	 Jan Schneider	(OU), Sensor-based Learning Support\\

	&	 47	&	 Jie Yang (TUD), Crowd Knowledge Creation Acceleration\\

	&	 48	&	 Angel Suarez (OU), Collaborative inquiry-based learning\\

\midrule

2018

	&	 01	&	 Han van der Aa (VUA), Comparing and Aligning Process Representations \\

	&	 02	&	 Felix Mannhardt (TU/e), Multi-perspective Process Mining \\

	&	 03	&	 Steven Bosems (UT), Causal Models For Well-Being: Knowledge Modeling, Model-Driven Development of Context-Aware Applications, and Behavior Prediction\\

	&	 04	&	 Jordan Janeiro (TUD), Flexible Coordination Support for Diagnosis Teams in Data-Centric Engineering Tasks \\

	&	 05	&	 Hugo Huurdeman (UvA), Supporting the Complex Dynamics of the Information Seeking Process \\

	&	 06	&	 Dan Ionita (UT), Model-Driven Information Security Risk Assessment of Socio-Technical Systems \\

	&	 07	&	 Jieting Luo (UU), A formal account of opportunism in multi-agent systems \\

	&	 08	&	 Rick Smetsers (RUN), Advances in Model Learning for Software Systems \\

	&	 09	&	 Xu Xie	(TUD), Data Assimilation in Discrete Event Simulations \\

	&	 10	&	 Julienka Mollee (VUA), Moving forward: supporting physical activity behavior change through intelligent technology \\

	&	 11	&	 Mahdi Sargolzaei (UvA), Enabling Framework for Service-oriented Collaborative Networks \\

	&	 12	&	 Xixi Lu (TU/e), Using behavioral context in process mining \\

	&	 13	&	 Seyed Amin Tabatabaei (VUA), Computing a Sustainable Future \\

	&	 14	&	 Bart Joosten (TiU), Detecting Social Signals with Spatiotemporal Gabor Filters \\

	&	 15	&	 Naser Davarzani (UM), Biomarker discovery in heart failure \\

	&	 16	&	 Jaebok Kim (UT), Automatic recognition of engagement and emotion in a group of children \\

	&	 17	&	 Jianpeng Zhang (TU/e), On Graph Sample Clustering \\

	&	 18	& 	 Henriette Nakad (UL), De Notaris en Private Rechtspraak \\

	&	 19	&	 Minh Duc Pham (VUA), Emergent relational schemas for RDF \\

	&	 20	&	 Manxia Liu (RUN), Time and Bayesian Networks \\

	&	 21	&	 Aad Slootmaker (OU), EMERGO: a generic platform for authoring and playing scenario-based serious games \\

	&	 22	&	 Eric Fernandes de Mello Ara\'{u}jo (VUA), Contagious: Modeling the Spread of Behaviours, Perceptions and Emotions in Social Networks \\

	&	 23	&	 Kim Schouten (EUR), Semantics-driven Aspect-Based Sentiment Analysis \\

	&	 24	&	 Jered Vroon (UT), Responsive Social Positioning Behaviour for Semi-Autonomous Telepresence Robots \\

	&	 25	&	 Riste Gligorov (VUA), Serious Games in Audio-Visual Collections \\

	&	 26	& 	 Roelof Anne Jelle de Vries (UT),Theory-Based and Tailor-Made: Motivational Messages for Behavior Change Technology \\

	&	 27	&	 Maikel Leemans (TU/e), Hierarchical Process Mining for Scalable Software Analysis \\

	&	 28	&	 Christian Willemse (UT), Social Touch Technologies: How they feel and how they make you feel \\

	&	 29	&	 Yu Gu (TiU), Emotion Recognition from Mandarin Speech \\

	&	 30	&	 Wouter Beek (VUA),  The "K" in "semantic web" stands for "knowledge": scaling semantics to the web \\

\midrule

2019

	&	 01	&	 Rob van Eijk (UL),Web privacy measurement in real-time bidding systems. A graph-based approach to RTB system classification \\

	&	 02	&	 Emmanuelle Beauxis Aussalet (CWI, UU), Statistics and Visualizations for Assessing Class Size Uncertainty \\

	&	 03	&	 Eduardo Gonzalez Lopez de Murillas (TU/e), Process Mining on Databases: Extracting Event Data from Real Life Data Sources \\

	&	 04	&	 Ridho Rahmadi (RUN), Finding stable causal structures from clinical data \\

	& 	 05	&	 Sebastiaan van Zelst (TU/e), Process Mining with Streaming Data \\

	&	 06	& 	 Chris Dijkshoorn (VUA), Nichesourcing for Improving Access to Linked Cultural Heritage Datasets \\

	&	 07	&	 Soude Fazeli (TUD), Recommender Systems in Social Learning Platforms \\

	& 	 08	&	 Frits de Nijs (TUD), Resource-constrained Multi-agent Markov Decision Processes \\

	&	 09	&	 Fahimeh Alizadeh Moghaddam (UvA), Self-adaptation for energy efficiency in software systems \\

	&	 10	&	 Qing Chuan Ye (EUR), Multi-objective Optimization Methods for Allocation and Prediction \\

	&	 11	&	 Yue Zhao (TUD), Learning Analytics Technology to Understand Learner Behavioral Engagement in MOOCs \\

	&	 12	&	 Jacqueline Heinerman (VUA), Better Together \\

	&	 13	&	 Guanliang Chen (TUD), MOOC Analytics: Learner Modeling and Content Generation \\

	&	 14	&	 Daniel Davis (TUD), Large-Scale Learning Analytics: Modeling Learner Behavior \& Improving Learning Outcomes in Massive Open Online Courses \\

	&	 15	&	 Erwin Walraven (TUD), Planning under Uncertainty in Constrained and Partially Observable Environments \\

	&	 16	&	 Guangming Li (TU/e), Process Mining based on Object-Centric Behavioral Constraint (OCBC) Models \\

	&	 17	&	 Ali Hurriyetoglu (RUN),Extracting actionable information from microtexts \\

	&	 18	&	 Gerard Wagenaar (UU), Artefacts in Agile Team Communication \\

	&	 19	&	 Vincent Koeman (TUD), Tools for Developing Cognitive Agents \\

	&	 20	&	 Chide Groenouwe (UU), Fostering technically augmented human collective intelligence \\

	&	 21	&	 Cong Liu (TU/e), Software Data Analytics: Architectural Model Discovery and Design Pattern Detection \\

	&	 22	&	 Martin van den Berg (VUA),Improving IT Decisions with Enterprise Architecture \\

	&	 23	&	 Qin Liu (TUD), Intelligent Control Systems: Learning, Interpreting, Verification\\

	&	 24	&	 Anca Dumitrache (VUA),  Truth in Disagreement - Crowdsourcing Labeled Data for Natural Language Processing\\

	&	 25	&	 Emiel van Miltenburg (VUA), Pragmatic factors in (automatic) image description \\

	&	 26	&	 Prince Singh (UT), An Integration Platform for Synchromodal Transport \\

	&	 27	&	 Alessandra Antonaci (OU), The Gamification Design Process applied to (Massive) Open Online Courses\\

	&	 28	&	 Esther Kuindersma (UL), Cleared for take-off: Game-based learning to prepare airline pilots for critical situations \\

	&	 29	&	 Daniel Formolo (VUA), Using virtual agents for simulation and training of social skills in safety-critical circumstances \\

	&	 30	&	 Vahid Yazdanpanah (UT), Multiagent Industrial Symbiosis Systems \\

	&	 31	&	 Milan Jelisavcic (VUA), Alive and Kicking: Baby Steps in Robotics \\

	&	 32	&	 Chiara Sironi (UM), Monte-Carlo Tree Search for Artificial General Intelligence in Games \\

	&	 33	&	 Anil Yaman (TU/e), Evolution of Biologically Inspired Learning in Artificial Neural Networks \\

	&	 34	&	 Negar Ahmadi (TU/e), EEG Microstate and Functional Brain Network Features for Classification of Epilepsy and PNES \\

	&	 35	&	 Lisa Facey-Shaw (OU), Gamification with digital badges in learning programming \\

	&	 36	&	 Kevin Ackermans (OU), Designing Video-Enhanced Rubrics to Master Complex Skills \\

	&	 37	&	 Jian Fang (TUD), Database Acceleration on FPGAs \\

	&	 38	&	 Akos Kadar (OU), Learning visually grounded and multilingual representations \\

\midrule

2020

	&	 01	&	 Armon Toubman (UL), Calculated Moves: Generating Air Combat Behaviour \\

	&	 02	&	 Marcos de Paula Bueno (UL), Unraveling Temporal Processes using Probabilistic Graphical Models \\

	&	 03	&	 Mostafa Deghani (UvA), Learning with Imperfect Supervision for Language Understanding \\

	&	 04	&	 Maarten van Gompel (RUN), Context as Linguistic Bridges \\

	&	 05	&	 Yulong Pei (TU/e), On local and global structure mining \\

	&	 06	&	 Preethu Rose Anish (UT), Stimulation Architectural Thinking during Requirements Elicitation - An Approach and Tool Support \\

	&	 07	&	 Wim van der Vegt (OU), Towards a software architecture for reusable game components \\

	&	 08	&	 Ali Mirsoleimani (UL),Structured Parallel Programming for Monte Carlo Tree Search \\

	&	 09	&	 Myriam Traub (UU), Measuring Tool Bias and Improving Data Quality for Digital Humanities Research \\

	&	 10	&	 Alifah Syamsiyah (TU/e), In-database Preprocessing for Process Mining \\

	&	 11	&	 Sepideh Mesbah (TUD), Semantic-Enhanced Training Data AugmentationMethods for Long-Tail Entity Recognition Models \\

	&	 12	&	 Ward van Breda (VUA), Predictive Modeling in E-Mental Health: Exploring Applicability in Personalised Depression Treatment \\

	&	 13	&	 Marco Virgolin (CWI), Design and Application of Gene-pool Optimal Mixing Evolutionary Algorithms for Genetic Programming \\

	&	 14	&	 Mark Raasveldt (CWI/UL), Integrating Analytics with Relational Databases \\

	&	 15	&	 Konstantinos Georgiadis (OU),  Smart CAT: Machine Learning for Configurable Assessments in Serious Games \\

	&	 16	&	 Ilona Wilmont (RUN), Cognitive Aspects of Conceptual Modelling \\

	&	 17	&	 Daniele Di Mitri (OU), The Multimodal Tutor: Adaptive Feedback from Multimodal Experiences \\

  	&	 18	&	 Georgios Methenitis (TUD), Agent Interactions \& Mechanisms in Markets with Uncertainties: Electricity Markets in Renewable Energy Systems \\

	&	 19	&	 Guido van Capelleveen (UT), Industrial Symbiosis Recommender Systems \\

	&	 20	&	 Albert Hankel (VUA), Embedding Green ICT Maturity in Organisations \\

	&	 21	&	 Karine da Silva Miras de Araujo (VUA), Where is the robot?: Life as it could be \\

	&	 22	&	 Maryam Masoud Khamis (RUN), Understanding complex systems implementation through a modeling approach: the case of e-government in Zanzibar \\

	&	 23	&	 Rianne Conijn (UT), The Keys to Writing: A writing analytics approach to studying writing processes using keystroke logging \\

	&	 24	&	 Lenin da N\'{o}brega Medeiros (VUA/RUN), How are you feeling, human? Towards emotionally supportive chatbots \\

	&	 25	&	 Xin Du (TU/e), The Uncertainty in Exceptional Model Mining \\

	&	 26	&	 Krzysztof Leszek Sadowski (UU), GAMBIT: Genetic Algorithm for Model-Based mixed-Integer opTimization \\

	&	 27	&	 Ekaterina Muravyeva (TUD), Personal data and informed consent in an educational context \\

	&	 28	&	 Bibeg Limbu (TUD), Multimodal interaction for deliberate practice: Training complex skills with augmented reality \\

	&	 29	&	 Ioan Gabriel Bucur (RUN), Being Bayesian about Causal Inference \\

	&	 30	&	 Bob Zadok Blok (UL), Creatief, Creatiever, Creatiefst \\

	&	 31	&	 Gongjin Lan (VUA), Learning better -- From Baby to Better \\

	&	 32	& 	 Jason Rhuggenaath (TU/e), Revenue management in online markets: pricing and online advertising \\

	&	 33	& 	 Rick Gilsing (TU/e), Supporting service-dominant business model evaluation in the context of business model innovation \\

	&	 34	&	 Anna Bon (UM), Intervention or Collaboration? Redesigning Information and Communication Technologies for Development \\

	&	 35	&	 Siamak Farshidi (UU), Multi-Criteria Decision-Making in Software Production \\

\midrule

2021

	&	 01	&	 Francisco Xavier Dos Santos Fonseca (TUD),Location-based Games for Social Interaction in Public Space \\

	&	 02	&	 Rijk Mercuur (TUD), Simulating Human Routines: Integrating Social Practice Theory in Agent-Based Models \\

	&	 03	&	 Seyyed Hadi Hashemi (UvA), Modeling Users Interacting with Smart Devices \\

	&	 04	&	 Ioana Jivet (OU), The Dashboard That Loved Me: Designing adaptive learning analytics for self-regulated learning \\

	&	 05	&	 Davide Dell'Anna (UU), Data-Driven Supervision of Autonomous Systems \\

	&	 06	&	 Daniel Davison (UT), "Hey robot, what do you think?" How children learn with a social robot \\

	&	 07	&	 Armel Lefebvre (UU), Research data management for open science \\

	&	 08	&	 Nardie Fanchamps (OU), The Influence of Sense-Reason-Act Programming on Computational Thinking \\

	&	 09	&	 Cristina Zaga (UT), The Design of Robothings. Non-Anthropomorphic and Non-Verbal Robots to Promote Children's Collaboration Through Play \\

	&	 10	&	 Quinten Meertens (UvA), Misclassification Bias in Statistical Learning \\

	&	 11	&	 Anne van Rossum (UL), Nonparametric Bayesian Methods in Robotic Vision \\

	&	 12	&	 Lei Pi (UL), External Knowledge Absorption in Chinese SMEs \\

	&	 13	&	 Bob R. Schadenberg (UT), Robots for Autistic Children: Understanding and Facilitating Predictability for Engagement in Learning \\

	&	 14	&	 Negin Samaeemofrad (UL), Business Incubators: The Impact of Their Support \\

	&	 15	& 	 Onat Ege Adali (TU/e), Transformation of Value Propositions into Resource Re-Configurations through the Business Services Paradigm  \\

	&	 16	&	 Esam A. H. Ghaleb (UM), Bimodal emotion recognition from audio-visual cues \\

	&	 17	&	 Dario Dotti (UM), Human Behavior Understanding  from motion and bodily cues using deep neural networks \\

	&	 18	&	 Remi Wieten (UU), Bridging the Gap Between Informal Sense-Making Tools and Formal Systems - Facilitating the Construction of Bayesian Networks and Argumentation Frameworks \\

	&	 19	&	 Roberto Verdecchia (VUA), Architectural Technical Debt: Identification and Management \\

	&	 20	&	 Masoud Mansoury (TU/e), Understanding and Mitigating Multi-Sided Exposure Bias in Recommender Systems \\

	&	 21	&	 Pedro Thiago Timb\'{o} Holanda (CWI), Progressive Indexes \\

	&	 22	&	 Sihang Qiu (TUD), Conversational Crowdsourcing \\

	&	 23	&	 Hugo Manuel Proen\c{c}a (UL), Robust rules for prediction and description \\

	&	 24	&	 Kaijie Zhu (TU/e), On Efficient Temporal Subgraph Query Processing \\

	&	 25	&	 Eoin Martino Grua (VUA), The Future of E-Health is Mobile: Combining AI and Self-Adaptation to Create Adaptive E-Health Mobile Applications \\

	&	 26	& 	 Benno Kruit (CWI/VUA), Reading the Grid: Extending Knowledge Bases from Human-readable Tables \\

	&	 27	&	 Jelte van Waterschoot (UT), Personalized and Personal Conversations: Designing Agents Who Want to Connect With You \\

	&	 28	&	 Christoph Selig (UL), Understanding the Heterogeneity of Corporate Entrepreneurship Programs \\

\midrule

2022

	&	 01	&	Judith van Stegeren (UT), Flavor text generation for role-playing video games \\

	&	 02	&	Paulo da Costa (TU/e), Data-driven Prognostics and Logistics Optimisation: A Deep Learning Journey \\

	&	 03	&	Ali el Hassouni (VUA), A Model A Day Keeps The Doctor Away: Reinforcement Learning For Personalized Healthcare \\

	&	 04	&	\"{U}nal Aksu (UU), A Cross-Organizational Process Mining Framework \\

	&	 05	&	Shiwei Liu (TU/e), Sparse Neural Network Training with In-Time Over-Parameterization \\

	&	 06	& 	Reza Refaei Afshar (TU/e), Machine Learning for Ad Publishers in Real Time Bidding \\

	&	 07	&	Sambit Praharaj (OU), Measuring the Unmeasurable? Towards Automatic Co-located Collaboration Analytics \\

	&	 08	&	Maikel L. van Eck (TU/e), Process Mining for Smart Product Design \\

	&	 09	&	Oana Andreea Inel (VUA), Understanding Events: A Diversity-driven Human-Machine Approach \\

	&	 10	&	Felipe Moraes Gomes (TUD), Examining the Effectiveness of Collaborative Search Engines \\

	&	 11	&	Mirjam de Haas (UT), Staying engaged in child-robot interaction, a quantitative approach to studying preschoolers' engagement with robots and tasks during second-language tutoring \\

	&	 12	&	Guanyi Chen (UU),  Computational Generation of Chinese Noun Phrases \\

	&	 13	&	Xander Wilcke (VUA), Machine Learning on Multimodal Knowledge Graphs: Opportunities, Challenges, and Methods for Learning on Real-World Heterogeneous and Spatially-Oriented Knowledge \\

	&	 14	&	Michiel Overeem (UU), Evolution of Low-Code Platforms \\

	&	 15	&	Jelmer Jan Koorn (UU), Work in Process: Unearthing Meaning using Process Mining \\

	&	 16	&	Pieter Gijsbers (TU/e), Systems for AutoML Research \\

	&	 17	&	Laura van der Lubbe (VUA), Empowering vulnerable people with serious games and gamification \\

	&	 18	&	Paris Mavromoustakos Blom (TiU), Player Affect Modelling and Video Game Personalisation \\

	&	 19	&	Bilge Yigit Ozkan (UU), Cybersecurity Maturity Assessment and Standardisation \\

	&	 20	&	Fakhra Jabeen (VUA), Dark Side of the Digital Media - Computational Analysis of Negative Human Behaviors on Social Media \\

	&	 21	&	Seethu Mariyam Christopher (UM), Intelligent Toys for Physical and Cognitive Assessments \\

	&	 22	&	Alexandra Sierra Rativa (TiU), Virtual Character Design and its potential to foster Empathy, Immersion, and Collaboration Skills in Video Games and Virtual Reality Simulations \\

	&	 23	&	Ilir Kola (TUD), Enabling Social Situation Awareness in Support Agents \\

	&	 24	&	Samaneh Heidari (UU), Agents with Social Norms and Values - A framework for agent based social simulations with social norms and personal values \\

	&	 25	&	Anna L.D. Latour (UL), Optimal decision-making under constraints and uncertainty \\

	&	 26	&	Anne Dirkson (UL), Knowledge Discovery from Patient Forums: Gaining novel medical insights from patient experiences \\

	&	 27	&	Christos Athanasiadis (UM), Emotion-aware cross-modal domain adaptation in video sequences \\

	&	 28	&	Onuralp Ulusoy (UU), Privacy in Collaborative Systems \\

	&	 29	&	Jan Kolkmeier (UT), From Head Transform to Mind Transplant: Social Interactions in Mixed Reality \\

	&	 30	&	Dean De Leo (CWI), Analysis of Dynamic Graphs on Sparse Arrays \\

	&	 31	&	Konstantinos Traganos (TU/e), Tackling Complexity in Smart Manufacturing with Advanced Manufacturing Process Management \\

	&	 32	&	Cezara Pastrav (UU), Social simulation for socio-ecological systems \\

	&	 33	&	Brinn Hekkelman (CWI/TUD), Fair Mechanisms for Smart Grid Congestion Management \\

	&	 34	&	Nimat Ullah (VUA), Mind Your Behaviour: Computational Modelling of Emotion \& Desire Regulation for Behaviour Change \\

	&	 35	&	Mike E.U. Ligthart (VUA), Shaping the Child-Robot Relationship: Interaction Design Patterns for a Sustainable Interaction \\

\midrule

2023

	&	 01	&	Bojan Simoski (VUA), Untangling the Puzzle of Digital Health Interventions \\

	&	 02	&	Mariana Rachel Dias da Silva (TiU), Grounded or in flight? What our bodies can tell us about the whereabouts of our thoughts \\

	&	 03	&	Shabnam Najafian (TUD), User Modeling for Privacy-preserving Explanations in Group Recommendations \\

	&	 04	&	Gineke Wiggers (UL), The Relevance of Impact: bibliometric-enhanced legal information retrieval \\

	&	 05	&	Anton Bouter (CWI), Optimal Mixing Evolutionary Algorithms for Large-Scale Real-Valued Optimization, Including Real-World Medical Applications \\

	&	 06	&	AntÃ³nio Pereira Barata (UL), Reliable and Fair Machine Learning for Risk Assessment \\

	&	 07	&	Tianjin Huang (TU/e), The Roles of Adversarial Examples on Trustworthiness of Deep Learning \\

	&	 08	&	Lu Yin (TU/e), Knowledge Elicitation using Psychometric Learning \\

	&	 09	&	Xu Wang (VUA), Scientific Dataset Recommendation with Semantic Techniques \\

	&	 10	&	Dennis J.N.J. Soemers (UM), Learning State-Action Features for General Game Playing \\

	&	 11	&	Fawad Taj (VUA), Towards Motivating Machines: Computational Modeling of the Mechanism of Actions for Effective Digital Health Behavior Change Applications \\

	&	 12	&	Tessel Bogaard (VUA), Using Metadata to Understand Search Behavior in Digital Libraries \\

	&	 13	&	Injy Sarhan (UU), Open Information Extraction for Knowledge Representation \\

	&	 14	&	Selma Čaušević (TUD), Energy resilience through self-organization \\

	&	 15	&	Alvaro Henrique Chaim Correia (TU/e), Insights on Learning Tractable Probabilistic Graphical Models \\

	&	 16	&	Peter Blomsma (TiU), Building Embodied Conversational Agents: Observations on human nonverbal behaviour as a resource for the development of artificial characters \\

	&	 17	&	Meike Nauta (UT), Explainable AI and Interpretable Computer Vision â€“ From Oversight to Insight \\

	&	 18	&	Gustavo Penha (TUD), Designing and Diagnosing Models for Conversational Search and Recommendation \\

	&	 19	&	George Aalbers (TiU), Digital Traces of the Mind: Using Smartphones to Capture Signals of Well-Being in Individuals \\

	&	 20	&	Arkadiy Dushatskiy (TUD), Expensive Optimization with Model-Based Evolutionary Algorithms applied to Medical Image Segmentation using Deep Learning \\

	&	 21	&	Gerrit Jan de Bruin (UL), Network Analysis Methods for Smart Inspection in the Transport Domain \\

	&	 22	&	Alireza Shojaifar (UU), Volitional Cybersecurity \\

	&	 23	&	Theo Theunissen (UU), Documentation in Continuous Software Development \\

	&	 24	&	Agathe Balayn (TUD), Practices Towards Hazardous Failure Diagnosis in Machine Learning \\

	&	 25	&	Jurian Baas (UU), Entity Resolution on Historical Knowledge Graphs \\

	&	 26	&	Loek Tonnaer (TU/e), Linearly Symmetry-Based Disentangled Representations and their Out-of-Distribution Behaviour \\

	&	 27	&	Ghada Sokar (TU/e), Learning Continually Under Changing Data Distributions \\

	&	 28	&	Floris den Hengst (VUA), Learning to Behave: Reinforcement Learning in Human Contexts \\

	&	 29	&	Tim Draws (TUD), Understanding Viewpoint Biases in Web Search Results \\

\midrule

2024

	&	 01	&	Daphne Miedema (TU/e), On Learning SQL: Disentangling concepts in data systems education \\

	&	 02	&	Emile van Krieken (VUA), Optimisation in Neurosymbolic Learning Systems \\

	&	 03	&	Feri Wijayanto (RUN), Automated Model Selection for Rasch and Mediation Analysis \\

	&	 04	&	Mike Huisman (UL), Understanding Deep Meta-Learning \\

	&	 05	&	Yiyong Gou (UM), Aerial Robotic Operations: Multi-environment Cooperative Inspection \& Construction Crack Autonomous Repair \\

	&	 06	&	Azqa Nadeem (TUD), Understanding Adversary Behavior via XAI: Leveraging Sequence Clustering to Extract Threat Intelligence \\

	&	 07	&	Parisa Shayan (TiU), Modeling User Behavior in Learning Management Systems \\

	&	 08	&	Xin Zhou (UvA), From Empowering to Motivating: Enhancing Policy Enforcement through Process Design and Incentive Implementation \\

	&	 09	&	Giso Dal (UT), Probabilistic Inference Using Partitioned Bayesian Networks \\

	&	 10	&	Cristina-Iulia Bucur (VUA), Linkflows: Towards Genuine Semantic Publishing in Science \\

	&	 11	&	withdrawn \\

	&	 12	&	Peide Zhu (TUD), Towards Robust Automatic Question Generation For Learning \\

	&	 13	&	Enrico Liscio (TUD), Context-Specific Value Inference via Hybrid Intelligence \\

	&	 14	&	Larissa Capobianco Shimomura (TU/e), On Graph Generating Dependencies and their Applications in Data Profiling \\

	&	 15	&	Ting Liu (VUA), A Gut Feeling: Biomedical Knowledge Graphs for Interrelating the Gut Microbiome and Mental Health \\

	&	 16	&	Arthur Barbosa CÃ¢mara (TUD), Designing Search-as-Learning Systems \\

	&	 17	&	Razieh Alidoosti (VUA), Ethics-aware Software Architecture Design \\

	&	 18	&	Laurens Stoop (UU), Data Driven Understanding of Energy-Meteorological Variability and its Impact on Energy System Operations \\

	&	 19	&	Azadeh Mozafari Mehr (TU/e), Multi-perspective Conformance Checking: Identifying and Understanding Patterns of Anomalous Behavior\\

	&	 20	&	Ritsart Anne Plantenga (UL), Omgang met Regels \\

	&	 21	&	Federica Vinella (UU), Crowdsourcing User-Centered Teams \\

	&	 22	&	Zeynep Ozturk Yurt (TU/e), Beyond Routine: Extending BPM for Knowledge-Intensive Processes with Controllable Dynamic Contexts \\

	&	 23	&	Jie Luo (VUA), Lamarck’s Revenge: Inheritance of Learned Traits Improves Robot Evolution \\

	&	 24	&	Nirmal Roy (TUD), Exploring the effects of interactive interfaces on user search behaviour \\

	&	 25	&	Alisa Rieger (TUD), Striving for Responsible Opinion Formation in Web Search on Debated Topics \\

	&	 26	&	Tim Gubner (CWI), Adaptively Generating Heterogeneous Execution Strategies using the VOILA Framework \\

	&	 27	&	Lincen Yang (UL), Information-theoretic Partition-based Models for Interpretable Machine Learning \\

	&	 28	&	Leon Helwerda (UL), Grip on Software: Understanding development progress of Scrum sprints and backlogs \\

	&	 29	&	David Wilson Romero Guzman (VUA), The Good, the Efficient and the Inductive Biases: Exploring Efficiency in Deep Learning Through the Use of Inductive Biases \\

	&	 30	&	Vijanti Ramautar (UU), Model-Driven Sustainability Accounting \\

	&	 31	&	Ziyu Li (TUD), On the Utility of Metadata to Optimize Machine Learning Workflows \\

	&	 32	&	Vinicius Stein Dani (UU), The Alpha and Omega of Process Mining \\

	&	 33	&	Siddharth Mehrotra (TUD), Designing for Appropriate Trust in Human-AI interaction \\

	&	 34	&	Robert Deckers (VUA), From Smallest Software Particle to System Specification - MuDForM: Multi-Domain Formalization Method \\

	&	 35	&	Sicui Zhang (TU/e), Methods of Detecting Clinical Deviations with Process Mining: a fuzzy set approach \\

	&	 36	&	Thomas Mulder (TU/e), Optimization of Recursive Queries on Graphs \\

	&	 37	&	James Graham Nevin (UvA), The Ramifications of Data Handling for Computational Models \\

	&	 38	&	Christos Koutras (TUD), Tabular Schema Matching for Modern Settings \\

	&	 39	&	Paola Lara Machado (TU/e), The Nexus between Business Models and Operating Models: From Conceptual Understanding to Actionable Guidance \\

	&	 40	&	Montijn van de Ven (TU/e), Guiding the Definition of Key Performance Indicators for Business Models \\

	&	 41	&	Georgios Siachamis (TUD), Adaptivity for Streaming Dataflow Engines \\

	&	 42	&	Emmeke Veltmeijer (VUA), Small Groups, Big Insights: Understanding the Crowd through Expressive Subgroup Analysis \\

	&	 43	&	Cedric Waterschoot (KNAW Meertens Instituut), The Constructive Conundrum: Computational Approaches to Facilitate Constructive Commenting on Online News Platforms \\

	&	 44	&	Marcel Schmitz (OU), Towards learning analytics-supported learning design \\

	&	 45	&	Sara Salimzadeh (TUD), Living in the Age of AI: Understanding Contextual Factors that Shape Human-AI Decision-Making \\

	&	 46	&	Georgios Stathis (Leiden University), Preventing Disputes: Preventive Logic, Law \& Technology \\

	&	 47	&	Daniel Daza (VUA), Exploiting Subgraphs and Attributes for Representation Learning on Knowledge Graphs \\

	&	 48	&	Ioannis Petros Samiotis (TUD), Crowd-Assisted Annotation of Classical Music Compositions \\

\midrule

2025

	&	 01	&	Max van Haastrecht (UL), Transdisciplinary Perspectives on Validity: Bridging the Gap Between Design and Implementation for Technology-Enhanced Learning Systems \\

	&	 02	&	Jurgen van den Hoogen (JADS), Time Series Analysis Using Convolutional Neural Networks \\

	&	 03	&	Andra-Denis Ionescu (TUD), Feature Discovery for Data-Centric AI \\

	&	 04	&	Rianne Schouten (TU/e), Exceptional Model Mining for Hierarchical Data \\

	&	 05	&	Nele Albers (TUD), Psychology-Informed Reinforcement Learning for Situated Virtual Coaching in Smoking Cessation \\

	&	 06	&	Daniël Vos (TUD), Decision Tree Learning: Algorithms for Robust Prediction and Policy Optimization \\

	&	 07	&	Ricky Maulana Fajri (TU/e), Towards Safer Active Learning: Dealing with Unwanted Biases, Graph-Structured Data, Adversary, and Data Imbalance \\

	&	 08	&	Stefan Bloemheuvel (TiU), Spatio-Temporal Analysis Through Graphs: Predictive Modeling and Graph Construction \\

	&	 09	&	Fadime Kaya (VUA), Decentralized Governance Design - A Model-Based Approach \\

	&	 10	&	Zhao Yang (UL), Enhancing Autonomy and Efficiency in Goal-Conditioned Reinforcement Learning \\

	&	 11	&	Shahin Sharifi Noorian (TUD), From Recognition to Understanding: Enriching Visual Models Through Multi-Modal Semantic Integration \\

	&	 12	&	Lijun Lyu (TUD), Interpretability in Neural Information Retrieval \\

	&	 13	&	Fuda van Diggelen (VUA), Robots Need Some Education: on the complexity of learning in evolutionary robotics \\

	&	 14	&	Gennaro Gala (TU/e), Probabilistic Generative Modeling with Latent Variable Hierarchies \\

	&	 15	&	Michiel van der Meer (UL), Opinion Diversity through Hybrid Intelligence \\

	&	 16	&	Monika Grewal (TUD), Deep Learning for Landmark Detection, Segmentation, and Multi-Objective Deformable Registration in Medical Imaging \\

	&	 17	&	Matteo De Carlo (VUA), Real Robot Reproduction: Towards Evolving Robotic Ecosystems \\

	&	 18	&	Anouk Neerincx (UU), Robots That Care: How Social Robots Can Boost Children's Mental Wellbeing \\

	&	 19	&	Fang Hou (UU), Trust in Software Ecosystems \\

	&	 20	&	Alexander Melchior (UU), Modelling for Policy is More Than Policy Modelling (The Useful Application of Agent-Based Modelling in Complex Policy Processes) \\

	&	 21	&	Mandani Ntekouli (UM), Bridging Individual and Group Perspectives in Psychopathology: Computational Modeling Approaches using Ecological Momentary Assessment Data \\

	&	 22	&	Hilde Weerts (TU/e), Decoding Algorithmic Fairness: Towards Interdisciplinary Understanding of Fairness and Discrimination in Algorithmic Decision-Making \\

	&	 23	&	Roderick van der Weerdt (VUA), IoT Measurement Knowledge Graphs: Constructing, Working and Learning with IoT Measurement Data as a Knowledge Graph \\

	&	 24	&	Zhong Li (UL), Trustworthy Anomaly Detection for Smart Manufacturing \\

	&	 25	&	Kyana van Eijndhoven (TiU), A Breakdown of Breakdowns: Multi-Level Team Coordination Dynamics under Stressful Conditions \\

	&	 26	&	Tom Pepels (UM), Monte-Carlo Tree Search is Work in Progress \\

	&	 27	&	Danil Provodin (JADS, TU/e), Sequential Decision Making Under Complex Feedback \\

	&	 28	&	Jinke He (TUD), Exploring Learned Abstract Models for Efficient Planning and Learning \\

	&	 29	&	Erik van Haeringen (VUA), Mixed Feelings: Simulating Emotion Contagion in Groups \\

	&	 30	&	Myrthe Reuver (VUA), A Puzzle of Perspectives: Interdisciplinary Language Technology for Responsible News Recommendation \\

	&	 31	&	Gebrekirstos Gebreselassie Gebremeskel (RUN), Spotlight on Recommender Systems: Contributions to Selected Components in the Recommendation Pipeline \\

	&	 32	&	Ryan Brate (UU), Words Matter: A Computational Toolkit for Charged Terms \\

	&	 33	&	Merle Reimann (VUA), Speaking the Same Language: Spoken Capability Communication in Human-Agent and Human-Robot Interaction \\

	&	 34	&	Eduard C. Groen (UU), Crowd-Based Requirements Engineering \\

	&	 35	&	Urja Khurana (VUA), From Concept To Impact: Toward More Robust Language Model Deployment \\

	&	 36	&	Anna Maria Wegmann (UU), Say the Same but Differently: Computational Approaches to Stylistic Variation and Paraphrasing \\

	&	 37	&	Chris Kamphuis (RUN), Exploring Relations and Graphs for Information Retrieval \\

	&	 38	&	Valentina Maccatrozzo (VUA), Break the Bubble: Semantic Patterns for Serendipity \\

	&	 39	&	Dimitrios Alivanistos (VUA), Knowledge Graphs \& Transformers for Hypothesis Generation: Accelerating Scientific Discovery in the Era of Artificial Intelligence \\

	&	 40	&	Stefan Grafberger (UvA), Declarative Machine Learning Pipeline Management via Logical Query Plans \\

	&	 41	&	Mozhgan Vazifehdoostirani (TU/e), Leveraging Process Flexibility to Improve Process Outcome - From Descriptive Analytics to Actionable Insights \\

	&	 42	&	Margherita Martorana (VUA), Semantic Interpretation of Dataless Tables: a metadata-driven approach for findable, accessible, interoperable and reusable restricted access data \\

	&	 43	&	Krist Shingjergji (OU), Sense the Classroom - Using AI to Detect and Respond to Learning-Centered Affective States in Online Education \\

	&	 44	&	Robbert Reijnen (TU/e), Dynamic Algorithm Configuration for Machine Scheduling Using Deep Reinforcement Learning \\

	&	 45	&	Anjana Mohandas Sheeladevi (VUA), Occupant-Centric Energy Management: Balancing Privacy, Well-being and Sustainability in Smart Buildings \\

	&	 46	&	Ya Song (TU/e), Graph Neural Networks for Modeling Temporal and Spatial Dimensions in Industrial Decision-making \\

	&	 47	&	Tom Kouwenhoven (UL), Collaborative Meaning-Making. The Emergence of Novel Languages in Humans, Machines, and Human-Machine Interactions \\

	&	 48	&	Evy van Weelden (TiU), Integrating Virtual Reality and Neurophysiology in Flight Training \\

	&	 49	&	Selene BÃ¡ez SantamarÃ­a (VUA), Knowledge-centered conversational agents with a drive to learn \\

	&	 50	&	Lea Krause (VUA), Contextualising Conversational AI \\

	&	 51	&	Jiaxu Zhao (TU/e), Understanding and Mitigating Unwanted Biases in Generative Language Models \\

	&	 52	&	Qiao Xiao (TU/e), Model, Data and Communication Sparsity for Efficient Training of Neural Networks \\

	&	 53	&	Gaole He (TUD), Towards Effective Human-AI Collaboration: Promoting Appropriate Reliance on AI Systems \\

	&	 54	&	Go Sugimoto (VUA), MISSING LINKS Investigating the Quality of Linked Data and its Tools in Cultural Heritage and Digital Humanities \\

	&	 55	&	Sietze Kai Kuilman (TUD), AI that Glitters is Not Gold: Requirements for Meaningful Control of AI Systems \\

	&	 56	&	Wijnand van Woerkom (UU), A Fortiori Case-Based Reasoning: Formal Studies with Applications in Artificial Intelligence and Law \\

	&	 57	&	Syeda Amna Sohail (UT), Privacy-Utility Trade-Off in Healthcare Metadata Sharing and Beyond: A Normative and Empirical Evaluation at Inter and Intra Organizational Levels \\

	&	 58	&	Junhan Wen (TUD), "From iMage to Market": Machine-Learning-Empowered Fruit Supply \\

	&	 59	&	Mohsen Abbaspour Onari (TU/e), From Explanation to Trust: Modeling and Measuring Trust in Explainable Decision Support \\

	&	 60	&	Marcel Jurriaan Robeer (UU), Beyond Trust: A Causal Approach to Explainable AI in Law Enforcement \\

	&	 61	&	Shuai Wang (VUA), Links in Large Integrated Knowledge Graphs: Analysis, Refinement, and Domain Applications \\

	&	 62	&	Khaleel Asyraaf Mat Sanusi (OU), Augmenting a learning model within immersive learning environments for psychomotor skills \\

	&	 63	&	Rashid Zaman (TU/e), Online Conformance Checking on Degraded Data \\

	&	 64	&	Jens d'Hondt (TU/e), Effective and Efficient Multivariate Similarity Search \\

	&	 65	&	Aswin Balasubramaniam (UT), Disentangling Runner Drone Interaction Potentialities \\

\midrule

2026

	&	 01	&	Pei-Yu Chen (TUD), Human-Agent Alignment Dialogues: Eliciting User Information at Runtime for Personalized Behavior Support \\

	&	 02	&	Hezha Hassan Mohammedkhan (TiU), Estimating Body Measurements of Children from 2D Images: Towards the Automatic Detection of Malnutrition \\

	&	 03	&	Kyriakos Psarakis (TUD), Democratizing Scalable Cloud Applications: Transactional Stateful Functions on Streaming Dataflows \\


\end{xltabular}

\end{document}